\def\la{\langle}
\def\ra{\rangle}
\def\Nll{{\mathbf N}}
\def\lmul{\{\!\!\{}
\def\rmul{\}\!\!\}}
\def\la{\langle}
\def\ra{\rangle}
\tikzstyle{tracetree}=[draw, shape=isosceles triangle,isosceles triangle apex angle=60,
\tikzstyle{indot}=[draw, rectangle, dotted]
\tikzstyle{tight}=[inner xsep=0.01cm,inner ysep=0.01cm]
\tikzstyle{tightt}=[inner xsep=0.004cm,inner ysep=0.004cm]
\tikzstyle{edgename}=[opacity=0]
\newtheorem{theorem}{Theorem}
\newtheorem{definition}[theorem]{Definition}
\newtheorem{lemma}[theorem]{Lemma}
\newtheorem{coro}[theorem]{Corollary}
\def\claim{  \noindent\textit{Proof.} }
\def\endclaim{ }
\newenvironment{prf}{\claim}{\endclaim}
\definecolor{dkgreen}{rgb}{0,0.6,0}
\definecolor{gray}{rgb}{0.5,0.5,0.5}
\definecolor{mauve}{rgb}{0.58,0,0.82}
\journal{Information and Computation}
\begin{document}

\begin{frontmatter}
  \date{}
  \title{Paths-based criteria and application to linear logic subsystems characterizing polynomial time}
  
  \author[focal]{Matthieu Perrinel} %\corref{cor1}
  \ead{matthieu.perrinel@ens-lyon.fr}
  \address[focal]{LIP, ENS Lyon, 46 allée d'Italie, 69007 Lyon, FRANCE}

  \begin{abstract}
    Several variants of linear logic have been proposed to characterize complexity classes in the proofs-as-programs correspondence. Light linear logic (LLL) ensures a polynomial bound on reduction time, and characterizes in this way polynomial time ($Ptime$). In this paper we study the complexity of linear logic proof-nets and propose three semantic criteria based on context semantics: stratification, dependence control and nesting. Stratification alone entails an elementary time bound, the three criteria entail together a polynomial time bound.

    These criteria can be used to prove the complexity soundness of several existing variants of linear logic. We define a decidable syntactic subsystem of linear logic: $SDNLL$. We prove that the proof-nets of SDNLL satisfy the three criteria, which implies that $SDNLL$ is sound for $Ptime$. Several previous subsystems of linear logic characterizing polynomial time ($LLL$, $mL^4$, maximal system of $MS$) are embedded in $SDNLL$, proving its $Ptime$ completeness.
  \end{abstract}
\end{frontmatter}

\section{Introduction}
\paragraph{Motivations for a type-system capturing polynomial time}
Programming is a notoriously error-prone process. The behaviours of the programs written by programmers on their first attempt often differ from their expected behaviours. Type systems can detect some of those mistakes so that programmers can correct them more easily. In this work, the property we are interested in is time complexity: the execution time of a program as a function of the size of its input. A type system $S$ enforcing a polynomial bound on the time complexity of a program would be useful in several ways:
\begin{itemize}
\item In some real-time applications (e.g. car control systems) programs can never miss a deadline, otherwise the whole system is a failure. It is not enough to verify that the system reacted fast enough during tests, we need an absolute certainty.
\item For some software, it seems enough to get an empirical estimate of the complexity by running tests. In this case, $S$ could be useful to find the origin of the slowness observed during tests (this requires the type inferrer to give useful information when it fails to type a term).
\item In complexity theory, the main method to prove that a problem is $NP$-complete, is to define a polynomial time reduction from another $NP$-complete problem. If $S$ is well-trusted, it could be used as a specialized proof assistant: the fact that the reduction is typable in $S$ would increase the trust in the proof. More generally, $S$ could be used in any proof relying on a complexity bound for a program~\cite{nowak2014formal,zhang2009computational}.
\end{itemize}
In this work, we define a subsystem $SDNLL$ of linear logic such that every proof-net normalizes in polynomial time. This property is called $Ptime$ {\em soundness}. And, for every function $f$ computable in polynomial time there exists a $SDNLL$ proof-net $G_f$ which computes $f$. This property is called $Ptime$ {\em extensional completeness}.

Determining if a proof-net normalizes in polynomial time is undecidable. So for every such system $S$, either determining if a proof-net $G$ belongs to $S$ is undecidable, or $S$ is not {\em intensional complete}: i.e. there exist programs which normalize in polynomial time and are not typable by $S$. The subsystem $SDNLL$ is in the second case. We take inspiration from previous decidable type systems characterizing $Ptime$ and relax conditions without losing neither soundness nor decidability. The more intensionally expressive $S$ is (i.e. the more terms are typable by $S$), the more useful $S$ is. Indeed, the three motivations for systems characterizing polynomial time we described earlier require $S$ to type programs written by non-specialists: people who may not have a thorough understanding of $S$.

\paragraph{Linear logic and proof-nets} Linear logic ($LL$\label{def_acronym_ll})~\cite{girard1987linear} can be considered as a refinement of System F where we focus especially on how the duplication of formulae is managed. In linear logic, the structural rules (contraction and weakening) are only allowed for formulae of the shape $\oc A$:

\begin{equation*}
  \begin{array}{cc}
    % Contraction
    \AxiomC{$\Gamma, \oc A, \oc A \vdash B$}
    \RightLabel{$\contLab$}
    \UnaryInfC{$\Gamma, \oc A \hspace{1.4em} \vdash B$}
    \DisplayProof & \hspace{6em}
    % Weakening
    \AxiomC{$\Gamma \hspace{1.3em} \vdash B$}
    \RightLabel{$\weakLab$}
    \UnaryInfC{$\Gamma, \oc A \vdash B$}
    \DisplayProof
  \end{array}
\end{equation*}

With the three following additional rules (promotion, dereliction and digging), linear logic is as expressive as System F, so the elimination of the $cut$ rule (corresponding to the $\beta$-reduction of $\lambda$-calculus) is not even primitive recursive.

\begin{equation*}
  \begin{array}{ccc}
    \AxiomC{$\hspace{0.3em}A_1,\cdots,\hspace{0.3em}A_n \vdash \hspace{0.3em}B$}
    \RightLabel{$\fpriLab$}
    \UnaryInfC{$\oc A_1, \cdots,\oc A_k \vdash \oc B$}
    \DisplayProof  \hspace{3em}&
    \AxiomC{$\Gamma,\hspace{0.3em}A \vdash B$} 
    \RightLabel{$\derLab$}
    \UnaryInfC{$\Gamma, \oc A \vdash B$}
    \DisplayProof \hspace{3em}&                     
    \AxiomC{$\Gamma, \oc \oc A \vdash B$} 
    \RightLabel{$\digLab$}
    \UnaryInfC{$\Gamma, \hspace{0.4em}\oc A \vdash B$}
    \DisplayProof
  \end{array}
\end{equation*}
However, because the structural rules are handled by 5 distinct rules, one can enforce a subtle control on the use of ressources by modifying one of them. If we restrict some of those rules, it restricts the duplication of formulae. For instance, in the absence of $\derLab$ and $\digLab$ rules, the cut-elimination normalizes in elementary time~\cite{danos2003linear}. The set of such proofs is defined as Elementary Linear Logic ($ELL$).

Proof-nets~\cite{girard1996proof} are an alternative syntax for linear logic, where proofs are considered up-to meaningless commutations of rules. Proof-nets are graph-like structures where nodes correspond to logical rules. One of the reasons we use proof-nets instead of proof derivations is that context semantics, the main tool we use in this article, is much simpler to define and use in proof-nets.

\paragraph{Context semantics}Context semantics is a presentation of geometry of interaction~\cite{gonthier1992geometry,danos1995proof} defined by tokens traveling across proof-nets according to some rules. The paths defined by those tokens are stable by reduction so they represent the reduction of the proof-net. Context semantics has first been used to study optimal reduction~\cite{gonthier1992linear}.

Recently, it has been used to prove complexity bounds on subsystems of System T~\cite{lago2005geometry} and linear logic~\cite{baillot2001elementary,lago2006context}. In~\cite{lago2006context}, Dal Lago defines for every proof-net $G$ a weight $W_G \in \mathbb{N} \cup \{\infty\}$ based on the paths of context semantics such that, whenever $G$ reduces to $H$, $W_G \geq W_H+1$. Thus $W_G$ is a bound on the length of the longest path of reduction starting from $G$. Then we can prove theorems of the shape ``whenever $G$ satisfies some property (for instance if $G$ belongs to a subsystem such as $LLL$), $W_G$ satisfies some bound (for instance $W_G \leq P(|G|)$ with $P$ a polynomial and $|G|$ the size of $G$).'' 

From this point of view, context semantics has two major advantages compared to the syntactic study of reduction. First, its genericity: some common results can be proved for different variants of linear logic, which allows to factor out proofs of complexity results for these various systems. Moreover, the bounds obtained stand for any strategy of reduction. On the contrary, most bounds proved by syntactic means are only proved for a particular strategy. There are several advantages to strong bounds: 
\begin{itemize}
\item Let us suppose we know a strong complexity bound for a system $S'$. We can prove the same strong complexity bound on a system $S$ if we find an embedding $\phi$ of $S$ programs in $S'$ programs such that, whenever $t$ reduces to $u$ in $S$, $\phi(t)$ reduces to $\phi(u)$ in $S'$ (with at least one step). We use such an embedding in Section~\ref{section_sdnll_lambda} to prove a strong bound for $\lambda$-terms typed by $SDNLL$. If we only had a weak complexity bound for system $S'$, we would have to prove that the reduction from $\phi(t)$ to $\phi(u)$ matches the reduction strategy entailing the bound, which is not always possible.
\item The languages we study here are confluent. However, if we consider an extension of linear logic or $\lambda$-calculus with side-effects (such as $\lambda^{!R}$ considered by Madet and Amadio in \cite{madet2011elementary}), the reduction strategy influences the result of a program execution. It is important that the programmer understands the strategy. If the reduction strategy corresponded to strategies frequently used by programming languages (such as left-to-right call-by-value), it would not be a problem. However, in some cases ($mL^4$ for instance~\cite{baillot2010linear}), the strategy is rather farfetched and difficult to understand for the programmer.
%\item Let us suppose our line of work becomes successful, and we define an expressive programming language in which we infer a complexity bound for most natural programs. Few people will use it if it means learning a new language, with little pre-existing libraries. Such a language would be much more useful if it comes as a restriction\footnote{Or the definition of a type system, for untyped versions of Lisp.} of the type system of an established functional programming language (e.g. Haskell, Lisp or Ocaml). If we only have a weak complexity bound for a strategy which is not compatible with the strategy of the pre-existing language, the bound would not stand.
\end{itemize}

% Even if our line of research results in the definition of an expressive type system for a programming language  which may hinder  and unlikely to be implemented in a programming language.

Our context semantics, presented in Section~\ref{section_def_cont_sem}, is slightly different from Dal Lago's context semantics. In particular, Dal Lago worked in intuitionnistic linear logic, and we work in classical linear logic. So the results of~\cite{lago2006context} cannot be directly applied. However most theorems of~\cite{lago2006context} have correspondents in our framework, with quite similar proofs. This is why we omit the proofs of most of the results of this section, complete proofs can be found in~\cite{perrinelMegathese}.

\paragraph{Our approach} Contrary to previous works, we do not directly define a linear logic subsystem. First, we define semantic criteria forbidding behaviours which can result in non-polynomial complexity. We define relations $\rightarrow$ on boxes (special subterms of proof-nets) such that $B \rightarrow C$ means that "the number of times $B$ is copied depends on the number of times $C$ is copied". More precisely, we define three relations ($\stratSNLL$, $\dcSim$ and $\nestSim$) representing different kinds of dependence. The acyclicity of these three relations ensures a bound on the number of times every box is copied so a bound on the length of normalization sequences.

Then (in Section~\ref{chapter_type_systems}), we define {\em Stratified Dependence control Nested Linear Logic} ($SDNLL$), a subsystem of linear logic such that $\stratSNLL$, $\dcSim$ and $\nestSim$ are acyclic on every proof-net of $SDNLL$. This entails a bound on the length of normalization for every $SDNLL$ proof-net. The relations ($\stratSNLL$, $\dcSim$ and $\nestSim$) are based on the paths of context semantics. We use the syntactic restrictions to define invariants along the paths, proving that if  $B \rightarrow C$ then the "types" of $B$ and $C$ are such that we cannot have $C \rightarrow B$. Finally, in Section~\ref{section_sdnll_lambda}, we transform $SDNLL$ into a type-system $SDNLL_{\lambda}$ for $\lambda$-calculus, which enforces a polynomial bound on $\beta$-reduction. This tranformation is similar to the transformation of $LLL$~\cite{girard1995light} into $DLAL$~\cite{baillot2004light}.
% We consider that, in future work, $S$ may be modified to take into account other features of modern programming languages. Such modifications have already been defined for some type systems on pure $\lambda$-calculus. For instance, Baillot and Terui defined a type system $DLAL$ characterizing polynomial type in pure $\lambda$-calculus~\cite{baillot2004light}. Taking inspiration from this work, Baillot, Gaboardi and Mogbil defined a type system characterizing polynomial time in $\lambda$-calculus enriched with constructors for algebraic data-types, pattern matching and recursive definitions~\cite{baillot2010polytime}. Similarly, Antoine Madet defined a type system characterizing polynomial time in $\lambda$-calculus enriched with multithreading and side effects~\cite{madet2012polynomial}.

\begin{figure}\centering
  \begin{tikzpicture}
    \node (LLL) at (0,0) {$LLL$};
    \node (MS)  at ($(LLL)+( 30:1.6)$) {$MS$};
    \node (L4)  at ($(LLL)+(150:1.1)$) {$L^4$};
    \node (L40) at ($(L4) +(150:1.1)$) {$L^4_0$};
    \node (SLL) at ($(LLL)+(4.5,0)$) {$SLL$};
    \node (BLL) at ($(SLL)+(3,0)$) {$BLL$};
    \node (QBAL)at ($(BLL)+(0,0.8)$) {$QBAL$};
    \node (SNLL)at ($(LLL)+(0.4,1.5)$) {$\mathbf{SDNLL}$};
    \draw [<-] (SNLL)--(L4);
    \draw [<-,dashed] (SNLL)--(MS);
    \draw [<-] (L40)--(L4);
    \draw [<-] (L4)--(LLL);
    \draw [<-] (MS)--(LLL);
    \draw [<-] (QBAL)--(BLL);
    \draw ($(L40|-LLL)+(-0.4,-0.25)$) rectangle ($(SLL|-SNLL)+(1,0.2)$);
    \draw ($(BLL)+(-1.5,-0.25)$) rectangle ($(QBAL)+(1,0.7)$);
    \node [below left] at ($(SLL|-SNLL)+(1,0.25)$) {Decidable};
    \node [below right] at ($(QBAL)+(-1.5,0.75)$) {Undecidable ?};
  \end{tikzpicture}
  \caption{\label{fig_state_art}State of the art}
\end{figure}
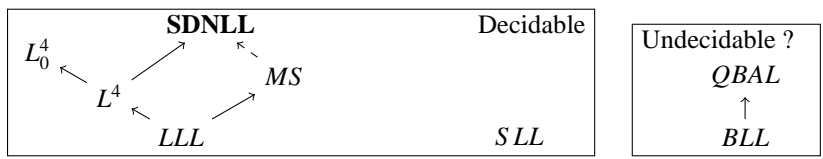

\paragraph{Previous polynomial time subsystems of Linear logic} There already exist several subsystems of linear logic characterizing polynomial time. The first such subsystem is $BLL$~\cite{girard1992bounded}, which enforces $Ptime$ soundness by labelling of $\oc$ modalities by polynomials. However, given a proof-net $G$, determining if $G$ is in $BLL$ (or its generalization $QBAL$~\cite{dal2009bounded}) seems undecidable. Thus, they do not fit in our approach.

The first decidable system was $LLL$~\cite{girard1995light} which is defined as the proof-nets of $ELL$ such that the contexts have at most one formula in every $\fpriLab$ rule\footnote{To keep some expressivity, Girard adds a new modality $\S$.}. A decidable type system $DLAL$ for $\lambda$-calculus was inspired by $LLL$~\cite{baillot2004light,atassi2007verification}.

Baillot and Mazza generalized $ELL$ with a subsystem $L^3$ of linear logic characterizing elementary time~\cite{baillot2010linear}. Then they defined $mL^4$ and $mL^4_0$, characterizing polynomial time, based on $L^3$ in the same way as $LLL$ is based on $ELL$. In a separate direction, Roversi and Vercelli also extended $LLL$ with $MS$\footnote{Which is a set of system rather than a unique system.}~\cite{roversi2010local}. Those three systems are obtained by decorating formulae with labels and adding local constraints on the labels. $mL^4$, $mL^4_0$ and $MS$ are trivially decidable on proof-nets: given a proof-net $G$ there exist only a finite number of ways to label the formulae of $G$. One can try every possibility and check whether the labels verify the constraints. Lafont defined $SLL$~\cite{lafont2004soft}, another subsystem of linear logic characterizing polynomial time. This system does not contain $LLL$, and none of the above generalizations of $LLL$ contains $SLL$.

Figure~\ref{fig_state_art} summarizes the state of the art. There is an arrow from the system $S$ to the system $T$ if there is a canonical embedding of $S$ in $T$. The arrow between $MS$ and $SDNLL$ is dotted because the embedding is only defined for one of the maximal systems of $MS$. In~\cite{perrinelMegathese}, we define $SwLL$ (based on the the ideas of this article) in which one can embed $SDNLL$, $SLL$, and every $MS$ polynomial subsystem.

This paper extends a previous work~\cite{perrinel2013pathsbased} by: providing a non-trivial nesting condition, defining a syntactic subsystem based on the semantic criteria, and providing most of the proofs (in~\cite{perrinel2013pathsbased} the proofs are only sketched). More details, and the technical proofs omitted in this paper can be found in Perrinel's thesis~\cite{perrinelMegathese}.

\section{Linear Logic and Context Semantics}\label{chapter_2}
\subsection{Linear Logic} 
Linear logic ($LL$)~\cite{girard1987linear} can be considered as a refinement of System F~\cite{girard1971extension} where we focus especially on how the duplication of formulae is managed. In this work we use neither the additives ($\oplus$ and $\with$) nor the constants. This fragment is usually named {\em Multiplicative Exponential Linear Logic with Quantifiers} (abbreviated by $MELL_\forall$). To simplify notations, we will abusively refer to it as {\em Linear Logic} (abreviated by $LL$). The set $\formLL$, defined as follows, designs the set of formulae of linear logic.
\begin{equation*}\label{def_fll}
  \formLL  = X \mid X^\perp  \mid \formLL \otimes \formLL \mid \formLL \parr \formLL \mid \forall X. \formLL \mid \exists X. \formLL \mid \oc \formLL \mid \wn \formLL 
\end{equation*}

We define inductively an involution $(\_)^\perp$ on $\formLL$, which can be considered as a negation:\label{def_perpformula} $(X)^\perp =X^\perp$, $(X^\perp)^\perp =X$, $(A \otimes B)^\perp = A^\perp \parr B^\perp$, $(A \parr B)^\perp = A^\perp \otimes B^\perp$, $(\forall X.A)^\perp = \exists X.A^\perp$, $(\exists X.A)^\perp = \forall X.A^\perp$, $(\oc A)^\perp = \wn(A^\perp)$ and $(\wn A)^\perp= \oc (A^\perp)$.

Linear logic is usually presented as a sequent calculus (as in the introduction). In this article, we will consider an alternative syntax: proof-nets~\cite{girard1996proof}.

\begin{definition}\label{def_proofnet}
  A {\em $LL$ proof-net} is a graph-like structure, defined inductively by the graphs of Figure~\ref{rules_labelling_proofnet} ($G$ and $H$ being $LL$ proof-nets). Every edge $e$ is labelled by $\beta(e) \in \formLL$ satisfying the constraints of Figure~\ref{rules_labelling_proofnet}. The set of edges is written $\dirEdges{G}$.

  A {\em proof-net} is a graph-like structure, whose edges are not labelled, defined inductively by the graphs of Figure~\ref{rules_labelling_proofnet} ($G$ and $H$ being proof-nets). The constraints of Figure~\ref{rules_labelling_proofnet} on labels are not taken into account.
\end{definition}

\begin{figure}
  \begin{tikzpicture}
  \tikzstyle{level}=[opacity = 0]
  \tikzstyle{type}=[black, midway, right=-0.08cm]

  \nvar{\ligneDeux}{2.3cm}
  \nvar{\ligneTrois}{2.6cm}

    % axiom
    \begin{scope}[scale = 0.85]
      \begin{scope}[shift={(4,1.8)}]
        \begin{scope}[shift={(-1,0)}]
          \node [ax] (ax) at (-0.5,0) {};
          \draw[ar] (ax) to [out=  0,in=90] node [type]              {$A^\perp$} ($(ax)+(0.5,-0.7)$);
          \draw[ar] (ax) to [out=180,in=90] node [type,left=-0.08cm] {$A$}      ($(ax)+(-0.5,-0.7)$);
        \end{scope}
        
        % cut
        \node [proofnet, inner xsep=0.5cm] (G)   at (2.9,0)         {$G$};
        \node [proofnet, inner xsep=0.5cm] (H)   at ($(G)+(1.8,0)$) {$H$};
        \node [cut]      (cut) at ($(G)!0.5!(H)+(-0.15,-0.7)$) {};
        \draw [ar] (G.-65)  to [out=-90,in=180] node [type,left=-0.08cm] {$A$}      (cut);
        \draw [ar] (H.-145) to [out=-90,in=  0] node [type]              {$A^\perp$} (cut);
        \draw [multiar] (G.-158) --++ (0,-0.4);
        \draw [multiar] (H.-22) --++ (0,-0.4);
      \end{scope}
      
      % tensor
      \begin{scope}[shift={(-0.3,0)}]
        \node[proofnet, inner xsep=0.5cm] (G) at (0,0)           {$G$};
      \node[proofnet, inner xsep=0.5cm] (H) at ($(G)+(1.8,0)$) {$H$};
      \node[tensor]  (tens) at ($(G)!0.5!(H)+(0,-0.8)$) {};
      \draw [ar] (G.-50)  to [out=-90,in=150] node [type,left=-0.08cm] {$A$} (tens);
      \draw [ar] (H.-130) to [out=-90,in= 30] node [type]              {$B$} (tens);
      \draw [ar] (tens) --++ (0,-0.5) node [type] {$A \otimes B$};
      \draw [multiar] (G.-158) --++ (0,-0.5);
      \draw [multiar] (H.-22)  --++ (0,-0.5);
    \end{scope}
      
    %par
    \begin{scope}[shift={(4.7, 0)}]
      \node[proofnet, inner xsep=0.6cm] (G) at (0,0) {$G$};
      \node[par] (par) at ($(G.-160)!0.5!(G.-90)+(0,-0.8)$) {};
      \draw[ar] (G.-160) to [out=-90,in=120] node [type,left] {$A$} (par);
      \draw[ar] (G.-90)  to [out=-90,in= 60] node [type] {$B$} (par);
      \draw[ar] (par) --++ (0,-0.5) node [type] {$A \parr B$};
      \draw[multiar]  (G.-20) --++ (0,-0.5);
    \end{scope}
      
      %exists
      \begin{scope}[shift={(8.5,0)}]
        \draw (0,0) node [proofnet, inner xsep=0.7cm] (G) {$G$};
        \node [exists] (ex) at ($(G.-164)+(0,-0.8)$) {};
        \draw [multiar](G.-16) --++(0,-0.4);
        \draw[ar] (G.-164) -- (ex) node [type] {$A[B/X]$};
        \draw[ar] (ex) --++ (0,-0.5) node [midway, right] {$\exists X. A$};
      \end{scope}
      
      %forall
      \begin{scope}[shift={(11.8, 0)}]
        \draw (0,0) node [proofnet, inner xsep=0.6cm] (G) {$G$};
        \node [forall] (fa) at ($(G.-155)+(0,-0.8)$) {};
        \draw [multiar](G.-20) --++(0,-0.4);
        \draw[ar] (G.-155) -- (fa) node [type] {$A$};
        \draw[ar] (fa) --++ (0,-0.5) node [midway, right] {$\forall X. A$};
      \end{scope}

      % dereliction
      \begin{scope}[shift={(9.1,- \ligneTrois)}]
        \draw (0,0) node [proofnet] (G) {$G$};
        \node [der] (fa) at ($(G.-145)+(0,-0.8)$) {};
        \draw [multiar](G.-35) --++(0,-0.4);
        \draw[ar] (G.-145) -- (fa) node [type] {$A$};
        \draw[ar] (fa) --++ (0,-0.5) node [midway, right] {$\wn A$};
      \end{scope}
      
      % contraction
      \begin{scope}[shift={(3.3,0.1cm-\ligneTrois)}]  
        \draw (0,0) node [proofnet, inner xsep=0.6cm] (G) {$G$};
        \node [cont] (cont) at ($(G.-65)+(0,-0.8)$) {};
        \draw[ar] (G.-20) to [out=-90,in= 30] node [type,pos=0.3] {$\wn A$} (cont);
        \draw[ar] (G.-145)to [out=-90,in=160] node [type,pos=0.3] {$\wn A$} (cont);
        \draw[->] (cont) --++ (0,-0.5) node [midway, right] {$?A$};
        \draw[multiar] (G.-160) --++(0,-0.8); 
      \end{scope}
      
      %weakening
      \begin{scope}[shift={(6,- \ligneTrois)}]
        \draw (0,0) node [proofnet] (G) {$G$};
        \draw [multiar] (G.-90) --++(0,-0.5);
        \node [weak] (weak) at ($(G)+(1,0)$) {};
        \draw [ar] (weak) --++(0,-0.8)  node [type] {$?A$};
      \end{scope}
      
      %promotion
      \begin{scope}[shift={(0,0.3cm-\ligneTrois)}]
        \draw (0,0.1) node [proofnet,inner xsep=0.8cm] (G) {$G$};
        \draw (G)++(1.1,-1.1) node [princdoor] (bang)   {};
        \draw (G)++(0.,-1.1) node [auxdoor] (whyn) {};
        \draw (G)++(-1.1,-1.1) node [auxdoor] (whyn2) {};
        \draw[ar] (whyn2 |- G.south) -- (whyn2) node [type] {$A_1$};
        \draw[ar] (whyn |- G.south) -- (whyn) node [type] {$A_n$};
        \draw[ar] (bang |- G.south) -- (bang) node [type] {$B$};
        \draw[ar] (whyn2) --++(0,-0.8) node [type] {$?A_1$};
        \draw[ar] (whyn) --++(0,-0.8) node [type] {$?A_n$};
        \draw[ar] (bang) --++(0,-0.8) node [type] {$\oc B$};
        \draw (whyn)--(bang) -| ++(0.45,1.5) -| ($(whyn2)+(-0.4,0)$) -- (whyn2);
        \draw [dotted] (whyn2) -- (whyn);
      \end{scope}
      
      % digging
      \begin{scope}[shift={(11.7,- \ligneTrois)}]
        \draw (0,0) node [proofnet, inner xsep=0.6cm] (G) {$G$};
        \node [dig] (fa) at ($(G.-160)+(0,-0.8)$) {};
        \draw [multiar](G.-20) --++(0,-0.4);
        \draw[ar] (G.-160) -- (fa) node [type] {$\wn \wn A$};
        \draw[ar] (fa) --++ (0,-0.5) node [midway, right] {$\wn A$};
      \end{scope}
  \end{scope}

\end{tikzpicture}
\caption{ \label{rules_labelling_proofnet}Construction of $LL$ proof-nets. In the $\forall$ rule, $X$ can not be free in the other conclusions of $G$.}
\end{figure}
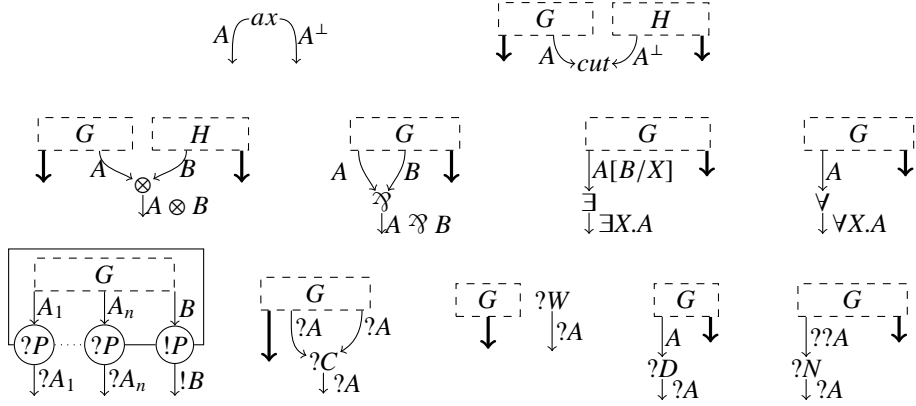

For the following definitions, we supposed fixed a proof-net $G$.

\paragraph{Directed edges}\label{directed_edges}The edges in the definition of proof-nets (the elements of $\dirEdges{G}$) are directed. We will often need to consider their inverted edges:  for any $(l,m)$ (the edge from $l$ to $m$), we denote its inverted edge $(m,l)$ (the edge from $m$ to $l$) by $\overline{(l,m)}$. We define the set $\edges{G}$ as $\dirEdges{G} \cup \Set{\overline{e}}{e \in \dirEdges{G}}$. In $LL$ proof-nets, we extend the labelling $\beta(\_)$ from $\dirEdges{G}$ to $\edges{G}$ by $\beta(\overline{e})=\beta(e)^\perp$.

\paragraph{Premises and conclusions}\label{def_premise} For any node $n$, the incoming edges of $n$ (in $\dirEdges{G}$) are named the {\em premises} of $n$.\label{def_conclusion} The outgoing edges of $n$ (in $\dirEdges{G}$) are named the {\em conclusions} of $n$. Some edges are not the premises of any node.\label{def_pending} Such edges are the {\em conclusions} of $G$.

\paragraph{Boxes}\label{def_box}The rectangle of Figure~\ref{rules_labelling_proofnet} with the $\fauxLab$ and $\fpriLab$ nodes is called a {\em box}. Formally a box is a subset of the nodes of the proof-net. We say that the edge $(m,n) \in \dirEdges{G}$ belongs to box $B$ if $n \in B$, in this case $(n,m)$ also belongs to box $B$. 

Let us call $B$ the box in Figure~\ref{rules_labelling_proofnet}.\label{def_principaldoor} The node labelled $\fpriLab$ is the {\em principal door} of $B$, its conclusion \label{def_sigmab}is written $\sigma(B)$, and is named the {\em principal edge} of $B$.\label{def_auxiliarydoor} The $\fauxLab$ nodes are the auxiliary doors of box $B$.\label{def_sigmaib} The edge going out of the $i$-th auxiliary door is written $\sigma_i(B)$ and is named an auxiliary edge of $B$. The doors of box $B$ are considered in $B$, they are exactly the nodes which are in $B$ but whose conclusions are not in $B$.

The number of boxes containing an element (box, node or edge) $x$ is its {\em depth} written $\depth{x}$.\label{def_maxdepth} $\partial_G$ is the maximum depth of an edge of $G$. The set of boxes of $G$ is $\boxset{G}$.

\tikzstyle{edgename}=[phantom]
\begin{figure}\centering
  \begin{tikzpicture}[baseline=0cm]
  \tikzstyle{edgename}=[opacity=0]
  
  \begin{scope}[shift={(7.8,0.7)}]
    \node [ax]  (ax)  at (0, 0) {};
    \node [etc] (etc) at ($(ax)+(1.1,0)$) {};
    \node [cut] (cut) at ($(ax)!0.6!(etc)+(0,-0.5)$) {};
    \draw [ar,out=  0,in=180] (ax) to (cut); 
    \draw [ar,out=-90,in=  0] (etc) to node [type,right=-0.08cm] {$A$} (cut);
    \draw [ar,out=180,in= 90] (ax)  to node [type,left=-0.08cm] {$A$} ($(ax)+(-0.5,-0.5)$);
     
    \draw [reduc] (1.5,-0.2) --++ (0.7,0) node [below left=-0.1cm] {$cut$};
    
    \node[etc] (etc2) at ($(etc)+(1.4,0)$) {};
    \draw[ar] (etc2) --++ (0,-0.5) node [type,right=-0.1cm] {$A$};
  \end{scope}

  \begin{scope}[shift={(0,0)}]
    \node [forall] (fa) at (0, 0) {};
    \node [exists] (ex) at ($(fa)+(0.7,0)$) {};
    \node [etc] (etcfa) at ($(fa)+(0,0.8)$) {};
    \node [etc] (etcex) at (etcfa-|ex) {};
    \node [cut] (cut)   at ($(ex)!0.5!(fa)+(0,-0.5)$) {};
    \draw [ar] (etcfa) -- (fa) node [type,left=-0.08cm] {$A$};
    \draw [ar] (etcex) -- (ex) node [type,right=-0.08cm] {$A^\perp[B/X]$};
    \draw [ar,out=-90,in=180] (fa) to node [type,below left=-0.08cm,pos=0.5] {$\forall X.A$} (cut);
    \draw [ar,out=-90,in=  0] (ex) to node [type,below right=-0.08cm,pos=0.5]{$\exists X.A^\perp$} (cut);
    
    \draw [reduc] (1.8,0) --++ (0.8,0) node [below left=-0.1cm] {$cut$};
    
    \node [etc] (etcfa2) at ($(etcfa)+(3.8,0)$) {};
    \node [etc] (etcex2) at ($(etcex)+(3.8,0)$) {};
    \node [cut] (cut) at ($(etcfa2)!0.5!(etcex2)+(0,-0.6)$) {};
    \draw [ar,out=-90,in=180] (etcfa2) to node [edgename, below left]  {$a$} node [type,left=-0.07,pos=0.25] {$A[B/X]$} (cut);
    \draw [ar,out=-90,in=  0] (etcex2) to node [edgename, below right] {$b$} node [type,right=-0.07,pos=0.25]      {$A^\perp[B/X]$} (cut);
  \end{scope}

  \begin{scope}[shift={(3.2,-1.6)}]
    \node [par]  (par)  at (0,0) {};
    \node [tensor] (tens) at ($(par)+(1.5,0)$) {};
    \node [cut]  (cut)  at ($(par)!0.5!(tens)+(0,-0.5)$) {};
    \node [etc] (pl) at ($(par) +(110:0.7)$) {};
    \node [etc] (pr) at ($(par) +( 70:0.7)$) {};
    \node [etc] (tl) at ($(tens)+(110:0.7)$) {};
    \node [etc] (tr) at ($(tens)+( 70:0.7)$) {};
    \draw [ar] (pl) -- (par) node [type,left] {$A$} node [edgename] {$a$};
    \draw [ar] (pr) -- (par) node [type]      {$B$} node [edgename,right] {$b$};
    \draw [ar] (tl) -- (tens)node [type,left] {$A^\perp$} node [edgename] {$e$};
    \draw [ar] (tr) -- (tens)node [type]      {$B^\perp$} node [edgename,right] {$f$};
    \draw [ar,out=-90,in=180] (par)  to node [edgename,below left] {$c$} node [type,left,pos=0.25] {$A \parr B$} (cut);
    \draw [ar,out=-90,in=  0] (tens) to node [edgename,below right]{$d$} node [type,pos=0.25] {$A^\perp \otimes B^\perp$}(cut);
  
    \draw [reduc] (2.4,0.4) --++(0.7,0) node [below left=-0.1cm] {$cut$};
    
    \node (pl2) at ($(pl)+(4, 0)$) {}; 
    \node (pr2) at ($(pr)+(4,0)$) {};
    \node (tl2) at ($(tl)+(4,0)$) {};
    \node (tr2) at ($(tr)+(4,0)$) {};
    \node [cut] (cutl) at ($(pl2)!0.5!(tl2)+(-0.3,-0.7)$) {};
    \node [cut] (cutr) at ($(pr2)!0.5!(tr2)+( 0.3,-0.7)$) {};
    \draw [ar,out=-90,in=180] (pl2) to node [type,pos=0.1,left] {$A$}      node [edgename,pos=0.15] {$a$} (cutl);
    \draw [ar,out=-90,in=  0] (tl2) to node [type,right=-0.08cm, pos=0.1]      {$A^\perp$} node [edgename,pos=0.1] {$e$} (cutl);
    \draw [ar,out=-90,in=180] (pr2) to node [type,pos=0.1,left] {$B$}      node [edgename,pos=0.1] {$b$} (cutr);
    \draw [ar,out=-90,in=  0] (tr2) to node [type,pos=0.1]      {$B^\perp$} node [edgename,pos=0.15] {$f$} (cutr);
  \end{scope}
\end{tikzpicture}
\caption{\label{cut_elim_non_exp_rules}Non-exponential cut-elimination steps. For the $\forall/\exists$ step, $[B/X]$ takes place on the whole net.}
\end{figure}
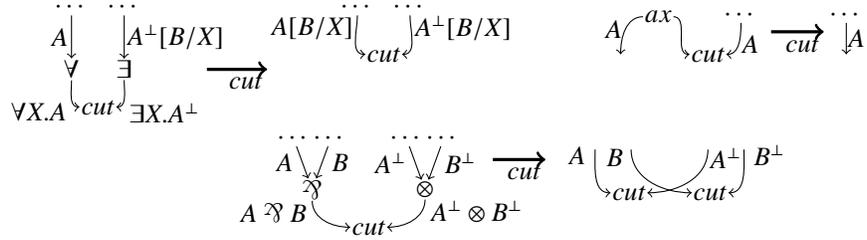

\paragraph{Cut-elimination} is a relation $\cutRel$ on ($LL$) proof-nets which is related to the $\beta$-reduction of $\lambda$-calculus. Figures \ref{cut_elim_non_exp_rules} and \ref{cut_elim_exp_rules} describe the rules of cut-elimination.

\begin{figure}\centering

\begin{subfigure}{\textwidth}
\begin{tikzpicture}[baseline=0.2cm]
  \tikzstyle{edgename}=[opacity=0]
  \node [proofnet,minimum width=2.5cm] (G) at (0,0) {$G'$};
  \node [princdoor] (pr) at ($(G.-13)+(0,-0.7)$) {};
  \node [above] at ($(pr)+(0.5,0)$) {$\mathbf{B}$};
  \node [auxdoor]   (a2) at ($(G.-90)+(0,-0.7)$) {};
  \node [auxdoor]   (a1) at ($(G.-167)+(0,-0.7)$) {};
  \draw (a2)--(pr) -| ++(0.7,1.3) -| ($(a1)+(-0.55,0)$) -- (a1);
  \draw [dotted] (a1)--(a2);
  \draw [ar] (G.-13)--(pr)    node [edgename] {$a$}   node [type] {$A$};     
  \draw [ar] (G.-90)--(a2)    node [edgename] {$a_k$} node [type] {$A_k$};    
  \draw [ar] (G.-167)--(a1)   node [edgename] {$a_1$} node [type] {$A_1$};    
  \draw [ar] (a2)--++(0,-0.65) node [edgename] {$c_k$} node [type] {$\wn A_k$};
  \draw [ar] (a1)--++(0,-0.65) node [edgename] {$c_1$} node [type] {$\wn A_1$};
  \node [der]  (der) at ($(pr)+(1.2,0)$) {};
  \node [cut]   (cut)at ($(pr)!0.5!(der)+(0,-0.6)$) {};
  \draw [ar,out=-90,in=180] (pr) to node [edgename,below left]  {$c$} node [type,pos=0.2] {$\oc A$} (cut);
  \draw [ar,out=-90,in=  0] (der) to node [edgename,right] {$d$} node [type,pos=0.3] {$\wn A^\perp$}(cut);
  \node [etc] (etc) at ($(der)+(0,1)$) {};
  \draw [ar] (etc) -- (der)   node [edgename] {$b$} node [type] {$A^\perp$};

  \draw [->,very thick] (4,-0.7) --++(1,0) node [below left=-0.1cm] {$cut$};

  \node [proofnet,minimum width=2.5cm] (H) at ($(G)+(8,0)$) {$G'$};
  \node [der] (b2) at ($(H.-90)+(0,-0.7)$) {};
  \node [der] (b1) at ($(H.-167)+(0,-0.7)$) {};
  \draw [dotted] (b1)--(b2);
  \draw [ar] (H.-90)--(b2)    node [edgename] {$a_k$} node [type] {$A_k$};    
  \draw [ar] (H.-167)--(b1)   node [edgename] {$a_1$} node [type] {$A_1$};    
  \draw [ar] (b1)--++(0,-0.6) node [edgename] {$c_1$} node [type] {$\wn A_1$};
  \draw [ar] (b2)--++(0,-0.6) node [edgename] {$c_k$} node [type] {$\wn A_k$};
  \node [etc] (etd) at ($(etc)+(8,0)$) {};
  \node [cut]   (cut)at ($(H.-13)!0.5!(etd)+(0,-0.8)$) {};
  \draw [ar,out=-90,in=180] (H.-13) to node [edgename,below left]  {$a$} node [type,pos=0.2] {$A$} (cut);
  \draw [ar,out=-90,in=  0] (etd)   to node [edgename,right] {$b$} node [type,pos=0.3] {$A^\perp$} (cut);
\end{tikzpicture}
\end{subfigure}
\vspace{1em}

\begin{subfigure}{\textwidth}
  \begin{tikzpicture}[baseline=0.2cm]
  \tikzstyle{edgename}=[opacity=0]
  \node [proofnet,minimum width=2.2cm] (G) at (0,0) {$G'$};
  \node [princdoor] (prg) at ($(G.-16)+(0,-0.7)$) {};
  \node [auxdoor]   (a2) at ($(G.-90)+(0,-0.7)$) {};
  \node [auxdoor]   (a1) at ($(G.-164)+(0,-0.7)$) {};
  \draw (a2)--(prg) -| ++(0.4,1.3) -| ($(a1)+(-0.4,0)$) -- (a1);
  \draw [dotted] (a1)--(a2);
  \draw [ar] (G.-16)--(prg)   node [type,right=-0.06cm] {$A$}       node [edgename] {$a$};     
  \draw [ar] (G.-90)--(a2)    node [type,right=-0.06cm] {$A_k$}     node [edgename] {$a_k$};
  \draw [ar] (a2)--++(0,-0.65) node [type,right=-0.06cm] {$\wn A_k$} node [edgename] {$c_k$} ;
  \draw [ar] (G.-164)--(a1)   node [type,right=-0.06cm] {$A_1$}     node [edgename] {$a_1$};    
  \draw [ar] (a1)--++(0,-0.65) node [type,right=-0.06cm] {$\wn A_1$} node [edgename] {$c_1$};

  \node [proofnet,minimum width=2.2cm] (H) at ($(G)+(3,0)$) {$H'$};
  \node [princdoor] (prh) at ($(H.-16)+(0,-0.7)$) {};
  \node [auxdoor]   (b2) at ($(H.-90)+(0,-0.7)$) {};
  \node [auxdoor]   (b1) at ($(H.-164)+(0,-0.7)$) {};
  \draw (b2)--(prh) -| ++(0.4,1.3) -| ($(b1)+(-0.4,0)$) -- (b1);
  \draw [dotted] (b1)--(b2);
  \draw [ar] (H.-16)--(prh)   node [type,right=-0.06cm] {$B$}       node [edgename] {$b$};     
  \draw [ar] (H.-90) --(b2)   node [type,right=-0.06cm] {$B_l$}     node [edgename] {$b_l$};    
  \draw [ar] (b2)--++(0,-0.65) node [type,right=-0.06cm] {$\wn B_l$} node [edgename] {$d_l$};
  \draw [ar] (H.-164)--(b1)   node [type,right=-0.06cm] {$A^\perp$} node [edgename] {$b_1$};  
  \draw [ar] (prh)--++(0,-0.65) node [type,right=-0.06cm] {$\oc B$}  node [edgename] {$d$};

  \node [cut] (cut) at ($(prg)!0.5!(b1)+(0,-0.65)$) {};
  \draw [ar,out=-90,in=180] (prg) to node [edgename, below left] {$c$}    node [type,pos=0.2] {$\oc A$}      (cut);
  \draw [ar,out=-90,in=  0] (b1)  to node [edgename, below right=-0.2cm] {$d_1$} node [type,pos=0.2] {$\wn A^\perp$} (cut);

  \draw [reduc] (4.4,0)--++(0.7,0) node [below left=-0.1cm] {$cut$};

  \node [proofnet,minimum width=2.2cm] (G) at ($(G)+(6.9,0)$) {$G'$};
  \node [auxdoor]   (a2) at ($(G.-90)+(0,-0.7)$) {};
  \node [auxdoor]   (a1) at ($(G.-164)+(0,-0.7)$) {};
  \node [proofnet,minimum width=2.5cm] (H) at ($(G)+(2.9,0)$) {$H'$};
  \node [princdoor] (prh) at ($(H.-14)+(0,-0.7)$) {};
  \node [auxdoor]   (b2) at ($(H.-41)+(0,-0.7)$) {};
  \node [auxdoor]   (b1) at ($(H.-159)+(0,-0.7)$) {};
  \node [cut]       (cut)at ($(G.-16)!0.5!(H.-170)+(0,-0.6)$) {};
  \draw [ar,out=-90,in=180] (G.-18)  to node [edgename, below left =-0.1]  {$a$}   node [type,pos=0.3] {$A$}     (cut);
  \draw [ar,out=-90,in=  0] (H.-170) to node [edgename, below right=-0.15] {$b_1$} node [type,pos=0.3] {$A^\perp$}(cut);
  \draw (b2)--(prh) -| ++(0.4,1.3) -| ($(a1)+(-0.4,0)$) -- (a1);
  \draw (a2)--(b1);
  \draw [dotted] (a1)--(a2);
  \draw [dotted] (b1)--(b2);
  \draw [ar] (G.-90)--(a2)     node [type] {$A_k$}     node [edgename] {$a_k$};
  \draw [ar] (a2)--++(0,-0.8)  node [type] {$\wn A_k$} node [edgename] {$c_k$};
  \draw [ar] (G.-164)--(a1)    node [type] {$A_1$}     node [edgename] {$a_1$};
  \draw [ar] (a1)--++(0,-0.8)  node [type] {$\wn A_1$} node [edgename] {$c_1$};
  \draw [ar] (H.-41)--(b2)     node [type] {$B_l$}     node [edgename] {$b_l$};
  \draw [ar] (b2)--++(0,-0.8)  node [type] {$\wn B_l$} node [edgename] {$d_l$};
  \draw [ar] (H.-159)--(b1)    node [type] {$B_2$}     node [edgename,left=-0.1cm] {$b_2$};
  \draw [ar] (b1)--++(0,-0.8)  node [type] {$\wn B_2$} node [edgename] {$d_2$};
  \draw [ar] (H.-14)--(prh)    node [type] {$B$}       node [edgename] {$b$};     
  \draw [ar] (prh)--++(0,-0.8) node [type] {$\oc B$}   node [edgename] {$d$}; 
\end{tikzpicture}
\end{subfigure}
\vspace{1em}

\begin{subfigure}{\textwidth}
  \begin{tikzpicture}[baseline=0.2cm]
  \tikzstyle{edgename}=[opacity=0]
  \node [proofnet,minimum width=2.2cm] (G) at (0,0) {$G'$};
  \node [princdoor] (pr) at ($(G.-16)+(0,-0.7)$) {};
  \node [auxdoor]   (a2) at ($(G.-90)+(0,-0.7)$) {};
  \node [auxdoor]   (a1) at ($(G.-164)+(0,-0.7)$) {};
  \draw (a2)--(pr) -| ++(0.5,1.3) -| ($(a1)+(-0.55,0)$) -- (a1);
  \draw [dotted] (a1)--(a2);
  \draw [ar] (G.-16)--(pr)    node [edgename] {$a$}   node [type,right=-0.06cm] {$A$};     
  \draw [ar] (G.-90)--(a2)    node [edgename] {$a_k$} node [type,right=-0.06cm] {$A_k$};
  \draw [ar] (G.-164)--(a1)   node [edgename] {$a_1$} node [type,right=-0.06cm] {$A_1$};
  \draw [ar] (a2)--++(0,-0.65) node [edgename] {$d_k$} node [type,right=-0.06cm] {$\wn A_k$};
  \draw [ar] (a1)--++(0,-0.65) node [edgename] {$d_1$} node [type,right=-0.06cm] {$\wn A_1$};
  \node [weak]  (wk) at ($(pr)+(1,0)$) {};
  \node [cut]   (cut)at ($(pr)!0.5!(wk)+(0,-0.6)$) {};
  \draw [ar,out=-90,in=180] (pr) to node [edgename,below left]  {$c$} node [type,pos=0.2] {$\oc A$} (cut);
  \draw [ar,out=-90,in=  0] (wk) to node [edgename,right]       {$d$} node [type,pos=0.3] {$\wn A^\perp$} (cut);

  \draw [->,very thick] (2.7,-0.7) --++(0.8,0) node [below left=-0.1cm] {$cut$};

  \node [weak] (b1) at ($(a1)+(4.8,0)$) {};
  \node [weak] (b2) at ($(a2)+(5.3,0)$) {};
  \draw [ar] (b1)--++(0,-0.65) node [edgename] {$d_1$} node [type] {$\wn A_1$};
  \draw [ar] (b2)--++(0,-0.65) node [edgename] {$d_k$} node [type] {$\wn A_k$};
  \draw [dotted] (b1)--(b2);
\end{tikzpicture}
\end{subfigure}
\vspace{1em}

\begin{subfigure}{\textwidth}
\begin{tikzpicture}[baseline=0.2cm]
  \tikzstyle{edgename}=[opacity=0]
  \node [proofnet,minimum width=2.5cm] (G) at (0,0) {$G'$};
  \node [princdoor] (pr) at ($(G.-16)+(0,-0.7)$) {};
  \node [above] at ($(pr)+(0.5,0)$) {$\mathbf{B}$};
  \node [auxdoor]   (a2) at ($(G.-90)+(0,-0.7)$) {};
  \node [auxdoor]   (a1) at ($(G.-164)+(0,-0.7)$) {};
  \draw (a2)--(pr) -| ++(0.7,1.3) -| ($(a1)+(-0.55,0)$) -- (a1);
  \draw [dotted] (a1)--(a2);
  \draw [ar] (G.-16)--(pr)    node [edgename] {$a$}   node [type] {$A$};     
  \draw [ar] (G.-90)--(a2)    node [edgename] {$a_k$} node [type] {$A_k$};    
  \draw [ar] (G.-164)--(a1)   node [edgename] {$a_1$} node [type] {$A_1$};    
  \draw [ar] (a2)--++(0,-0.8) node [edgename] {$c_k$} node [type] {$\wn A_k$};
  \draw [ar] (a1)--++(0,-0.8) node [edgename] {$c_1$} node [type] {$\wn A_1$};
  \node [dig]  (dig) at ($(pr)+(2,0)$) {};
  \node [cut]   (cut)at ($(pr)!0.5!(dig)+(0,-0.8)$) {};
  \draw [ar,out=-90,in=180] (pr) to node [edgename,below left]  {$c$} node [type,pos=0.2] {$\oc A$} (cut);
  \draw [ar,out=-90,in=  0] (dig) to node [edgename,right] {$d$} node [type,pos=0.3] {$\wn A^\perp$} (cut);
  \node [etc] (etc) at ($(dig)+(0,1)$) {};
  \draw [ar] (etc) -- (dig) node [edgename,right] {$f$} node [type] {$\wn \wn A^\perp$};

  \draw [->,very thick] (4.2,-0.7) --++(0.8,0) node [below left=-0.1cm] {$cut$};

  \node [proofnet,minimum width=2.5cm] (G1) at ($(G)+(7,0)$) {$G'$};
  \node [princdoor] (pri) at ($(G1.-13)+(0,-0.7)$) {};
  \node [above=-0.1] at ($(pri)+(0.7,0)$) {$\mathbf{B_i}$};
  \node [auxdoor]   (a2i) at ($(G1.-90)+(0,-0.7)$) {};
  \node [auxdoor]   (a1i) at ($(G1.-167)+(0,-0.7)$) {};
  \draw (a2i)--(pri) -| ++(1,1.3) -| ($(a1i)+(-0.55,0)$) -- (a1i);
  \draw [dotted] (a1i)--(a2i);
  \draw [ar] (G1.-13)--(pri) node [edgename] {$a$}   node [type] {$A$};     
  \draw [ar] (G1.-90)--(a2i) node [edgename] {$a_k$} node [type] {$A_k$};
  \draw [ar] (G1.-167)--(a1i)node [edgename] {$a_1$} node [type] {$A_1$};
  \node [princdoor] (pre) at ($(pri)+(0,-0.9)$) {};
  \node [above=-0.1] at ($(pre)+(0.8,0)$) {$\mathbf{B_e}$};
  \node [auxdoor]   (a2e) at ($(a2i)+(0,-0.9)$) {};
  \node [auxdoor]   (a1e) at ($(a1i)+(0,-0.9)$) {};
  \draw (a2e)--(pre) -| ++(1.1,2.3) -| ($(a1e)+(-0.6,0)$) -- (a1e);
  \draw [dotted] (a1e)--(a2e);
  \draw [ar] (pri)--(pre) node [edgename] {$b$}   node [type] {$\oc A$};
  \draw [ar] (a2i)--(a2e) node [edgename] {$b_k$} node [type] {$\wn A_k$};
  \draw [ar] (a1i)--(a1e) node [edgename] {$b_1$} node [type] {$\wn A_1$};
  \node [dig] (dig1) at ($(a1e)+(0,-0.7)$) {};
  \node [dig] (dig2) at ($(a2e)+(0,-0.7)$) {};
  \draw [ar] (a1e)--(dig1) node [edgename] {$e_1$} node [type] {$\wn \wn A_1$};
  \draw [ar] (a2e)--(dig2) node [edgename] {$e_k$} node [type] {$\wn \wn A_k$};
  \draw [ar] (dig1)--++(0,-0.5) node [edgename] {$c_1$} node [type] {$\wn A_1$};
  \draw [ar] (dig2)--++(0,-0.5) node [edgename] {$c_k$} node [type] {$\wn A_k$};
  \node [etc]  (etc) at ($(pri)!0.5!(pre)+(2,0)$) {};
  \node [cut]  (cut) at ($(pre)!0.5!(etc)+(0,-1.1)$) {};
  \draw [ar,out=-90,in=180] (pre) to node [edgename,below left]  {$c$} node [type,pos=0.2] {$\oc \oc A$} (cut);
  \draw [ar,out=-90,in=  0] (etc) to node [edgename,right] {$f$} node [type,pos=0.3] {$\wn \wn A^\perp$} (cut) ;
\end{tikzpicture}
\end{subfigure}
\vspace{1em}

\begin{subfigure}{\textwidth}
\begin{tikzpicture}[baseline=0.3cm]
  \tikzstyle{edgename}=[opacity=0]
  \node [proofnet,minimum width=2.2cm] (G) at (0,0) {$G'$};
  \node [princdoor] (pr) at ($(G.-16)+(0,-0.7)$) {};
  \node [above] at ($(pr)+(0.5,0)$) {$\mathbf{B}$};
  \node [auxdoor]   (a2) at ($(G.-90)+(0,-0.7)$) {};
  \node [auxdoor]   (a1) at ($(G.-164)+(0,-0.7)$) {};
  \draw (a2)--(pr) -| ++(0.7,1.3) -| ($(a1)+(-0.45,0)$) -- (a1);
  \draw [dotted] (a1)--(a2);
  \draw [ar] (G.-16)--(pr)    node [edgename] {$a$}   node [type] {$A$};     
  \draw [ar] (G.-90)--(a2)    node [edgename] {$a_k$} node [type] {$A_k$};    
  \draw [ar] (G.-164)--(a1)   node [edgename] {$a_1$} node [type] {$A_1$};
  \draw [ar] (a2)--++(0,-0.8) node [edgename] {$c_k$} node [type] {$\wn A_k$};    
  \draw [ar] (a1)--++(0,-0.8) node [edgename] {$c_1$} node [type] {$\wn A_1$};
  \node [cont]  (cont)at ($(pr)+(1.7,0)$) {};
  \node [cut]   (cut) at ($(pr)!0.5!(cont)+(0,-0.8)$) {};
  \draw [ar,out=-90,in=180] (pr)   to node [edgename,below left]  {$c$} node [type,pos=0.2] {$\oc A$} (cut);
  \draw [ar,out=-90,in=  0] (cont) to node [edgename,right] {$d$} node [type,pos=0.3] {$\wn A^\perp$} (cut);
  \node [etc] (etcl) at ($(cont)+(110:1)$) {};
  \node [etc] (etcr) at ($(cont)+( 70:1)$) {};
  \draw [ar] (etcl) -- (cont) node [edgename] {$f$} node [type,left] {$\wn A^\perp$};
  \draw [ar] (etcr) -- (cont) node [edgename,right] {$g$} node [type] {$\wn A^\perp$};

  \draw [reduc] (3.1,-0.9) --++ (0.8,0) node [below left=-0.1cm] {$cut$};

  \node [proofnet,minimum width=2.2cm] (Gl) at ($(G)+(5.2,0)$) {${G'}^l$};
  \node [princdoor] (prl) at ($(Gl.-16)+(0,-0.7)$) {};
  \node at ($(prl)+(0.41,0.17)$) {$\mathbf{B_l}$};
  \node [auxdoor]   (a2l) at ($(Gl.-90)+(0,-0.7)$) {};
  \node [auxdoor]   (a1l) at ($(Gl.-164)+(0,-0.7)$) {};
  \draw (a2l)--(prl) -| ++(0.6,1.4) -| ($(a1l)+(-0.27,0)$) -- (a1l);
  \draw [dotted] (a1l)--(a2l);
  \draw [ar] (Gl.-16)--(prl)  node [edgename] {$a^l$} node [type] {$A$};     
  \draw [ar] (Gl.-90)--(a2l)  node [edgename] {$a_k^l$} node [type] {$A_1$};
  \draw [ar] (Gl.-164)--(a1l) node [edgename] {$a_1^l$} node [type] {$A_k$};

  \node [proofnet,minimum width=2.2cm] (Gr) at ($(Gl)+(3.0,0)$) {${G'}^r$};
  \node [princdoor] (prr) at ($(Gr.-16)+(0,-0.7)$) {};
  \node at ($(prr)+(0.41,0.17)$) {$\mathbf{B_r}$};
  \node [auxdoor]   (a2r) at ($(Gr.-90)+(0,-0.7)$) {};
  \node [auxdoor]   (a1r) at ($(Gr.-164)+(0,-0.7)$) {};
  \draw (a2r)--(prr) -| ++(0.6,1.4) -| ($(a1r)+(-0.27,0)$) -- (a1r);
  \draw [dotted] (a1r)--(a2r);
  \draw [ar] (Gr.-16)--(prr)  node [edgename] {$a^r$} node [type] {$A$};     
  \draw [ar] (Gr.-90)--(a2r)  node [edgename] {$a_k^r$} node [type] {$A_1$};    
  \draw [ar] (Gr.-164)--(a1r) node [edgename] {$a_1^r$} node [type] {$A_k$};  

  \node [cont] (c1) at ($(a1l)!0.2!(a1r)+(0,-1.2)$) {};
  \node [cont] (c2) at ($(a2l)!0.35!(a2r)+(0,-1.2)$) {};
  \draw [ar] (a1l)--(c1) node [edgename,left,pos=0.4]               {$b_1^l$} node [type,left,pos=0.4] {$\wn A_1$};
  \draw [ar] (a1r)--(c1) node [edgename,below right=-0.05,pos=0.85] {$b_1^r$} node [type,below right=-0.05,pos=0.85] {$\wn A_1$};
  \draw [ar] (a2l)--(c2) node [edgename,left,pos=0.4]               {$b_k^l$} node [type,left,pos=0.4] {$\wn A_k$};
  \draw [ar] (a2r)--(c2) node [edgename,below right=-0.05,pos=0.85] {$b_k^r$} node [type,below right=-0.05,pos=0.85] {$\wn A_k$};
  \draw [ar] (c1)--++(0,-0.8) node [edgename] {$c_1$} node [type] {$\wn A_1$};
  \draw [ar] (c2)--++(0,-0.8) node [edgename] {$c_k$} node [type] {$\wn A_k$};

  \node [etc] (etcl2) at ($(etcl)+(8,0)$) {};
  \node [etc] (etcr2) at ($(etcr)+(8,0)$) {};
  \node [cut] (cutl) at ($(prl)!0.5!(etcl2)+(0,-1.1)$) {};
  \node [cut] (cutr) at ($(prr)!0.5!(etcr2)+(0,-1.1)$) {};
  \draw [ar,out=-40,in=180] (prl)  to node [edgename,below left,pos=0.85]  {$c^l$} node [type,below,pos=0.85] {$\oc A$} (cutl);
  \draw [ar,out=-90,in=  0] (etcl2) to node [edgename,below,pos=0.87] {$f$} node [type,below,pos=0.85] {$\wn A^\perp$} (cutl);
  \draw [ar,out=-90,in=180] (prr)  to node [edgename,below,pos=0.77]  {$c^r$} node [type,below,pos=0.8] {$\oc A$} (cutr);
  \draw [ar,out=-90,in=  0] (etcr2) to node [edgename,below,pos=0.8] {$g$} node [type,below,pos=0.8] {$\wn A^\perp$} (cutr);
\end{tikzpicture}
\end{subfigure}

\caption{\label{cut_elim_exp_rules}Exponential cut-elimination steps}
\end{figure}
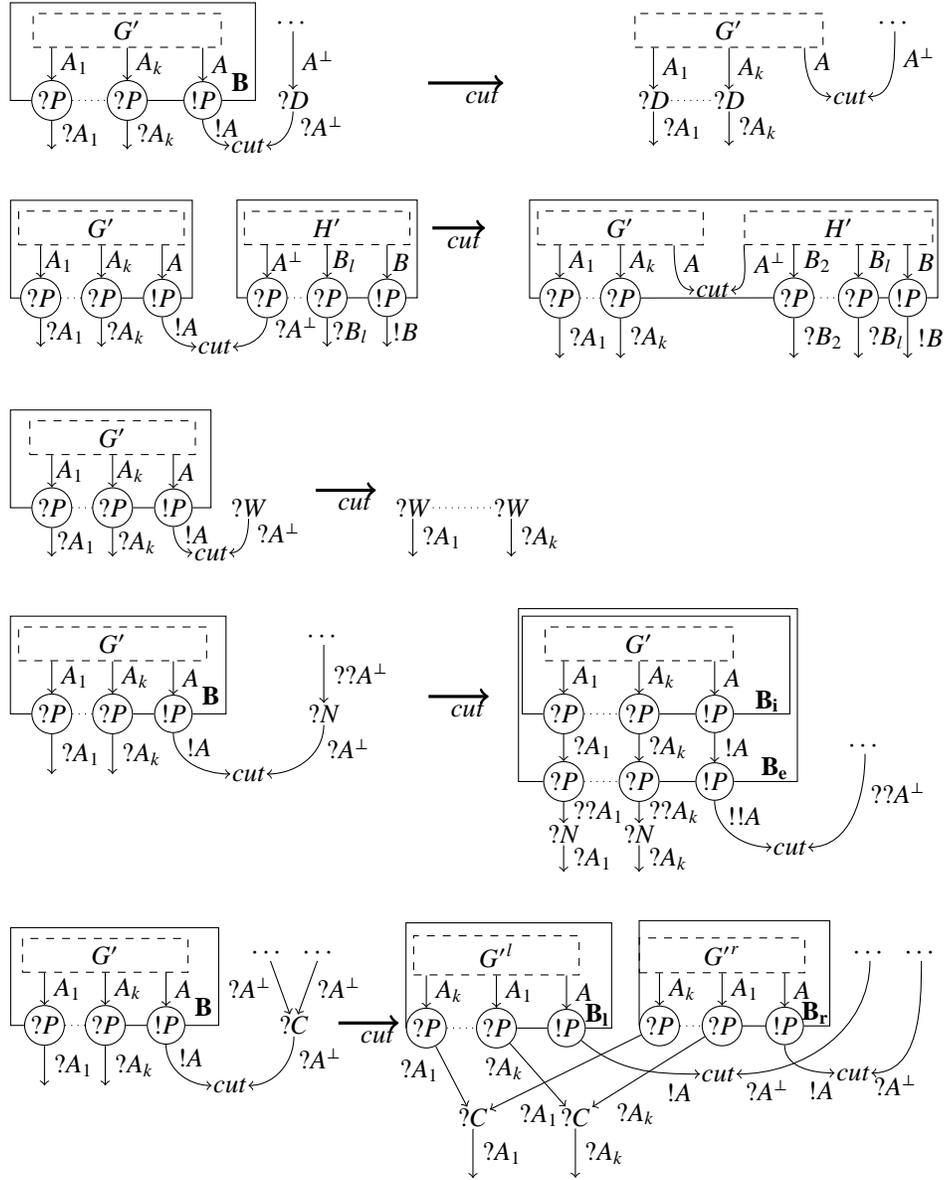
\tikzstyle{edgename}=[opacity=1, midway, left, black]

\begin{lemma}{\cite{girard1996proof}}
  Proof-nets and $LL$ proof-nets are stable under cut-elimination.
\end{lemma}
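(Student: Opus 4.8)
The plan is to argue by structural induction on the construction of $G$ as a proof-net (resp.\ $LL$ proof-net) given by Figure~\ref{rules_labelling_proofnet}, establishing the slightly stronger statement: if $G \cutRel H$ then $H$ is a proof-net (resp.\ $LL$ proof-net) \emph{with the same conclusions as $G$} --- the same pending edges, and in the $LL$ case the same formulae labelling them. Since $\cutRel$ is generated by the local rewrites of Figures~\ref{cut_elim_non_exp_rules} and~\ref{cut_elim_exp_rules}, it is enough to treat one step $G \cutRel H$ that eliminates a single cut node $c$.

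First I would dispose of the routine part of the induction. If $G$ is an axiom link it has no cut node, so there is nothing to prove. Otherwise $G$ is built from one or two strictly smaller proof-nets $S$ (and $S'$) by a construction rule $r$. If $c$, together with its two neighbouring nodes and --- when the rewrite involves a box --- that box, lies entirely inside $S$ or inside $S'$ (which is automatic unless $r$ is the cut rule and $c$ is the very cut node it introduces), then the rewrite restricts to a step $S \cutRel S^\star$ (or $S' \cutRel S'^\star$); by the induction hypothesis $S^\star$ is a proof-net with the same conclusions as $S$, so $r$ can be re-applied unchanged --- it only inspects the conclusions of its arguments, which are untouched (so, for the $\forall$ rule, the side condition still holds), and for the promotion rule it simply re-boxes $S^\star$ --- and the result is exactly $H$.

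The heart of the proof is the remaining case: $r$ is the cut rule, $G = \mathrm{cut}(S,S')$, and $c$ joins a pending edge $e$ of $S$ to a pending edge $e'$ of $S'$ with $\beta(e')=\beta(e)^{\perp}$. Let $n$ (resp.\ $m$) be the node of $S$ (resp.\ $S'$) whose conclusion is $e$ (resp.\ $e'$). Here I would use the standard \emph{splitting} property of proof-nets to reorganize the constructions of $S$ and $S'$ so that $n$ and $m$ are exposed, i.e.\ so that $n$ is the node introduced by the last construction step of $S$, or a door of the box that this step promotes (with the premises of $n$ supplied by already-built sub-nets), and likewise for $m$ in $S'$. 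Then I would case-split on the pair of kinds of $(n,m)$. Because the two sides of $c$ bear dual formulae, and because the conclusions of an axiom link may bear any formula whereas every other node fixes the top connective of its conclusion, this pair is forced to be exactly one of the configurations drawn in Figures~\ref{cut_elim_non_exp_rules}--\ref{cut_elim_exp_rules}: an axiom link against an arbitrary node; a $\otimes$ node against a $\parr$ node; a $\forall$ node against an $\exists$ node; or the principal door $\fpriLab$ of a box against, respectively, a dereliction node, an auxiliary door $\fauxLab$ of another box, a weakening node, a digging node, or a contraction node. In each case one reads off from the corresponding figure that $H$ is obtained from the sub-nets feeding $S$ and $S'$ (after deleting $c$, $n$, $m$; reusing the box content, possibly duplicated, erased, re-boxed or nested; and, in the exponential cases, threading fresh dereliction / weakening / digging / contraction nodes onto the auxiliary doors) by a sequence of rules of Figure~\ref{rules_labelling_proofnet}, and that its pending edges and their labels coincide with those of $G$. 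For the $LL$ version the labelling constraints survive because every rewrite only re-wires edges among the immediate subformulae of $\beta(e)$ and $\beta(e')$; in the $\forall/\exists$ step one uses that the substitution $[B/X]$ is performed uniformly on the whole net, so each constraint remains an instance of its schema, the side condition ``$X$ not free in the other conclusions'' of the $\forall$ rule making the substitution harmless elsewhere.

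I expect the main obstacle to be exactly this last stage: making the ``expose $n$ and $m$'' reorganization precise (this is essentially the splitting-tensor argument, and must be justified from the inductive definition rather than taken for granted), and, in the box/contraction and box/digging cases, checking carefully that a fresh \emph{copy} of a proof-net is again a proof-net and that the new contraction (resp.\ digging) nodes are attached to the auxiliary doors in a way still generated by Figure~\ref{rules_labelling_proofnet} with the correct labels. None of this is conceptually deep --- it is the standard verification behind the cited result of Girard~\cite{girard1996proof} --- but it is where essentially all the work lies; the non-exponential steps and the global tracking of conclusions are routine by comparison.
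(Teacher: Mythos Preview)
The paper does not give a proof of this lemma at all: it is stated with a citation to Girard~\cite{girard1996proof} and left at that, as a standard result from the literature. So there is no ``paper's own proof'' to compare against.

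Your sketch is a reasonable outline of the classical argument and is essentially correct for the proof-nets of this paper. One remark that may simplify your plan: here proof-nets are defined \emph{inductively} (Definition~\ref{def_proofnet}), not via a correctness criterion on proof structures, so the ``splitting'' step you flag as the main obstacle is not the splitting-tensor lemma in the usual sense but rather the (easier) observation that the inductive construction can be reorganized to expose the two nodes adjacent to the cut --- essentially a rule-permutation argument on the underlying sequent-calculus derivation. With that in hand, the case analysis you describe goes through; the exponential cases (duplication, erasure, nesting of boxes) are indeed where the bookkeeping concentrates, as you note.
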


\subsection{Definition of Context Semantics}
\label{section_def_cont_sem}A common method to prove strong bounds on a rewriting system is to assign a weight $W_G \in \mathbb{N}$ to each term $G$ such that, if $G$ reduces to $H$, $W_G > W_H$. In $LL$, the $\oc P/ \wn C$ step makes the design of such a weight hard: a whole box is duplicated, increasing the number of nodes, edges, cuts,... The idea of context semantics is to define $W_G$ as $|A_G|$ with $A_G$ the edges which {\em appear} during reduction: edges of a net $G_k$ such that $G \cutRel G_1 \cutRel \cdots \cutRel G_k$\footnote{We identify edges which are unaffected by reduction: in Figure~\ref{fig_ex_duplicates}, $k$ only counts once in in $A_G$.}. We can notice that whenever $G \cutRel H$, we have $A_H \subseteq A_G$: if $e \in A_H$ then $e$ is an edge of a proof-net $H_k$ with $H \cutRel H_1 \cdots H_k$ so $G \cutRel H \cutRel H_1 \cdots H_k$. Moreover, this inclusion is strict because the premises of the cut reduced between $G$ and $H$ are in $A_G$ but not in $A_H$. Thus, $|A_G| \geq |A_H|+2$\footnote{We can not deduce that $|A_G| > |A_H|$ because they might be both infinite.}.

However, such a definition of $W_G$ would be impractical: proving a bound on $W_G$ does not seem easier than directly proving a bound on the number of reduction steps. The solution of context semantics is to consider for every edge $e$ of $E_G$, the set $\can{e}$ of {\em residues of $e$}: the elements of $A_G$ ``coming'' from $e$. For instance, in the leftmost proof-net of Figure~\ref{fig_ex_duplicates}, the residues of $e$ are $\{e,e_1,e_2,e_3,e_4\}$. The set $\can{E_G}$ of every edge residue is contained in $A_G$ but is not always equal to it (e.g. the premises of the two cuts in the middle proof-net of Figure~\ref{fig_ex_duplicates} are not residues of any edge of the leftmost proof-net so they are in $A_G$ but not in $\can{E_G}$. Nonetheless, we still have $|\can{E_G}| \geq |\can{E_H}|+1$ whenever $G \cutRel H$, so the length of any path of reduction beginning by $G$ is at most $|\can{E_G}|$ (Theorem~\ref{theo_dallago_edges}).

To bound $W_G$, we characterize edge residues by context semantics paths, simulating cut-elimination. Those paths are generated by {\em contexts} travelling across the proof-net according to some rules. The paths of context semantics in a proof-net $G$ are exactly the paths which are preserved by cut-elimination (such paths are called {\em persistent} in the literature~\cite{danos1995proof}). Computing those paths is somehow like reducing the proof-net. Proving bounds on the number of residues thanks to those paths rather than proving bounds directly on the reduction offers two advantages:
\begin{itemize}
\item Complex properties on proof-nets, which may be hard to manipulate formally, are transformed into existence (or absence) of paths of a certain shape.
\item For every $G \cutRel H$, we have $W_G > W_H$. Thus, the length of {\em any} normalization sequence is bounded by $W_G$. The bounds obtained in this paper do not depend on the reduction strategy.
\end{itemize}

To represent lists we use the notation $[a_1;\cdots;a_n]$.\label{def_arobase} To represent concatenation, we use $@$: $[a_1;\cdots;a_n]@[b_1;\cdots;b_k]$ is defined as $[a_1;\cdots;a_n;b_1;\cdots;b_k]$ and \label{def_insertion}$.$ represents ``push'' ($[a_1;\cdots;a_n].b$ is defined as $[a_1;\cdots;a_n;b]$).\label{def_listrestriction} $|[a_1;\cdots;a_j]|$ refers to $j$, the length of the list. If $X$ is a set, $|X|$ is the number of elements of $X$.%$|[a_1; \cdots ; a_j]|_X$ is the number of indices $i$ such that $a_i$ is in $X$.

A context is a pair $((e,P),T)$ composed of a  {\em potential edge} $(e,P)$ representing an edge residue ($e$ is a directed edge of the proof-net) and a {\em trace} $T$ used to remember some information about the beginning of the path. This information is necessary to ensure that the paths are preserved by cut-elimination. The following definitions introduce the components of potential edges and traces.

\begin{figure}\centering
  \begin{tikzpicture}
    \draw [very thick,->] (3.2,0) --++ (0.5,0);
    \draw [very thick,->] (8.45,0) --++ (0.5,0) node [above] {$5$};

    \begin{scope}
      \node [princdoor] (bi) at (0,0) {};
      %\node [auxdoor]   (xi) at ($(bi)+(-0.6,0)$) {};
      \node [par]       (p) at ($(bi)+(0,0.6)$) {};
      \node [ax]        (a) at ($(p)+(0,0.5)$) {};
      \draw [ar] (a) to [out= -20,in= 60] (p);
      \draw [ar] (a) to [out=-160,in=120] (p);
      \draw (bi) -| ++(0.6,1.24) -| ($(bi)+(-0.4,0)$) -- (bi);
      \node [above] at ($(bi)+(0.45,-0.05)$) {${\mathbf C}$};
      \draw [ar] (p)--(bi) node [edgename] {$e$};
      \node [princdoor] (b) at ($(bi)+(0,-0.8)$) {};
      \node[above] at ($(b)+(0.5,-0.05)$) {${\mathbf B}$};
      %\node[nodename,above left=0.15cm] at (b) {$n$};
      %\node[nodename,right=0.15cm] at (p) {$p$}; 
      \draw (b) -| ++(0.67,2.1) -| ($(b)+(-0.45,0)$) -- (b);
      \draw [ar] (bi)  -- (b) node [edgename] {$a$};
      \node [cont] (c)  at ($(b) +(1.4,0)$) {};
      \node [cut]  (cut)at ($(b)!0.5!(c)+(0,-0.45)$) {};
      \node [weak] (w)  at ($(c) +(120:0.8)$) {};
      \node [auxdoor]  (d)  at ($(c) +( 60:0.8)$) {};
      \node [cont] (ci) at ($(d) +( 90:0.6)$) {};
      \node [tensor] (t) at ($(ci)+(0.9,0)$) {};
      \node [princdoor] (pri) at (d-|t) {};
      \node [ax]   (ab) at ($(ci)!0.5!(t)+(0,0.35)$) {};
      \node [ax]   (ah) at ($(ci)!0.5!(t)+(0,0.6)$) {};
      \draw [ar] (b) to [out=-90,in=180]  node [edgename,pos=0.11,left] {$b$} (cut);
      \draw [ar] (c) to [out=-90,in=  0]  node [edgename,right] {$c$}(cut);
      \draw [ar] (w) -- node [edgename,pos=0.3, below left=-0.1cm] {$j$} (c);
      \draw [ar] (d)  to [out=-90,in=60]  node [edgename,right] {$d$} (c);
      \draw [ar] (ci) to node [edgename] {$f$} (d);
      \draw [ar] (ab) to [out=180,in= 60] (ci);
      \draw [ar] (ah) to [out=180,in=120]  node [edgename,pos=0.5,above left=-0.05] {$g$} (ci);
      \draw [ar] (ab) to [out=-20,in=120] (t);
      \draw [ar] (ah) to [out=-20,in= 60]  node [edgename,pos=0.5,above right=-0.05] {$h$} (t);
      \draw [ar] (t) -- (pri)  node [edgename,right] {$i$};
      \draw [ar] (pri)--++(0,-0.5) node [edgename,right] {$k$};
      \draw (pri)-| ++(0.4,1.4) -| ($(d)+(-0.4,0)$) -- (d) -- (pri);
    \end{scope}

    \begin{scope}[shift={(4.3,0)}]
      \node [princdoor] (bi1) at (0,0) {};
      \node [par]       (p1) at ($(bi1)+(0,0.6)$) {};
      %\node [right=0.1cm] at (p1) {${\mathbf p_1}$};
      \node [ax]        (a1) at ($(p1)+(0,0.5)$) {};
      \draw [ar] (a1) to [out= -20,in= 60] (p1);
      \draw [ar] (a1) to [out=-160,in=120] (p1);
      \draw (bi1) -| ++(0.6,1.25) -| ($(bi1)+(-0.45,0)$) -- (bi1);
      \draw [ar] (p1)--(bi1) node [edgename,left] {$e_1$};
      \node [above] at ($(bi1)+(0.4,-0.05)$) {${\mathbf C_1}$};
      \node [princdoor] (b1) at ($(bi1)+(0,-0.8)$) {};
      %\node[nodename,above left=0.15cm] at (b1) {$n_1$};
      %\node[nodename,right=0.15cm] at (p1) {$p_1$}; 
      \draw (b1) -| ++(0.65,2.1) -| ($(b1)+(-0.5,0)$) -- (b1);
      \node[above] at ($(b1)+(0.45,-0.05)$) {${\mathbf B_1}$};
      \draw [ar] (bi1)  -- (b1) node [edgename,left] {$a_1$};
      \node [weak] (w)  at ($(b1)+(0.95,0)$) {};
      \node [cut]  (c1) at ($(b1)!0.5!(w)+(0,-0.45)$) {};
      \draw [ar] (b1) to [out=-90,in=180] (c1);
      \draw [ar] (w)  to [out=-90,in=  0] (c1);

      \node [princdoor] (bi2) at ($(bi1)+(1.8,0)$) {};
      \node [above] at ($(bi2)+(0.4,-0.05)$) {${\mathbf C_2}$};
      \node [par]       (p2) at ($(bi2)+(0,0.6)$) {};
      %\node [right=0.1] at (p2) {${\mathbf p_2}$};
      \node [ax]        (a2) at ($(p2)+(0,0.5)$) {};
      \draw [ar] (a2) to [out= -20,in= 60] (p2);
      \draw [ar] (a2) to [out=-160,in=120] (p2);
      \draw (bi2) -| ++(0.6,1.25) -| ($(bi2)+(-0.45,0)$) -- (bi2);
      \draw [ar] (p2)--(bi2) node [edgename] {$e_2$};
      \node [princdoor] (b2) at ($(bi2)+(0,-0.8)$) {};
      %\node[nodename,above left=0.15cm] at (b2) {$n_2$};
      %\node[nodename,right=0.15cm] at (p2) {$p_2$}; 
      \draw (b2) -| ++(0.65,2.1) -| ($(b2)+(-0.5,0)$) -- (b2);
      \node[above] at ($(b2)+(0.45,-0.05)$) {${\mathbf B_2}$};
      \draw [ar] (bi2)  -- (b2) node [edgename] {$a_2$};
      \node [auxdoor]  (d)  at ($(b2) +(1.2,0)$) {};
      \node [cut]  (c2) at ($(b2)!0.5!(d)+(0,-0.5)$) {};
      \draw [ar] (b2) to [out=-90,in=180] (c2);
      \draw [ar] (d) to  [out=-90,in=  0] (c2);

      \node [cont] (ci) at ($(d) +( 90:0.7)$) {};
      \node [tensor] (t) at ($(ci)+(0.7,0)$) {};
      \node [princdoor] (pri) at (d-|t) {};
      \draw (pri)-| ++(0.35,1.6) -| ($(d)+(-0.35,0)$) -- (d) -- (pri);
      \node [ax]   (ab) at ($(ci)!0.5!(t)+(0,0.4)$) {};
      \node [ax]   (ah) at ($(ci)!0.5!(t)+(0,0.7)$) {};
      \draw [ar] (b) to [out=-90,in=180] (cut);
      \draw [ar] (c) to [out=-90,in=  0] (cut);
      \draw [ar] (ci) to (d);
      \draw [ar] (ab) to [out=180,in= 60] (ci);
      \draw [ar] (ah) to [out=180,in=120] (ci);
      \draw [ar] (ab) to [out=-20,in=120] (t);
      \draw [ar] (ah) to [out=-20,in= 60] (t);
      \draw [ar] (t) -- (pri);
      \draw [ar] (pri) --++(0,-0.6) node [edgename] {$k$};
    \end{scope}

    \begin{scope}[shift={(10.2,-0.6)}]
      \node [tensor] (t) at ($(0,0.3)$) {};
      \node [princdoor] (pri) at ($(t)+(0,-0.6)$) {};
      \node [princdoor] (b1) at ($(t)+(140:0.7)$) {};
      \node [princdoor] (b2) at ($(t)+( 40:0.7)$) {};
      \node [par]       (p1) at ($(b1)+(0,0.6)$) {};
      \node [par]       (p2) at ($(b2)+(0,0.6)$) {};
      %\node[nodename,right=0.15cm] at (p1) {$p_3$}; 
      %\node[nodename,right=0.15cm] at (p2) {$p_4$}; 
      %\node [right=0.1] at (p1) {$\mathbf p_3$};
      %\node [right=0.1] at (p2) {$\mathbf p_4$};
      \node [ax]        (a1) at ($(p1)+(0,0.45)$) {};
      \node [ax]        (a2) at ($(p2)+(0,0.45)$) {};
      \draw (b1) -| ++(0.5,1.16) -| ($(b1)+(-0.37,0)$) -- (b1);
      \draw (b2) -| ++(0.5,1.16) -| ($(b2)+(-0.37,0)$) -- (b2);
      \draw [ar] (a1) to [out= -20,in= 60] (p1);
      \draw [ar] (a1) to [out=-160,in=120] (p1);
      \draw [ar] (a2) to [out= -20,in= 60] (p2);
      \draw [ar] (a2) to [out=-160,in=120] (p2);
      \draw [ar] (p1) -- (b1) node [edgename,right] {$e_3$}; 
      \draw [ar] (p2) -- (b2) node [edgename,right] {$e_4$};
      \draw [ar] (b1) -- (t) node [edgename, below left=-0.1cm] {$h$};
      \draw [ar] (b2) -- (t);
      \draw [ar] (t) --(pri) node [edgename, right] {$i$}; 
      \draw [ar] (pri)--++(0,-0.5) node [edgename,right] {$k$};
      \draw (pri) -| ++(1.15,2.24) -| ($(pri)+(-1,0)$) -- (pri);
    \end{scope}
  \end{tikzpicture}
  \caption{\label{fig_ex_duplicates}\label{path_example}Cut-elimination of a proof-net.}
\end{figure}
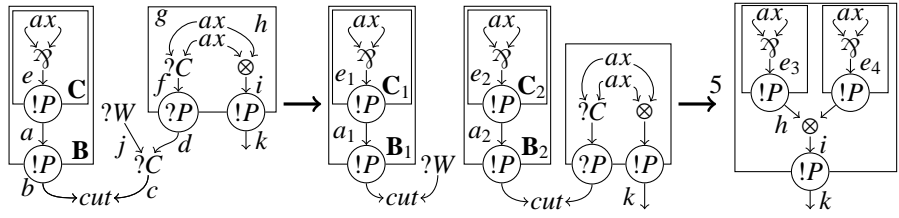

The language $\sig$ of {\it signatures} is defined by induction by the following grammar:\label{def_sig}
\begin{equation*}
\sig = \sige \mid \sigl(\sig) \mid \sigr(\sig) \mid \sigp(\sig) \mid \sign(\sig,\sig)
\end{equation*}
A signature corresponds to a list of choices of premises of $?C$ nodes, to designate a particular residue of a box. The signature $\sigr(t)$ means: ``I choose the right premise, and in the next $?C$ nodes I will use $t$ to make my choices''. The construction $\sign(t,u)$ allows to encapsulate two sequels of choices into one. It corresponds to the digging rule ($\oc\oc A \vdash B \rightsquigarrow \oc A \vdash B$, represented by the $\digLab$ node in proof-nets) which ``encapsulates'' two $\oc$ modalities into one. The $\sigp(t)$ construction is a degenerated case of the $\sign$ construction. Intuitively, $\sigp(t)$ corresponds to $\sign(\varnothing, t)$.

\label{def_potential}A {\em potential} is a list of signatures: a signature corresponds to the duplication of one box, but an element is copied whenever any of the boxes containing it is cut with a $?C$ node. The set of potentials is written $\Pot$. For every edge $e \in \edges{G}$, we define $Pot(e)$ as $\Set{(e,P)}{P \in Pot \text{ and }|P|=\depth{e}}$ such pairs are named {\em potential edges}.

Potentials are used to represent residues. For instance, the residues of $e$ in Figure~\ref{fig_ex_duplicates}, ($e$, $e_1$, $e_2$, $e_3$ and $e_4$) are respectively represented by the potential edges $(e,[\sige;\sige])$, $(e,[\sigl(\sige);\sige])$, $(e,[\sigr(\sige);\sige])$, $(e,[\sigr(\sige);\sigl(\sige)])$ and $(e,[\sigr(\sige);\sigr(\sige)])$.

\label{def_traceelement}A {\em trace element} is one of the following symbols:  $\parr_l , \parr_r, \otimes_l, \otimes_r, \forall, \exists,!_t,?_t$ with $t$ a signature. A trace element means ``I have crossed a node with this label, from that premise to its conclusion''.\label{def_trace}  A {\em trace } is a non-empty list of trace elements. The set of traces is $\trace$. A trace is a memory of the path followed, up to cut-eliminations. \label{def_dual_trace}We define duals of trace elements: $\parr_l^\perp = \otimes_l$, $!_t^\perp = ?_t$,... and extend the notion to traces by $[a_1; \cdots ; a_k]^\perp = [a_1^\perp; \cdots ; a_k^\perp]$.

\label{def_context}A {\em context} is a tuple $((e,P),T)$ with $(e,P)$ a potential edge and $T \in \trace$. It can be seen as a state of a token that travels around the net. It is located on edge $e$ (more precisely its residue corresponding to $P$) and carries information $T$ about its past travel. The set of contexts of $G$ is written $\context{G}$. We extend the mapping $(\_)^{\perp}$ on contexts by $((e,P),T)^{\perp}=((\overline{e},P),T^{\perp})$.

\nvar{\sepColonnes}{3cm}
\nvar{\sepRegles}{3.5cm}
\begin{figure}\centering
  \begin{tikzpicture}
    \node [cut] (n) at (0, 0) {};
    \draw [ar,out=-100,in= 20] ($(n)+( 30:0.7)$) to node [edgename, below right=-0.08cm] {$f$} (n.0);
    \draw [ar,out= -80,in=160] ($(n)+(150:0.7)$) to node [edgename, below left=-0.05cm] {$e$} (n.180);
    
    \node [ax] (n) at (\sepColonnes,0.2) {};
    \draw [ar,out=-160,in= 80] (n.180) to node [edgename,above left=-0.07cm] {$g$} ($(n)+(-150:0.7)$);
    \draw [ar,out= -20,in=100] (n.0) to node [edgename,above right=-0.07cm]{$h$} ($(n)+( -30:0.7)$);
    
    \draw (-1,-1) node [right] (rules) {
      $\begin{array}{lllll}
         ((e,P),&\hspace{-0.2em}T) &\hspace{-0.2em} \rightsquigarrow&\hspace{-0.2em} ((\overline{f},P),&\hspace{-0.2em}T)\\
         ((\overline{g},P),&\hspace{-0.2em}T) &\hspace{-0.2em}\rightsquigarrow&\hspace{-0.2em} ((h,P),&\hspace{-0.2em}T) 
       \end{array}$
     };
     
     \begin{scope}[shift={(0,-2.4)}]
       \node [par] (n) at (0,0) {};
       \draw[ar] ($(n)+(120:0.5)$)--(n) node [edgename, left] {$a$};
       \draw[ar] ($(n)+( 60:0.5)$)--(n) node [edgename, right] {$b$};
       \draw[ar] (n) --++(-90:0.4) node [edgename, right] {$c$};
       
       \node [tensor] (n) at (\sepColonnes,0) {};
       \draw[ar] ($(n)+(120:0.5)$)--(n) node [edgename, left] {$e$};
       \draw[ar] ($(n)+ (60:0.5)$)--(n) node [edgename, right] {$f$};
       \draw[ar] (n) --++(-90:0.4) node [edgename, right] {$g$};
       
       \draw (-1,-1.5) node [right] (rules) {
         $\begin{array}{lllll}
            ((a,P),&\hspace{-0.2em}T) &\hspace{-0.2em}\rightsquigarrow &\hspace{-0.2em}((c,P),&\hspace{-0.2em} T.\parr_l) \\
            ((b,P),&\hspace{-0.2em}T) &\hspace{-0.2em}\rightsquigarrow &\hspace{-0.2em}((c,P),&\hspace{-0.2em} T.\parr_r) \\
            ((e,P),&\hspace{-0.2em}T) &\hspace{-0.2em}\rightsquigarrow &\hspace{-0.2em}((g,P),&\hspace{-0.2em} T.\otimes_l) \\
            ((f,P),&\hspace{-0.2em}T) &\hspace{-0.2em}\rightsquigarrow &\hspace{-0.2em}((g,P),&\hspace{-0.2em} T.\otimes_r) 
          \end{array}$
        };
      \end{scope}
      \begin{scope}[shift={(6.5,0)}]
        \node[forall] (fa) at (0,0.1) {};
        \draw[ar] ($(fa)+(0,0.45)$)--(fa) node [midway, left] {$e$};
        \draw[ar] (fa) --++(0,-0.45) node [midway, left] {$f$};
        
        \node[exists] (ex) at (\sepColonnes,0.1) {};
        \draw[ar] ($(ex)+(0,0.45)$) -- (ex) node [midway, left] {$g$};
        \draw[ar] (ex) --++(0,-0.45) node [midway, left] {$h$};
        \draw (-1,-1) node [right] (rules) {
          $\begin{array}{lllll}
             ((e,P),&\hspace{-0.2em}T) &\hspace{-0.2em}\rightsquigarrow&\hspace{-0.2em} ((f,P),&\hspace{-0.2em} T.\forall) \\
             ((g,P),&\hspace{-0.2em}T) &\hspace{-0.2em}\rightsquigarrow&\hspace{-0.2em} ((h,P),&\hspace{-0.2em} T.\exists)
           \end{array}$
         };
         
       \end{scope}
       \begin{scope}[shift={(6,-2.35)}]
         \node[der] (n) at (0,0) {};
         \draw[ar] ($(n)+(0,0.45)$)--(n) node [edgename, left] {$e$};
         \draw[ar] (n) --++(0,-0.45) node [edgename, left] {$f$};
         
         \node[cont] (n) at (\sepColonnes,0) {};
         \draw[ar] ($(n)+(120:0.45)$)--(n) node [edgename] {$g$};
         \draw[ar] ($(n)+( 60:0.45)$)--(n) node [edgename, right] {$h$};
         \draw[ar] (n) --++(-90:0.45) node [edgename] {$i$};
         
         \draw (-1,-1.3) node [right] (rules) {
           $\begin{array}{lllll}
              ((e,P),&\hspace{-0.2em} T )    &\hspace{-0.2em}\rightsquigarrow &\hspace{-0.2em}((f,P),&\hspace{-0.2em} T.?_{\sige})\\
              ((g,P),&\hspace{-0.2em} T.?_t) &\hspace{-0.2em}\rightsquigarrow &\hspace{-0.2em}((i,P),&\hspace{-0.2em} T.?_{\sigl(t)})\\
              ((h,P),&\hspace{-0.2em} T.?_t) &\hspace{-0.2em}\rightsquigarrow &\hspace{-0.2em}((i,P),&\hspace{-0.2em} T.?_{\sigr(t)})\\
            \end{array}$
          };
        \end{scope}

        \begin{scope}[shift={(0,-5.8)}]
          \node [dig] (n) at (2,0) {};
          \draw[ar] ($(n)+(90:0.5)$) -- (n) node [edgename] {$g$};
          \draw[ar] (n) --++(-90: 0.5) node [edgename] {$h$};
          
          %\draw (\sepColonnes,0 ) node {and $|T| \geq 1$};
          
          \draw (-1,-1.15) node [right] (rules) {
            $\begin{array}{lllll}
               ((g,P),&\hspace{-0.2em} T.\wn_{t_1}.\wn_{t_2}) &\hspace{-0.2em}\rightsquigarrow&\hspace{-0.2em} ((h,P),&\hspace{-0.2em} T.\wn_{\sign(t_1,t_2)})\\ 
               ((g,P),&\hspace{-0.2em} [\wn_t]) &\hspace{-0.2em}\rightsquigarrow&\hspace{-0.2em} ((h,P),&\hspace{-0.2em} [\wn_{\sigp(t)}]) 
             \end{array}$
           };
         \end{scope}

         \begin{scope}[shift={(6.8,-5.3)}]
           \node [auxdoor] (n) at (0, 0) {};
           \draw[ar] ($(n)+(0,0.5)$) -- (n) node [edgename] {$e$};
           \draw[ar] (n) --++(0,-0.5)  node [edgename] {$f$};
           \node [princdoor] at (\sepColonnes,0)  (m) {};
           \draw[ar] ($(m)+(0,0.5)$) -- (m) node [edgename] {$g$};
           \draw[ar] (m) --++(0,-0.5) node [edgename] {$h$};
           \draw (n) -| ++(-.5,.6) (n)--(m) -| ++(.5,.6);
           
           \draw (-1,-1.4) node [right] (rules) {
             $\begin{array}{lllll}
                ((e, P.t),&\hspace{-0.2em} T) &\hspace{-0.2em}\rightsquigarrow \hspace{-0.2em}& ((f,P),&\hspace{-0.2em} T.?_t) \\
                ((g, P.t),&\hspace{-0.2em} T) &\hspace{-0.2em}\rightsquigarrow \hspace{-0.2em}& ((h,P),&\hspace{-0.2em} T.!_t) \\
                ((\overline{f},P),&\hspace{-0.2em} [!_t]) &\hspace{-0.2em}\hookrightarrow \hspace{-0.2em}& ((h, P),&\hspace{-0.2em} [!_t]) \\
              \end{array}$};
          \end{scope}
        \end{tikzpicture}
        \caption{\label{exponential_context_semantic} Rules of the context semantics}
\end{figure}
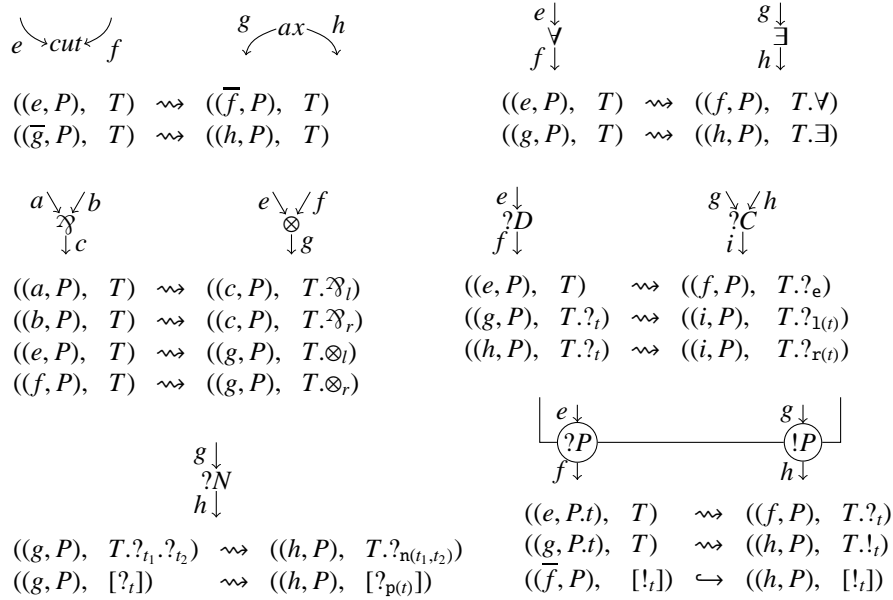

\label{def_nojump}\label{def_onlyjump}The nodes define two relations $\noJump$ and $\onlyJump$ on contexts (Figure~\ref{exponential_context_semantic}). For any rule $C \noJump D$ presented in Figure~\ref{exponential_context_semantic}, we also define the dual rule $D^{\perp} \noJump C^{\perp}$. \label{def_csrel}We define $\csRel$ as the union of $\noJump$ and $\onlyJump$. In other words, $\csRel$ is the smallest relation on contexts including every instance of $\noJump$ rules in Figure~\ref{exponential_context_semantic} together with every instance of their duals and every instance of the $\onlyJump$ rule.

The rules are sound: if $((e,P),T) \csRel ((f,Q),U)$, then $\depth{e}=|P|$ iff $\depth{f}=|Q|$. Those relations are deterministic. In particular, if $C=((e,P),T.\wn_t)$ with $e$ the premise of a $\digLab$ node or $C=((\overline{\sigma_i(B)},P),T.\oc_t)$, the context $D$ such that $C \csRel D$ depends on the size of $T$: there is a rule in the case $T=[]$ and another in the case $T \neq []$. Let us notice that $\noJump$ is injective (Lemma~\ref{lemma_nojump_injective}). It is not the case for the $\csRel$ relation. Indeed, if $B$ is a box with two auxiliary doors then, for every potential $P$ and signature $t$, we have $((\overline{\sigma_1(B)},P),[\oc_t]) \csRel ((\sigma(B),P),[\oc_t])$ and  $((\overline{\sigma_2(B)},P),[\oc_t]) \csRel ((\sigma(B),P),[\oc_t])$.

\begin{lemma}{\label{lemma_nojump_injective}}
  If $C_1 \noJump D$ and $C_2 \noJump D$ then $C_1=C_2$
\end{lemma}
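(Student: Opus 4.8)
The plan is to read the rules of Figure~\ref{exponential_context_semantic} backwards: I aim to show that the target $D$ of a $\noJump$ step already determines the rule instance that produced it, hence its source.

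First I would record an orientation invariant satisfied by every $\noJump$ rule --- that is, by every rule displayed with $\rightsquigarrow$ in Figure~\ref{exponential_context_semantic} and by every dual of such a rule: the token is moved across exactly one node $n$, and the directed edge that carries the target context always has $n$ as its source endpoint (it is a conclusion of $n$, or the reverse of a premise of $n$). Consequently, writing $D=((f,Q),U)$ with $f=(n,m)$, the crossed node is forced to be $n$, which is read off from $D$ alone.

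Then I would run a finite case analysis on the type of $n$ (axiom, cut, $\otimes$, $\parr$, $\forall$, $\exists$, dereliction, contraction, digging, auxiliary door, principal door; weakening carries no rule). For each type there are only finitely many $\noJump$ rules crossing $n$. Two of them can have $D$ as target only when they sit at the same node --- across distinct nodes this is ruled out by the invariant --- and this happens only for the two contraction rules, the two downward rules at a $\parr$- or $\otimes$-node, and the two digging rules (and likewise their duals); in each such case I would show they are separated by the last one or two elements of $U$: by $\sigl(\cdot)$ versus $\sigr(\cdot)$ for contraction, by $\parr_l$ versus $\parr_r$ (or $\otimes_l$ versus $\otimes_r$) for the multiplicatives, and by $\sign(\cdot,\cdot)$ versus $\sigp(\cdot)$ together with whether $U$ has length one for digging, using that the constructors of $\sig$ are pairwise distinct and injective. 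Once the rule instance is pinned down, the source $((e,P),T)$ is reconstructed as a function of $D$: $e$ is the corresponding premise or conclusion of $n$; $P$ equals $Q$, except at a box door, where the signature read off the last element of $U$ is pushed back onto $Q$; and $T$ is $U$ minus its last element, possibly after removing one signature constructor. This yields $C_1 = C_2$.

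The only mildly delicate point is the exponential bookkeeping --- checking that the data to be stripped from $U$ is genuinely recoverable (the second digging rule fires only for traces of length one; the box-door rules transfer one signature between the potential and the trace; dereliction appends exactly the trace element $?_{\sige}$) --- so that the backward reading is total on contexts admitting a $\noJump$-predecessor and nowhere ambiguous. I do not expect a real obstacle: the lemma just expresses that each right-hand side in Figure~\ref{exponential_context_semantic} is rigid, and in particular that $\noJump$, unlike the full relation $\csRel$, never merges the distinct auxiliary doors of a box into a single target --- exactly the collapse illustrated in the remark preceding this lemma.
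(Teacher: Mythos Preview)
Your proposal is correct. The paper states this lemma without proof, treating it as an immediate consequence of inspecting the rules in Figure~\ref{exponential_context_semantic}; your case analysis is exactly the routine verification the paper leaves implicit, and your orientation invariant (the target edge of a $\noJump$ step always has the crossed node as its tail) is the right organizing observation.
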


Finally, we can observe that for every sequence $((e_1,P_1),T_1) \noJump ((e_2,P_2),T_2) \noJump \cdots ((e_n,P_n),T_n)$, the sequence of directed edges $e_1,\cdots, e_n$ is a path (i.e. the head of $e_i$ is the same node as the tail of $e_{i+1}$). The $\onlyJump$ relation breaks this property as it is non-local: it deals with two non-adjacent edges. %It is the main difference between Dal Lago's context semantics and Girard's geometry of interaction. 
The $\csRel$-paths represent the reduction of a proof-net because they are stable along reduction. For example, the path in the first proof-net of Figure~\ref{path_example}:
% 
%Let $t,u \in \sig$, we define inductively $t \leq_{suf} u$ by : either $t=\sige$ or $t=x(t')$, $u=x(u')$ and $t' \leq_{suf} u'$ (with $x \in \{ \sigl, \sigr \}$).

\begin{align*}
&((e,[\sigr(\sige);\sigl(\sige)]),[\parr_r]) \csRel ((a,[\sigr(\sige)]),[\parr_r;\oc_{\sigl(\sige)}]) \csRel ((b,[]),[\parr_r;\oc_{\sigl(\sige)};\oc_{\sigr(\sige)}]) \csRel\\
& ((\overline{c},[]),[\parr_r; \oc_{\sigl(\sige)};\oc_{\sigr(\sige)}]) \csRel ((\overline{d},[]),[\parr_r; \oc_{\sigl(\sige)};\oc_{\sige}]) \csRel ((\overline{f},[\sige]),[\parr_r; \oc_{\sigl(\sige)}]) \csRel \\
& ((\overline{g},[\sige]),[\parr_r; \oc_{\sige}]) \hspace{-0.15em}\csRel\hspace{-0.15em} ((h,[\sige]),[\parr_r; \oc_{\sige}]) \hspace{-0.15em}\csRel\hspace{-0.15em} ((i,[\sige]),[\parr_r; \oc_{\sige}; \otimes_r]) \hspace{-0.15em}\csRel\hspace{-0.15em} ((k,[]),[\parr_r;\oc_{\sige};\otimes_r;\oc_{\sige}])
\end{align*}
becomes the path $((e_3,[\sige;\sige]),[\parr_r]) \csRel ((h,[\sige]),[\parr_r;\oc_{\sige}]) \csRel ((i,[\sige]),[\parr_r;\oc_{\sige};\otimes_r]) \csRel ((k,[]),[\parr_r;\oc_{\sige};\otimes_r;\oc_{\sige}])$ in the third proof-net of Figure~\ref{path_example}.

\subsection{Dal Lago's weight theorem}\label{section_capturing_residues}
As written earlier, potential edges are intended to ``correspond'' to residues. To precise this correspondence we first define, for every $G \cutRel H$ step, a partial mapping $\pi_{G \rightarrow H}(\_)$ from $\pot{H}$ to $\pot{G}$. For edges $e$ which are not affected by the step, we can define $\pi_{G\rightarrow H}(e,P)=(e,P)$. If the reduction step is a $\oc P / \wn C$ step (bottom of Figure~\ref{cut_elim_exp_rules}) and $e \in E_H$ is contained in $B_l$ (respectively $B_r$) then we define $\pi_{G \rightarrow H}(e,P.t@Q)=(e,P.\sigl(t)@Q)$ (respectively $(e,P.\sigr(t)@Q)$) with $|P|=\partial(B)$. If the reduction step is a $\oc P / \wn N$ step, $e$ is immediately contained in $B_e$ and $f$ is contained in $B_i$, then we define $\pi_{G \rightarrow H}(e,P.t)=(e,P.\sigp(t))$ and $\pi_{G \rightarrow H}(f,P.t.u@Q)=(f,P.\sign(t,u)@Q)$. If the reduction step is a $\oc P / \wn D$ step and $e \in E_H$ belongs to $G'$ then we define $\pi_{G \rightarrow H}(e,P@Q)=(e,P.\sige @Q)$. We do not detail every case and exception because the only purpose of this definition in this paper is to guide intuition. A more precise definition of the mapping is given (on contexts) in Definition~12 of~\cite{perrinelMegathese}.

Let us suppose $G_1 \cutRel G_2 \cdots \cutRel G_k$ and $e'$ an edge of $G_k$. A potential edge $(e,P) \in \pot{\edges{G_1}}$ {\em corresponds} to $e'$ if $\pi_{G_1 \rightarrow G_2} \circ \cdots \circ \pi_{G_{k-1} \rightarrow G_k} (e',[\sige;\cdots;\sige]) = (e,P)$.

Let $e \in \edges{G}$, there are potential edges in $\pot{e}$ which do not correspond to residues of $e$. For instance, in Figure~\ref{fig_ex_duplicates} $a$ has three residues: $a$, $a_1$ and $a_2$. The residue $a_1$ is obtained by choosing the left box during the duplication of box $B$, so it is represented by $(a,[\sigl(\sige)])$. Similarly, $a$ and $a_2$ are represented by $(a,[\sige])$ and $(a,[\sigr(\sige)])$. However, $(a,[\sigr(\sigl(\sige))])$ does not represent any residue. The potential node $(a,[\sigr(\sigl(\sige))])$ means that whenever the box $B_2$ is cut with a $\contLab$ node, we choose the left box. But this situation never happens. It can be observed by the following path: 
\begin{align*}
((\sigma(B),[]),[\oc_{\sigr(\sigl(\sige))}]) \csRel ((\overline{c},[]),[\oc_{\sigr(\sigl(\sige))}]) \csRel ((\overline{d},[]),[\oc_{\sigl(\sige)}]) \csRel ((k,[]),[\oc_{\sigl(\sige)}]) \not \csRel
\end{align*}
The $\sigl(\_)$ has not been used because we did not encounter a second $\wn C$ node. On the contrary, the signatures corresponding to residues are entirely used:
\begin{align*}
&((\sigma(B),[]),[\oc_{\sige}]) \csRel^0 ((b,[]),[\oc_{\sige}]) \\
&((\sigma(B),[]),[\oc_{\sigl(\sige)}]) \csRel^2 ((\overline{j},[]),[\oc_{\sige}]) \hspace{2em} ((\sigma(B),[]),[\oc_{\sigr(\sige)}]) \csRel^2 ((\overline{d},[]),[\oc_{\sige}])
\end{align*}

This is why, in the absence of $\digLab$ nodes, we define the {\em canonical potentials} of $e \in B_{\depth{e}} \subset \cdots B_2 \subset B_1$ as the potential edges $(e,[p_1;\cdots;p_{\depth{e}}])$ such that, for $1 \leq i \leq \depth{e}$, we have $((\sigma(B_i),[p_1;\cdots;p_{i-1}]),[\oc_{p_i}]) \csRel^* ((\_,\_),[\oc_{\sige}])$ (throughout the article, we use $\_$ to denote an object whose name and value are not important to us, for example $C \csRel \_$ means $\exists D \in \context{G}, C \csRel D$).

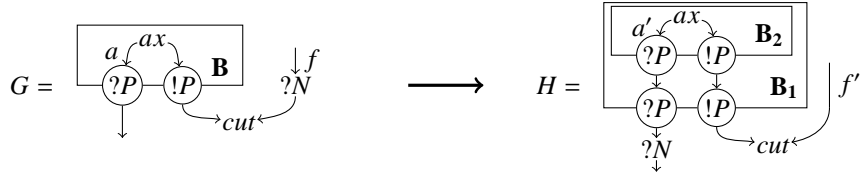
\begin{figure}\centering
  \begin{tikzpicture}
    \draw [->,very thick] (3,0) --++ (1,0); 

    \node [princdoor] (p) at (0,0) {};
    \node [auxdoor]   (a) at ($(p)+(-0.8,0)$) {};
    \node [ax]        (ax)at ($(a)!0.5!(p)+(0,0.6)$) {};
    %\node [pos=0.6,left] at (ax) {$a$};
    \draw [ar,out=-170,in= 80] (ax) to node [pos=0.6,left] {$a$} (a);
    \draw [ar,out= -10,in=100] (ax) to (p);
    \draw (p) -| ++ (0.8,0.8) -| ($(a)+(-0.6,0)$) -- (a)-- (p);
    \node [above] at ($(p)+(0.5,0)$) {$\mathbf{B}$};
    \node at ($(p)+(-2,0)$) {$G=$}; 
    \draw [ar] (a) --++ (0,-0.7);
    \node [dig] (d) at ($(p)+(1.5,0)$) {};
    \node [cut] (c) at ($(p)!0.5!(d)+(0,-0.5)$) {};
    \draw [ar,out=-90,in=170] (p) to (c);
    \draw [ar,out=-90,in= 10] (d) to (c);
    \draw [ar] ($(d)+(0,0.5)$) to node [edgename,right] {$f$} (d);

    \begin{scope}[shift={(7.1,0.4)}]
      \node [princdoor] (p) at (0,0) {};
      \node at ($(p)+(-2.1,-0.4)$) {$H=$}; 
      \node [auxdoor]   (a) at ($(p)+(-0.8,0)$) {};
      \node [ax]        (ax)at ($(a)!0.5!(p)+(0,0.5)$) {};
      \draw [ar,out=-170,in= 80] (ax) to node [pos=0.6,left] {$a'$} (a);
      \draw [ar,out= -10,in=100] (ax) to (p);
      \draw (p) -| ++ (1,0.65) -| ($(a)+(-0.6,0)$) -- (a) -- (p);
      \node [above] at ($(p)+(0.7,0)$) {$\mathbf{B_2}$};
      \node [princdoor] (p1) at ($(p)+(0,-0.7)$) {};
      \node [auxdoor]   (a1) at (a|-p1) {};
      \draw (p1) -| ++ (1.2,1.4) -| ($(a1)+(-0.7,0)$) -- (a1) -- (p1);
      \node [above] at ($(p1)+(0.9,0)$) {$\mathbf{B_1}$};
      %\node at ($(p)+(-1.2)$) {$G$}; 
      \draw [ar] (a)--(a1);
      \draw [ar] (p)--(p1);
      \node [dig] (d) at ($(a1)+(0,-0.55)$) {};
      \draw [ar] (a1)--(d);
      \draw [ar] (d) --++ (0,-0.3);
      \coordinate (etc) at ($(p1)+(1.5,0)$) ;
      \node [cut] (c) at ($(p1)!0.5!(etc)+(0,-0.5)$) {};
      \draw [ar,out=-90,in=170] (p1) to (c);
      \draw [ar,out=-90,in= 10] (etc) to (c);
      \draw ($(etc)+(0,0.6)$) -- (etc) node [edgename,right] {$f'$};
    \end{scope}
  \end{tikzpicture}
  \caption{\label{fig_intui_digging}The potential edge $(a,[\sign(t_2,t_1)]$ corresponds to $(a',[t_1;t_2])$. }
\end{figure}

Now we will consider what happens when $\wn N$ nodes are allowed. Let us consider the node $a$ in Figure~\ref{fig_intui_digging}. The residues of $a$ are exactly the residues of $a'$ and $a$ itself, and ``$(a',[t_1;t_2])$ corresponds to a residue of $a$'' is successively equivalent to: 
\begin{align*}
&\left \{ \hspace{-0.4em}\begin{array}{ll}
((\sigma(B_2),[t_1]),[\oc_{t_2}])\hspace{-0.4em}&\csRel^* ((\_,\_),[\oc_{\sige}])\\
((\sigma(B_1),[]),[\oc_{t_1}])   &\csRel^* ((\_,\_),[\oc_{\sige}])\\
\end{array}\right.
&&\hspace{-1.2em}\Leftrightarrow\left \{ \hspace{-0.4em} \begin{array}{ll}
((\overline{f'},[]),[\oc_{t_2};\oc_{t_1}]) \hspace{-0.4em} &\csRel^* ((\_,\_),[\oc_{\sige}])\\
((\overline{f'},[]),[\oc_{t_1}]) &\csRel^* ((\_,\_),[\oc_{\sige}])
\end{array}\right.\\
\Leftrightarrow&\left \{ \hspace{-0.4em} \begin{array}{ll}
((\overline{f},[]),[\oc_{t_2};\oc_{t_1}])\hspace{-0.4em} &\csRel^* ((\_,\_),[\oc_{\sige}])\\
((\overline{f},[]),[\oc_{t_1}]) &\csRel^* ((\_,\_),[\oc_{\sige}])
\end{array}\right.
&&\hspace{-1.2em}\Leftrightarrow\left \{ \hspace{-0.4em} \begin{array}{ll}
((\sigma(B),[]),[\oc_{\sign(t_2,t_1)}]) \hspace{-0.4em} &\csRel^* ((\_,\_),[\oc_{\sige}])\\
((\sigma(B),[]),[\oc_{\sigp(t_1)}]) &\csRel^* ((\_,\_),[\oc_{\sige}])
\end{array}\right.
\end{align*}
Thus, $(a,[\sign(t_2,t_1)])$ corresponds to a residue of $a$ iff both $\sign(t_2,t_1)$ and $\sigp(t_1)$ are entirely used by their $\csRel$-paths. Let us notice that a box may encounter several $\digLab$ nodes during cut-elimination. To check every case, we define a relation $\simpl$ on signatures such that, in particular, $\sign(t_2,t_1) \simpl \sign(t_2,t_1)$ and $\sign(t_2,t_1) \simpl \sigp(t_1)$.

\begin{definition}\label{def_standard}A signature is {\em standard} if it does not contain the constructor $\sigp$.\label{def_quasistandard} A signature $t$ is {\em quasi-standard} iff for every subtree $\sign(t_1,t_2)$ of $t$, the signature $t_2$ is standard. 
\end{definition}

\label{def_simpl}The binary relation $\simpl$ on $\sig$ is defined by induction as follows: $\sige     \simpl \sige$ and, if we suppose that $t \simpl t'$, then $\sigl(t)\simpl \sigl(t')$, $\sigr(t)\simpl \sigr(t')$, $\sigp(t) \simpl \sigp(t')$, $\sign(u,t)  \simpl \sigp(t')$ and $\sign(t,u) \simpl  \sign(t',u)$. We write that $t'$ is a {\em simplification} of $t$, when $t  \simpl t'$.\label{def_sqsubset} We write $t \simplStrict t'$ for ``$t \simpl t'$ and $t \neq t'$''. We can observe that $\simpl$ is an order and $\simplStrict$ a strict order.

\begin{lemma}[\cite{perrinelMegathese}]\label{lemma_total_order}
  Let $t \in \sig$, then $\simpl$ is a total order on $\Set*{ u \in \sig}{ t \simpl u }$.
\end{lemma}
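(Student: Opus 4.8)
The plan is to prove the statement by structural induction on $t$, using throughout an \emph{inversion principle} for the inductively defined relation $\simpl$: since the outermost constructors of the two sides determine which clause of the definition can have been applied, one reads off that $\sige \simpl u$ forces $u=\sige$; that for $c \in \{\sigl,\sigr,\sigp\}$ one has $c(s)\simpl u$ iff $u = c(s')$ for some $s'$ with $s \simpl s'$; and that $\sign(t_1,t_2)\simpl u$ iff either $u=\sign(t_1',t_2)$ with $t_1\simpl t_1'$, or $u=\sigp(t_2')$ with $t_2\simpl t_2'$. Writing $U_t := \Set{u \in \sig}{t\simpl u}$, this gives the explicit descriptions $U_{\sige}=\{\sige\}$, $U_{c(s)}=\Set{c(s')}{s'\in U_s}$ for $c\in\{\sigl,\sigr,\sigp\}$, and $U_{\sign(t_1,t_2)} = \Set{\sign(t_1',t_2)}{t_1'\in U_{t_1}} \cup \Set{\sigp(t_2')}{t_2'\in U_{t_2}}$. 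Since $\simpl$ is already known to be a partial order, and hence so is its restriction to any $U_t$, the only thing left to establish is that $\simpl$ is \emph{total} on $U_t$, i.e. that any two elements of $U_t$ are $\simpl$-comparable.

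The base case $U_{\sige}=\{\sige\}$ is trivial. For $t=c(s)$ with $c\in\{\sigl,\sigr,\sigp\}$, the map $s'\mapsto c(s')$ is a bijection of $U_s$ onto $U_t$, and the defining clause together with the inversion principle shows that it is an order isomorphism ($c(s')\simpl c(s'')$ iff $s'\simpl s''$). By the induction hypothesis $\simpl$ is total on $U_s$, hence total on $U_t$.

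The main case is $t=\sign(t_1,t_2)$. Put $A := \Set{\sign(t_1',t_2)}{t_1'\in U_{t_1}}$ and $B := \Set{\sigp(t_2')}{t_2'\in U_{t_2}}$, so $U_t = A\cup B$. As in the previous paragraph, $t_1'\mapsto\sign(t_1',t_2)$ is an order isomorphism $U_{t_1}\to A$ (using the clause $\sign(t,u)\simpl\sign(t',u)$ and its inversion), so $\simpl$ is total on $A$ by the induction hypothesis; likewise $t_2'\mapsto\sigp(t_2')$ is an order isomorphism $U_{t_2}\to B$, so $\simpl$ is total on $B$. It remains to compare an arbitrary $\sign(t_1',t_2)\in A$ with an arbitrary $\sigp(t_2')\in B$. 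Since $t_2'\in U_{t_2}$ we have $t_2\simpl t_2'$, so the clause $\sign(u,t)\simpl\sigp(t')$ (instantiated with $u:=t_1'$, $t:=t_2$, $t':=t_2'$) yields $\sign(t_1',t_2)\simpl\sigp(t_2')$; in particular the two are comparable. By the inversion principle a $\sigp$-headed signature can only simplify to $\sigp$-headed ones, so the comparison cannot go the other way, but we do not even need this. Hence $\simpl$ is total on $A\cup B = U_t$, which closes the induction.

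The only slightly delicate point is this last cross-comparison in the $\sign$ case: one has to notice that every $\sigp$-headed simplification of $\sign(t_1,t_2)$ is of the form $\sigp(t_2')$ for a simplification $t_2'$ of the \emph{second} component $t_2$, and that this is exactly the premise needed to fire the clause $\sign(u,t)\simpl\sigp(t')$ and thereby place every element of $A$ below every element of $B$. Everything else is a routine unwinding of the inductive definition of $\simpl$, together with its transitivity and antisymmetry, which are already available from the remark preceding the lemma.
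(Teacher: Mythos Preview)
Your proof is correct. The inversion analysis of $\simpl$ is accurate, the structural induction is set up properly, and the key observation in the $\sign$ case---that every element of $A$ lies $\simpl$-below every element of $B$ via the clause $\sign(u,t)\simpl\sigp(t')$---is exactly what is needed to glue the two inductively totally ordered pieces together.

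The paper itself does not give a proof of this lemma; it is deferred to the thesis \cite{perrinelMegathese}. Your argument is the natural one (structural induction on $t$ using inversion of the inductive definition of $\simpl$), and there is no reason to expect the thesis proof to differ in any essential way.
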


\begin{definition}\label{def_quasistandard_context} A context $((e,P),[\oc_t]@T)$ is said {\em quasi-standard} if $t$ is quasi-standard and every signature in $P$ and $T$ is standard.
\end{definition}

If $u \compl t$ with $t$ standard $((\sigma(B),P),[\oc_t])$ is quasi-standard, and quasi-standard contexts are stable by $\csRel$~\cite{perrinelMegathese}. So every context we study in this work is quasi-standard.

We capture the notion of residue by {\em canonical potentials}. The definition of canonical potentials relies on {\em copies}. A copy represents the choices for one box, a canonical potential for an element $x$ is a list of copies: one copy for each box containing $x$.

\begin{definition}{\label{def_copycontext}}  A {\em copy context} is a context of the shape $((e,P),[\oc_t]@T)$ such that for every $u \compl t$, there exists a path of the shape $((e,P),[\oc_u]@T) \csRel^* ((\_,\_),[\oc_{\sige}]@\_)$.

\label{def_copy}Let $(B,P) \in \pot{B_G}$, the set $\cop{B,P}$ of {\em copies} of $(B,P)$ is the set of standard signatures $t$ such that $((\sigma(B),P),[\oc_t])$ is a copy context.
\end{definition}

For instance, in Figure~\ref{fig_ex_duplicates}, the copies of $(B,[])$ are $\sige$, $\sigl(\sige)$ and $\sigr(\sige)$ which respectively corresponds to $B$ itself, $B_1$ and $B_2$. So $(C,[\sigl(\sige)])$ and $(C,[\sigr(\sige)])$ correspond respectively to $C_1$ and $C_2$. We can notice that $C_2$ is duplicated while $C_1$ can not be duplicated, in terms of context semantics $((\sigma(C),[\sigr(\sige)]),[\oc_{\sigl(\sige)}]) \csRel^5 ((\overline{g},[\sige]),[\oc_{\sige}])$ so $\sigl(\sige)$ is a copy of $(C,[\sigr(\sige)])$ while  $((\sigma(C),[\sigl(\sige)]),[\oc_{\sigl(\sige)}]) \csRel^3 ((\overline{j},[]),[\oc_{\sigl(\sige)};\oc_{\sige}]) \not \csRel$ so $\sigl(\sige)$ is not a copy of $(C,[\sigl(\sige)])$. 

\begin{definition}\label{def_canonical}Let $x$ be an edge (resp. box, node) of $G$ with $x \in B_{\depth{x}} \subset ... \subset B_1 $. The set $\can{x}$ of {\it canonical edges} (resp. box, node) for $x$  is the set of tuples $(x,[p_{1} ;  ... ; p_{\depth{x}}])$ with $p_1,\cdots,p_{\depth{x}}$ signatures such that:
  \begin{equation*}
    \forall 1\leq i\leq \depth{x}, p_i \in \cop{B_i, [p_{1}; \cdots ; p_{i-1}]}
  \end{equation*}
\end{definition}

For instance, in the proof-net of Figure~\ref{fig_ex_duplicates}, we have $\cop{B,[]}=\{\sige,\sigl(\sige),\sigr(\sige)\}$, $\cop{C,[\sige]}=\cop{C,[\sigl(\sige)]}=\{\sige\}$ and $\cop{C,[\sigr(\sige)]}=\{\sige,\sigl(\sige),\sigr(\sige)\}$. So, by definition, $\can{e}=\{e\} \times \{[\sige;\sige],[\sigl(\sige);\sige],[\sigr(\sige);\sige],[\sigr(\sige);\sigl(\sige)];[\sigr(\sige);\sigr(\sige)]\}$. Those canonical potentials correspond respectively to $e$, $e_1$, $e_2$, $e_3$ and $e_4$. We can notice that $|\can{e}|=5$. In the middle proof-net of this Figure, we have $\can{e_1}=\{(e_1,[\sige;\sige])\}$ and $\can{e_2} = \{(e_2,[\sige;\sige]), (e_2,[\sige;\sigl(\sige)]), (e_2,[\sige; \sigr(\sige)])\}$ so $|\can{e_1}|+|\can{e_2}|=4 < |\can{e}|$ (the number of residues decreases because there is no edge corresponding to $(e,[\sige,\sige])$ in the middle proof-net).

The set of canonical edges of $G$ is represented by $\can{\edges{G}}$. \label{remarque_can_same}Let us notice that the canonical edges for $e$ only depend on the boxes containing $e$: if $e$ and $f$ are contained in the same boxes then $\can{e}=\Set*{(e,P)}{(f,P) \in \can{f}}$.

\begin{definition}\label{def_wg}
  For any proof-net $G$, we define $W_G =\left|\can{\edges{G}}\right| \in \mathbb{N} \cup \{\infty\}$.
\end{definition}

In~\cite{perrinelMegathese}, to prove that $W_G$ is a bound on reduction, we first build a strict injection from the canonical nodes of $H$ to the canonical nodes of $G$. This injection is based on a mapping from contexts of $H$ to contexts of $G$ which preserves $\csRel$-paths. Then we prove that the number of canonical nodes is bounded by $W_G$

\begin{definition}\label{def_pi}
 Let us suppose that $G \cutRel H$ then we defined (in~\cite{perrinelMegathese}) a partial mapping $\pi(\_)$ from $Cont_H$ to $Cont_G$ such that, whenever $\pi(C)$ and $\pi(D)$ are defined,
  \begin{equation*}
    C \csRel^* D  \Rightarrow \pi(C) \csRel^* \pi(D) \hspace{3em}
    C \csRel^+ D  \Leftarrow \pi(C) \csRel^+ \pi(D)
  \end{equation*}
\end{definition}

Theorem~\ref{theo_dallago_edges} is a slight variation of the Lemma 6 of Dal Lago in~\cite{lago2006context}. This result allows to prove strong complexity bounds for several systems.

\begin{theorem}[\cite{perrinelMegathese}]\label{theo_dallago_edges}
  If $G$ is a normalizing proof-net, then $W_G \in \mathbb{N}$. The length of any path of reduction, and the size of any proof-net of the path, is bounded  by $W_G$.
\end{theorem}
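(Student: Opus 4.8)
The plan is to obtain Theorem~\ref{theo_dallago_edges} from a single \emph{monotonicity lemma}: whenever $G \cutRel H$ one has $W_G \geq W_H + 1$ (in $\mathbb{N} \cup \{\infty\}$, with the convention $\infty + 1 = \infty$). Granting this lemma, the three assertions follow with little extra work. First, for any edge $e \in \edges{G}$ with $e \in B_{\depth{e}} \subset \cdots \subset B_1$, the potential edge $(e,[\sige;\cdots;\sige])$ is canonical: the only $u$ with $u \compl \sige$ is $\sige$ itself, and $((\sigma(B_i),[\sige;\cdots;\sige]),[\oc_{\sige}]) \csRel^* ((\sigma(B_i),[\sige;\cdots;\sige]),[\oc_{\sige}])$ in zero steps, so $\sige \in \cop{B_i,[\sige;\cdots;\sige]}$ for each $i$. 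Since canonical edges of distinct edges are distinct potential edges, $|\edges{G}| \leq W_G$. Hence, along any reduction $G = G_0 \cutRel G_1 \cutRel \cdots \cutRel G_k$, iterating the lemma gives $W_G \geq W_{G_k} + k \geq k$ and $W_G \geq W_{G_k} \geq |\edges{G_k}|$; when $W_G$ is finite this bounds both the length $k$ and the size of every $G_k$ by $W_G$, and when $W_G = \infty$ the bounds are trivially true. It remains to see that $W_G$ \emph{is} finite when $G$ is normalizing (which I take to mean that every cut-elimination sequence from $G$ terminates, the only reading under which ``the length of any path of reduction'' is a meaningful quantity): then the forest of all reducts of $G$ is finitely branching --- finitely many cuts, hence finitely many one-step reducts --- and well-founded, so by König's lemma it is finite, and only finitely many edges ever occur in reducts of $G$. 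Since every canonical potential of $G$ corresponds (in the sense of Section~\ref{section_capturing_residues}, via composing the maps $\pi_{G_i \to G_{i+1}}$ of Definition~\ref{def_pi}) to a residue living in some reduct, $\can{\edges{G}}$ injects into that finite set and $W_G \in \mathbb{N}$.

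The core of the proof is thus the monotonicity lemma, which I would establish by upgrading the partial context map $\pi = \pi_{G \to H}$ of Definition~\ref{def_pi} --- which preserves $\csRel$-paths forward and reflects non-trivial ones backward --- into a strict injection $\Phi : \can{\edges{H}} \to \can{\edges{G}}$, following Lemma~6 of~\cite{lago2006context}. Concretely, $\Phi$ is the underlying partial map on potential edges, $(e,P) \mapsto \pi(e,P)$. Three points must be checked. \emph{(i) Well-definedness}: $\pi$ carries canonical edges of $H$ to canonical edges of $G$. A canonical edge $(e,[p_1;\cdots;p_d])$ of $H$ is exactly one for which every context $((\sigma(B_i^H),[p_1;\cdots;p_{i-1}]),[\oc_{p_i}])$ is a copy context; each box $B_i^H$ of $H$ is, up to the local surgery of the step, a box of $G$ (itself when untouched by the step, or $B_l$/$B_r$ for a $\oc P/\wn C$ step, or $B_i$/$B_e$ for a $\oc P/\wn N$ step), and $\pi$ rewrites the associated potentials precisely by inserting a $\sigl(\_)$ or $\sigr(\_)$, by replacing $t$ and $t.u$ with $\sigp(t)$ and $\sign(t,u)$ for a digging step, or by inserting a $\sige$ for a dereliction step. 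Because $\pi$ preserves $\csRel$-paths and commutes with $\simpl$ on the relevant signatures, each copy condition of $H$, together with its $\compl$-closure, transfers to a copy condition of $G$. \emph{(ii) Injectivity}: on each of the five cut-elimination rules of Figures~\ref{cut_elim_non_exp_rules}--\ref{cut_elim_exp_rules}, the potential rewriting performed by $\pi$ is injective ($Q \mapsto P.\sigl(t)@Q$ is injective, and $(t,u) \mapsto \sign(t,u)$ is injective once one knows in which branch of the step one lies), so $\Phi$ is injective. \emph{(iii) Non-surjectivity}: the premise edges of the cut reduced in $G \cutRel H$ are edges of $G$ with no counterpart in $H$, yet their canonical sets are nonempty by the first paragraph; no canonical edge of $H$ maps to them, so they witness $|\can{\edges{H}}| + 1 \leq |\can{\edges{G}}|$.

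The main obstacle is point~(i): one must verify, case by case on the cut-elimination rules --- and most delicately for the $\oc P/\wn N$ (digging) step, where a single box of $H$ corresponds to the nested pair $B_i \supset B_e$ in $G$ and signatures get merged by $\sign(\cdot,\cdot)$, exactly the phenomenon illustrated around Figure~\ref{fig_intui_digging} --- that $\pi$ matches up the box-nesting structures of $G$ and $H$ and translates copy conditions faithfully. The finiteness step also rests on the fact that every canonical potential genuinely corresponds to an edge-residue appearing in some reduct; both verifications are the kind of bookkeeping I would delegate to the detailed development of~\cite{perrinelMegathese}, whose Definition~12 gives the precise form of the maps $\pi$ on contexts.
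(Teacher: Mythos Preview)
Your proposal is correct and follows the approach the paper itself sketches: the paper does not give a self-contained proof here (it defers to \cite{perrinelMegathese} and to Dal Lago's Lemma~6), but it explicitly describes the strategy as ``build a strict injection from the canonical nodes of $H$ to the canonical nodes of $G$ \ldots\ based on a mapping from contexts of $H$ to contexts of $G$ which preserves $\csRel$-paths'', which is precisely your monotonicity lemma via $\Phi$. Your decomposition into that lemma plus a König-based finiteness argument, with the canonical-edge/residue correspondence of Section~\ref{section_capturing_residues} supplying the needed injection, is the intended line; the one point to watch is that the maps $\pi_{G\to H}$ of Definition~\ref{def_pi} go from $\pot{H}$ to $\pot{G}$, so the injection you invoke for finiteness is really the statement that every canonical potential of $G$ lies in the image of some composite $\pi$ --- a genuine lemma, and exactly the kind of bookkeeping you rightly delegate to \cite{perrinelMegathese}.
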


Execution time depends on the implementation of proof-nets and $cut$-elimination. In a basic implementation based on graphs, every step can be done in constant time except for the box rules, which can be done in a time linear in the size of the box, so linear in the size of the proof-net. Thus, according to Theorem~\ref{theo_dallago_edges}, the execution time of $G$ is in $O(W_G^2)$. The complexity classes we study in this article are stable by polynomial. Thus, to establish the soundness of a $LL$ subsystem with respect to polynomial time/elementary time, it is enough to prove a polynomial/elementary bound on $W_G$.

\begin{lemma}[\cite{perrinelMegathese}]\label{lemma_acyclicity}
  Let $G$ be a normalizing proof-net, there is no path of the shape $((e,P),[\oc_t]) \csRel^+ ((e,P),[\oc_u])$ with $(e,P)$ a canonical edge.
\end{lemma}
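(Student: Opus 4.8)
\begin{prf}
The plan is a proof by contradiction exploiting the finiteness of $W_G$: since $G$ is normalizing, Theorem~\ref{theo_dallago_edges} gives $W_G\in\mathbb{N}$, so $\can{\edges{G}}$ is finite. Assume, for contradiction, that there is a path $\gamma:((e,P),[\oc_t])\csRel^{+}((e,P),[\oc_u])$ with $(e,P)$ a canonical edge. Since $(e,P)$ corresponds to a residue, there is a reduction $G\cutRel^{*}G_j$ and an edge $\tilde e$ of $G_j$ whose fresh canonical potential $(\tilde e,[\sige;\cdots;\sige])$ is sent to $(e,P)$ by the composite of the $\pi$ maps; using the ``$\Leftarrow$'' clause of Definition~\ref{def_pi} step by step (and the fact that the $\pi$ maps act on traces essentially as the identity, while the preimage of a potential is determined), I would transport $\gamma$ to a path $((\tilde e,[\sige;\cdots;\sige]),[\oc_{\tilde t}])\csRel^{+}_{G_j}((\tilde e,[\sige;\cdots;\sige]),[\oc_{\tilde u}])$. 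Thus I may as well assume from the start that $P=[\sige;\cdots;\sige]$.

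Next I would analyse how $\gamma$ acts on the (length one) trace. Inspecting Figure~\ref{exponential_context_semantic}, when the trace is some $[\oc_s]$ the only steps that can change it into a defined context are the dual of the $\contLab$ rule (which strips an outermost $\sigl$ or $\sigr$) and the dual of the $\digLab$ rule (which replaces $\sigp(s')$ by $s'$, or $\sign(s_1,s_2)$ by the pair $s_1,s_2$, making the trace longer); every other applicable step either leaves the trace unchanged or pushes a new element on top of $\oc_s$, so that $\oc_s$ becomes an interior element carried along untouched until the pushed element is popped again. Following this along $\gamma$, and pairing the ascending fragments of $\gamma$ with the descending ones via the injectivity of $\noJump$ (Lemma~\ref{lemma_nojump_injective}) and the determinism of $\csRel$, I expect to obtain that $t$ and $u$ are comparable for the simplification order --- intuitively $\gamma$ peels off a prefix of the constructors of $t$ on the way towards the box that $e$ comes from, and rebuilds a prefix on the way back.

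From this I would close in two cases. If $t=u$, then $\gamma$ is a genuine cycle, and by determinism of $\csRel$ the maximal $\csRel$-path issued from $((e,P),[\oc_t])$ is infinite and periodic; pushing it down the remaining reduction from $G$ to its cut-free normal form with the ``$\Rightarrow$'' clause of Definition~\ref{def_pi} (after checking, using that $(e,P)$ is canonical, that the residue carrying the token is not erased before the normal form is reached) contradicts the fact that in a cut-free proof-net every $\csRel$-path is finite --- a token there only moves up and down the forest of formulae and in and out of boxes, with no cut to sustain a loop, so an evident measure strictly decreases. If $t\neq u$, then the comparability above is strict, and running the same journey again from $((e,P),[\oc_u])$ yields a further strict step of the same kind; iterating produces an infinite $\simplStrict$-chain of signatures, which is impossible since $\simplStrict$ is well founded (a $\simpl$-step never increases size, and a strict one strictly decreases it; Lemma~\ref{lemma_total_order} even makes the $\simpl$-successors of a fixed signature a finite linear order). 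Either way we reach a contradiction.

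The main obstacle is the second paragraph: precisely establishing that returning to the very same directed edge $e$ with a trace of the same shape $[\oc_{\_}]$ forces the two carried signatures to be $\simpl$-comparable, with strictness whenever $t\neq u$, and that the journey recurs so that the iteration in the $t\neq u$ case is legitimate. This requires following, all along $\gamma$, the sub-signature ``still to be consumed'' together with the stack of $?_{\_}$/$!_{\_}$ trace elements, and matching entries and exits of boxes and of $\contLab$/$\digLab$ nodes using Lemma~\ref{lemma_nojump_injective} and determinism; once that combinatorial bookkeeping is in place, the passage to a contradiction with $W_G\in\mathbb{N}$, resp.\ with the triviality of $\csRel$-paths in a cut-free net, is routine.
\end{prf}
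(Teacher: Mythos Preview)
The paper does not prove this lemma; it is imported from \cite{perrinelMegathese}. So there is no in-paper argument to compare against, and the question is whether your sketch stands on its own.

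It does not. The load-bearing step is your second paragraph, where you assert that returning to the same potential edge with a length-one $\oc$-trace forces $t$ and $u$ to be $\simpl$-comparable. Your justification (``$\gamma$ peels off a prefix of the constructors of $t$ \ldots and rebuilds a prefix on the way back'') is a picture, not an argument, and you explicitly flag it as ``the main obstacle''. Nothing in the rules of Figure~\ref{exponential_context_semantic} guarantees that the signature at the end of a loop is a simplification of the signature at the start (or conversely): along $\gamma$ the leftmost $\oc$-signature is not merely peeled, it can be pushed into the potential at a door, mixed with box potentials, and later re-emerge with a different shape. Without this comparability the $t\neq u$ branch collapses. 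Even granting comparability, your iteration is unjustified: the path from $((e,P),[\oc_u])$ is, by determinism, a \emph{function of $u$}, not of $t$, so ``running the same journey again'' need not produce another strict $\simpl$-step.

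The $t=u$ branch also has a gap. You invoke the ``$\Rightarrow$'' clause of Definition~\ref{def_pi} to push the cycle from $G$ (or $G_j$) to the cut-free normal form, but $\pi$ goes from $Cont_H$ to $Cont_G$ when $G\cutRel H$: the $\Rightarrow$ clause lifts paths \emph{back} from the reduct to the original, not forward. Forward transport is the $\Leftarrow$ clause, and it only applies when the endpoints lie in the image of $\pi$; your parenthetical about the residue not being erased is exactly the point that has to be proved here, and ``$(e,P)$ canonical'' tells you it arises along \emph{some} reduction, not that it persists to normal form. In short, you have a plausible plan with two genuine holes (comparability/iteration, and forward transport of the cycle), each of which requires real work; as written this is a proof outline, not a proof.
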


\section{Paths criteria for elementary time}\label{chapter_3}
\subsection{History and motivations}
A stratification refers to a restriction of a framework, which forbids the contraction (or identification) of two subterms belonging to two morally different ``strata''. Stratification restrictions might be applied to several frameworks (naive set theory, linear logic, lambda calculus and recursion theory) to entail coherence or complexity properties~\cite{baillot2010linear}. To define a stratification condition on Linear Logic we define, for every proof-net $G$, a {\em stratification relation} $>$ between the boxes of $G$. Then, we consider that $B$ belongs to a higher stratum than $C$ if $B (>)^+ C$. The relation $>$ must be defined such that there exists a function $f$ such that:
\begin{equation}\label{eq_f_dup}
  \left|\cop{B,P}\right| \leq f \left( \max_{\substack{B > C\\(B,P) \in \pot{B}}}\left| \cop{C,Q} \right|, \left|E_G\right| \right)
\end{equation}

\label{definition_stratugen}One says that $G$ is $>$-stratified if $>$ is acyclic. In this case, for every box $B$ of $G$, we define the $>$-stratum of $B_1$ (written $\stratu{>}{B_1}$) as the greatest $i \in \mathbb{N} \cup \{\infty \}$ such that there exists $B_2,\cdots,B_{i} \in S$ such that $B_1 > B_{2} > \cdots > B_{i}$. We define $|>|$ as $\max_{B \in \boxset{G}}\stratu{>}{B}$. If $t$ is $>$-stratified, the $>$-stratum of every box is in $\mathbb{N}$ because $\left|\boxset{G}\right|$ is finite. Thus, one can bound $\left|\cop{B}\right|$ by induction on $\stratu{>}{B}$ (thanks to Equation~\ref{eq_f_dup}). Because $W_G \leq |\edges{G}| \cdot \left( \max_{(B,P) \in \pot{\boxset{G}}}|\cop{B,P}|\right)^{\depthG{G}}$, this gives us a bound on $W_G$.

In most previous works, the stratum $s(~)$ is rather explicit while $>$ is left implicit (it can be defined by ``$B > C$ iff $s(B) > s(C)$''). Concretely, in~\cite{girard1995light} and~\cite{danos2003linear}, the stratum of a box is defined as its depth (the number of boxes containing it). To enforce Equation~\ref{eq_f_dup}, digging and dereliction ($\wn N$ and $\wn D$ nodes) are forbidden. In~\cite{baillot2010linear}, Baillot and Mazza label the edges with a natural number. To enforce Equation~\ref{eq_f_dup}, Baillot and Mazza define some local conditions that those labels have to satisfy. Those works are presented as subsystems of Linear Logic: $ELL$~\cite{girard1995light} and $L^3$~\cite{baillot2010linear}. In both cases, the function $f$ in Equation~\ref{eq_f_dup} is an elementary function (tower of exponential of fixed height). Because this class of functions is stable by composition and maximum, $ELL$ and $L^3$ proof-nets normalize in a number of steps bounded by an elementary function of its size, and this function only depends on $\max_{B \in B_G}s(B) \leq |B_G|$.

When they defined $L^3$~\cite{baillot2010linear}, Baillot and Mazza did more than improving the intensional expressivity of $ELL$, they showed that ``exponential boxes and stratification levels are two different things''. This clarified the notion of stratification and enabled the present work. Here, we go further in that direction: we disentangle three principles (stratification, dependence control and nesting) which are implicit in $LLL$ and $L^4$. These principles are presented as the acyclicity of relations on boxes (respectively $\stratSNLL$, $\dcSim$ and $\nestSim$) whose intuitive meanings are described below. The meanings are voluntarily vague because those principles are not limited to the representations given in this paper, there are many variations possible~\cite{perrinelMegathese}. The intuitive meanings are not given in terms of linear logic but in the larger setting of models of computation based on rewriting. Indeed, we applied those principles both to linear logic and $\lambda$-calculus. We believe them to be relevant in other frameworks based on rewriting such as interaction nets, recursion theory and term rewriting systems. In this larger setting, the relations are between {\em parts} of a programs (boxes for linear logic, subterms for $\lambda$-calculus).
\begin{itemize}
\item Stratification (Section~\ref{section_def_stratSNLL_simple}): $B \stratSNLL C$ means that $B$ will interact with a part $C'$ (i.e. during reduction there is a rewriting step involving $B$ and $C'$) which will be created by a rewriting rule involving $C$. For instance, let us consider $t=\lambda x.(\lambda y.(y) \lambda w.w) \lambda z.(z)x$, we have $\lambda w.w \stratSNLL \lambda z.(z)x$ because $t \betared \lambda x.(\lambda z.(z)x) \lambda w.w \betared \lambda x.(\lambda w.w)x \betared \lambda x.x$ so the last step is a rewriting step involving both $\lambda w.w$ and $B'=x$, which is created during a step involving $\lambda z.(z)x$ (the second step).  
\item Dependence control (Section~\ref{section_dependence_control_simple}): $B \dcSim C$ means that several parts of $C$ will be substituted by $B$. Those parts will not be duplicated inside $C$. For instance, let us consider $t=\lambda y.(\lambda x.(x)(x)(\lambda w.w)y)\lambda z.z$, we have $\lambda z.z \dcSim (\lambda x.(x)(x)(\lambda w.w)y)$ because the two occurrences of $x$ in $\lambda x.(x)(x)(\lambda w.w)y$ will indeed be replaced by $\lambda z.z$. None of those occurrences of $x$ will be duplicated during a normalization of $\lambda x.(x)(x)(\lambda w.w)y$. 
\item Nesting (Section~\ref{subsection_depcontrol_digging}): $B \nestSim C$ means that a part of $C$ will be substituted by $B$. Those free variables may be duplicated inside $C$. For instance, let us consider the $\lambda$-term $t=(\lambda y.(\lambda x.(y)x)\lambda z.z)\lambda w.(w)w$ we have $\lambda z.z \nestSim \lambda x.(y)x$ we can notice that the occurrence of $x$ in $\lambda x.(y)x$ will indeed be replaced by $\lambda z.z$. This occurrence of $x$ may be duplicated, with the reduction $t \betared (\lambda x.(\lambda w.(w)w)x)\lambda z.z \betared (\lambda x.(x)x)\lambda z.z$. 
\end{itemize}

The acyclicity of $\stratSNLL$ entails an elementary bound on $W_G$ (Theorem~\ref{theoStratElementaryBound}), the acyclicity of the three relations entails a polynomial bound on $W_G$ (Corollary~\ref{coro_bound_poly_nest}). We want to find characterizations of complexity classes which are as intensionally expressive as possible. So we try to find the smallest possible relation $>$ (with respect to inclusion) whose acyclicity entails a bound of the shape of Equation~\ref{eq_f_dup}. Indeed if, for every proof-net the relation $R_1$ on boxes is a subset of $R_2$, then the acyclicity of $R_2$ implies the acyclicity of $R_1$. So more proof-nets are $R_1$-stratified than $R_2$-stratified.

We want to prove a bound on the number of copies of boxes. Let us consider a potential box $(B,P)$ and a copy $t$ of $(B,P)$, by definition of copies there exists a path $((\sigma(B),P),[\oc_t]) \csRel^* ((e,Q),[\oc_{\sige}])$. Our idea to prove Corollary~\ref{coro_bound_poly_nest} is to determine entirely $t$ from a partial information on $(e,Q)$ and on the $\onlyJump$ steps of the path. Because there is a bounded number of possibilities for those information, we have a bound on the number of copies of $(B,P)$.

\begin{itemize}
\item {\em Stratification:} When $\stratSNLL$ is acyclic, one can {\em trace back} $\noJump$-paths: let us suppose that $C_k \noJump^* C_1 \noJump C_0$, with some partial information on $C_0$ we can deduce a partial information on $C_1$, $C_2$,... $C_k$. In particular, we can deduce the edges of all those contexts.
\item {\em Dependence control:} When $\dcSim$ is acyclic, one can {\em trace back} the $\onlyJump$ steps. Thus, if $\stratSNLL$ and $\dcSim$ are acyclic and $C_k \csRel^* C_1 \csRel C_0$, we only need a bounded amount of information to deduce the edges of the contexts. This gives us a bound on the number of sequences $e_k,\cdots,e_1,e_0$ of edges such that there exists a path of the shape $((\sigma(B),P),[\oc_{t}]) \csRel ((e_k,\_),\_) \csRel \cdots ((e_1,\_),\_) \csRel ((e_0,\_),[\oc_{\sige}])$.
\item {\em Nesting:} If there is no $\digLab$ node, then a copy $t$ of $(B,P)$ is a list of $\sigl$ and $\sigr$ which is entirely determined by the sequence $e_k,\cdots,e_0$ of edges of the path $((\sigma(B),P),[\oc_{t}]) \csRel ((e_k,\_),\_) \csRel \cdots ((e_1,\_),\_) \csRel ((e_0,\_),[\oc_{\sige}])$. Combined with the acyclicity of $\stratSNLL$ and $\dcSim$, this gives us a bound on $|\cop{B,P}|$.% More generally, when $\stratSNLL$, $\dcSim$ and $\nestSim$ are acyclic, one only needs a bounded amount of information to determine a copy of $(B,P)$.
\end{itemize}

\label{section_stratification_simple}
\subsection{Definition of $\stratSNLL$-stratification}
\label{section_def_stratSNLL_simple}
To prove the complexity bounds for $ELL$ and $LLL$, one usually uses a round-by-round cut-elimination procedure. During round $i$, we reduce every cut at depth $i$. We can bound the number of $\wn C$ node residues at depth $i+1$ and, because the boxes at depth $i$ can only be duplicated by $\wn C$ nodes at depth $i+1$, it gives us a bound on the number of times boxes at depth $i+1$ are duplicated. We will proceed similarly: we will prove a bound on the number of nodes (in particular the $\wn C$ and $\wn N$ nodes) obtained after $i$ rounds of cut-elimination, and prove that it gives us a bound on the number of duplication during round $i+1$ by tracing back paths corresponding to copies from the $((e,P),[\oc_{\sige}])$ context\footnote{One can observe that we can restrict $e$ to be either the principal door of $B$, or a (reverse) premise of a $\wn C$ or $\wn N$ node, because crossing those nodes upwards are the only step modifying the signature of the left-most trace element.} back to $((\sigma(B),P),[\oc_t])$ and showing that the potential edge $(e,P)$ (corresponding to a residue) determines $t$ in a unique way.

To understand the definition of $\stratSNLL$, let us first define a relation $\stratELL$ on boxes by: $B \stratELL C$ iff there exists a path of the shape $((\sigma(B),\_),[\oc_{\_}]) \csRel^* ((e,\_),\_)$ with $e \in C$.

Let us notice that, if $((\sigma(B),P),[\oc_t]) \csRel^* ((e_k,P_k),[\oc_{t_k}]@T_k) \noJump^k ((e_0,P_0),[\oc_{t_0}])$, one only needs to know $e_0$ and $P_0$ to deduce $(e_i,P_i,T_i)_{1 \leq i \leq k}$ (because $\noJump$ is injective). By definition of $\stratELL$, for every box $C$ containing $e_i$, we have $B \stratELL C$. Thus, there are only $\left|\edges{G}\right| \cdot \left( \max_{B \stratELL C}|\cop{C,Q}|\right)^{\depthG{G}}$ such paths (it is enough to fix $e_0 \in \edges{G}$ and a copy for every box containing $e_0$). 

The idea of this section is to identify {\em unnecessary} $B \stratELL C$ pairs. It is to say, boxes $B$ and $C$ such that $B \stratELL C$ but tracing back $\noJump$-paths originating from $((\sigma(B),P),[\oc_t])$ does not depend on an element of $\cop{C,Q}$. The first such example is whenever $B \subset C$ and no $\csRel$ path from $((\sigma(B),P),[\oc_t])$ to $((e,R),[\oc_u])$ leaves the box $C$. In this case, the signature corresponding to $C$ never changes along the path. So, whenever $((\sigma(B),P),[\oc_t]) \csRel^* ((e_k,P_k),[\oc_{t_k}]) \noJump^k ((e_0,P_0),[\oc_{t_0}])$ the signature corresponding to $C$ is the same in $P$, $P_k$ and $P_0$. This signature never goes to the trace, so knowing it is not necessary to trace back the path. In this case, knowing $max_Q |\cop{C,Q}|$ is not necessary to bound the number of $\noJump$-paths originating from $((\sigma(B),P),[\oc_t])$.

Thus, $B \stratELL C$ couples are necessary only if there is a $\csRel$ path from $((\sigma(B),P),[\oc_t])$ which enters $C$ by one of its doors (either auxiliary or principal). In fact, we prove that the $B \stratELL C$ couples are necessary only if there is a $\csRel$ path from $((\sigma(B),P),[\oc_t])$ which enters $C$ by its {\em principal} door. To understand why, we study an example. In Figure~\ref{fig_ex_principal_nec}, if $((\sigma(B),P),[\oc_t]) \noJump^* ((\overline{d},q),[\oc_{\sige}])$, we need to know $q$ to trace back the path (i.e. to deduce the list of edges of those paths) because:
\begin{equation}\label{eq_paths_bc}
  \left\{
    \begin{array}{l}
      ((\sigma(B),[\sigr(\sige)]),[\oc_{\sigr(\sige)}]) \noJump^4 ((g,[]),[\oc_{\sige};\wn_{\sigl(\sige)}]) \noJump^3 ((\overline{d},[\sigl(\sigr(\sige))]),[\oc_{\sige}])\\
      ((\sigma(B),[\sigr(\sige)]),[\oc_{\sigl(\sige)}]) \noJump^4 ((h,[]),[\oc_{\sige};\wn_{\sigr(\sige)}]) \noJump^3 ((\overline{d},[\sigr(\sigr(\sige))]),[\oc_{\sige}])
    \end{array}
  \right .
\end{equation}
So $B \stratELL C$ is a necessary pair. Tracing those paths backwards, the difference in the potential corresponding to $C$ becomes a difference in a $\wn_{\_}$ trace element (in the $((\overline{\sigma(C)},[]),[\oc_{\sige};\wn_q]) \csRel ((\overline{d},[q]),[\oc_{\sige}])$ step). And because of this difference on a $\wn_{\_}$ trace element, the reverse paths separate when the paths cross a $\contLab$ node downwards: $((g,[]),[\oc_{\sige};\wn_{\sige}]) \csRel ((f,[]),[\oc_{\sige};\wn_{\sigl(\sige)}])$ and $((h,[]),[\oc_{\sige};\wn_{\sige}]) \csRel ((f,[]),[\oc_{\sige};\wn_{\sigr(\sige)}])$.

On the contrary, if $((\sigma(D),P),[\oc_t]) \noJump^* ((\overline{w},[q_A;q_B]),[\oc_{\sige}])$, we only need to know $q_B$ to trace back the path. Indeed the paths do not enter $A$ by its principal door, so $q_A$ can only appear on $\oc_{\_}$ trace elements, never on $\wn_{\_}$ trace elements.

\begin{figure}\centering
  \begin{tikzpicture}
    \node[auxdoor]   (baux) at (0,0) {};
    \draw[ar] ($(baux)+(0,0.6)$) --(baux);
    \draw[ar] (baux)--++(0,-0.6);
    \node[princdoor] (bpri) at ($(baux)+(0.8,0)$) {};
    \node [weak] (bweak) at ($(bpri)+(0,0.7)$) {};
    \draw [ar]   (bweak) -- (bpri) node [edgename] {$w$};
    \draw (bpri)-| ++(0.7,1.1) -| ($(baux)+(-0.4,0)$) -- (baux) -- (bpri);
    \node[cont] (bcont) at ($(bpri)+(1.3,0)$) {};
    %\node[cont] (bcont1)at ($(bcont)+(120:0.6)$) {};
    %\node[cont] (bcont2)at ($(bcont)+( 60:0.6)$) {};
    \node[cut]  (bcut)  at ($(bpri)!0.5!(bcont)+(0,-0.4)$) {};
    \draw[ar,out=-90,in=180] (bpri) to (bcut);
    \draw[ar,out=-90,in=  0] (bcont)to (bcut);
    %\draw[ar] (bcont1) -- (bcont);
    %\draw[ar] (bcont2) -- (bcont);
    \node[ax] (bax1) at ($(bcont)+(0.8,0.3)$) {};
    \node[ax] (bax2) at ($(bax1)+(0,0.25)$) {};
    %\node[ax] (bax3) at ($(bax2)+(0,0.25)$) {};
    %\node[ax] (bax4) at ($(bax3)+(0,0.25)$) {};
    \node [auxdoor]  (aaux1) at ($(bax1)+(0.2,-1.5)$) {};
    \node [auxdoor]  (aaux2) at ($(aaux1)+(0.9,0)$) {};
    %\node [auxdoor]  (aaux3) at ($(aaux2)+(0.7,0)$) {};
    %\node [auxdoor]  (aaux4) at ($(aaux3)+(0.7,0)$) {};
    \node [princdoor](apri)  at ($(aaux1)+(-3.5,0)$) {};
    \draw[ar] ($(apri)+(0,0.6)$)--(apri);
    \draw [ar,out=180,in= 60] (bax1) to (bcont);
    \draw [ar,out=180,in=120] (bax2) to (bcont);
    %\draw [ar,out=180,in= 60] (bax3) to (bcont1);
    %\draw [ar,out=180,in=120] (bax4) to (bcont1);
    \draw [ar,out=  0,in= 90] (bax1) to (aaux1);
    \draw [ar,out=  0,in= 90] (bax2) to (aaux2);
    %\draw [ar,out=  0,in= 90] (bax3) to (aaux3);
    %\draw [ar,out=  0,in= 90] (bax4) to node [edgename,above right] {$e$} (aaux4);
    \draw (apri) -- (aaux1) -- (aaux2) -| ++ (0.4,2.5) -| ($(apri)+(-0.7,0)$) -- (apri);
    \node [cont] (ccont1) at ($(aaux1)!0.5!(aaux2)+(0,-0.75)$) {};
    \draw [ar] (aaux1) -- (ccont1) node [edgename] {$g$};
    \draw [ar] (aaux2) -- (ccont1) node [edgename,right] {$h$};
    \node [princdoor] (cpri) at ($(ccont1)+(2.4,0)$) {};
    \node [der]       (cder) at ($(cpri)+(0,1)$) {};
    \node [princdoor] (dpri) at ($(cder)+(0,1)$) {};
    \draw (dpri) -| ++(0.7,0.9) -| ($(dpri)+(-0.4,0)$) -- (dpri);
    \draw (cpri) -| ++(0.8,3)   -| ($(cpri)+(-0.5,0)$) -- (cpri);
    %\draw [ar] (apri) --++(0,-0.8);
    \draw [ar] (dpri) -- (cder);
    \draw [ar] (cder) -- (cpri) node [edgename] {$d$};
    \node [cut] (ccut) at ($(ccont1)!0.6!(cpri)+(0,-0.5)$) {};
    \draw [ar,out=-90,in=180] (ccont1) to node [edgename,above right=-0.05] {$f$} (ccut);
    \draw [ar,out=-90,in=  0] (cpri)  to (ccut);
    \node [above] at ($(apri)+(-0.5,0)$) {${\mathbf A}$};
    \node [above] at ($(bpri)+( 0.5,0)$) {${\mathbf B}$};
    \node [above] at ($(cpri)+( 0.5,0)$) {${\mathbf C}$};
    \node [above] at ($(dpri)+( 0.5,0)$) {${\mathbf D}$};

    \coordinate      (extdig)   at ($(apri)+(-1.8,0)$);
    \node [cont]     (extcont1) at ($(extdig)+(0,0)$) {};
    \node [weak]     (extweak1) at ($(extcont1)+( 60:0.8)$) {};
    \node [der]      (extder1)  at ($(extcont1)+(120:0.8)$) {};
    \node [cut]      (extcut2) at ($(apri)!0.5!(extdig)+(0,-0.6)$) {};
    \node [ax]       (extax)   at ($(extder1)+(-0.5,0.6)$) {};
    \draw [ar] (extder1)  -- (extcont1);% node [edgename] {$e_3$};
    \draw [ar] (extweak1) -- (extcont1);% node [edgename,right] {$e'_3$};
%    \draw [ar] (extcont1) -- (extcont);  % node [edgename] {$e_4$};
    \draw [ar] (extax) to [out=  0, in=90] (extder1);               %node [edgename,right] {$e_2$}
    \draw [ar] (extax) to [out=180, in=90] ($(extax)+(-0.5,-0.8)$); %node [edgename] {$e_1$}
    \draw [ar] (extcont1)to [out=-90,in=170] (extcut2);               %node [edgename] {$e_5$}
    \draw [ar] (apri)to [out=-90,in= 10] (extcut2);
  \end{tikzpicture}
  \caption{\label{fig_ex_principal_nec}$D \stratELL A$ but it is an ``unnecessary'' couple because $|\cop{B,P}$ does not depend on $|\cop{A,[]}|$.}
\end{figure}

We define a relation $\stratSNLL$ between boxes of proof-nets. $B \stratSNLL C$ means that there is a path beginning by the principal door of $B$ which enters $C$ by its principal door.

\begin{definition}\label{def_stratPot} Let $B, C \in \boxset{G}$, we write $B \stratSNLL C$ if there is a path of the shape:
\begin{equation*}\label{def_twoheadrightsquigarrow}
  ((\sigma(B), P), [\oc_t]) \noJump^* ((\overline{\sigma(C)},Q),T)
\end{equation*} 
\end{definition}

We can notice that for every proof-net, $\stratSNLL \subseteq \stratELL$. For example, in the proof-net of Figure~\ref{fig_ex_principal_nec}, we have $B \stratELL A$, $B \stratELL C$, $D \stratELL A$, $D \stratELL B$ and $D \stratELL C$ while the only pair for $\stratSNLL$ are $B \stratSNLL C$ and $D \stratSNLL B$. 

As shown in Equation~\ref{eq_paths_bc}, to trace back the $\noJump$-path from $((\sigma(B),[\sigr(\sige)]),[\oc_{\sigr(\sige)}])$ to $((\overline{d},[q]),[\oc_{\sige}])$ one needs information on $q=\sigl(\sigr(\sige))$. However, let us notice that it is not necessary to know $q$ entirely. The only information needed to trace back the path is that it is of the form $\sigl(x)$. Knowing that $x=\sigr(\sige)$ is useless because the information in $x$ would only be used if the path entered $A$ by its principal door and that is not the case. 

The following intuitions (formalized in Section~\ref{section_restricted_copies}) capture the notion of the information needed to trace back the paths. As we stated earlier, a canonical potential of a box corresponds to a residue of this box along reduction, a {\em $\csRel_S$-canonical potential} of a box corresponds to a residue obtained without firing cuts involving the principal door of a box outside $S$. It is to say, a $\csRel_S$-canonical potential of a box corresponds to a residue of this box along reduction such that, for every step of this reduction involving the principal door of a box $B$, $B$ is a residue of a box of $S$.

More formally, we first define the $\csRel_S$-copies of $(B,P)$ as the copies $t$ of $(B,P)$ such that in the paths $((\sigma(B),P),[\oc_{t}]) \csRel^* ((\_,\_),[\oc_{\sige}])$, every $\onlyJump$ step of the path is on a box of $S$. For instance, in the proof-net of Figure~\ref{fig_ex_principal_nec}, the $\csRel_{\{C\}}$-copies of $(C,[])$ are $\{ \sige, \sigl(\sige), \sigr(\sige) \}$ while the $\csRel_{\{C,A\}}$-copies of $(C,[])$ are $\{ \sige, \sigl(\sige), \sigr(\sige), \sigl(\sigl(\sige)),\sigl(\sigr(\sige)),\sigr(\sigl(\sige)),\sigr(\sigr(\sige)) \}$. Then, we define $\csRel_S$-canonical potentials from the notion of $\csRel_S$-copies in the same way as we defined canonical potentials from the notion of copies.

Let us suppose that we know that $\_ \noJump^7 ((\overline{d},[q]),[\oc_{\sige}])$ and $\_ \noJump^7 ((\overline{d},[q']),[\oc_{\sige}])$ and we want to prove that those paths take the same edges. We only need to know that the $\csRel_{\{C\}}$-copies of $(C,[])$ ``corresponding'' to $q$ and $q'$ are equal. We define $x$ (resp. $x'$) as the ``biggest'' $\csRel_{\{C\}}$-copy of $(C,[])$ which is a ``truncation'' of $q$ (resp. $q'$). For instance, if $q=\sigr(\sigl(\sige))$ and $q'=\sigr(\sigr(\sige))$, then $q \neq q'$ but we have $x=x'=\sigr(\sige)$. This is enough to know that $q$ and $q'$ are of the shape $\sigr(\_)$ and this information is enough to trace back the paths, so to prove that the paths take the same edges.

The $\csRel_S$-copy of $(B,P)$ corresponding to $t$ is written $\restrSig{\csRel_S}{((\sigma(B),P),[\oc_t])}$. It represents the part of $t$ which is used if we refuse the $\onlyJump$ steps over the potential boxes which are not in $S$. For instance, in Figure~\ref{fig_ex_principal_nec}, $\restrSig{\csRel_{\{(C\}}}{((\sigma(C),[]),[\oc_{\sigr(\sigl(\sige))}])}= \sigr(e)$ because, if we refuse to jump over $(A,[])$, only $\sigr(\_)$ is consumed in the $\csRel$ paths starting from this context. Then, $\restrPot{\csRel_S}{e,P}$ is defined from the $\restrSig{\csRel_S}{((\sigma(B),P),[\oc_t])}$ construction in the same way as canonical potentials are defined from copies.

\subsection{Restricted copies and canonical potentials}\label{section_restricted_copies}
Now that we gave the intuitions, we can state the formal definitions.

\begin{definition}\label{def_mapstoset} Let $G$ be a proof-net and $S \subset \boxset{G}$. We define $\csRel_{S}$ and $\noJump_S$ as follows:
  \begin{equation*}
    C \csRel_{S}D \Leftrightarrow \left \{ \begin{array}{l} C \csRel D \\ \text{If }C= ((\sigma(B),P),[\oc_t]) \text{, then }  B \in S \end{array} \right.
\end{equation*}
\begin{equation*}
    C \noJump_{S} D \Leftrightarrow \left \{ \begin{array}{l} C \noJump D \\ \text{If } D= ((\overline{\sigma(B)},P),T.\wn_t) \text{, then }  B \in S \end{array} \right.
  \end{equation*}
\end{definition}

If $\csRel$ corresponds to cut-elimination, $\csRel_S$ corresponds to cut-elimination restricted by allowing reduction of cuts involving the principal door of a box $B$ only if $B$ is a residue of a box of $S$. In the following, we suppose given a relation $\rightarrow$ on contexts such that $\rightarrow \subseteq \csRel$.

\begin{definition}\label{def_copyrel}\label{def_arrowcopy}
  A $\rightarrow$-copy context is a context of the shape $((e,P),[\oc_t]@T)$ such that for every $u \compl t$, there exists a path of the shape $((e,P),[\oc_u]@T) \rightarrow^* ((\_,\_),[\oc_{\sige}])$.
  
  Let $(B,P) \in \pot{B_G}$, the set $\copRel{\rightarrow}{B,P}$ of $\rightarrow$-copies of $(B,P)$ is the set of standard signatures $t$ such that $((\sigma(B),P),[\oc_t])$ is a $\rightarrow$-copy context.
\end{definition}

For example, for any box $B$ and set $S$ such that $B \not \in S$, $\copRel{\csRel_S}{B,P}=\{\sige\}$ (because $((\sigma(B),P),[\oc_t]) \not \csRel_S$). In Figure~\ref{fig_ex_principal_nec}, we have $\copRel{\csRel_{\{C\}}}{C,[]}=\{\sige,\sigl(\sige),\sigr(\sige)\}$ whereas $\copRel{\csRel_{\{A,C\}}}{C,[]}=\{\sige,\sigl(\sige),\sigr(\sige),\sigl(\sigl(\sige)),\sigl(\sigr(\sige)),\sigr(\sigl(\sige)),\sigr(\sigr(\sige))\}$.

\begin{definition}\label{def_canonicalrel}Let $e$ be an edge of $G$ such that $e \in B_{\depth{e}} \subset ... \subset B_1 $. We define $\canRel{\rightarrow}{e}$ as the set of potentials $[s_{1} ;  ... ; s_{\depth{e}}]$ such that:
  \begin{equation*}\forall 1\leq i\leq \depth{x}, s_i \in \copRel{\rightarrow}{B_i, [s_{1}; \cdots ; s_{i-1}]}
  \end{equation*}
\end{definition}

For instance, in Figure~\ref{fig_ex_principal_nec}, $\canRel{\csRel_{\{B\}}}{w}=\{(w,[\sige;\sige]),(w,[\sige;\sigl(\sige)]),(w,[\sige;\sigr(\sige)])\}$ and $\canRel{\csRel_{\{C\}}}{d}=\{(d,[\sige]),(d,[\sigl(\sige)]),(d,[\sigr(\sige)])\}$.

We can notice that, in particular, the definitions of $\copRel{\csRel}{B,P}$ and $\canRel{\csRel}{x}$  match respectively the definitions of $\cop{B,P}$ (Definition~\ref{def_copy}) and $\can{x}$ (Definition~\ref{def_canonical}). Finally, we define in Definition~\ref{def_canonicalrel_contexts} a notion of $\rightarrow$-canonical contexts. Intuitively\footnote{This property is not true for every $\rightarrow$ relation, but is true if $\rightarrow$ is of the shape $\csRel_S$.}, every context reachable from $((\sigma(B),P),[\oc_t])$ by a $\rightarrow$-path with $(B,P) \in \canRel{\rightarrow}{B}$ and $t \in \copRel{\rightarrow}{B,P}$, is a $\rightarrow$-canonical context.

\begin{definition}\label{def_canonicalrel_contexts}A {\em $\rightarrow$-canonical context} is a context $((e,[P_1;\cdots;P_{\depth{e}}),[T_1;\cdots;T_k])$ such that $(e,P) \in \canRel{\rightarrow}{e}$ and:
  \begin{itemize}
  \item For every $T_i=\oc_t$, $((e,[P_1;\cdots;P_{\depth{e}}]),[\oc_t;T_{i+1};\cdots;T_k])$ is a $\rightarrow$-copy context.
  \item For every $T_i=\wn_t$, $((\overline{e},[P_1;\cdots;P_{\depth{e}}]),[\oc_t;T^{\perp}_{i+1};\cdots;T^{\perp}_k])$ is a $\rightarrow$-copy context.
  \end{itemize}
\end{definition}

Let us consider a potential box $(B,P)$ and $t \in \cop{B,P}$, then there exists a context $((e,Q),[\oc_{\sige}])$ such that $((\sigma(B),P),[\oc_t]) \csRel^* ((e,Q),[\oc_{\sige}])$. If some of those $\csRel$ steps are not in $\rightarrow$, we may have $((\sigma(B),P),[\oc_t]) \rightarrow^* ((f,R),[\oc_v]) \not \rightarrow$ with $v \neq \sige$. In this case, $t$ would not be a $\rightarrow$-copy of $(B,P)$. However, there exist ``truncations'' of $t$ which are $\rightarrow$-copy of $(B,P)$ (at least, $\sige$ verify those properties).

\begin{definition}{\label{def_truncation}}
  We define ``$t$ is a truncation of $t'$'' (written $t \prune t'$) by induction on $t$. For every signature $t,t',u,u'$, we set $\sige \prune t$ and if we suppose $t \prune t'$ and $u \prune u'$ then $\sigl(t) \prune \sigl(t')$, $\sigr(t) \prune \sigr(t')$, $\sigp(t) \prune \sigp(t')$ and $\sign(t,u) \prune \sign(t',u')$.
\end{definition}

As hinted earlier, we want to define $\restrSig{\rightarrow}{((\sigma(B),P),[\oc_t])}$ as the ``biggest'' $\rightarrow$-copy $u$ of $(B,P)$ such that $u \prune t$. But we have not precised the meaning of ``biggest'' yet. The solution we chose is to first maximize the rightmost branch. Then, once this branch is fixed, we maximize the second rightmost branch and so on. Formally, we define ``biggest'' as ``the maximum for the order $\pruneceq$'' with $\pruneceq$ defined as follows.

\begin{definition}{\label{def_prunec}}
  We first define a strict order $\prunec$ on signatures by induction. For every signature $t,t',u,v$, we set $\sige \prunec t$. And, if we suppose $t \prunec t'$, then $\sigl(t) \prunec \sigl(t')$, $\sigr(t) \prunec \sigr(t')$, $\sigp(t) \prunec \sigp(t')$, $\sign(u,t) \prunec \sign(v,t')$ and $\sign(t,u) \prunec \sign(t',u)$.

  Then we define an order $\pruneceq$ on signatures by: $t \pruneceq t'$ iff either $t=t'$ or $t \prunec t'$.
\end{definition}

\begin{lemma}[\cite{perrinelMegathese}]\label{lemma_pruneceq_total}
  Let $t$ be a signature, then $\pruneceq$ is a total order on $\Set{ u \in \sig}{ u \prune t}$.
\end{lemma}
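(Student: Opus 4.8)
The plan is to prove the two required properties—that $\pruneceq$ is a partial order on all of $\sig$, and that any two elements of $\Set{u \in \sig}{u \prune t}$ are $\pruneceq$-comparable—and then combine them. I would first establish, by a routine induction on signatures following the clauses of Definition~\ref{def_prunec}, that $\prunec$ is a strict partial order (irreflexive and transitive): irreflexivity is immediate since each inductive clause strictly decreases structural size in the relevant branch, and transitivity follows by a case analysis on which clause produced $t \prunec t'$ and $t' \prunec t''$, using that $\sige \prunec t$ for all $t$ to absorb the base cases. Consequently $\pruneceq$ is a (non-strict) partial order on $\sig$.

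The heart of the argument is totality on $\Set{u \in \sig}{u \prune t}$, which I would prove by induction on the structure of $t$. Suppose $u \prune t$ and $v \prune t$; I want $u \pruneceq v$ or $v \pruneceq u$. If either $u = \sige$ or $v = \sige$ we are done immediately, since $\sige \prunec w$ for every $w \neq \sige$. Otherwise, the key observation is that truncation is \emph{shape-preserving} on non-$\sige$ signatures: if $u \prune t$ and $u \neq \sige$, then $u$ and $t$ have the same outermost constructor (this is visible directly from Definition~\ref{def_truncation}, where the only clause allowing a constructor mismatch is $\sige \prune t$). Hence $u$, $v$ and $t$ all share the same head constructor, and we split into the four constructor cases. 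For $\sigl$, $\sigr$, $\sigp$ the situation reduces directly: writing $t = \sigl(t_0)$, $u = \sigl(u_0)$, $v = \sigl(v_0)$ with $u_0 \prune t_0$ and $v_0 \prune t_0$, the induction hypothesis on $t_0$ gives $u_0 \pruneceq v_0$ or $v_0 \pruneceq u_0$, and the corresponding $\prunec$-clause lifts this to $u$ and $v$. The interesting case is $\sign$: write $t = \sign(t_1, t_2)$, $u = \sign(u_1, u_2)$, $v = \sign(v_1, v_2)$ with $u_i \prune t_i$, $v_i \prune t_i$. Here Definition~\ref{def_prunec} compares the \emph{second} component first, and only when the second components are \emph{equal} does it compare the first. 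So I would apply the induction hypothesis to $t_2$ to get $u_2$ and $v_2$ comparable; if $u_2 \prunec v_2$ (or the reverse) then $u \prunec v$ (resp.\ $v \prunec u$) by the clause $\sign(w, t) \prunec \sign(w', t')$ from $t \prunec t'$ — wait, one must be careful: that clause requires the \emph{first} components to be identical. Let me instead use the clause $\sign(u_1, t) \prunec \sign(v_1, t')$ which is exactly what is needed when $u_2 \prunec v_2$ (the first components are then irrelevant). If instead $u_2 = v_2$, apply the induction hypothesis to $t_1$ to compare $u_1$ and $v_1$, and use the clause $\sign(t, w) \prunec \sign(t', w)$ with $w = u_2 = v_2$ to conclude.

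Combining: $\pruneceq$ restricted to $\Set{u \in \sig}{u \prune t}$ is a partial order (inherited from $\sig$) in which every two elements are comparable, hence a total order, which is the claim. The main obstacle is the $\sign$ case of the totality induction: one must track carefully which component Definition~\ref{def_prunec} compares lexicographically first, and match the right clause to each subcase (second components comparable-but-distinct versus second components equal). The other potential subtlety is verifying shape-preservation of $\prune$, but that is an immediate syntactic reading of Definition~\ref{def_truncation}. Everything else is bookkeeping, and the $\sige$-absorption clauses make all the base cases trivial.
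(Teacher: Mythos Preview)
The paper does not prove this lemma here; it is deferred to the cited thesis~\cite{perrinelMegathese}. Your argument is the natural one and is correct: structural induction on $t$ for totality, using that non-$\sige$ truncations preserve the head constructor, with the $\sign$ case handled lexicographically (compare second components first via the clause $\sign(u,t) \prunec \sign(v,t')$ from $t \prunec t'$, and only when they coincide compare first components via $\sign(t,u) \prunec \sign(t',u)$). Two minor remarks. First, your justification of irreflexivity via ``strictly decreases structural size'' is loose; the clean argument is a direct induction on the signature showing no derivation of $t \prunec t$ exists. Second, the paper's base clause literally reads ``$\sige \prunec t$ for every signature $t$'', which taken at face value would give $\sige \prunec \sige$; your reading that this is meant for $t \neq \sige$ is the intended one (otherwise $\prunec$ is not strict, contradicting the paper's own claim), so this is a harmless imprecision in the definition rather than a flaw in your proof.
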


Thanks to Lemma~\ref{lemma_pruneceq_total}, the set $\restr{\rightarrow}{(\sigma(B),P),[\oc_t]}$ defined below is totally ordered by $\pruneceq$ and finite (if $t$ is of size $k$, it has at most $2^k$ truncations) so it admits a maximum for $\pruneceq$, written $\restrSig{\rightarrow}{((\sigma(B),P),[\oc_t])}$.

\begin{definition}{\label{def_cop_restr}}
  Let $((e,P),[\oc_t]@T) \in \context{G}$, we define $\restr{\rightarrow}{(e,P),[\oc_t]@T}$ as the set of signatures $u$ such that $u \prune t$ and $((e,P),[\oc_u]@T)$ is a $\rightarrow$-copy context. Then, we define $\restrSig{\rightarrow}{((e,P),[\oc_t]@T)}$ as the maximum (for $\pruneceq$) element of $\restr{\rightarrow}{(e,P),[\oc_t]@T}$.
\end{definition}

For example, in the proof-net of Figure~\ref{fig_ex_principal_nec}, $\restr{\csRel_{\{C\}}}{(\sigma(C),[]),[\oc_{\sigl(\sigr(\sige))}]}=\{\sige,\sigl(\sige)\}$ so we have $\restrSig{\csRel_{\{C\}}}{((\sigma(C),[]),[\oc_{\sigl(\sigr(\sige))}])}=\sigl(\sige)$.

In Figure~\ref{fig_ex_principal_nec}, $((\overline{\sigma_1(A)},[]),[\oc_u]) \csRel_S ((\sigma(A),[]),[\oc_u])$ for any $u \in \sig$. So, for any $t \in \sig$, $\restrSig{\csRel_S}{((\overline{\sigma_1(A)},[]),[\oc_t])}=\restrSig{\csRel_S}{((\sigma(A),[]),[\oc_{\sige}])}$. Lemma~\ref{lemma_restr_eq} generalizes this observation.

\begin{lemma}[\cite{perrinelMegathese}]\label{lemma_restr_eq} Let $t \in \sig$. We suppose that, for every $u \prune t$ and $v \compl u$, we have $((e,P),[\oc_v]@T) \rightarrow ((f,Q),[\oc_v]@U)$. Then, $\restrSig{\rightarrow}{((e,P),[\oc_t]@T)}=\restrSig{\rightarrow}{((f,Q),[\oc_t]@U)}$.
\end{lemma}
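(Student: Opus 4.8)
The plan is to exploit the fact that $\restrSig{\rightarrow}{(\cdot)}$ is defined as a maximum, for the total order $\pruneceq$ (Lemma~\ref{lemma_pruneceq_total}), of the set $\restr{\rightarrow}{(e,P),[\oc_t]@T}$ of truncations $u\prune t$ for which $((e,P),[\oc_u]@T)$ is a $\rightarrow$-copy context. It therefore suffices to show that the two sets
\[
  \restr{\rightarrow}{(e,P),[\oc_t]@T}\quad\text{and}\quad\restr{\rightarrow}{(f,Q),[\oc_t]@U}
\]
are equal; then their $\pruneceq$-maxima coincide, which is exactly the claim. Both sets are subsets of $\Set{u\in\sig}{u\prune t}$, so the statement reduces to a single equivalence: for every $u\prune t$,
\[
  ((e,P),[\oc_u]@T)\text{ is a }\rightarrow\text{-copy context}
  \iff
  ((f,Q),[\oc_u]@U)\text{ is a }\rightarrow\text{-copy context}.
\]

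To prove this equivalence I would unfold Definition~\ref{def_copyrel}: $((e,P),[\oc_u]@T)$ is a $\rightarrow$-copy context iff for every $v\compl u$ there is a path $((e,P),[\oc_v]@T)\rightarrow^*((\_,\_),[\oc_{\sige}])$. Now for any $u\prune t$ and any $v\compl u$, the hypothesis gives a single step
\[
  ((e,P),[\oc_v]@T)\rightarrow((f,Q),[\oc_v]@U).
\]
Since $\rightarrow\subseteq\csRel$ and $\csRel$ is deterministic (noted in the text after Figure~\ref{exponential_context_semantic}), this is the \emph{unique} first step out of $((e,P),[\oc_v]@T)$, and it lands exactly on $((f,Q),[\oc_v]@U)$. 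Hence a path $((e,P),[\oc_v]@T)\rightarrow^*((\_,\_),[\oc_{\sige}])$ of length $\geq 1$ exists if and only if a path $((f,Q),[\oc_v]@U)\rightarrow^*((\_,\_),[\oc_{\sige}])$ exists; and a path of length $0$ from $((e,P),[\oc_v]@T)$ would force $u$ (hence $v$) to be $\sige$, in which case $((f,Q),[\oc_{\sige}]@U)$ is trivially a $\rightarrow$-copy context as well (the length-$0$ path witnesses it — note $[\oc_{\sige}]@U$ has leftmost trace element $\oc_{\sige}$), and conversely. Quantifying over all $v\compl u$ yields the desired equivalence, so the two truncation sets coincide and the lemma follows.

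The one delicate point — and the step I expect to be the main obstacle — is the handling of the degenerate length-$0$ case and, relatedly, making sure the hypothesis is applied with the right quantifier scope: the hypothesis ranges over all $u\prune t$ \emph{and} all $v\compl u$, so when checking the copy-context condition for a fixed $u$ one must feed each relevant $v\compl u$ into it, and one must check that $v\compl u$ together with $u\prune t$ does not slip outside the hypothesis's domain (it does not, since these are exactly the quantifiers in the statement). A secondary subtlety is that $((f,Q),[\oc_v]@U)$ must itself be a legitimate context, i.e. $|Q|=\depth{f}$ and $[\oc_v]@U\in\trace$; this is automatic because it is the image of a context under a step of $\csRel$, whose soundness (preservation of the depth/length invariant and of being a trace) was recorded just after Figure~\ref{exponential_context_semantic}. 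Once these bookkeeping issues are dispatched, the argument is purely the determinism-of-$\csRel$ observation applied uniformly.
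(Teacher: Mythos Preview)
Your argument is correct. The paper does not actually give a proof of this lemma here; it defers to~\cite{perrinelMegathese}. Your approach---reducing the equality of $\restrSig{\rightarrow}{\cdot}$ to the equality of the underlying sets $\restr{\rightarrow}{\cdot}$, then using the hypothesis together with determinism of $\csRel$ to transport $\rightarrow$-copy-context witnesses in both directions---is exactly the natural one and goes through as you describe. The length-$0$ case and the quantifier bookkeeping are handled correctly: for $u\prune t$ and $v\compl u$ you are precisely inside the scope of the hypothesis, and when $v=\sige$ both contexts trivially satisfy the copy-context condition via the empty path.
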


Now, for any potential edge $(e,P)$, we want to define $\restrPot{\rightarrow}{e,P}$ as the ``biggest'' truncation $P'$ of $P$ such that $(e,P')$ is a $\rightarrow$-canonical edge. We first maximize the leftmost signature, then the second, and so on.

\begin{definition}\label{def_potrestr} For every potential edge $(e,P)$, we define $\restrPot{\rightarrow}{e,P}$ by induction on $\depth{e}$. If $\depth{e}=0$, then we set $\restrPot{\rightarrow}{e,[]}=(e,[])$. Otherwise we have $P=Q.t$, let $B$ be the deepest box containing $e$, $(\sigma(B),Q')=\restrPot{\rightarrow}{\sigma(B),Q}$ and $t'=\restrSig{\rightarrow}{((\sigma(B),Q'),[\oc_t])}$ then we set $\restrPot{\rightarrow}{e,Q.t}=(e,Q'.t')$.
\end{definition}

For example, in the proof-net of Figure~\ref{fig_ex_principal_nec}, $\restrPot{\csRel_{\{B\}}}{w,[\sigr(\sige);\sigl(\sige)]}=(w,[\sige;\sigl(\sige)])$.

\begin{definition}{\label{def_prune_pot}} We extend $\prune$ on $\Pot$ by $[p_1;\hspace{-0.04em}\cdots\hspace{-0.04em};p_k]\hspace{-0.17em}\prune\hspace{-0.17em}[p'_1;\hspace{-0.04em}\cdots\hspace{-0.04em};p'_k]$ iff for $1 \hspace{-0.2em}\leq\hspace{-0.1em} i \hspace{-0.2em}\leq\hspace{-0.1em} k$, $p_i \hspace{-0.17em}\prune\hspace{-0.17em} p'_i$.
\end{definition}

We can notice that, in the same way as the definition of $\can{e}$ only depends on the boxes containing $e$ (cf. page~\pageref{remarque_can_same}), the definition of $\restrPot{\rightarrow}{e,P}$ only depends on the boxes containing $e$. We formalize it with the next lemma.

\begin{lemma}\label{lemma_restrpot_eq}
  If $e,\hspace{-0.05em}f \hspace{-0.12em}\in\hspace{-0.08em} \dirEdges{G}$ belong to the same boxes, \hspace{-0.05em}$\restrPot{\rightarrow}{e,P} \hspace{-0.2em}=\hspace{-0.1em} (e,P')$ iff $\restrPot{\rightarrow}{f,P}\hspace{-0.2em}=\hspace{-0.1em}(f,P')$.
\end{lemma}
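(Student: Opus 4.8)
The plan is to proceed by induction on $\depth{e}$ (equivalently $\depth{f}$, since $e$ and $f$ belong to the same boxes). The base case $\depth{e}=0$ is immediate: by Definition~\ref{def_potrestr}, $\restrPot{\rightarrow}{e,[]}=(e,[])$ and $\restrPot{\rightarrow}{f,[]}=(f,[])$, so the claimed equivalence holds with $P=P'=[]$. For the inductive step, write $P=Q.t$ and let $B$ be the deepest box containing $e$; since $e$ and $f$ belong to exactly the same boxes, $B$ is also the deepest box containing $f$. By Definition~\ref{def_potrestr}, $\restrPot{\rightarrow}{e,Q.t}=(e,Q'.t')$ where $(\sigma(B),Q')=\restrPot{\rightarrow}{\sigma(B),Q}$ and $t'=\restrSig{\rightarrow}{((\sigma(B),Q'),[\oc_t])}$, and similarly $\restrPot{\rightarrow}{f,Q.t}=(f,Q''.t'')$ with $(\sigma(B),Q'')=\restrPot{\rightarrow}{\sigma(B),Q}$ and $t''=\restrSig{\rightarrow}{((\sigma(B),Q''),[\oc_t])}$.

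The key observation is that both computations are routed through the \emph{same} intermediate object $\restrPot{\rightarrow}{\sigma(B),Q}$: the principal edge $\sigma(B)$ does not depend on $e$ or $f$ at all. Hence $Q'=Q''$ on the nose, and therefore $t'=\restrSig{\rightarrow}{((\sigma(B),Q'),[\oc_t])}=\restrSig{\rightarrow}{((\sigma(B),Q''),[\oc_t])}=t''$ as well. So the witnessing potential $P'=Q'.t'$ is literally the same in both cases, which is exactly what the lemma asserts: $\restrPot{\rightarrow}{e,P}=(e,P')$ iff $\restrPot{\rightarrow}{f,P}=(f,P')$ with the same $P'$.

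There is one point requiring a little care rather than a genuine obstacle: one must check that $\depth{\sigma(B)}<\depth{e}$, so that the recursive call in Definition~\ref{def_potrestr} is at a strictly smaller depth and the induction is well-founded. This holds because $B$ is the deepest box containing $e$, so $e$ lies inside $B$ but $\sigma(B)$ (the conclusion of the principal door of $B$) lies outside $B$; every box containing $\sigma(B)$ also contains $e$, and $B$ itself contains $e$ but not $\sigma(B)$, giving $\depth{\sigma(B)}=\depth{e}-1$. (Strictly, the recursion in Definition~\ref{def_potrestr} is on the length of the potential, $|P|=\depth{e}$, which drops to $|Q|=\depth{\sigma(B)}$, so the induction is on $|P|$ and this is automatic.) The main subtlety — ensuring that the definition of $\restrPot{\rightarrow}{e,P}$ really factors through data independent of $e$ — is resolved simply by unwinding Definition~\ref{def_potrestr} one step and noting that the only place $e$ enters is as the first component of the output tuple, never inside any of the recursive or $\restrSig{\rightarrow}{\_}$ computations. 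No new properties of $\rightarrow$ beyond $\rightarrow\subseteq\csRel$ are needed.
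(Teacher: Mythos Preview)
Your proof is correct and matches the paper's treatment: the paper states this lemma without proof, regarding it as immediate from Definition~\ref{def_potrestr}, and your argument is precisely the one-step unfolding of that definition showing that the only dependence on $e$ is in the output's first component. Your observation that the inductive hypothesis is not actually needed (both sides factor through the identical object $\restrPot{\rightarrow}{\sigma(B),Q}$) is exactly right; the induction framing is harmless but superfluous.
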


Let us suppose that $((\sigma(B),P),[\oc_t]) \noJump^* ((e,Q),[\oc_{\sige}])$ and $S$ is the set of boxes which are entered by their principal door by this path. Then, we prove that it is enough to know $\restrPot{\csRel_S}{e,Q}$ to trace back the path (Lemma~\ref{lemma_remonter_simple}). To do so, we need to prove that for every intermediary step $(((e_k,P_k),T_k) \noJump ((e_{k+1},P_{k+1}),T_{k+1})$ we have enough information about $P_{k+1}$ and $T_{k+1}$ to determine $e_k$. This is the role of the following definition. As an intuition, if $((e,P'),T')=\restrCont{\rightarrow}{((e,P),T)}$ then $((e,P'),T')$ is the ``biggest'' $\rightarrow$-canonical context which is a truncation of $((e,P),T)$.

\begin{definition}{\label{def_cont_restr}} For $((e,P),[T_n;\cdots;T_1]) \in \context{G}$ we define $\restrCont{\rightarrow}{((e,P),[T_n;\cdots;T_1])}$ as $((e,P'),[T'_n;\cdots;T'_1])$ with $(e,P')=\restrPot{\rightarrow}{e,P}$ and $T'_i$ defined by induction on $i$ as follows:
  \begin{itemize}
  \item If $T_i=\oc_t$, then $T'_i=\oc_{t'}$ with $t'=\restrSig{\rightarrow}{((e,P'),[\oc_t;T'_{i-1}\hspace{0.4em};\cdots;T'_1\hspace{0.3em}])}$.
  \item If $T_i=\wn_t$, then $T'_i=\wn_{t'}$ with $t'=\restrSig{\rightarrow}{((\overline{e},P'),[\oc_{t};{T'_{i-1}}^{\hspace{-0.6em}\perp};\cdots;{T'_1}^{\hspace{-0.05em}\perp}])}$.
  \item Otherwise, $T'_i=T_i$.
  \end{itemize}
\end{definition}

%\begin{definition}{\label{def_prune_context}}      
%  We also extend the relation $\prune$ on traces by $[T_1;\cdots;T_k] \prune [T'_1;\cdots;T'_k]$ iff for $1 \leq i \leq k$, we are in one of the following cases: ($T_i=\oc_t$, $T'_i=\oc_{t'}$ and $t \prune t'$), or ($T_i=\wn_t$, $T'_i=\wn_{t'}$ and $t \prune t'$) or $T_i=T'_i$.
%  
%  Finally we extend $\prune$ on contexts by $((e,P),T) \prune ((e,P'),T')$ iff $P \prune P'$ and $T \prune T'$.
%\end{definition}

Lemma~\ref{lemma_restrCont_left} is a generalization of Lemma~\ref{lemma_restr_eq} to contexts. For example, in Figure~\ref{fig_ex_principal_nec}, for every $S \subseteq \boxset{G}$ and trace $T$ we have, $((d,[\sigr(\sign(\sige,\sige))]),T) \noJump_S^{3} ((\overline{h},[]),T.\oc_{\sign(\sige,\sige)})$ and $((h,[]),T^{\perp}.\wn_{\sign(\sige,\sige)}) \csRel_S^3 ((\overline{d},[\sigr(\sign(\sige,\sige))]),T^\perp)$. So for every $t,u \in \sig$, there exist $v,w \in \sig$ such that 
\begin{align*}
  \restrCont{\csRel_S}{((\overline{h},[]),[\oc_{t};\wn_{u};\oc_{\sign(\sige,\sige)}])}&=((\overline{h},[]),[\oc_{v};\wn_{w};\oc_{\_}])\\
  \restrCont{\csRel_S}{((d,[\sigr(\sign(\sige,\sige))]),[\oc_{t};\wn_{u}])}&=((d,[\_]),[\oc_{v};\wn_{w}])
\end{align*}

\begin{lemma}\label{lemma_restrCont_left}
  Let $(e,P),(e,Q)$ be potential edges and $U,V$ be lists of trace elements. Let us suppose that, for every trace element list $T$, $((e,P),T@U) \rightarrow ((f,Q),T@V)$ and $((\overline{f},Q),T^\perp @ V^\perp) \rightarrow ((\overline{e},P),T^\perp @ U^\perp ) $. Then, for any trace $T$, $\restrCont{\rightarrow}{((e,P),T@U)}$ and $\restrCont{\rightarrow}{((f,Q),T@V)}$ are of the shape $(\_,T'@U')$ and $(\_,T'@V')$ with $|T|=|T'|$.
\end{lemma}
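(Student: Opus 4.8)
I would argue by induction on $|T|$. Two preliminary observations are used freely. First, $\restrCont{\rightarrow}{(\_)}$ never changes the number of trace elements (it only rewrites the signature inside each $\oc_{\_}$ or $\wn_{\_}$ element) and leaves the potential part equal to $\restrPot{\rightarrow}{e,P}$, independently of the trace (Definition~\ref{def_cont_restr}); so the requirement $|T'|=|T|$ is automatic once the first $|T|$ restricted trace elements are identified. Second, by Definition~\ref{def_cont_restr} the restricted form of a suffix of the trace is computed using only that suffix, the edge, and $\restrPot{\rightarrow}{e,P}$; hence the restriction of "the $U$-part'' of $((e,P),T@U)$ is literally the trace of $\restrCont{\rightarrow}{((e,P),U)}$, and similarly for longer suffixes. (In the statement the second potential edge should read $(f,Q)$, as the hypotheses make clear.)

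The base case $T=[\,]$ is immediate with $T'=[\,]$. For the inductive step I would write $T=T_0.a$, so $T@U=T_0@([a]@U)$ and $T@V=T_0@([a]@V)$. Since the hypothesis is assumed for \emph{every} trace element list, instantiating it at the lists $S@[a]$ (and using $([a]@X)^\perp=[a^\perp]@X^\perp$) shows it still holds with $U,V$ replaced by $[a]@U,[a]@V$. Applying the induction hypothesis to $T_0$ then yields a common prefix $T'_0$ of length $|T_0|$ with
\begin{equation*}
  \restrCont{\rightarrow}{((e,P),T@U)}=(\_,\,T'_0@R_e),\qquad \restrCont{\rightarrow}{((f,Q),T@V)}=(\_,\,T'_0@R_f),
\end{equation*}
where, by the second preliminary observation, $R_e$ is the trace of $\restrCont{\rightarrow}{((e,P),[a]@U)}$ and $R_f$ that of $\restrCont{\rightarrow}{((f,Q),[a]@V)}$. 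Restricting $[a]@U$ from right to left first produces the trace $U'$ of $\restrCont{\rightarrow}{((e,P),U)}$ and then a single leading element $\overline a_e$, so $R_e=[\overline a_e]@U'$ and likewise $R_f=[\overline a_f]@V'$. Taking $T':=T'_0@[\overline a_e]$, everything reduces to showing $\overline a_e=\overline a_f$.

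To prove $\overline a_e=\overline a_f$: if $a$ is a multiplicative or quantifier element it is unchanged by restriction, so $\overline a_e=a=\overline a_f$. If $a=\oc_s$ then $\overline a_e=\oc_{s_e}$ and $\overline a_f=\oc_{s_f}$ with $s_e=\restrSig{\rightarrow}{((e,\widehat P),[\oc_s]@U')}$ and $s_f=\restrSig{\rightarrow}{((f,\widehat Q),[\oc_s]@V')}$, where $(e,\widehat P)=\restrPot{\rightarrow}{e,P}$ and $(f,\widehat Q)=\restrPot{\rightarrow}{f,Q}$. I would then invoke Lemma~\ref{lemma_restr_eq} with $t:=s$, which reduces the goal to producing, for every $u\prune s$ and $v\compl u$, a step $((e,\widehat P),[\oc_v]@U')\rightarrow((f,\widehat Q),[\oc_v]@V')$. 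The hypothesis, instantiated at the list $[\oc_v]$, gives the corresponding step $((e,P),[\oc_v]@U)\rightarrow((f,Q),[\oc_v]@V)$ between the \emph{un}restricted contexts, and since this step is uniform in the trace prefix it is realised by a single rule of Figure~\ref{exponential_context_semantic} acting only on the last trace element and at most on the last signature of the potential. A case analysis on that rule transports it to the restricted contexts: $\restrPot{\rightarrow}{(\_)}$ preserves the box structure of an edge (Lemma~\ref{lemma_restrpot_eq}) and, for the box-door rules, acts on the exchanged signature exactly via $\restrSig{\rightarrow}{(\_)}$ (Definition~\ref{def_potrestr}), while for the contraction and digging rules the elementary rewriting of a signature inside a trace element commutes with $\restrSig{\rightarrow}{(\_)}$ — which is precisely the bookkeeping built into Definitions~\ref{def_cop_restr} and~\ref{def_cont_restr}. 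The case $a=\wn_s$ is handled identically after passing to the reversed edges $\overline e,\overline f$ and using the second half of the hypothesis in place of the first, since Definition~\ref{def_cont_restr} restricts a $\wn_t$ element at $e$ just as an $\oc_t$ element at $\overline e$.

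The step I expect to be the main obstacle is exactly this transport of the uniform $\rightarrow$-step to the restricted contexts. For the multiplicative, quantifier, axiom/cut and dereliction rules it is routine; the real work is in the contraction, digging and box-door rules, where one must check that replacing the ambient potential and trace by their restrictions does not affect whether a context is a $\rightarrow$-copy context. This is the same phenomenon that underlies the stability of $\csRel_S$-canonical contexts under $\csRel_S$, and it is where the bulk of the argument goes; the inductive skeleton above is otherwise bookkeeping.
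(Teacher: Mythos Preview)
Your inductive skeleton is the same as the paper's: induct over the elements of $T$ from right to left, and for each $\oc_t$/$\wn_t$ element invoke Lemma~\ref{lemma_restr_eq} after instantiating the hypothesis at a suitable trace prefix. The paper's version is about three lines per case: writing $T=[T_k;\cdots;T_1]$, it simply instantiates the hypothesis at the prefix $[\oc_t;T'_{i-1};\cdots;T'_1]$, notes that this prefix is left untouched by the step, applies Lemma~\ref{lemma_restr_eq}, and concludes $T'_i=T''_i$. It does this at the unrestricted $(e,P),(f,Q)$, not at $(\widehat P,\widehat Q)$.

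Where you go further is the transport step. You are right that Definition~\ref{def_cont_restr} computes $T'_i$ at the restricted potential and with the already-restricted suffix, so there is a mismatch with the hypothesis as stated; the paper does not explicitly resolve it, and its proof text has the same $(e,P)$ versus $(e,P')$ tension you noticed. But your proposed resolution --- a rule-by-rule case analysis transporting the uniform $\rightarrow$-step to $(\widehat P,U')$ and $(\widehat Q,V')$ --- is both the least worked-out part of your proposal and essentially the same case analysis that constitutes Theorem~\ref{injection_lemma_simple}, which \emph{uses} the present lemma. The commutation of $\restrSig{\rightarrow}{(\_)}$ with the contraction and digging rewriting that you invoke is not a triviality already on the shelf; it is precisely the kind of statement Theorem~\ref{injection_lemma_simple} establishes. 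So as written your route risks circularity: if you pursue the transport, each exponential case must be checked from scratch, without leaning on Theorem~\ref{injection_lemma_simple} or on the ``stability of canonical contexts'' you allude to.
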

\begin{proof}
  Let us write $[T_k;\cdots;T_1]$ for $T$, $(\_,[T'_k;\cdots;T'_1]@U')$ for $\restrCont{\rightarrow}{((e,P),T@U)}$ and $(\_,[T''_k;\cdots;T''_1]@V')$ for $\restrCont{\rightarrow}{((f,Q),T@V)}$. We prove $T'_i=T''_i$ by induction on $i$. 

  If $T_i=\oc_{t}$, then we have $T'_i=\oc_{t'_i}$ with $t'_i=\restrSig{\rightarrow}{((e,P),[\oc_t;T'_{i-1};\cdots;T'_1]@U')}$ and $T''_i=\oc_{t''_i}$ with $t''_i=\restrSig{\rightarrow}{((f,Q),[\oc_t;T''_{i-1};\cdots;T''_1]@V')}$. By the induction hypothesis, we have $[T''_{i-1};\cdots;T''_1]=[T'_{i-1};\cdots;T'_1]$. By assumption $((e,\hspace{-0.05em}P),[\oc_t;\hspace{-0.05em}T'_{i-1};\cdots;T'_{1}]\hspace{-0.05em}@U') \hspace{-0.1em}\rightarrow^* ((f,Q),[\oc_t;T'_{i-1};\cdots;T'_{1}]@V')$. Thus, by Lemma~\ref{lemma_restr_eq}, $t'_i=t''_i$ so $T'_i=T''_i$. 
  
  The case $T_i=\wn_t$ is similar (using the $((\overline{f},Q),T^\perp @ V^\perp) \rightarrow ((\overline{e},P),T^\perp @ U^\perp ) $ hypothesis).
\end{proof}

\subsection{Elementary bound for $\stratSNLL$-stratified proof-nets}\label{section_elementary_bound_simple}

We consider the following theorem as the main technical innovation of this paper. It uses the notions of the previous section to trace back $\noJump$-paths. In order to bound $W_G$, we need to bound the number of copies of potential boxes. The usual way to prove the elementary bound on $LLL$ is a round-by-round cut-elimination procedure: we first reduce every cut at depth $0$. Because of the absence of dereliction in $ELL$, none of these step creates new cuts at depth $0$. So this round terminates in at most $|\edges{G}|$ steps. Because each step may at most double the size of the proof-net, the size of the proof-net at the end of round $0$ is at most $2^{|\edges{G}|}$. Then we reduce the cuts at depth $1$, because of the previous bound there at most $2^{|\edges{G}|}$ such cuts, and the reduction of those cuts does not create any new cut... 

The original proof of the elementary bound of $L^3$ relies on a similar round-by-round procedure which is more complex because reducing a cut at level i can create new cuts at level $i$, and a box of level $i$ can be contained in a box of higher level. While Dal Lago adapted to context semantics the round-by-round procedure of $ELL$ concisely in~\cite{lago2006context}, the round-by-round procedure of $L^3$ was only adapted to context semantics by Perrinel~\cite{perrinel2013pathsbased} (a work which is the basis of this article).

Theorem~\label{injection_lemma_simple} allows us to bring round-by-round procedures where strata differ from depth, to context semantics. We explained that $\restrPot{\csRel_S}{e,P}$ corresponds to 
a residue $e'$ of $e$, such that we only fired cuts involving principal door of boxes of $S$. In a round-by-round procedure, after the $i$-th round we have a bound on the number of such $\restrPot{\csRel_S}{e,P}$. By tracing back a path from $((e,P),[\oc_{\sige}])$ until a potential box $(B,P)$ using only the information $\restrPot{\csRel_S}{e,P}$, we show that there is only one residue of $e'$ which will be cut with $B$ (more precisely its residue corresponding to $(B,P)$). This allows us to prove a bound on the number of copies of $(B,P)$.

While we will use other criteria and technical results to deal with the $\onlyJump$ steps, both the proofs of the elementary bound and the proofs of the polynomial bound rely on Theorem~\ref{injection_lemma_simple}.
 
  \begin{theorem}\label{injection_lemma_simple}
    Let $G$ be a proof-net and $S \subset \boxset{G}$. Let $C_e$, $C_f$ and $C'_f$ be contexts such that $C_e \noJump_S C_f$ and $\restrCont{\csRel_S}{C_f} = \restrCont{\csRel_S}{C_f'}$, then there exists a context $C'_e$ such that $C_e' \noJump_S C_f'$ and $\restrCont{\csRel_S}{C_e}= \restrCont{\csRel_S}{ C_e'}$.
  \end{theorem}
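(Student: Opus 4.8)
The statement is essentially a "backwards simulation" lemma for the $\noJump_S$ relation, compatible with the truncation operator $\restrCont{\csRel_S}{\cdot}$. The natural strategy is a case analysis on the rule of Figure~\ref{exponential_context_semantic} used to justify $C_e \noJump_S C_f$: the edge $e$ underlying $C_e$ is the premise of some node $n$, whose conclusion is the edge $f$ underlying $C_f$ (or an inverted variant, for the axiom/cut dual rules). The key structural observation is that $\noJump$ only ever \emph{pushes} a trace element, or pops one and pushes a modified one, and it touches the deepest signature of the potential only at the auxiliary/principal door rules. So for each node type I would write down the shape that $C_f$ must have (its trace ends with the trace element pushed by $n$, and its potential/trace are constrained accordingly), and then read off the unique $C'_e$: it is obtained from $C'_f$ by applying the \emph{inverse} of the same rule. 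Since $\noJump$ is injective (Lemma~\ref{lemma_nojump_injective}), this inverse is well-defined once we know the rule; the side condition "if $D=((\overline{\sigma(B)},P),T.\wn_t)$ then $B\in S$" in the definition of $\noJump_S$ is inherited because $C_f$ and $C'_f$ have the same underlying edge (truncation does not change edges, cf.\ Lemma~\ref{lemma_restrpot_eq}).

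The main work is to check that $\restrCont{\csRel_S}{C_e}=\restrCont{\csRel_S}{C'_e}$. Here the hypothesis $\restrCont{\csRel_S}{C_f}=\restrCont{\csRel_S}{C_f'}$ is used together with Lemma~\ref{lemma_restrCont_left}: in every case the step $C_e \noJump_S C_f$ (and its reverse, on dual contexts) is of exactly the form required by that lemma — a local rewriting that, up to a fixed prefix of the trace and potential, does not consume the parts being maximized. Concretely: for the multiplicative, quantifier, axiom/cut, and dereliction rules the potential is untouched and the rule merely appends (or removes) a trace element of the form $\parr_l,\otimes_r,\forall,!_{\sige},\ldots$, so Lemma~\ref{lemma_restrCont_left} applies verbatim with $U,V$ the relevant one- or two-element tails; the $\restrCont{}$ of $C_e$ and $C'_e$ then share the truncated prefix $T'$, which by the hypothesis is the same on both sides, and the appended/removed trace elements are identical. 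For the contraction rule $((g,P),T.\wn_t)\noJump((i,P),T.\wn_{\sigl(t)})$ the only subtlety is that the last trace element carries a signature; but $\restrSig{}$ commutes with the $\sigl(\cdot)$ (and $\sigr(\cdot)$) wrapping in the sense that a truncation of $\wn_{\sigl(t)}$ corresponds to a truncation of $\wn_t$, so again Lemma~\ref{lemma_restrCont_left} (or a direct unfolding of Definition~\ref{def_cont_restr}) yields equality of the truncated traces.

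The genuinely delicate cases are the door rules, $((e,P.t),T)\noJump((f,P),T.\wn_t)$ and $((g,P.t),T)\noJump((h,P),T.!_t)$, because there the deepest signature $t$ of the potential of $C_e$ becomes the signature carried by the head trace element of $C_f$. So the equality $\restrCont{\csRel_S}{C_f}=\restrCont{\csRel_S}{C'_f}$ forces the truncated head trace elements $\wn_{t'}$ (resp.\ $!_{t'}$) to agree, and I must deduce from this that the truncated deepest potential signatures of $C_e$ and $C'_e$ agree. This is exactly the content of Definition~\ref{def_potrestr} and Definition~\ref{def_cont_restr} being set up \emph{so that they use the same maximization} ($\pruneceq$, relative to $\csRel_S$-copies of the relevant box): $\restrSig{\csRel_S}{((\sigma(B),Q'),[\oc_t])}$ appearing in $\restrPot{}$ is the same quantity as the $t'$ produced by the $\oc_t$/$\wn_t$ clause of $\restrCont{}$ applied to the head element. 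I would make this precise by invoking Lemma~\ref{lemma_restr_eq} applied to the door rule (the door step $((e,P.v),T)\rightarrow((f,P),T.\wn_v)$ has the form required, uniformly in $v\compl u$ for truncations $u$ of $t$), concluding that $\restrSig{\csRel_S}{}$ of the two sides coincide, hence $\restrPot{\csRel_S}{e,P.t}$ and $\restrPot{\csRel_S}{e,P'.t'}$ have the same deepest signature, which is what is needed for $\restrCont{\csRel_S}{C_e}=\restrCont{\csRel_S}{C'_e}$. The rest of the potential and trace is handled by the induction built into Definition~\ref{def_cont_restr} together with the $C_f$-side hypothesis. I expect this coherence between the three "restriction" definitions at the door rules to be the main obstacle; every other case is a short, mechanical application of Lemma~\ref{lemma_restrCont_left}.
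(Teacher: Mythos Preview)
Your plan is essentially the paper's own proof: a case analysis on the $\noJump$ rule, with Lemma~\ref{lemma_restrCont_left} handling the routine cases and the door rules singled out as the delicate ones where the coherence between Definitions~\ref{def_potrestr} and~\ref{def_cont_restr} must be checked explicitly.

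Two small corrections. First, you lump the cut rule with the easy cases, but when $e=\sigma(B)$ is a principal edge the paper treats it separately: if $B\notin S$ then $((e,P''),[\oc_v])\not\csRel_S$ for any $v$, so Lemma~\ref{lemma_restrCont_left} does not apply and one argues directly that both restricted head trace elements collapse to $\oc_{\sige}$. Second, for the door cases your appeal to Lemma~\ref{lemma_restr_eq} is misplaced: that lemma requires a step of the shape $((e,P),[\oc_v]@T)\rightarrow((f,Q),[\oc_v]@U)$ with the leftmost $\oc_v$ untouched, whereas the door step moves the signature between the potential and the trace. The paper does exactly what your surrounding prose already suggests---it unfolds $\restrPot{\csRel_S}{e,P.t}$ and the $\oc_t$/$\wn_t$ clause of $\restrCont{\csRel_S}{}$ and observes that both call $\restrSig{\csRel_S}{((\sigma(B),Q''),[\oc_t])}$ on the same data, so the hypothesis on $C_f$ transfers directly to $C_e$. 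With these two adjustments your outline matches the paper's argument (the $\wn N$ case, which neither you nor the paper detail here, is deferred to \cite{perrinelMegathese}).
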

\begin{prf}
  We detail an easy step (crossing a $\parLab$ node upward). Most of the other steps are quite similar. For the steps which offer some particular difficulty, we only detail the points which differ from crossing a $\parLab$ upward.

      \tikzsetnextfilename{elem_par2}
      \begin{wrapfigure}{l}{1.5cm}
        \begin{tikzpicture}
          \node[par] (par) at (0,0) {};
          \draw[ar] ($(par)+(0,-0.5)$) -- (par)  node [edgename] {$e$};
          \draw[ar] (par) -- ($(par)+(120:0.5)$) node [edgename] {$f$};
          \draw     ($(par)+( 60:0.5)$) -- (par);
        \end{tikzpicture}
      \end{wrapfigure}
      Let us suppose that $C_e=((e,P),T.\otimes_l) \noJump_S ((f,P),T)=C_f$ (crossing a $\parr$ upwards, such that $\overline{f}$ is not a principal edge) and $\restrCont{\csRel_S}{C_f} = \restrCont{\csRel_S}{ C'_f}$. So $C'_f$ is of the shape $((f,P'),T')$. We set $C'_e=((e,P'),T'.\otimes_l)$. Let $((f,P''),T'')=\restrCont{\csRel_S}{C_f}$, then $\restrPot{\csRel_S}{f,P}=\restrPot{\csRel_S}{f,P'}=(f,P'')$. So, by Lemma~\ref{lemma_restrpot_eq}, $\restrPot{\csRel_S}{e,P}=\restrPot{\csRel_S}{e,P'}=(e,P'')$. Moreover, by Lemma~\ref{lemma_restrCont_left}, $\restrCont{\csRel_S}{C_e}=((e,P''),T''.\otimes_l)$ and $\restrCont{\csRel_S}{C'_e}=((e,P''),T''.\otimes_l)$ so $\restrCont{\csRel_S}{C_e}=\restrCont{\csRel_S}{C'_e}$.

      \tikzsetnextfilename{elem_cut2}
      \begin{wrapfigure}{l}{2cm}
        \begin{tikzpicture}
          \node[princdoor] (aux) at (0,0) {};
          \node      (etc) at ($(aux)+(1,0)$) {};
          \node[cut] (cut) at ($(aux)!0.5!(etc)+(0,-0.4)$) {};
          \draw[ar]  ($(aux)+(0,0.4)$) -- (aux);
          \draw[ar,out=-90,in=180] (aux) to node [edgename] {$e$} (cut);          
          \draw[ar,out=0,in=-90]   (cut) to node [edgename,right=-0.1] {$f$} (etc);
          \draw (aux)--++(0.5,0) (aux)--++(-0.5,0);
        \end{tikzpicture}
      \end{wrapfigure}
      Let us consider the case where $e$ is the principal edge of a box $B$ (we consider the case where we cross a $cut$), we suppose that we have $C_e=((e,P),T.\oc_t) \noJump_S ((f,P),T.\oc_t) = C_f$\footnote{If the rightmost trace element of $C_e$ is not of the shape $\oc_t$, the proof does not offer any additional difficulty compared to the step presented above.}. So $C'_f$ is of the shape $((f,P'),T'.\oc_{t'})$. We set $C'_e=((e,P'),T'.\oc_{t'})$. By supposition, $\restrCont{\csRel_S}{C_f}=\restrCont{\csRel_S}{C'_f}=((f,P''),T''.\oc_{t''})$. In particular $\restrSig{\csRel_s}{((f,P''),[\oc_{t}])}=\restrSig{\csRel_s}{((f,P''),[\oc_{t'}])}$. If $B \in S$, by Lemma~\ref{lemma_restrCont_left}, we have $\restrCont{\csRel_S}{C_e}=\restrCont{\csRel_S}{C'_e}=((e,P''),T''.\oc_{t''})$. Otherwise, we have $\restrCont{\csRel_S}{C_e}=\restrCont{\csRel_S}{C'_e}=((e,P''),T''.\oc_{\sige})$.

      \tikzsetnextfilename{elem_cutprinc_3}
      \begin{wrapfigure}{l}{2cm}
        \begin{tikzpicture}
          \node[princdoor] (aux) at (0,0) {};
          \node      (etc) at ($(aux)+(1,0)$) {};
          \node[cut] (cut) at ($(aux)!0.5!(etc)+(0,-0.4)$) {};
          \draw[ar]  ($(aux)+(0,0.4)$) -- (aux);
          \draw[ar,out=180,in=-90] (cut) to node [edgename,left=0.2cm] {$f$} (aux);          
          \draw[ar,out=-90,in=0]   (etc) to node [edgename,right] {$e$} (cut);
          \draw (aux)--++(0.5,0) (aux)--++(-0.5,0);
        \end{tikzpicture}
      \end{wrapfigure}
      Let us consider the case where $\overline{f}$ is the principal edge of a box $B$ (we consider the case where we cross a $cut$) with $C_e=((e,P),T.\wn_t) \noJump_S ((f,P),T.\wn_t) = C_f$. So $C'_f$ is of the shape $((f,P'),T'.\wn_{t'})$. We set $C'_e=((e,P'),T'.\wn_{t'})$. By supposition, $\restrCont{\csRel_S}{C_f}=\restrCont{\csRel_S}{C'_f}=((f,P''),T''.\wn_{t''})$. By definition of $\noJump_S$, $B$ is in $S$. So, we can notice that $C'_e \noJump_S C'_f$ and, using Lemma~\ref{lemma_restrCont_left}, we have $\restrCont{\csRel_S}{C_e}=\restrCont{\csRel_S}{C'_e}=((e,P''),T''.\wn_{t''})$.

      \tikzsetnextfilename{elem_princ2}
      \begin{wrapfigure}{l}{1cm}
        \begin{tikzpicture}
          \node[princdoor] (aux) at (0,0) {};
          \draw[ar] (aux) -- ($(aux)+(0,0.5)$) node [edgename] {$f$};
          \draw[ar] ($(aux)+(0,-0.5)$) --(aux) node [edgename] {$e$};
          \draw (aux)--++(0.5,0) (aux)--++(-0.5,0);
        \end{tikzpicture}
      \end{wrapfigure}
      Let us suppose that $C_e=((e,P),T.\wn_{t}) \noJump_S ((f,P.t),T)=C_f$ (crossing the principal door of $C$ upwards). Then, $C'_f$ must be of the shape $((f,P'.t'),T')$. We set $C'_e=((e,P'),T'.\wn_{t'})$. The only particular point is to prove that $\restrSig{\csRel_S}{(\restrPot{\csRel_S}{\overline{e},P},[\oc_t])}=\restrSig{\csRel_S}{(\restrPot{\csRel_S}{\overline{e},P'},[\oc_{t'}])}$. By definition, $\restrPot{\csRel_S}{f,P.t}=(f,Q.u)$ with $\restrPot{\csRel_S}{\overline{e},P}=(\overline{e},Q)$ and $\restrSig{\csRel_S}{(\overline{e},Q),[\oc_t])}=u$. Similarly, $\restrPot{\csRel_S}{f,P'.t'}=(f,Q'.u')$ with $\restrPot{\csRel_S}{\overline{e},P'}=(\overline{e},Q')$ and $\restrSig{\csRel_S}{((\overline{e},Q'),[\oc_{t'}])}=u'$. We know that $\restrCont{\csRel_S}{C_f}= \restrCont{\csRel_S}{C'_f}$, so $\restrPot{\csRel_S}{f,Q.t}=\restrPot{\csRel_S}{f,Q'.t'}$. Thus $u=u'$, i.e. $\restrSig{\csRel_S}{(\restrPot{\csRel_S}{\overline{e},P},[\oc_t])}=\restrSig{\csRel_S}{(\restrPot{\csRel_S}{\overline{e},P'},[\oc_{t'}])}$.

      \tikzsetnextfilename{elem_princ3}
      \begin{wrapfigure}{l}{1cm}
        \begin{tikzpicture}
          \node[princdoor] (aux) at (0,0) {};
          \draw[ar] ($(aux)+(0,0.5)$) -- (aux) node [edgename] {$e$};
          \draw[ar] (aux) -- ($(aux)+(0,-0.5)$) node [edgename] {$f$};
          \draw (aux)--++(0.5,0) (aux)--++(-0.5,0);
        \end{tikzpicture}
      \end{wrapfigure}

      Let us suppose that $C_e=((e,P.t),T) \noJump_S ((f,P),T.\oc_t)=C_f$ (crossing the principal door of $B$ downwards). Then $C'_f$ must be of the shape $((f,P'),T'.\oc_{t'})$. We set $C'_e=((e,P'.t'),T')$. The only particular point is to prove that $\restrPot{\csRel_S}{e,P.t}=\restrPot{\csRel_S}{e,P'.t'}$. By definition of $\restrPot{\csRel_S}{\_,\_}$, $\restrPot{\csRel_S}{e,P.t}=(e,Q.u)$ with $\restrPot{\csRel_S}{f,P}=(f,Q)$ and $\restrSig{\csRel_S}{((f,Q),[\oc_t])}=u$. Similarly, $\restrPot{\csRel_S}{e,P'.t'}=(e,Q'.u')$ with $\restrPot{\csRel_S}{f,P'}=(f,Q')$ and $\restrSig{\csRel_S}{((f,Q'),[\oc_{t'}])}=u'$. By supposition,
      \begin{align*}
        %\restrCont{\csRel_S}{C_f} &= \restrCont{\csRel_S}{C'_f}\\
        \restrCont{\csRel_S}{((f,P),T.\oc_t)} &= \restrCont{\csRel_S}{((f,P'),T'.\oc_{t'})}\\
        \left( \restrPot{\csRel_S}{f,P},\_@ \left[ \oc_{\restrSig{\csRel_S }{(\restrPot{\csRel_S}{f,P},[\oc_t])}} \right] \right ) &= \left ( \restrPot{\csRel_S}{f,P'}, \_@ \left[ \oc_{ \restrSig{\csRel_S}{(\restrPot{\csRel_S}{f,P'},[\oc_t'])} } \right] \right)\\
        ((f,Q),\_@[\oc_u]) &=((f,Q'),\_@[\oc_{u'}])\\
        Q.u &= Q'.u'
      \end{align*}

      The steps crossing auxiliary doors are similar to the steps crossing principal doors (dealt with above). To deal with the $\contLab$ node, one has to notice that $t \simpl u$ iff $\sigl(t) \simpl \sigl(u)$. The steps crossing $\digLab$ nodes are quite technical, but do not bring any insight on the result. Those steps are described in~\cite{perrinelMegathese}.
      \qed
    \end{prf}

    Theorem~\ref{injection_lemma_simple} allows us to trace back some $\noJump$ paths provided that we have some information about the last context of the path. In this subsection, we show how this implies an elementary bound (Lemma~\ref{theoStratElementaryBound}). But, first, we need some technical lemmas.

    \begin{lemma}[\cite{perrinelMegathese}]{\label{lemma_last_jump}}
      Let $\rightarrow \subseteq \csRel$. If $((\sigma(B),P),[\oc_t]) \rightarrow^* C$, then there exists a unique context $((\sigma(B'),P'),[\oc_{t'}])$ such that $((\sigma(B),P),[\oc_t]) \rightarrow^* ((\sigma(B'),P'),[\oc_{t'}]) (\noJump \cap \rightarrow)^* C$.
    \end{lemma}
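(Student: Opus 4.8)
The plan is to isolate the one structural feature that separates the long‑range step $\onlyJump$ from the local steps. Call a context \emph{special} if it has the shape $((\sigma(B'),P'),[\oc_{t'}])$, i.e. it sits on the conclusion $\sigma(B')$ of a principal door and carries a one‑element trace $[\oc_{t'}]$. Directly from Figure~\ref{exponential_context_semantic}, the target of every $\onlyJump$ step, $((\overline{\sigma_i(B')},P'),[\oc_{t'}]) \onlyJump ((\sigma(B'),P'),[\oc_{t'}])$, is special. The heart of the argument is the converse for the local steps.

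\emph{Claim: if $C \noJump D$ then $D$ is not special.} I would prove this by running through the rules of Figure~\ref{exponential_context_semantic} and their duals. Since $\sigma(B')$ is by definition the conclusion of the $\fpriLab$ node of $B'$, the only rule whose target edge is the conclusion of a $\fpriLab$ node is the principal‑door‑downwards rule $((g,P'.t'),T) \noJump ((\sigma(B'),P'),T.\oc_{t'})$; for every other rule (axiom, cut, $\parLab$, $\otimes$, $\forall$, $\exists$, dereliction, contraction, digging, auxiliary door, and principal door upwards, each with its dual) the target edge is the conclusion or a reversed premise of a node of some other kind, or a reversed premise of a $\fpriLab$ node, hence distinct from $\sigma(B')$ and $D$ is not special --- note that this already disposes of the rules (contraction and digging upwards via $\sigp$, axiom, cut) whose target trace happens to reduce to a single $\oc$‑element. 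For the principal‑door‑downwards rule the target trace is $T.\oc_{t'}$, and since a trace is by definition a non‑empty list, $|T.\oc_{t'}| \ge 2$, so $D$ is again not special. This exhausts the cases and proves the Claim.

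Now write the given path as $C_0 \rightarrow C_1 \rightarrow \cdots \rightarrow C_n$ with $C_0 = ((\sigma(B),P),[\oc_t])$ and $C_n = C$, and recall $\rightarrow \subseteq \csRel = \noJump \cup \onlyJump$. By the Claim, if $j \ge 1$ and $C_j$ is special then the step $C_{j-1} \rightarrow C_j$ is not a $\noJump$ step, hence it is an $\onlyJump$ step. Let $i^\ast = \max\{\, j : C_j \text{ is special}\,\}$, which exists because $C_0$ is special. For $i^\ast \le j < n$ the step $C_j \rightarrow C_{j+1}$ cannot be an $\onlyJump$ step, for then $C_{j+1}$ would be special with $j+1 > i^\ast$; hence each such step lies in $\noJump \cap \rightarrow$. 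Therefore $C_0 \rightarrow^* C_{i^\ast}\,(\noJump \cap \rightarrow)^* C$ with $C_{i^\ast}$ of the form $((\sigma(B'),P'),[\oc_{t'}])$, which gives existence. For uniqueness, suppose $D = ((\sigma(B''),P''),[\oc_{t''}])$ also satisfies $C_0 \rightarrow^* D\,(\noJump \cap \rightarrow)^* C$. As $\csRel$, and hence $\rightarrow$, is deterministic, the path $C_0 \rightarrow^* C$ is the unique chain $C_0,\dots,C_n$, so $D = C_j$ for some special $j \le i^\ast$. If $j < i^\ast$ then $i^\ast \ge 1$ and $C_{i^\ast-1} \onlyJump C_{i^\ast}$ by the above; but this step occurs inside the suffix $C_j \rightarrow^* C$, whose steps are all $\noJump$ steps, and no context is simultaneously the source of a $\noJump$ rule and of the $\onlyJump$ rule (this is the ``$T=[]$ versus $T\neq[]$'' alternative noted after Figure~\ref{exponential_context_semantic}) --- a contradiction. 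Hence $j = i^\ast$ and $D = C_{i^\ast}$.

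The only delicate point is the Claim: its proof is a routine but somewhat tedious traversal of the rule table, and the single load‑bearing observation is that traces are non‑empty --- exactly what rules out the principal‑door‑downwards case, which is the one rule whose target already sits on a principal edge. Everything after the Claim is bookkeeping on the linear shape of deterministic paths.
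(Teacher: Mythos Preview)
Your proof is correct, and since the paper gives no argument here (it defers to~\cite{perrinelMegathese}) there is nothing to compare against. The Claim is exactly the right structural fact, and your case analysis is sound; the parenthetical remark that crossing a contraction upwards can yield a one-element $\oc$-trace is inaccurate (the target trace there is $T^\perp.\oc_t$ with $|T|\ge 1$), but this is harmless since you dispose of that case via the edge, not the trace.

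One small gap in the uniqueness argument: from determinism of $\rightarrow$ you conclude that $D = C_j$ for some $j \le n$, but determinism only places $D$ on the forward orbit of $C_0$ --- if that orbit happened to cycle through $C$, one could a priori have $j > n$, and then the step $C_{i^\ast-1}\to C_{i^\ast}$ need not lie in the segment $D\,(\noJump\cap\rightarrow)^*\,C$ you are analysing. The clean fix avoids forward determinism altogether: use injectivity of $\noJump$ (Lemma~\ref{lemma_nojump_injective}) to walk the two $(\noJump\cap\rightarrow)$-suffixes backwards from $C$; by injectivity they agree step by step, and whichever reaches its special context first forces the other to stop there as well, since one more backward $\noJump$ step would land on a special context, contradicting your Claim. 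This yields absolute uniqueness with no acyclicity hypothesis.
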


Let $G$ be a $\stratSNLL$-stratified proof-net and $n \in \mathbb{N}$, we set $S_n=\Set*{B \in \boxset{G}}{\stratu{\stratSNLL}{B} \leq n}$. Let us notice that, if $\stratu{\stratSNLL}{B} \leq n$, the set of boxes $C$ such that $B \stratSNLL C$ is included in $S_{n-1}$. So, we will be able (thanks to Lemma~\ref{injection_lemma_simple}) to bound the number of copies of boxes of $S_n$ depending on the maximum number of copies of boxes of $S_{n-1}$. This corresponds to the round-by-round cut-elimination procedure used to prove the bounds on $ELL$, $LLL$, $L^3$ and $L^4$.

To make notations readable, we write $\csRel_n$ for $\csRel_{S_n}$, $\noJump_n$ for $\noJump_{S_n}$, $\restrSig{n}{((e,P),T)}$ for $\restrSig{\csRel_{S_n}}{((e,P),T)}$, $\copRel{n}{B,P}$ for $\copRel{\csRel_{S_n}}{B,P}$ and so on.

    \begin{lemma}\label{lemma_remonter_simple}Let $n \in \mathbb{N}$. If $((\sigma(B),P),[\oc_t]) \csRel_n C_k \cdots \csRel_n C_0$ and $\restrCont{n-1}{C_0}=\restrCont{n-1}{C'_0}$ then there exists $(C'_i)_{0 \leq i \leq k}$ such that $C'_k \csRel_n \cdots \csRel_n C'_0$ and, for $0 \leq i \leq k$, $\restrCont{n-1}{C_i}=\restrCont{n-1}{C'_i}$.
    \end{lemma}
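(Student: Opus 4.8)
The plan is to prove Lemma~\ref{lemma_remonter_simple} by induction on $k$, the length of the path, using Theorem~\ref{injection_lemma_simple} as the engine for the inductive step together with Lemma~\ref{lemma_last_jump} to handle the presence of $\onlyJump$ steps inside $\csRel_n$.

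First I would set up the base case: if $k=0$, the path is the single context $C_0$ and we simply take $C'_0$ as given, since $\restrCont{n-1}{C_0}=\restrCont{n-1}{C'_0}$ holds by hypothesis. For the inductive step, suppose the statement holds for all paths of length $<k$ and consider $((\sigma(B),P),[\oc_t]) \csRel_n C_k \csRel_n \cdots \csRel_n C_1 \csRel_n C_0$. The key distinction is the nature of the last step $C_1 \csRel_n C_0$. If it is a $\noJump_n$ step, I would apply Theorem~\ref{injection_lemma_simple} directly with $S=S_{n-1}$: from $C_1 \noJump_{n-1} C_0$ (note that $\csRel_n$ restricted to $\noJump$ steps entering a principal door of a box $B'$ requires $B' \in S_n$, but the $\noJump_{S_{n-1}}$ annotation concerns the $\wn_t$ trace elements — I need to check carefully that the side conditions line up, see below) and $\restrCont{n-1}{C_0}=\restrCont{n-1}{C'_0}$, the theorem produces $C'_1$ with $C'_1 \noJump_{n-1} C'_0$ and $\restrCont{n-1}{C_1}=\restrCont{n-1}{C'_1}$; then I apply the induction hypothesis to the shorter path ending at $C_1$ to get $(C'_i)_{2\le i\le k}$ matching $(C_i)$ under $\restrCont{n-1}{\_}$, and concatenate.

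The genuinely new case is when $C_1 \csRel_n C_0$ is an $\onlyJump$ step, i.e. of the form $((\sigma(B'),Q),[\oc_s]) \onlyJump ((\sigma(B'),Q),[\oc_s])$ where $B' \in S_n$. Here Theorem~\ref{injection_lemma_simple} does not apply since it is stated for $\noJump_S$. My plan is to use Lemma~\ref{lemma_last_jump} with $\rightarrow = \csRel_n$: it tells us that the path $((\sigma(B),P),[\oc_t]) \csRel_n^* C_1$ factors through a last context of the shape $((\sigma(B''),R),[\oc_r])$ after which only $\noJump \cap \csRel_n$ steps occur. Actually the cleanest approach is: since $C_0 = ((\sigma(B'),Q),[\oc_s])$ is itself of the form $((\sigma(B'),\_),[\oc_\_])$, and $\restrCont{n-1}{\_}$ preserves this shape, $C'_0$ must be $((\sigma(B'),Q'),[\oc_{s'}])$ with $\restrPot{n-1}{\sigma(B'),Q}=\restrPot{n-1}{\sigma(B'),Q'}$ and the corresponding equality of restricted signatures. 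The $\onlyJump$ step from $C'_0$ exists iff the source is a valid context for that rule, which it is (the $\onlyJump$ rule on $[\oc_t]$-headed contexts is total on contexts whose edge is a reverse auxiliary premise and whose potential has the right length — and these conditions are preserved since $C_0$ and $C'_0$ share the same edge and potential length). So set $C'_1 = C'_0$ (the $\onlyJump$ step is the identity on the displayed data except it records passage through a box), verify $\restrCont{n-1}{C_1}=\restrCont{n-1}{C'_1}$ which follows from $\restrCont{n-1}{C_0}=\restrCont{n-1}{C'_0}$ since $C_0=C_1$ as contexts (only the path-history differs), then apply the induction hypothesis to the length-$(k-1)$ prefix.

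The main obstacle I anticipate is bookkeeping the side conditions of $\noJump_S$ versus $\csRel_S$ precisely: the relation $\csRel_n$ forbids $\onlyJump$ steps over boxes outside $S_n$ and forbids crossing principal doors of boxes outside $S_n$ downward (those are the $\oc_t$-producing steps), while $\noJump_{n-1}$ (which is what Theorem~\ref{injection_lemma_simple} wants) has the condition phrased on $\wn_t$-producing steps (crossing a principal door upward). I would need to check that every $\noJump$ step occurring in a $\csRel_n$-path that starts at $((\sigma(B),P),[\oc_t])$ and reaches contexts of the displayed form is in fact a $\noJump_{n-1}$ step — this should follow because $B \stratSNLL C$ for every box $C$ whose principal door is entered upward by such a path, and $\stratu{\stratSNLL}{B}\le n$ forces $\stratu{\stratSNLL}{C}\le n-1$, i.e. $C\in S_{n-1}$. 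A secondary subtlety is that $\csRel_n$ may also contain non-$\noJump$, non-$\onlyJump$ steps only through the $\hookrightarrow$ rule at the bottom of Figure~\ref{exponential_context_semantic} — but that rule also preserves the $[\oc_t]$-headed shape and is handled like the $\onlyJump$ case. Modulo these verifications, the induction goes through cleanly.
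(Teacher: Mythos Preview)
Your overall architecture matches the paper's: induction on the path length, invoke Theorem~\ref{injection_lemma_simple} for $\noJump$ steps, and treat $\onlyJump$ steps separately. The verification that each $\noJump$ step lies in $\noJump_{n-1}$ via the $\stratSNLL$ argument is also the right idea. However, there is a genuine error in your handling of the $\onlyJump$ case, and a smaller inaccuracy in the $\noJump$ verification.

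\textbf{The $\onlyJump$ step is not the identity.} You write the jump as $((\sigma(B'),Q),[\oc_s]) \onlyJump ((\sigma(B'),Q),[\oc_s])$ and then assert ``$C_0=C_1$ as contexts''. This is wrong: the rule is $((\overline{\sigma_j(B')},Q),[\oc_s]) \onlyJump ((\sigma(B'),Q),[\oc_s])$, so the source sits on a reversed \emph{auxiliary} edge and the target on the \emph{principal} edge. They differ in their edge component. Consequently, setting $C'_1=C'_0$ makes no sense: $C'_0$ lives on $\sigma(B')$, and you need $C'_1$ on some $\overline{\sigma_{j'}(B')}$. The paper's move is to take the \emph{same} auxiliary index $j$ as in the original path and set $C'_1=((\overline{\sigma_j(B')},Q'),[\oc_{s'}])$ where $C'_0=((\sigma(B'),Q'),[\oc_{s'}])$; then Lemmas~\ref{lemma_restrpot_eq} and~\ref{lemma_restr_eq} give $\restrCont{n-1}{C_1}=\restrCont{n-1}{C'_1}$. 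Without remembering the index $j$ from the original path you cannot reconstruct $C'_1$ --- this is exactly the non-injectivity of $\onlyJump$ that motivates the later dependence-control machinery. (Relatedly, $\hookrightarrow$ \emph{is} $\onlyJump$; there is no separate case.)

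\textbf{Use the last-jump box, not $B$.} In your verification that a $\noJump$ step is actually $\noJump_{n-1}$, you argue ``$B \stratSNLL C$ for every box $C$ whose principal door is entered''. But $\stratSNLL$ is defined through $\noJump^*$-paths, so if the prefix of the path contains $\onlyJump$ steps you cannot conclude $B \stratSNLL C$. The paper applies Lemma~\ref{lemma_last_jump} to factor the prefix as $((\sigma(B),P),[\oc_t]) \csRel_n^* ((\sigma(D),Q),[\oc_u]) (\noJump\cap\csRel_n)^+ C_{i-1}$; then $D\in S_n$ (by the $\csRel_n$ side condition) and $D \stratSNLL D_i$ for the box $D_i$ whose principal door is entered, giving $D_i\in S_{n-1}$. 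You mention Lemma~\ref{lemma_last_jump} but place it in the wrong case; it belongs here.
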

    \begin{proof}
      We prove (by induction on $i$) the existence of a context $C'_i$ such that $C'_i \csRel_n C'_{i-1}$ and $\restrCont{n-1}{C_i}=\restrCont{n-1}{C'_i}$. If $i=0$, $C'_0$ satisfies the property by assumption. Otherwise, by induction hypothesis we know that there exists a context $C'_{i-1}$ such that $\restrCont{n-1}{(C_{i-1})}=\restrCont{n-1}{(C'_{i-1})}$.

      If the $C_i \csRel_n C_{i-1}$ step is a $\onlyJump$ step, it is of the shape $C_i=((\overline{\sigma_{j}(D)},Q),[\oc_u]) \onlyJump ((\sigma(D),Q),[\oc_u])=C_{i-1}$. So $C'_{i-1}$ is of the shape $((\sigma(D),Q'),[\oc_{u'}])$ with $\restrPot{n-1}{\sigma(D),Q}=\restrPot{n-1}{\sigma(D),Q'}=(\sigma(D),Q'')$ and $\restrSig{n-1}{((\sigma(D),Q''),[\oc_u])}=\restrSig{n-1}{((\sigma(D),Q''),[\oc_{u'}])}=u''$. Let us set $C'_i=((\overline{\sigma_j(D)},Q'),[\oc_{u'}])$. By Lemma~\ref{lemma_restrpot_eq}, $\restrPot{n-1}{\overline{\sigma_j(D)},Q}=\restrPot{n-1}{\overline{\sigma_j(D)},Q'}=(\overline{\sigma_j(D)},Q'')$. By Lemma~\ref{lemma_restr_eq}, $\restrSig{\csRel_S}{((\overline{\sigma_j(D)},Q''),[\oc_u])}=\restrSig{\csRel_S}{((\overline{\sigma_j(D)},Q''),[\oc_{u'}])}=u''$. So $\restrCont{\csRel_S}{C_i}=\restrCont{\csRel_S}{C'_i}=((\overline{\sigma_j(D)},Q''),[\oc_{u''}])$.

      Otherwise, $C_i \noJump C_{i-1}$ step so there exists a context of the shape $((\sigma(D),Q),[\oc_u])$ such that $((\sigma(B),P),[\oc_t]) \csRel_n ((\sigma(D),Q),[\oc_u]) (\noJump \cap \csRel_n)^+ C_{i-1}$ (Lemma~\ref{lemma_last_jump}). And, by definition of $\csRel_n$, $D \in S_n$. We prove that the last step of the path is in $\noJump_{n-1}$. We suppose $C_{i-1}$ is of the shape $((\overline{\sigma(D_i)},Q_i),[\oc_v])$ (otherwise, it is immediate by definition of $\noJump_{n-1}$). We can notice that $D \stratSNLL D_i$ so $\stratu{\stratSNLL}{D_i} < \stratu{\stratSNLL}{D} \leq n$, which means that $D_i \in S_{n-1}$. Thus, we have $C_{i} \noJump_{n-1} C_{i-1}$. By Theorem~\ref{injection_lemma_simple}, there exists a context $C'_i$ such that $C'_i \noJump_{n-1} C'_{i-1}$ and $\restrCont{n-1}{C_i}=\restrCont{n-1}{C'_i}$. 
    \end{proof}

\begin{lemma}[\cite{perrinelMegathese}]{\label{lemma_oclefttrace_remonter}}
  Let $S \subseteq \boxset{G}$. If $((e,P),[\oc_t]@T) \csRel_S^* ((f,Q),[\oc_u]@U)$, for every $u' \in \sig$, there exists $t' \in \sig$ such that $((e,P),[\oc_{t'}]@T) \csRel_S^* ((f,Q),[\oc_{u'}]@U)$.
\end{lemma}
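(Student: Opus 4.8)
The plan is to prove, by induction on the length $n$ of the path, the slightly more explicit statement: if $((e,P),[\oc_t]@T)\csRel_S^{n}((f,Q),[\oc_u]@U)$ then for every $u'\in\sig$ there is $t'\in\sig$ with $((e,P),[\oc_{t'}]@T)\csRel_S^{n}((f,Q),[\oc_{u'}]@U)$, and moreover the two paths visit the same sequence of potential edges. The last clause is needed only in order to transfer the side condition defining $\csRel_S$: that condition constrains a step through its source context solely via the underlying potential edge and the length (and leading symbol) of the trace, none of which the substitution alters.

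First I would record the structural fact, read off from Figure~\ref{exponential_context_semantic} together with the dual rules, that along such a path the left-most trace element is always of the form $\oc_{(\cdot)}$, and that a step $C\csRel_S D$ changes the content of that left-most element \emph{only} when the trace of $C$ is a singleton $[\oc_t]$ and the step goes upwards through a $\contLab$ or a $\digLab$ node; in that case the step inspects only the outermost constructor of $t$ and transforms the left-most element by $\oc_{\sigl(s)}\mapsto\oc_{s}$, $\oc_{\sigr(s)}\mapsto\oc_{s}$, $\oc_{\sigp(s)}\mapsto\oc_{s}$, or $[\oc_{\sign(s_1,s_2)}]\mapsto[\oc_{s_1};\oc_{s_2}]$ (the only case where the trace lengthens). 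Every other step leaves the left-most element untouched, and in all cases the edge of $C$'s successor, its potential, and the fact that its left-most element is again an $\oc$-element do not depend on $t$.

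The induction is then routine. For $n=0$ one has $e=f$, $P=Q$, $T=U$, $t=u$, so $t':=u'$ works. For $n\geq 1$ write the path as $C_0\csRel_S C_1\csRel_S^{n-1}C_n$; in each case below the induction hypothesis is applied to $C_1\csRel_S^{n-1}C_n$ with the same target $u'$. If the first step leaves the left-most element untouched, then $C_1=((e_1,P_1),[\oc_t]@T_1)$, the induction hypothesis gives $t'$, and the same first rule (which does not read $t$) sends $((e,P),[\oc_{t'}]@T)$ to $((e_1,P_1),[\oc_{t'}]@T_1)$. Otherwise the first step is one of the four transformations above: if $t=\sigl(s)$ then $C_1=((e_1,P),[\oc_{s}])$, the induction hypothesis gives $s'$, we set $t':=\sigl(s')$, and since $\sigl(s')$ has the same head as $t$ the same upward $\contLab$ step sends $((e,P),[\oc_{t'}])$ to $((e_1,P),[\oc_{s'}])$; the cases $t=\sigr(s)$ and $t=\sigp(s)$ are identical, with $t':=\sigr(s')$ and $t':=\sigp(s')$ respectively; and if $t=\sign(s_1,s_2)$ then $C_1=((e_1,P),[\oc_{s_1}]@[\oc_{s_2}])$, the induction hypothesis applied with tail $[\oc_{s_2}]$ gives $s_1'$, and we set $t':=\sign(s_1',s_2)$. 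In every case the new path runs through exactly the same edges and potential edges as the original one, hence consists of $\csRel_S$-steps, and this closes the induction; the lemma is the union over all $n$.

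The step I expect to be the main obstacle is the structural fact of the second paragraph: it requires going through every rule of Figure~\ref{exponential_context_semantic} and its dual and checking that none of them examines the left-most trace element beyond its head symbol. The two delicate points are the $\digLab$ node, whose upward rule can split a single element $\oc_{\sign(s_1,s_2)}$ into the pair $\oc_{s_1},\oc_{s_2}$ — so the tail of the trace of the successor context grows and the rewrapping in the induction must reconstruct the second component $s_2$ unchanged — and the boundary between an auxiliary-door step and the $\onlyJump$ rule on a singleton trace, which is precisely what guarantees that the left-most $\oc$-element is never popped, so that the "left-most-signature thread" followed by the induction is well defined throughout.
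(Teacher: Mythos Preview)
Your proposal is correct and follows essentially the same approach as the paper: reduce the statement to a single step and observe that every $\csRel_S$-rule may read the head constructor of the leftmost signature $t$ but never the target signature $u$, so one can rewrap $u'$ into a suitable $t'$; you simply spell out the induction and the $\wn N$ splitting case in more detail than the paper's one-line sketch.
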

\begin{proof}
  It is enough to prove it for one step. We can examine every possible step, each case is straightforward: the steps sometimes depend on $t$, never on $u$. For instance, let us suppose that $((e,P),[\oc_{\sigl(u)}]) \csRel_S ((f,Q),[\oc_u])$ (crossing a $\wn C$ upwards). Then, for every $u' \in \sig$, we have $((e,P),[\oc_{\sigl(u')}]) \csRel_S ((f,Q),[\oc_{u'}])$ so we can set $t'=\sigl(u')$.
\end{proof}

    \begin{lemma}[strong acyclicity]\label{lemma_strong_acyclicity_simple}
      Let $G$ be a normalizing proof-net. For every $n \in \mathbb{N}$, if $((\sigma(B),P),[\oc_t]) \csRel_n^* ((e,Q),[\oc_u]) \csRel_n^+ ((e,Q'),[\oc_v])$ then $(e,Q)^{n-1} \neq (e,Q')^{n-1}$. 
    \end{lemma}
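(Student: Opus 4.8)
The statement is a "strong acyclicity" property: along a $\csRel_n$-path, no potential edge $(e,Q)$ with $\oc$-leftmost trace can recur up to the $(n-1)$-truncation $\restrPot{n-1}{\_,\_}$. The natural strategy is to argue by contradiction, assuming $(e,Q)^{n-1}=(e,Q')^{n-1}$, and then use the machinery of Section~\ref{section_restricted_copies} — in particular Lemma~\ref{lemma_remonter_simple} and Lemma~\ref{lemma_oclefttrace_remonter} — to transport the cycle down to a genuine cycle on a canonical edge, contradicting Lemma~\ref{lemma_acyclicity}. So the first step is: from the hypothesis $((\sigma(B),P),[\oc_t]) \csRel_n^* ((e,Q),[\oc_u]) \csRel_n^+ ((e,Q'),[\oc_v])$ and the assumption $(e,Q)^{n-1}=(e,Q')^{n-1}$, observe that $\restrCont{n-1}{((e,Q),[\oc_u])}$ and $\restrCont{n-1}{((e,Q'),[\oc_v])}$ agree on their potential part and, by Lemma~\ref{lemma_oclefttrace_remonter}, we may adjust the rightmost $\oc$-signature so that the two restricted contexts coincide entirely: choose a single signature witnessing the common value of $\restrSig{n-1}{((e,(e,Q)^{n-1}),[\oc_u])}$ and likewise for $v$, and conclude $\restrCont{n-1}{C_0}=\restrCont{n-1}{C_0'}$ in the notation of Lemma~\ref{lemma_remonter_simple}.

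**The core argument.** Apply Lemma~\ref{lemma_remonter_simple} to the path $((\sigma(B),P),[\oc_t]) \csRel_n C_k \cdots \csRel_n C_0$ (with $C_0=((e,Q'),[\oc_v])$ and the earlier occurrence of $(e,\_)$ playing the role of $C'_0$ after the adjustment above): this yields a parallel $\csRel_n$-path $C'_k \csRel_n \cdots \csRel_n C'_0$ with $\restrCont{n-1}{C_i}=\restrCont{n-1}{C'_i}$ for each $i$. In particular the two paths take the same sequence of edges. The earlier occurrence $((e,Q),[\oc_u])$ sits at some index $j$ along the original path from $(\sigma(B),P)$, so splicing the tail of the original path after index $j$ with the transported path gives a $\csRel_n$-path of strictly positive length from $((e,Q),[\oc_\ast])$ back to a context of the shape $((e,Q''),[\oc_\ast])$ where $Q''$ has the same $(n-1)$-truncation as $Q$. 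Since $\csRel_n \subseteq \csRel$, this is a $\csRel^+$-path. To close the loop one wants $Q''=Q$ exactly (not just up to truncation) and the edge's potential to be a genuine canonical potential; here one invokes that the relevant potentials are $\csRel_n$-canonical (they are reached from $((\sigma(B),P),[\oc_t])$, and $B,P$ can be assumed canonical as in all contexts studied), uses Lemma~\ref{lemma_oclefttrace_remonter} once more to align the rightmost signature, and observes that a $\csRel_n$-canonical potential which equals $Q$ up to $\restrPot{n-1}{\_,\_}$ and arises on the same edge at both endpoints of a closed path must in fact be $Q$ itself by the determinism of $\noJump$ (Lemma~\ref{lemma_nojump_injective}) together with Lemma~\ref{lemma_last_jump} pinning down the last $\onlyJump$-free segment.

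**The obstacle.** The delicate point is the last one: upgrading "same $(n-1)$-truncation" to "literally the same potential edge, hence a forbidden cycle on a canonical edge." The truncation $\restrPot{n-1}{\_,\_}$ forgets exactly the part of the potential that records choices made by $\onlyJump$ steps over boxes of strictly higher stratum, and a priori those forgotten parts could differ between the two occurrences of $(e,\_)$. The resolution is that along a $\csRel_n$-path every $\onlyJump$ step is over a box of $S_n$, and by the stratification $B\stratSNLL C$ forces $\stratu{\stratSNLL}{C}<\stratu{\stratSNLL}{B}$, so the only boxes whose copies genuinely vary along the path have stratum $<n$ — these are precisely the ones visible in $\restrPot{n-1}{\_,\_}$ by Lemma~\ref{lemma_restrpot_eq} and the definition of $\restrPot{n-1}{\_,\_}$. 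Hence agreement up to the $(n-1)$-truncation already forces full agreement of the canonical potentials reachable along $\csRel_n$-paths, and the contradiction with Lemma~\ref{lemma_acyclicity} follows. I expect the bulk of the write-up effort to be in making this "forgotten part is constant along $\csRel_n$-paths" claim precise, most likely by an induction along the path mirroring the one in the proof of Lemma~\ref{lemma_remonter_simple}, distinguishing the $\onlyJump$ case (where $D\in S_n$ but the argument of Lemma~\ref{lemma_remonter_simple} already shows the restricted contexts match) from the $\noJump$ case (where Theorem~\ref{injection_lemma_simple} applies with $S=S_{n-1}$).
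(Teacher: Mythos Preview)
Your overall setup --- contradiction, then trace back with Lemma~\ref{lemma_remonter_simple} and adjust signatures with Lemma~\ref{lemma_oclefttrace_remonter} --- is the right toolkit, and it is exactly what the paper uses. But the way you close the argument is not correct, and the paper's proof takes a different route precisely to avoid the obstacle you flag.

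The gap is in your last paragraph. You try to upgrade ``$(e,Q)^{n-1}=(e,Q')^{n-1}$'' to ``$Q=Q'$'' by arguing that along a $\csRel_n$-path the part of the potential invisible to $\restrPot{n-1}{\_,\_}$ is constant. This is not justified. The potential $Q$ records copies of the boxes \emph{containing} $e$; those signatures change whenever the path crosses a door of one of these boxes, and nothing forces those ambient boxes to have stratum $<n$. The stratification argument you invoke ($B\stratSNLL C\Rightarrow \stratu{\stratSNLL}{C}<\stratu{\stratSNLL}{B}$) concerns boxes whose \emph{principal door} the path enters, which is a different set of boxes from those contributing to $Q$. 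So there is no reason the ``forgotten part'' of $Q$ should coincide with that of $Q'$, and your closing step does not go through.

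The paper bypasses this entirely with a pumping argument. Write the hypothesis as $((\sigma(B),P),[\oc_t])\csRel_n^{l}((e,Q),[\oc_u])\csRel_n^{m}((e,Q'),[\oc_{u'}])$ with $m\geq 1$. Since $(e,Q)^{n-1}=(e,Q')^{n-1}$, the context $((e,Q),[\oc_{u'}])$ has the same $\restrCont{n-1}{\_}$ as $((e,Q'),[\oc_{u'}])$. Apply Lemma~\ref{lemma_remonter_simple} to the length-$(l+m)$ path ending at $((e,Q'),[\oc_{u'}])$, with $C'_0=((e,Q),[\oc_{u'}])$: this produces some $((\sigma(B),P_1),[\oc_{t'_1}])\csRel_n^{l+m}((e,Q),[\oc_{u'}])$. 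Now use Lemma~\ref{lemma_oclefttrace_remonter} to replace $u'$ by $u$ at the target, getting $t_1$ with $((\sigma(B),P_1),[\oc_{t_1}])\csRel_n^{l+m}((e,Q),[\oc_u])$, and hence $((\sigma(B),P_1),[\oc_{t_1}])\csRel_n^{l+2m}((e,Q'),[\oc_{u'}])$. Iterating yields, for every $k$, a context $((\sigma(B),P_k),[\oc_{t_k}])$ with a $\csRel_n$-path of length $l+km$ to $((e,Q),[\oc_u])$. This is an infinite $\csRel$-path passing through infinitely many contexts on $\sigma(B)$; since $G$ normalizes, $\can{\sigma(B)}$ is finite (Theorem~\ref{theo_dallago_edges}), so by pigeonhole some canonical $(\sigma(B),P')$ occurs twice, giving $((\sigma(B),P'),[\oc_{v}])\csRel^+((\sigma(B),P'),[\oc_{v'}])$, which contradicts Lemma~\ref{lemma_acyclicity}. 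The point is that one never needs $Q=Q'$: the cycle is found at $\sigma(B)$, not at $e$, and it is obtained by iteration plus finiteness rather than by a direct equality argument.
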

    \begin{proof}
      We prove it by contradiction. We suppose that $((\sigma(B),P),[\oc_t]) \csRel_n^l ((e,Q),[\oc_u])$ and $((\sigma(B),P),[\oc_t]) \csRel^{l+m}_{S_n} ((e,Q'),[\oc_{u'}])=D'$, and $(e,Q)^{n-1} = (e,Q')^{n-1}$. Then, $\restrCont{n-1}{((e,Q),[\oc_{u'}])} = \restrCont{n-1}{D'}$. By Lemma~\ref{lemma_remonter_simple}, there exists a context $C'_1$ such that $C'_1 \csRel^{l+m} ((e,Q),[\oc_{u'}])$ and $\restrCont{n-1}{C'_1}=\restrCont{n-1}{((\sigma(B),P),[\oc_t])}$. So $C'_1$ is of the shape $((\sigma(B),P_1),[\oc_{t'_1}])$. By Lemma~\ref{lemma_oclefttrace_remonter}, there exists a signature $t_1$ such that $((\sigma(B),P_1),[\oc_{t_1}]) \csRel^{l+m} ((e,Q),[\oc_u])$ so $((\sigma(B),P_1),[\oc_{t_1}]) \csRel^{l+2m} ((e,Q'),[\oc_{u'}])$. 

      We define $C_1$ as the context $((\sigma(B),P_1),[\oc_{t_1}])$. For $k \in \mathbb{N}$, we can define by induction on $k$ a context $C_k=((\sigma(B),P_k),[\oc_{t_k}])$ such that $C_k \csRel_{n}^{l+k\cdot m} D$ and $C_k \csRel_n^{l+(k+1)\cdot m} D'$.

      Thus, if $m > 0$, we define an infinite path. In particular, this path goes through infinitely many contexts of shape $((\sigma(B),P'),[\oc_{t'}])$. According to Corollary \ref{theo_dallago_edges}, the number of canonical potentials for an edge is finite. So there is some $(\sigma(B),P') \in \can{\dirEdges{G}}$ and $v,v' \in Sig$ such that $((\sigma(B),P'),[\oc_v]) \csRel^+ ((\sigma(B),P'),[\oc_{v'}])$. This is impossible because normalizing proof-nets are acyclic (Lemma \ref{lemma_acyclicity}). This is a contradiction, so our hypothesis is wrong, $m=0$. There is no path of the shape $((\sigma(B),P),[\oc_t]) \csRel_n^* ((e,Q),[\oc_u]) \csRel^+_{n} ((e,Q'),[\oc_v])$ with $(e,Q)^{n-1}=(e,Q')^{n-1}$.
    \end{proof}

    \begin{figure}  \centering
      \tikzsetnextfilename{app_gh}
      \begin{tikzpicture}
        \node [proofnet,minimum width=1cm] (G) at (0,0) {G};
        \node [tensor] (tens) at ($(G)+(2,0)$) {};
        \node [cut]    (cut)  at ($(G)!0.5!(tens)+(0,-0.5)$) {};
        \draw [ar,out=-90,in=180] (G.-90) to (cut);
        \draw [ar,out=-90,in=  0] (tens)  to (cut);
        \node (etc) at ($(tens)+(1.1,-0.2)$) {};
        \node [ax] (ax) at ($(tens)!0.6!(etc)+(0,0.6)$) {};
        \draw [ar] (ax) to [out=-170,in=60] (tens);
        \draw [ar] (ax) to [out=-10, in=90] (etc);
        \node [proofnet,minimum width=1cm] (H) at ($(tens)+(-0.5,0.7)$) {$H$};
        \draw [ar] (H)--(tens);
      \end{tikzpicture}
      \caption{\label{fig_app_gh}This proof-net, written $(G)H$, corresponds to the application of a function $G$ to an argument $H$.}
    \end{figure}
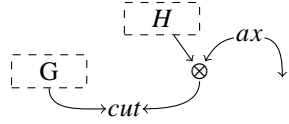

    \begin{lemma}[\cite{perrinelMegathese}]\label{lemma_numb_sig_depth}
      The number of signatures whose depth is $\leq d$ is at most $2^{2^{2\cdot d}}$
    \end{lemma}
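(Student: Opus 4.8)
The plan is a direct counting argument by induction on $d$. First I would note that the grammar for $\sig$ is that of finite trees built from one nullary constructor $\sige$, three unary constructors $\sigl(\cdot)$, $\sigr(\cdot)$, $\sigp(\cdot)$, and one binary constructor $\sign(\cdot,\cdot)$, and that the depth of a signature is the height of this tree, each constructor adding one to the depth and $\sign$ taking the maximum of the depths of its two subtrees. Writing $N_d$ for the number of signatures of depth at most $d$, a signature of depth at most $d+1$ is either $\sige$, or of the form $\sigl(t)$, $\sigr(t)$ or $\sigp(t)$ with $t$ of depth at most $d$, or of the form $\sign(t,u)$ with both $t$ and $u$ of depth at most $d$; these cases are mutually exclusive and exhaustive, so $N_{d+1} \leq 1 + 3N_d + N_d^2$, while $N_0 \leq 2$ (in fact $N_0 \leq 1$).

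Then I would prove $N_d \leq 2^{2^{2d}}$ by induction on $d$. The base case $N_0 \leq 2 = 2^{2^0}$ is immediate. For the inductive step, set $M = 2^{2^{2d}}$; by the induction hypothesis $N_d \leq M$, and $M \geq 2$, so
\begin{equation*}
  N_{d+1} \leq 1 + 3N_d + N_d^2 \leq 1 + 3M + M^2 \leq M^4 = 2^{4 \cdot 2^{2d}} = 2^{2^{2d+2}} = 2^{2^{2(d+1)}},
\end{equation*}
where the inequality $1 + 3M + M^2 \leq M^4$ holds for every $M \geq 2$, since $M^4 \geq 4M^2 = M^2 + 3M^2 \geq M^2 + 3M + 1$, the last step using $3M^2 \geq 3M+1$ for $M \geq 2$. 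This closes the induction and yields the claimed bound.

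I do not expect any real obstacle here: the argument is a routine estimate. The only points requiring a little care are fixing the convention for the depth of $\sige$ (which does not affect the base case, as $N_0 \leq 2$ either way, nor the recurrence) and checking the elementary inequality $1 + 3M + M^2 \leq M^4$ for $M \geq 2$. One could obtain a substantially tighter bound (of order $2^{2^{d+1}}$), but the stated $2^{2^{2d}}$ is all that is needed in the sequel and keeps the induction clean.
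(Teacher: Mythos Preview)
Your proof is correct. The paper itself does not give a proof of this lemma, merely citing \cite{perrinelMegathese}; your counting argument by induction on $d$, using the recurrence $N_{d+1} \leq 1 + 3N_d + N_d^2$ and the elementary inequality $1 + 3M + M^2 \leq M^4$ for $M \geq 2$, is exactly the kind of routine estimate one expects for such a statement.
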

    
    \begin{lemma}{\label{lemma_bound_cop_acyc}} If $\left |\Set*{ \restrPot{n-1}{e,Q} }{ \exists t,u \in \sig, ((\sigma(B),P),[\oc_t]) \csRel_n^* ((e,Q),[\oc_u]) } \right |\leq M$, then $\left|\copRel{n}{B,P}\right|$ is bounded by $2^{2^{2\cdot M}}$.
    \end{lemma}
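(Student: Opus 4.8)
The plan is to show that every element of $\copRel{n}{B,P}$ has depth at most $M$, and then to conclude by Lemma~\ref{lemma_numb_sig_depth}, which bounds the number of signatures of depth $\leq M$ by $2^{2^{2M}}$; since $\copRel{n}{B,P}$ is a set of signatures, $|\copRel{n}{B,P}|\leq 2^{2^{2M}}$ follows.

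So fix $t\in\copRel{n}{B,P}$. Since $t\compl t$, the definition of $\csRel_n$-copy context yields a witness path $\rho\colon ((\sigma(B),P),[\oc_t])\csRel_n^*((e_0,Q_0),[\oc_{\sige}])$. Inspecting the rules of Figure~\ref{exponential_context_semantic}, one sees that the leftmost trace element is never removed along $\rho$ and never changes polarity, so it stays of the form $\oc_s$, and that it can be modified only at a step whose trace has length exactly $1$, since every other rule touches only the last one or two trace elements. Let $D_0=((\sigma(B),P),[\oc_t]),D_1,\dots,D_p=((e_0,Q_0),[\oc_{\sige}])$ be the subsequence of contexts of $\rho$ whose trace has length $1$, listed in order. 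I would first check that the potential edges underlying $D_0,\dots,D_p$ have pairwise distinct images under $\restrPot{n-1}{\cdot}$: for $i<j$, writing $D_i=((e,Q),[\oc_u])$ and $D_j=((f,R),[\oc_w])$, either $e\neq f$, so $\restrPot{n-1}{e,Q}$ and $\restrPot{n-1}{f,R}$ already differ in their first component, or $e=f$, and then $D_i\csRel_n^+ D_j$ forces $\restrPot{n-1}{e,Q}\neq\restrPot{n-1}{f,R}$ by the strong acyclicity Lemma~\ref{lemma_strong_acyclicity_simple}. Each of these images lies in the set of the hypothesis (witnessed by this very $t$ together with the relevant $u$), hence $p+1\leq M$.

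It then remains to bound the depth of $t$ by $p$. In the absence of $\digLab$ nodes this is direct: $t$ is a word $c_1(c_2(\cdots c_d(\sige)))$ with $c_i\in\{\sigl,\sigr\}$ and $d$ the depth of $t$; the only rule stripping the outermost constructor of the leftmost signature is the dual of the $\contLab$ rule, which requires a length-$1$ trace, and no rule ever re-introduces an outer $\sigl$ or $\sigr$; hence the $d+1$ distinct signatures $t,c_2(\cdots c_d(\sige)),\dots,c_d(\sige),\sige$ each occur as the trace of some $D_i$, so $p\geq d$. Combining, the depth of $t$ is at most $M-1$, and Lemma~\ref{lemma_numb_sig_depth} gives $|\copRel{n}{B,P}|\leq 2^{2^{2M}}$.

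The hard part is the general case, where $t$ may contain $\sign$ constructors produced by $\digLab$: when the leftmost signature is $\sign(t_1,t_2)$, the dual $\digLab$ step replaces the length-$1$ trace $[\oc_{\sign(t_1,t_2)}]$ by the length-$2$ trace $[\oc_{t_1},\oc_{t_2}]$, so the consumption of $t_2$ need not pass through any length-$1$ context and its contribution to the depth of $t$ is invisible on $D_0,\dots,D_p$. I would deal with this by inducting on the structure of $t$ and cutting $\rho$ at its digging steps (using Lemma~\ref{lemma_last_jump} to isolate the relevant sub-paths): consuming a subtree $\sign(t_1,t_2)$ amounts to first handling $t_2$ inside a re-entered inner box — a sub-path that is itself a witness path relative to a deeper potential box, and thus carries its own block of length-$1$ $\oc$-contexts, whose $\restrPot{n-1}{\cdot}$-images are distinct from all the others by Lemma~\ref{lemma_strong_acyclicity_simple} applied to $\rho$ as a whole — and then handling $t_1$. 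This yields, by induction, that the depth of $t$ is still bounded by the total number $p$ of length-$1$ $\oc$-contexts of $\rho$, hence by $M-1$, and we conclude as above. The detailed verification of how each rule acts on the leftmost trace element, and the bookkeeping for the $\sign$ case, are of the same routine nature as in the proof of Theorem~\ref{injection_lemma_simple} and are given in~\cite{perrinelMegathese}.
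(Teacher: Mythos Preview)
Your overall plan---bound the depth of every $t\in\copRel{n}{B,P}$ by $M$ and then invoke Lemma~\ref{lemma_numb_sig_depth}---is the paper's plan, and your treatment of the non-digging case and the use of Lemma~\ref{lemma_strong_acyclicity_simple} to show that the length-$1$ contexts along a path have pairwise distinct $\restrPot{n-1}{\cdot}$-images are both fine.

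The gap is in your handling of $\sign$. You try to extract the information about $t_2$ from the single path $\rho$ for $t$, by ``cutting $\rho$ at digging steps'' and viewing the resulting sub-path as a witness path for a deeper box. This does not work. After the dual $\digLab$ step $[\oc_{\sign(t_1,t_2)}]\rightsquigarrow[\oc_{t_1};\oc_{t_2}]$, the very next step typically enters a box through an auxiliary door, which pushes $t_2$ into the \emph{potential}; from then on $t_2$ sits in the potential of $\rho$ and is never ``consumed'' in the trace of $\rho$ at all. So the alleged sub-path has no length-$1$ $\oc$-contexts inside $\rho$, and if instead you mean a genuinely separate path starting from $((\sigma(C),Q),[\oc_{t_2}])$, its length-$1$ contexts are not a priori in the set of the hypothesis, which is defined relative to $(\sigma(B),P)$; Lemma~\ref{lemma_strong_acyclicity_simple} ``applied to $\rho$ as a whole'' says nothing about them.

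The paper's fix is precisely to avoid working only with $\rho$. It uses the \emph{simplifications} of $t$: given the deepest branch of $t$, take the simplification $u\compl t$ whose leftmost branch is that branch (every branch of $t$ arises as the leftmost branch of some simplification). By the very definition of $\copRel{n}{B,P}$, there is a path $((\sigma(B),P),[\oc_u])\csRel_n^*((\_,\_),[\oc_{\sige}])$; along this path the height of the leftmost branch of the leftmost trace signature drops by exactly one at each upward crossing of a $\wn C$ or $\wn N$ node, and those steps are exactly the length-$1$ $\oc$-contexts. These contexts \emph{do} lie in the set of the hypothesis (they come from a path starting at $((\sigma(B),P),[\oc_u])$), so Lemma~\ref{lemma_strong_acyclicity_simple} bounds their number by $M$, hence the depth of $t$ is at most $M$. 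In short: the missing idea is to exploit the ``for every $u\compl t$'' clause in the definition of copy and follow the path of the appropriate simplification, rather than trying to read off the depth of every branch from the single path for $t$.
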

    \begin{proof}
      Let us consider $u \in \sig$ such that there exists $t \in \copRel{n}{B,P}$ such that $t \simpl u$. By definition of $\copRel{\_}{\_}$ (Definition~\ref{def_arrowcopy}, in page~\pageref{def_arrowcopy}), there exists a path of the shape $((\sigma(B),P),[\oc_u]) \csRel_n^* ((\_,\_),[\oc_{\sige}])$. We consider $u$ as a tree. During the path beginning by $((\sigma(B),P),[\oc_u])$, the height of the left-most branch of $u$ (viewed as a tree) decreases to $0$ (the height of $\sige$). The height of the left-most branch decreases only by crossing a $\wn C$ or $\wn N$ nodes upwards (which corresponds to contexts of the shape $((e,Q),[\oc_v])$) and during those steps it decreases by exactly $1$. So the height of the left-most branch of $u$ is inferior to the number of instances of contexts of the shape $((e,Q),[\oc_v])$ through which the path goes. From Lemma~\ref{lemma_strong_acyclicity_simple}, each $\restrPot{n-1}{e,Q}$ is represented at most once. So the height of the left-most branch of $u$ is inferior to  $M$.
      
      Let $t$ be a $\csRel_n$-copy of $(B,P)$, then the height of $t$ is the height of its deepest branch. Once we consider signatures as trees, a simplification $u$ of $t$ can be viewed as a subtree of $t$ obtained as follows: we choose a branch of $t$ and $u$ is the part of $t$ on the right of this branch, in particular this branch becomes the leftmost branch of $u$. So there exists a simplification $u$ of $t$ such that the leftmost branch of $u$ is the deepest branch of $t$. So the heigth of $t$ is equal to the heigth of the leftmost branch of $u$. By the preceding paragraph, the height of the leftmost branch of $u$ is at most $M$ so the height of $t$ is at most $M$. The result is obtained by Lemma~\ref{lemma_numb_sig_depth}.
    \end{proof}
    
    In order to express elementary bounds, we define the notation $2^x_n$ (with $n \in \mathbb{N}$ and $x \in \mathbb{R}$) by induction on $n$: $2^x_0=x$ and $2^x_{n+1}=2^{2^x}_n$. So $2^x_n$ is a tower of exponentials of height $n$ with top exponent $x$.

    \begin{theorem}\label{theoStratElementaryBound}
      If a proof-net $G$ normalizes and is $\stratSNLL$-stratified, then the length of its longest path of reduction is bounded by  $2^{|\dirEdges{G}|}_{3 \stratG{G}}$
    \end{theorem}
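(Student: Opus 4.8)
The plan is to follow the round-by-round cut-elimination scheme that underlies the bounds on $ELL$, $LLL$ and $L^3$, but carried out at the level of context semantics via the machinery of restricted copies. Recall $S_n = \Set*{B \in \boxset{G}}{\stratu{\stratSNLL}{B} \leq n}$, and that $\stratSNLL$-stratification makes $\stratu{\stratSNLL}{B}$ finite for every box, so $|{\stratSNLL}|$ is finite; write $d$ for that quantity. The strategy is to prove, by induction on $n$, a bound $M_n$ on the quantity $\left|\Set*{\restrPot{n-1}{e,Q}}{\exists t,u,\ ((\sigma(B),P),[\oc_t]) \csRel_n^* ((e,Q),[\oc_u])}\right|$ appearing in Lemma~\ref{lemma_bound_cop_acyc}, uniformly over canonical potential boxes $(B,P)$; this immediately yields a bound on $|\copRel{n}{B,P}|$, hence on $|\cop{B,P}|$ once $n \geq d$, hence on $W_G$, hence on the reduction length by Theorem~\ref{theo_dallago_edges}.

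The key inductive step runs as follows. Fix $n$ and $(B,P)$ with $\stratu{\stratSNLL}{B} \leq n$. A $\restrPot{n-1}{e,Q}$ reached by a $\csRel_n$-path from $((\sigma(B),P),[\oc_t])$ is, by the footnote on page about $e$ being a principal door or a (reverse) premise of a $\wn C$/$\wn N$ node, controlled by the edge $e$ and by a canonical-in-$\csRel_{n-1}$ potential for each box strictly containing $e$. The crucial observation is that every box $C$ whose principal door is entered by such a path from $((\sigma(B),P),[\oc_t])$ satisfies $B \stratSNLL C$ by Definition~\ref{def_stratPot}, hence $\stratu{\stratSNLL}{C} < \stratu{\stratSNLL}{B} \leq n$, i.e. $C \in S_{n-1}$; this is exactly the fact exploited in the proof of Lemma~\ref{lemma_remonter_simple}, and it is what lets Theorem~\ref{injection_lemma_simple} trace back the $\noJump$-portions of the path using only $\restrCont{n-1}{\_}$ data. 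Combining Lemma~\ref{lemma_last_jump} (to isolate the last $\onlyJump$ step), Lemma~\ref{lemma_remonter_simple}, and Lemma~\ref{lemma_strong_acyclicity_simple}, one shows that the sequence of edges $e_k,\dots,e_0$ together with the data $\restrPot{n-1}{e_0,Q_0}$ determines the whole back-traced path, so the number of reachable $\restrPot{n-1}{e,Q}$ is at most $|\dirEdges{G}|$ times the number of choices of a $\csRel_{n-1}$-canonical potential for each box containing $e$. By the induction hypothesis each such choice ranges over at most $|\copRel{n-1}{\_,\_}| \leq 2^{2^{2M_{n-1}}}$ possibilities, and there are at most $\partial_G \leq |\dirEdges{G}|$ boxes containing $e$, giving a recurrence $M_n \leq |\dirEdges{G}| \cdot \left(2^{2^{2M_{n-1}}}\right)^{|\dirEdges{G}|}$ with $M_0 = 1$.

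Unwinding the recurrence: each step of the induction stacks three exponentials (the double exponential from Lemma~\ref{lemma_bound_cop_acyc}/Lemma~\ref{lemma_numb_sig_depth} on copies, plus one more exponential from raising to the power $|\dirEdges{G}|$ and multiplying by $|\dirEdges{G}|$, all of which is absorbed into a single extra exponential by a crude majorization). Hence after $d = |{\stratSNLL}|$ rounds one obtains $\max_{(B,P)} |\cop{B,P}| \leq 2^{|\dirEdges{G}|}_{3d}$ and, since $W_G \leq |\edges{G}| \cdot \left(\max_{(B,P)}|\cop{B,P}|\right)^{\partial_G}$ with $|\edges{G}|, \partial_G \leq 2|\dirEdges{G}|$, the polynomial blow-up is again absorbed into the tower, yielding $W_G \leq 2^{|\dirEdges{G}|}_{3 \stratG{G}}$; Theorem~\ref{theo_dallago_edges} then bounds the longest reduction by $W_G$. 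The main obstacle is the bookkeeping in the inductive step: one must verify carefully that the $\onlyJump$ steps of a $\csRel_n$-path do not escape the control of the $\restrCont{n-1}{\_}$ data (this is where the $\stratSNLL \subseteq \stratELL$ argument and the case analysis in Theorem~\ref{injection_lemma_simple} are essential), and that no canonical potential for a box containing $e$ is counted with the wrong granularity — precisely the role of Lemma~\ref{lemma_strong_acyclicity_simple}, which guarantees each $\restrPot{n-1}{e,Q}$ is visited at most once so that the height of the left-most branch, hence of each copy, is bounded by $M_n$. The constant $3$ in the exponent tower, rather than $1$, comes exactly from these three nested exponentials per round plus the polynomial slack, and checking it is a matter of constant-chasing rather than conceptual difficulty.
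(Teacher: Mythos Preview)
Your proposal is essentially the paper's argument. The paper recurses directly on $|\canRel{n}{\dirEdges{G}}|$: from Lemma~\ref{lemma_bound_cop_acyc} it gets $\max_{(B,P)}|\copRel{n}{B,P}|\leq 2^{2^{2\,|\canRel{n-1}{\dirEdges{G}}|}}$, then $|\canRel{n}{\dirEdges{G}}|\leq |\dirEdges{G}|\cdot(\max|\copRel{n}{\_}|)^{\partial_G}$, and verifies by induction that this stays below $u_n=2^{|\dirEdges{G}|}_{3n}$ before applying Theorem~\ref{theo_dallago_edges}. Your quantity $M_n$ is just $|\canRel{n-1}{\dirEdges{G}}|$ (every $\restrPot{n-1}{e,Q}$ is a $\csRel_{n-1}$-canonical edge by construction), so your recurrence is the same up to an index shift.

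One point of exposition to clean up: in your middle paragraph the ``so'' is a non-sequitur. The bound $M_n\leq |\dirEdges{G}|\cdot(\max|\copRel{n-1}{\_}|)^{\partial_G}$ is immediate from the definition of $\restrPot{n-1}{\_}$ and needs none of Lemmas~\ref{lemma_last_jump}, \ref{lemma_remonter_simple}, \ref{lemma_strong_acyclicity_simple}. Those lemmas live \emph{inside} the proof of Lemma~\ref{lemma_bound_cop_acyc}: strong acyclicity is what guarantees each $\restrPot{n-1}{e,Q}$ is visited at most once along a copy path, hence bounds the height of the leftmost branch of any simplification by $M_n$, hence the copy count by $2^{2^{2M_n}}$ via Lemma~\ref{lemma_numb_sig_depth}. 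You state this correctly in your final paragraph; just don't also invoke those lemmas to justify the trivial bound on $M_n$ itself. Also, your base case should be $M_1\leq|\dirEdges{G}|$ (from $|\canRel{0}{e}|=1$), not $M_0=1$, but this is harmless for the tower estimate.
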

    \begin{proof}
      By Lemma~\ref{lemma_bound_cop_acyc} and definition of $\canRel{n-1}{\_}$ we have:
      \begin{align*}    %\copRel{n}{B,P} \leq & 2^{2^{2\cdot \left|\canRel{n-1}{\dirEdges{G}}\right|}}\\
        \max_{(B,P) \in \pot{\boxset{G}}}\left| \copRel{n}{B,P} \right| \leq & 2^{2^{2\cdot \left|\canRel{n-1}{\dirEdges{G}}\right|}}\\
        \max_{e \in \edges{G} }\left| \canRel{n}{e} \right| \leq & \left ( 2^{2^{2\cdot \left|\canRel{n-1}{\dirEdges{G}}\right|}} \right) ^{\depthG{G}}\\
        \left | \canRel{n}{\dirEdges{G}} \right | \leq & \left | \dirEdges{G} \right | \left ( 2^{\depthG{G} \cdot 2^{2\cdot \left|\canRel{n-1}{\dirEdges{G}}\right|}} \right)
      \end{align*}
      We define $u_n$ as $2^{\left|\dirEdges{G}\right|}_{3 \cdot n}$. We show by induction that, for every $n \in \mathbb{N}$, $\left | \canRel{n}{\dirEdges{G}} \right| \leq u_n$. For $n=0$, we can notice that for every $e \in \dirEdges{G}$, we have $|\canRel{0}{e}|=1$ (the only canonical potentials are lists of $\sige$) so $\left | \canRel{0}{\dirEdges{G}} \right | \leq \left|\dirEdges{G}\right| \leq u_0$. If $n \geq 0$, let us notice that $G$ has at least two boxes so $\left|\dirEdges{G}\right|\geq 4$. We have the following inequalities (to simplify the equations, we write $s$ for $\left|\dirEdges{G}\right|$):
      \begin{align*}
        \left | \canRel{n+1}{\dirEdges{G}} \right| &\leq s \left ( 2^{\depthG{G} \cdot 2^{2\cdot \left|\canRel{n}{\dirEdges{G}}\right|}} \right) \leq s \left ( 2^{\depthG{G} \cdot 2^{2\cdot u_n}} \right) \leq 2^{\frac{s}{2}} \left ( 2^{s \cdot 2^{2\cdot u_n}} \right)\\
        \log \left( \left | \canRel{n+1}{\dirEdges{G}} \right| \right) &\leq \frac{s}{2}+ s \cdot 2^{2\cdot u_n} \leq (2 \cdot s)  \cdot 2^{2\cdot u_n} \leq 2^{s+2\cdot u_n} \leq 2^{4 u_n} \leq 2^{2^{u_n}}\\
        \left | \canRel{n+1}{\dirEdges{G}} \right| &\leq 2^{u_n}_3= 2^{s}_{3n+3} = u_{n+1}
      \end{align*}
      Then, Theorem~\ref{theo_dallago_edges} gives us the announced bound.
    \end{proof}
    
    Let us consider the application of a proof-net $G$ to $H$ (Figure~\ref{fig_app_gh}). If $\stratSNLL$ is acyclic on $(G)H$, then $\left|\stratSNLL\right| \leq \left| \boxset{(G)H} \right| \leq \left|\boxset{G}\right| + \left|\boxset{H}\right|$. It is reasonable\footnote{More details at the end of Section~\ref{subsection_depcontrol_digging}.} to assume that the number of boxes does not depend on the argument of the function. So, by Theorem~\ref{theoStratElementaryBound}, the length of the normalization sequence is bounded by $e_G(x)$ with $x$ the size of the argument and $e_G$ an elementary function which does not depend on the argument.

\section{Paths criteria for polynomial time}
\label{section_polytime_simple}
\subsection{Dependence control}\label{section_dependence_control_simple}
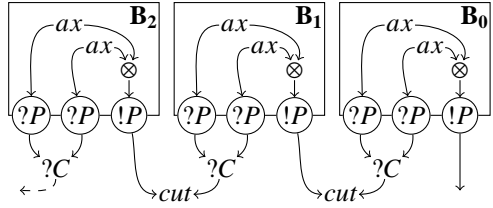
\begin{figure}
  \centering
  \tikzsetnextfilename{exp_proofnet}
  \begin{tikzpicture}
    \tikzstyle{door}=[draw, circle, inner sep=0.02cm]
    \node [princdoor] (bang1) at (0,0) {};
    \node at ($(bang1)+(0.2,1.3)$) {$\mathbf{B_2}$}; 
    \node [auxdoor] (ancr1) at ($(bang1)+(-0.65,0)$) {};
    \node [auxdoor] (ancl1) at ($(ancr1)+(-0.65,0)$) {};
    \node [cont] (why1) at ($(ancl1)!0.5!(ancr1)+(0,-0.7)$) {};
    \node [tensor]  (tens1) at ($(bang1)+(0,0.6)$) {};
    \node [ax] (ax1b) at ($(tens1)+(-0.4,0.3)$) {};
    \node [ax] (ax1h) at ($(tens1)+(-0.8,0.6)$) {};
    \draw [ar] (tens1) -- (bang1);
    \draw [ar] (ax1h) to [out=0,in=70]  (tens1);
    \draw [ar] (ax1b) to [out=0,in=110] (tens1);
    \draw [ar] (ax1h) to [out=180,in=90] (ancl1);
    \draw [ar] (ax1b) to [out=180,in=100] (ancr1);
    \node [princdoor] (bang2) at (2.2,0) {};
    \node at ($(bang2)+(0.2,1.3)$) {$\mathbf{B_1}$}; 
    \node [auxdoor] (ancr2) at ($(bang2)+(-0.65,0)$) {};
    \node [auxdoor] (ancl2) at ($(ancr2)+(-0.65,0)$) {};
    \node [cont] (why2) at ($(ancl2)!0.5!(ancr2)+(0,-0.7)$) {};
    \node [below right] at (why2) {};
    \node [tensor]  (tens2) at ($(bang2)+(0,0.6)$)   {};
    \node [ax]      (ax2b)  at ($(tens2)+(-0.4,0.3)$){};
    \node [ax]      (ax2h)  at ($(tens2)+(-0.8,0.6)$){};
    \draw [ar] (tens2)--(bang2);
    \draw [ar] (ax2h) to [out=0,in=70]  (tens2);
    \draw [ar] (ax2b) to [out=0,in=110] (tens2);
    \draw [ar] (ax2h) to [out=180,in=90] (ancl2);
    \draw [ar] (ax2b) to [out=180,in=100] (ancr2);
    \node [princdoor] (bang3) at (4.4,0) {};
    \node at ($(bang3)+(0.2,1.3)$) {$\mathbf{B_0}$}; 
    \node [auxdoor] (ancr3) at ($(bang3)+(-0.65,0)$) {};
    \node [auxdoor] (ancl3) at ($(ancr3)+(-0.65,0)$) {};
    \node [cont] (why3) at ($(ancl3)!0.5!(ancr3)+(0,-0.7)$) {};
    \node [below right] at (why3) {};
    \node [cut](cut1) at ($(bang1)!0.5!(why2)+(-0,-0.7)$) {};
    \node [cut](cut2) at ($(bang2)!0.5!(why3)+(-0,-0.7)$) {};
    \draw [ar](bang1) to [out=-80, in=180] (cut1);
    \draw [ar] (why2) to [out=-100, in=0]  (cut1);
    \draw [ar](bang2) to [out=-80, in=180] (cut2);
    \draw [ar](why3)  to [out=-100,in=0]   (cut2);
    \node [tensor] (tens3) at ($(bang3)+(0,0.6)$)    {};
    \node [ax]     (ax3b)  at ($(tens3)+(-0.4,0.3)$) {};
    \node [ax]     (ax3h)  at ($(tens3)+(-0.8,0.6)$) {};
    \draw [ar] (tens3)--(bang3);
    \draw [ar] (ax3h) to [out=0,in=70]  (tens3);
    \draw [ar] (ax3b) to [out=0,in=110] (tens3);
    \draw [ar] (ax3h) to [out=180,in=90] (ancl3);
    \draw [ar] (ax3b) to [out=180,in=100] (ancr3);
    \draw [ar] (bang3) -- ++ (0,-1);
    \draw (bang1) -|++(0.4,1.5) -| ($(ancl1)+(-0.3,0)$) -- (ancl1)--(ancr1)--(bang1);
    \draw (bang2) -|++(0.4,1.5) -| ($(ancl2)+(-0.3,0)$) -- (ancl2)--(ancr2)--(bang2);
    \draw (bang3) -|++(0.4,1.5) -| ($(ancl3)+(-0.3,0)$) -- (ancl3)--(ancr3)--(bang3);
    \draw [ar] (ancl1) to [bend right] (why1);
    \draw [ar] (ancr1) to [bend left]  (why1);
    \draw [ar] (ancl2) to [bend right] (why2);
    \draw [ar] (ancr2) to [bend left]  (why2);
    \draw [ar] (ancl3) to [bend right] (why3);
    \draw [ar] (ancr3) to [bend left] (why3);
    \node (suite) at ($(why1)+(-0.6,-0.3)$) {};
    \draw [ar,dashed] (why1) to [out=-90,in=0] (suite);
  \end{tikzpicture}
  \caption{\label{exp}This proof-net (if extended to $n$ boxes) reduces in ${\cal O}(2^n)$ reduction steps}
\end{figure}

Though $\stratSNLL$-stratification gives us a bound on the length of the reduction, elementary time is not considered as a reasonable bound, as it rises extremely fast with the size of the input. Cobham-Edmons thesis asserts that $Ptime$ corresponds to feasible problems. It suffers some limits:
\begin{itemize}
\item When one is only interested in very small inputs, the asymptotical complexity is not a concern
\item It does not account for constants and exponents.
\end{itemize}
However, in practice, the programs which we consider tractable mostly correspond to programs enjoying a polynomial bound on their time complexity. This is why we look for criteria entailing a polynomial bound on $W_G$. Figure~\ref{exp} shows us a way for the complexity to arise despite $\stratSNLL$-stratification. On this proof-net, $B_1$ has two residues. Each residue of $B_1$ creates two residues of $B_2$ (so $4$ residues in total). If we extend this sequence of boxes, $B_n$ has at least $2^n$ residues. From a context semantics perspective, $\left|\cop{B_i,[]}\right|$ depends non-additively on $\left|\cop{B_{i-1},[]}\right|$. Indeed, for any $t \in \cop{B_{i-1},[]}$, both $\sigl(t)$ and $\sigr(t)$ are in $\cop{B_i,[]}$. Thus, for every copy in $B_0$ there exist at least $2^{i}$ copies of $B_i$.

This proof-net is similar to the $\lambda$-term $(\lambda x.\la x,x\ra)\cdots (\lambda x.\la x,x\ra)y$ (in $\lambda$-calculus with pairs) which reduces to a $\lambda$-term of size $O(2^n)$ (with $n$ the number of successive applications of $\lambda x.\la x,x\ra$). Let us observe that the number of $\betared$ steps depends on the strategy: call-by-name normalizes in $\Theta(2^n)$ steps while call-by-value normalizes in $\Theta(n)$ steps (but, because the term size grows exponentially, the exectution time is in $\Theta(2^n)$ independently of the reduction strategy). The exponential blow-up happens because there are two free occurrences of $x$ in $\lambda x.\la x,x\ra$ (this corresponds in Figure~\ref{exp} to the two auxiliary doors by box which come from the same contraction node).

In~\cite{roversi2009some}, this situation is called a chain of {\em spindles}\label{def_spindle}. We call {\em dependence control condition} any restriction on linear logic which aims to tackle this kind of spindle chains. The dependence control in $LLL$~\cite{girard1995light} is to limit the number of auxiliary doors of each $\fpriLab$-box to $1$. The dependence control in $SLL$~\cite{lafont2004soft} is to forbid auxiliary doors above contraction nodes. 

However, those conditions forbid many proof-nets normalizing in polynomial time. For instance, the proof-net of Figure~\ref{expb} normalizes in linear time, even if the boxes have two auxiliary doors one of which is above a $\contLab$ node. The copies of $C_i$ depend on the copies of $C_{i-1}$ because $\cop{C_i,[]}=\{\sige,\sigr(\sige)\} \cup \Set*{\sigl(t)}{t \in \cop{C_{i-1},[]}}$. But the dependence is additive: $|\cop{C_i,[]}| = 2+|\cop{C_{i-1},[]}|$.

In terms of context semantics, to give a bound on the number of copies of a potential box, we want to trace back a path $((\sigma(B),P),[\oc_t]) \csRel^* ((e,Q),[\oc_{\sige}])$ with as little information on the path as possible. Theorem~\ref{injection_lemma_simple} (and the injectivity of $\noJump$) allows us to trace back $\noJump$ steps. However, we need additional information to trace back $\onlyJump$ steps because $\onlyJump$ is not injective. For instance, in Figure~\ref{exp}, we have:
\begin{align*}
  ((\sigma(B_2),[]), [\oc_{\sigl(\sige)}]) \noJump^2 C_e=((\overline{\sigma_1(B_1)},[]),[\oc_{\sige}]) \onlyJump ((\sigma(B_1),[]),[\oc_{\sige}])=C_f \\
  ((\sigma(B_2),[]), [\oc_{\sigr(\sige)}]) \noJump^2 C'_e=((\overline{\sigma_2(B_1)},[]),[\oc_{\sige}]) \onlyJump ((\sigma(B_1),[]),[\oc_{\sige}]) =C_f
\end{align*}
Let us consider a $((\sigma(B),P),[\oc_t]) (\noJump_S \cup \onlyJump)^* ((e,Q),[\oc_{\sige}])$ path. Thanks to Theorem~\ref{injection_lemma_simple} and Lemma~\ref{injection_lemma_jump}, we can trace it back (determine every edge of the path) provided we know $\restrCont{\csRel_S}{((e,Q),[\oc_{\sige}])}$ and, for every $((\overline{\sigma_i(C)},R),[\oc_u]) \onlyJump ((\sigma(C),R),[\oc_u])$ step of the path, we know $i$. 

\begin{lemma}\label{injection_lemma_jump}
Let $S$ be a subset of boxes. We suppose that $C_e=((\overline{\sigma_i(B)},P),[\oc_t]) \onlyJump C_f$, $C'_e=((\overline{\sigma_i(B)},P'),[\oc_{t'}]) \onlyJump C'_f$ and $\restrCont{\csRel_S}{C_f}=\restrCont{\csRel_S}{C'_f}$ then $\restrCont{\csRel_S}{C_e}=\restrCont{\csRel_S}{C'_e}$.
\end{lemma}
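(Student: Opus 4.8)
The plan is to proceed exactly as in the door-crossing cases of Theorem~\ref{injection_lemma_simple}, since the $\onlyJump$ step $((\overline{\sigma_i(B)},P),[\oc_t]) \onlyJump ((\sigma(B),P),[\oc_t])$ changes neither the potential nor the trace: both $C_e$ and $C_f$ are of the shape $((\cdot,P),[\oc_t])$ with the same $P$ and $t$, and likewise $C_e'$ and $C_f'$ share $P'$ and $t'$. So from $\restrCont{\csRel_S}{C_f}=\restrCont{\csRel_S}{C'_f}$ I will extract $\restrPot{\csRel_S}{\sigma(B),P}=\restrPot{\csRel_S}{\sigma(B),P'}=:(\sigma(B),P'')$ and $\restrSig{\csRel_S}{((\sigma(B),P''),[\oc_t])}=\restrSig{\csRel_S}{((\sigma(B),P''),[\oc_{t'}])}=:t''$, hence $\restrCont{\csRel_S}{C_f}=((\sigma(B),P''),[\oc_{t''}])$. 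The goal is then to show $\restrCont{\csRel_S}{C_e}=((\overline{\sigma_i(B)},P''),[\oc_{t''}])$ and the same for $C_e'$.

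First I would invoke Lemma~\ref{lemma_restrpot_eq}: since $\sigma_i(B)$ and $\sigma(B)$ are both doors of $B$, they belong to the same boxes (both are in $B$, and their conclusions leave $B$), so $\restrPot{\csRel_S}{\overline{\sigma_i(B)},P}=\restrPot{\csRel_S}{\overline{\sigma_i(B)},P'}=(\overline{\sigma_i(B)},P'')$. Next I would apply Lemma~\ref{lemma_restrCont_left} with $U=V=[\,]$, $e=\overline{\sigma_i(B)}$, $f=\sigma(B)$ (or, to be careful with orientations, the version that matches the $\onlyJump$ rule $((\overline{\sigma_i(B)},P),[\oc_t]) \onlyJump ((\sigma(B),P),[\oc_t])$, which holds for every potential $P$ and signature $t$, and whose reverse $((\overline{\sigma(B)},P),[\oc_t]) \csRel ((\sigma_i(B),P),[\oc_t])$ also holds). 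The hypotheses of Lemma~\ref{lemma_restrCont_left} — that for every trace element list $T$ the step $((\overline{\sigma_i(B)},P),T@[\oc_t]) \csRel ((\sigma(B),P),T@[\oc_t])$ and its dual hold — are satisfied precisely because the $\onlyJump$ rule is insensitive to the context stacked below $[\oc_t]$; this is where I would double-check that the rule fires in all the needed configurations. Lemma~\ref{lemma_restrCont_left} then yields that $\restrCont{\csRel_S}{C_e}$ and $\restrCont{\csRel_S}{C_f}$ have the shape $(\cdot,[\oc_{t''}])$ with the same $t''$, and similarly for the primed pair, giving $\restrCont{\csRel_S}{C_e}=((\overline{\sigma_i(B)},P''),[\oc_{t''}])=\restrCont{\csRel_S}{C'_e}$.

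The main obstacle I anticipate is orientation bookkeeping: the $\onlyJump$ rule is stated as going from $\overline{\sigma_i(B)}$ to $\sigma(B)$, so applying Lemma~\ref{lemma_restrCont_left} requires matching $C_f$ with the target and checking that the reverse step it demands — here the $\noJump$ step $((\overline{\sigma(B)},P),[\oc_t]) \csRel ((\sigma_i(B),P),[\oc_t])$, i.e. crossing the $i$-th auxiliary door upward — is indeed a valid $\csRel$ step (it is, being the dual of the auxiliary-door rule with empty below-context). I expect no genuine difficulty beyond carefully aligning these dualities; the algebraic content is entirely carried by Lemmas~\ref{lemma_restrpot_eq} and~\ref{lemma_restrCont_left}, exactly as in the auxiliary-door case of Theorem~\ref{injection_lemma_simple}, with the one simplification that $\onlyJump$ leaves both the potential and the trace untouched so no new signature is introduced or consumed.
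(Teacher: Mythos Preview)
Your overall plan is right and matches the paper's one-line proof (``similar to the $\parr$ case of Theorem~\ref{injection_lemma_simple}''): use Lemma~\ref{lemma_restrpot_eq} for the potential part and then deal with the single trace element. However, your verification of the hypotheses of Lemma~\ref{lemma_restrCont_left} is incorrect on both counts you flagged for double-checking. The $\onlyJump$ rule fires \emph{only} when the trace is exactly $[\oc_t]$; for any nonempty $T$ the context $((\overline{\sigma_i(B)},P),T@[\oc_t])$ does not $\csRel$-step to $((\sigma(B),P),T@[\oc_t])$ but rather enters the box via the dual auxiliary-door $\noJump$ rule. Likewise, the ``reverse'' step $((\overline{\sigma(B)},P),[\oc_t]) \csRel ((\sigma_i(B),P),[\oc_t])$ that you claim is not a valid $\csRel$ step at all: crossing the principal door upward requires a $\wn$-topped trace. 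So Lemma~\ref{lemma_restrCont_left}, with its ``for every $T$'' hypothesis, does not apply as stated.

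The fix is immediate and you already have the pieces: since the trace is the singleton $[\oc_t]$, you only need Lemma~\ref{lemma_restr_eq}, whose hypothesis is merely that $((\overline{\sigma_i(B)},P''),[\oc_v]) \csRel_S ((\sigma(B),P''),[\oc_v])$ for every signature $v$ --- and that is precisely the $\onlyJump$ rule. Together with Lemma~\ref{lemma_restrpot_eq} this gives $\restrCont{\csRel_S}{C_e}=((\overline{\sigma_i(B)},P''),[\oc_{t''}])$ and symmetrically for $C'_e$, finishing the proof. This is exactly the argument spelled out for the $\onlyJump$ case inside the proof of Lemma~\ref{lemma_remonter_simple}.
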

\begin{proof}
  Quite similar to the proof of Theorem~\ref{injection_lemma_simple} (cf. the study of the $\parr$ case).
\end{proof}

A dependence control condition is a criterion on proof-nets entailing a bound on the number of $\onlyJump$ steps for which we need to know the auxiliary edge to be able to trace back a $\csRel$-path. Instead of a syntactic criterion (like the ones of the type-systems $LLL$ and $SLL$), we propose here a semantic criterion on proof-nets. As in Section~\ref{chapter_3}, the criterion is defined as the acyclicity of a relation (written $\dcSim$) on boxes. Our criterion is more general than previous systems: every proof-net of (the multiplicative fragments of) $LLL$, $SLL$ and every $Ptime$ sound system of $MS$ satisfies our dependence control condition.

Intuitively $B \dcSim B'$ means that residues $B_1$ and $B_2$ of $B$ are cut, along reduction, with two distinct auxiliary doors ($\sigma_{i}(\_)$ and $\sigma_{j}(\_)$) of residues ($B'_1$ and $B'_2$) of $C$. From a context semantics point of view, it corresponds to the existence of $\csRel$-paths from the principal door of $B$ to two distinct auxiliary doors of $B'$. 

Let us observe that the relation $\dcSim$ is defined by considering $\csRel$-paths ending by a context on an (reversed) auxiliary edges of a box $B'$ while the relation $\stratSNLL$ (Definition~\ref{def_stratPot} in page~\pageref{def_stratPot}) was defined by considering $\noJump$-paths passing through the (reversed) principal edge of a box $B'$.

\begin{definition}\label{def_kjoins_simple}We set $B \dcSim B'$ iff there exist $i \neq j$ and paths of the shape:
  \begin{align*}
    ((\sigma(B),P), [\oc_t]) \csRel^+ ( (\overline{\sigma_i(B')}, P'_1), [\oc_{\sige}])\\ 
    ((\sigma(B),P), [\oc_u]) \csRel^+ ((\overline{\sigma_j(B')}, P'_2), [\oc_{\sige}]) 
  \end{align*}
\end{definition}

\begin{figure}
  \centering
  \tikzsetnextfilename{exp_proofnetb}
  \begin{tikzpicture}
    \tikzstyle{door}=[draw, circle, inner sep=0.02cm]
      \node [princdoor] (bang1) at (0,0) {};
      \node at ($(bang1)+(0.2,1.3)$) {$\mathbf{C_2}$}; 
      \node [auxdoor] (ancr1) at ($(bang1)+(-0.65,0)$) {};
      \node [auxdoor] (ancl1) at ($(ancr1)+(-1.2,0)$) {};
      \node [cont] (why1)  at ($(ancl1)!0.3!(ancr1)+(0,-0.7)$) {};
      \node [weak] (weak1) at ($(ancl1)!0.5!(ancr1)+(0,-0.2)$) {};
      \draw [ar] (weak1) -- (why1);
      \node [tensor]  (tens1) at ($(bang1)+(0,0.6)$) {};
      \node [ax] (ax1b) at ($(tens1)+(-0.4,0.3)$) {};
      \node [ax] (ax1h) at ($(tens1)+(-0.8,0.6)$) {};
      \draw [ar] (tens1) -- (bang1);
      \draw [ar] (ax1h) to [out=0,in=70]  (tens1);
    \draw [ar] (ax1b) to [out=0,in=110] (tens1);
    \draw [ar] (ax1h) to [out=180,in=90] (ancl1);
    \draw [ar] (ax1b) to [out=180,in=100] (ancr1);
    \node [princdoor] (bang2) at (2.7,0) {};
    \node at ($(bang2)+(0.2,1.3)$) {$\mathbf{C_1}$}; 
    \node [auxdoor] (ancr2) at ($(bang2)+(-0.65,0)$) {};
    \node [auxdoor] (ancl2) at ($(ancr2)+(-1.2,0)$) {};
    \node [cont] (why2)  at ($(ancl2)!0.3!(ancr2)+(0,-0.7)$) {};
    \node [weak] (weak2) at ($(ancl2)!0.5!(ancr2)+(0,-0.2)$) {};
    \draw [ar] (weak2) -- (why2);
    \node [tensor]  (tens2) at ($(bang2)+(0,0.6)$)   {};
    \node [ax]      (ax2b)  at ($(tens2)+(-0.4,0.3)$){};
    \node [ax]      (ax2h)  at ($(tens2)+(-0.8,0.6)$){};
    \draw [ar] (tens2)--(bang2);
    \draw [ar] (ax2h) to [out=0,in=70]  (tens2);
    \draw [ar] (ax2b) to [out=0,in=110] (tens2);
    \draw [ar] (ax2h) to [out=180,in=90] (ancl2);
    \draw [ar] (ax2b) to [out=180,in=100] (ancr2);
    \node [princdoor] (bang3) at (5.4,0) {};
    \node at ($(bang3)+(0.2,1.3)$) {$\mathbf{C_0}$}; 
    \node [auxdoor] (ancr3) at ($(bang3)+(-0.65,0)$) {};
    \node [auxdoor] (ancl3) at ($(ancr3)+(-1.2,0)$) {};
    \node [cont] (why3)  at ($(ancl3)!0.3!(ancr3)+(0,-0.7)$) {};
    \node [weak] (weak3) at ($(ancl3)!0.5!(ancr3)+(0,-0.2)$) {};
    \draw [ar] (weak3) -- (why3);
    \node [cut](cut1) at ($(bang1)!0.5!(why2)+(-0,-0.7)$) {};
    \node [cut](cut2) at ($(bang2)!0.5!(why3)+(-0,-0.7)$) {};
    \draw [ar](bang1) to [out=-80, in=180] (cut1);
    \draw [ar] (why2) to [out=-100, in=0]  (cut1);
    \draw [ar](bang2) to [out=-80, in=180] (cut2);
    \draw [ar](why3)  to [out=-100,in=0]   (cut2);
    \node [tensor] (tens3) at ($(bang3)+(0,0.6)$)    {};
    \node [ax]     (ax3b)  at ($(tens3)+(-0.4,0.3)$) {};
    \node [ax]     (ax3h)  at ($(tens3)+(-0.8,0.6)$) {};
    \draw [ar] (tens3)--(bang3);
    \draw [ar] (ax3h) to [out=0,in=70]  (tens3);
    \draw [ar] (ax3b) to [out=0,in=110] (tens3);
    \draw [ar] (ax3h) to [out=180,in=90] (ancl3);
    \draw [ar] (ax3b) to [out=180,in=100] (ancr3);
    \draw [ar] (bang3) -- ++ (0,-1);
    \draw (bang1) -|++(0.4,1.5) -| ($(ancl1)+(-0.3,0)$) -- (ancl1)--(ancr1)--(bang1);
    \draw (bang2) -|++(0.4,1.5) -| ($(ancl2)+(-0.3,0)$) -- (ancl2)--(ancr2)--(bang2);
    \draw (bang3) -|++(0.4,1.5) -| ($(ancl3)+(-0.3,0)$) -- (ancl3)--(ancr3)--(bang3);
    \draw [ar] (ancl1) to [bend right] (why1);
    \draw [ar] (ancr1) --++ (0,-1)  (why1);
    \draw [ar] (ancl2) to [bend right] (why2);
    \draw [ar] (ancr2) --++ (0,-1) (why2);
    \draw [ar] (ancl3) to [bend right] (why3);
    \draw [ar] (ancr3) --++ (0,-1) (why3);
    \node (suite) at ($(why1)+(-0.6,-0.3)$) {};
    \draw [ar,dashed] (why1) to [out=-90,in=0] (suite);
  \end{tikzpicture}
  \caption{\label{expb}This proof-net (if extended to $n$ boxes) reduces in ${\cal O}(n)$ reduction steps.}
\end{figure}
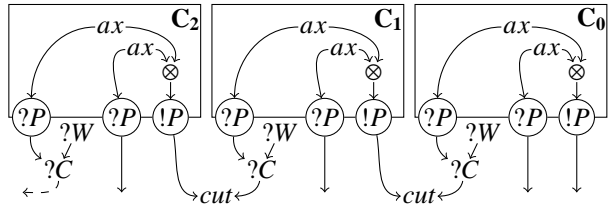

In Figure~\ref{exp}, we have $B_i \dcSim B_{i-1}$ because $((\sigma(B_i),[]),[\oc_{\sigl(\sige)}]) \csRel^2 ((\overline{\sigma_1(B_{i-1})},[]),[\oc_{\sige}])$ and $((\sigma(B_i),[]),[\oc_{\sigr(\sige)}]) \csRel^2 ((\overline{\sigma_2(B_{i-1})},[]),[\oc_{\sige}])$. Similarly, the proof-net of Figure~\ref{reallyExp} is not $\dcSim$-stratified because $B \dcSim B$. On the contrary, in Figure~\ref{expb}, $\dcSim=\varnothing$.

    \begin{figure}
      \centering
      \begin{tikzpicture}
        \node [princdoor]  (bang) at (0,0) {};
        \node at ($(bang)+(0.32,0.3)$) {$\mathbf{B}$};
        \node [auxdoor]    (aux2)  at ($(bang)+(-0.8,0)$) {};
        \node [auxdoor]    (aux1)  at ($(aux2) +(-0.6,0)$) {};
        \node [cont]       (cont)  at ($(aux1)!0.5!(aux2)+(0,-0.6)$) {};
        \node [tensor]     (tens)  at ($(bang)+(0,0.6)$)  {};
        \node [ax]         (axb)   at ($(tens)+(-0.7,0.3)$) {};
        \node [ax]         (axh)   at ($(tens)+(-0.7,0.6)$) {};
        \draw (bang) -| ++(0.5,1.4) -| ($(aux1)+(-0.32,0)$) -- (aux1) -- (aux2) -- (bang);
        \draw [ar] (tens)--(bang);
        \draw [ar] (axh) to [out=  0,in= 75]  (tens);
        \draw [ar] (axb) to [out=  0,in=120]  (tens);
        \draw [ar] (axh) to [out=180,in= 90] (aux1);
        \draw [ar] (axb) to [out=180,in= 90] (aux2);
        \draw [ar] (aux1) -- (cont);
        \draw [ar] (aux2) -- (cont);
        \node [par]        (par)   at ($(cont)!0.5!(bang)+(0,-1.2)$) {};
        \node [forall]     (fa)    at ($(cont)!0.5!(par)$) {};
        \draw [name path=bangPar,opacity=0] (bang)--(par);
        \draw [name path=horizontFa,opacity=0] ($(fa)+(-2,0)$) -- ($(fa)+(2,0)$);
        \node [exists, name intersections={of=bangPar and horizontFa}] (ex) at (intersection-1) {};
        \draw [ar] (cont) -- (fa); \draw [ar] (fa)--(par);
        \draw [ar] (bang) -- (ex); \draw [ar] (ex)--(par);

        \node [princdoor] (bangf) at ($(par)+(0,-0.6)$) {};
        \draw [ar] (par) -- (bangf);
        \draw (bangf) -| ++(1.15,3.6) -| ($(bangf)+(-1.25,0)$) -- (bangf);
        \node [tensor]    (tensf) at ($(bangf)+(-13:1.2)$) {};
        \node [der]       (derf)  at ($(tensf)+( 50:0.7)$) {};
        \node [tensor]    (tensfx)at ($(derf) +( 60:0.7)$) {};
        \node [exists]    (exx)   at ($(tensfx)+(120:0.7)$) {};
        \node [par]       (parx)  at ($(exx)  +(0,0.6)$) {};
        \node [ax]        (axx)   at ($(parx) +(0,0.7)$) {};
        \node [ax]        (axfx)  at ($(tensfx)+(0.6,0.5)$) {};
        \node [exists]    (exf)   at ($(tensf)+(-60:0.6)$)  {};
        \node [par]       (parf)  at ($(exf)  +(-60:0.6)$)  {};
        \draw [ar] (axx.0)   to [out= -20,in= 60] (parx);
        \draw [ar] (axx.180) to [out=-160,in=120] (parx);
        \draw [ar] (parx)-- (exx); 
        \draw [ar] (exx) -- (tensfx);
        \draw [ar] (tensfx)--(derf);
        \draw [ar] (derf)  --(tensf);
        \draw [ar] (bangf) to [out=-80, in=162] (tensf);
        \draw [ar] (tensf) --(exf);
        \draw [ar] (exf)   --(parf);
        \draw [ar] (axfx) to [out=-170,in=60] (tensfx);
        \draw [ar] (axfx) to [out=-10, in=60] (parf);
        
        \node [tensor]  (app) at ($(parf)+(5,0)$) {};
        \node [cut]     (cut) at ($(parf)!0.5!(app)+(0,-0.6)$) {};
        \draw [ar] (parf) to [out=-40,in=180] (cut);
        \draw [ar] (app) to [out=-140,in=  0] (cut);
        \node [ax]     (finax) at ($(app)+(0.8,0.3)$) {};
        \draw [ar]     (finax) to [out=180,in= 60] (app);
        \draw [ar]     (finax) to [out=  0,in=120] ($(finax)+(0.8,-0.5)$);

        \node [forall]       (forall)   at ($(app)+(-0.3,0.8)$) {};
        \draw [ar] (forall) -- (app);
        \node [par]          (parg)     at ($(forall)+(0,0.65)$) {};
        \node [princdoor]    (princ3g)  at ($(parg)+(1, 1.5)$) {};
        \node at ($(princ3g)+(0.5,0.4)$) {$\mathbf C$}; 
        \node [auxdoor]      (aux1g)    at ($(princ3g)+(-2.5,0)$) {};
        \node [auxdoor]      (aux2g)    at ($(aux1g)!0.333!(princ3g)$) {};
        \node [auxdoor]      (aux3g)    at ($(aux1g)!0.666!(princ3g)$) {};
        \node [cont]         (cont1g)   at ($(aux1g)!0.5!(aux2g)+(0,-0.9)$) {};
        \node [cont]         (cont2g)   at ($(cont1g)!0.5!(parg)$) {};
        \nvar{\hautTens}{1.1cm}
        \node [tensor]       (tens1g)   at ($(aux1g)+(0,\hautTens)$) {};
        \node [tensor]       (tens2g)   at ($(aux2g)+(0,\hautTens)$) {};
        \node [tensor]       (tens3g)   at ($(aux3g)+(0,\hautTens)$) {};
        \nvar{\decAx}{0.4cm}
        \node [ax]        (ax1n)     at ($(tens1g)+(-0.5,\decAx)$) {};
        \node [ax]        (ax2n)     at ($(tens1g)!0.5!(tens2g)+(0,\decAx)$) {};
        \node [ax]        (ax3n)     at ($(tens2g)!0.5!(tens3g)+(0,\decAx)$) {};
        \node [ax]        (ax4n)     at ($(princ3g)+(-0.3,\decAx + \hautTens)$) {};
        \node [par]          (parx)     at ($(princ3g)+(0,0.6)$) {};
        \draw [ar] (parg) -- (forall);
        \draw [ar] (cont2g) -- (parg);
        \draw [ar] (cont1g) -- (cont2g);
        \draw [ar] (aux1g) -- (cont1g);
        \draw [ar] (aux2g) -- (cont1g);
        \draw [ar] (aux3g) -- (cont2g);
        \draw [ar] (parx) -- (princ3g);
        \draw [ar] (tens1g) -- (aux1g);
        \draw [ar] (tens2g) -- (aux2g);
        \draw [ar] (tens3g) -- (aux3g);
        \draw [ar,out=-160,in= 170] (ax1n) to (parx);
        \draw [ar,out=0 ,in= 120] (ax1n) to (tens1g);
        \draw [ar,out=180 ,in=  60] (ax2n) to (tens1g);
        \draw [ar,out=0   ,in=120 ] (ax2n) to (tens2g);
        \draw [ar,out=180 ,in=  60] (ax3n) to (tens2g);
        \draw [ar,out=0   ,in=120 ] (ax3n) to (tens3g);
        \draw [ar,out=180 ,in=  60] (ax4n) to (tens3g);
        \draw [ar,out=-20 ,in=  60] (ax4n) to (parx);
        \draw (princ3g) -| ++(0.7,\hautTens + \decAx +0.3cm) -| ($(aux1g)+(-1.2,0)$) -- (aux1g) -- (aux2g) -- (aux3g) -- (princ3g);
        \draw (princ3g) -- (parg);

        \draw [dashed] ($(aux1)+(-0.6,1.6)$) rectangle ($(parf)+(1.6,-0.5)$);
        \node at ($(parf)+(1.3,0)$) {$\mathbf{G}$};
        \draw [dashed] ($(forall)+(2,-0.3)$) rectangle ($(ax1n)+(-0.8,0.55)$);
        \node at ($(forall)+(1.7,0)$) {$\mathbf{H}$};
      \end{tikzpicture}
      \caption{\label{reallyExp}The complexity of $G$ is not polynomial.}
      \end{figure}
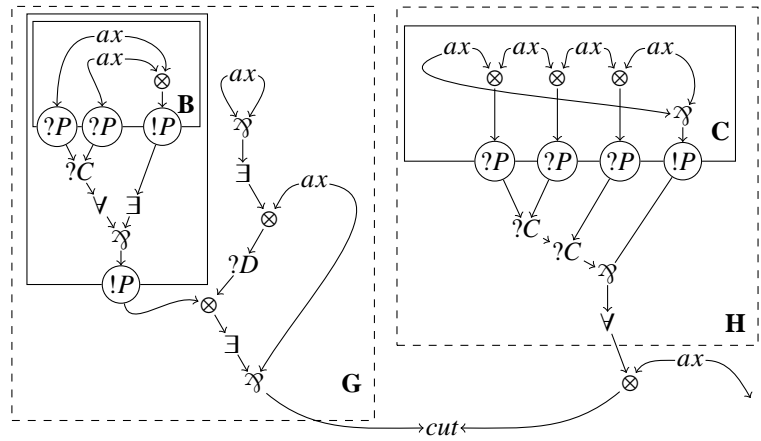

\begin{lemma}\label{lemma_sequences_edges_simple}
  Let $G$ be a $\stratSNLL$-stratified proof-net, $s\in \mathbb{N}$ and $(B,P)$ be a potential box with $d=\stratu{\dcSim}{B}$. There are at most $\left|\canRel{s-1}{\dirEdges{G}}\right|^d$ sequences $(e_i)_{1\leq i \leq l}$ of directed edges such that, there exists a potential sequence $(P_i)_{1 \leq i \leq l}$, a trace sequence $(T_i)_{1 \leq i < l}$ and $t \in \sig$ such that:
  \begin{equation*}
    ((\sigma(B),P),[\oc_t]) \csRel_{s} ((e_1,P_1),T_1) \csRel_{s} \cdots \csRel_{s} ((e_{l-1},P_{l-1}),T_{l-1}) \csRel_{s} ((e_l,P_l),[\oc_{\sige}])
  \end{equation*}
\end{lemma}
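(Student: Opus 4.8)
The plan is as follows. If $d=\stratu{\dcSim}{B}=\infty$ the inequality is vacuous, so assume $d<\infty$ (equivalently, $B$ is the source of no infinite $\dcSim$-chain). I will attach to each sequence $(e_i)_{1\le i\le l}$ in question a tuple of at most $d$ elements of $\canRel{s-1}{\dirEdges{G}}$ in an injective way; the bound $\left|\canRel{s-1}{\dirEdges{G}}\right|^{d}$ is then immediate.

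Fix a witnessing path $\pi:\ ((\sigma(B),P),[\oc_t])=C_0\csRel_s C_1\csRel_s\cdots\csRel_s C_l=((e_l,P_l),[\oc_{\sige}])$. First I would recover all the $e_i$ by tracing $\pi$ backwards, refining the reconstruction of Lemma~\ref{lemma_remonter_simple}. Through a $\noJump$ step the map $\restrCont{s-1}{\cdot}$ is determined backwards: $\noJump$ is injective (Lemma~\ref{lemma_nojump_injective}), and — since by Lemma~\ref{lemma_last_jump} and $\stratSNLL$-stratification every step of $\pi$ entering a box by its reversed principal edge is in fact a $\noJump_{s-1}$ step (the box entered has strictly smaller $\stratSNLL$-stratum than the last box whose principal door was crossed, which is $\le s$) — Theorem~\ref{injection_lemma_simple} applies, so $\restrCont{s-1}{C_i}$, hence $e_i$, is determined by $\restrCont{s-1}{C_{i+1}}$. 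Through an $\onlyJump$ step $((\overline{\sigma_i(B')},R),[\oc_u])\onlyJump((\sigma(B'),R),[\oc_u])$, Lemma~\ref{injection_lemma_jump} determines $\restrCont{s-1}{\cdot}$ of the source from that of the target together with the auxiliary-door index $i$. Hence $(e_i)$ is determined by $\restrCont{s-1}{C_l}=\bigl(\restrPot{s-1}{e_l,P_l},[\oc_{\sige}]\bigr)$, i.e.\ by the single canonical edge $\restrPot{s-1}{e_l,P_l}\in\canRel{s-1}{\dirEdges{G}}$, together with the list of indices of the $\onlyJump$ steps of $\pi$.

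Next I would show that only $d-1$ of those indices actually need recording. By Lemma~\ref{lemma_oclefttrace_remonter} a path $((\sigma(B),P),[\oc_t])\csRel^*((\overline{\sigma_i(B')},R),[\oc_u])$ gives rise to one ending in $((\overline{\sigma_i(B')},R),[\oc_{\sige}])$, so if two distinct auxiliary doors of $B'$ are reachable this way from the principal door of (a residue of) $B$, then $B\dcSim B'$ (Definition~\ref{def_kjoins_simple}); consequently, at an $\onlyJump$ step over a box $B'$ with $B\not\dcSim B'$ the index used is the unique reachable one and is recovered for free during the reconstruction. For the remaining "free" steps I would argue by induction on $\stratu{\dcSim}{B}$ that any path issued from a context $((\sigma(B),P),[\oc_t])$ contains at most $\stratu{\dcSim}{B}-1$ of them: if $\pi$ has such a step at all, let the first one be over a box $D$; then $B\dcSim D$, hence $\stratu{\dcSim}{D}\le\stratu{\dcSim}{B}-1$, the sub-path of $\pi$ after that step is a path issued from $((\sigma(D),R),[\oc_v])$ to which the induction hypothesis applies (at most $\stratu{\dcSim}{D}-1$ further free steps), and the prefix before it uses only forced indices. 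Recording each free index by the canonical edge $\restrPot{s-1}{\overline{\sigma_i(B')},R}\in\canRel{s-1}{\dirEdges{G}}$, the data attached to $(e_i)$ consists of at most $1+(d-1)=d$ elements of $\canRel{s-1}{\dirEdges{G}}$, and the reconstruction makes the assignment injective; therefore there are at most $\left|\canRel{s-1}{\dirEdges{G}}\right|^{d}$ such sequences.

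The main obstacle is the interplay between these two ingredients: one has to make the "forced versus free" dichotomy precise relative to the base box currently being considered, and check that, after passing the first free $\onlyJump$ step, the remainder of $\pi$ really is a path issued from $D$'s principal door with traces and potentials lining up so that Theorem~\ref{injection_lemma_simple}, Lemma~\ref{injection_lemma_jump} and Lemma~\ref{lemma_oclefttrace_remonter} can be reapplied uniformly in the induction — in particular that a box $B'$ with $B\dcSim B'$ but $D\not\dcSim B'$, which is "free" for $B$ but "forced" for $D$, is handled correctly. The backwards reconstruction of the $\noJump$ segments, by contrast, is a routine strengthening of Lemma~\ref{lemma_remonter_simple}.
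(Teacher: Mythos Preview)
Your overall strategy coincides with the paper's --- induction on $d=\stratu{\dcSim}{B}$, tracing back $\noJump$ steps via Theorem~\ref{injection_lemma_simple} and $\onlyJump$ steps via Lemma~\ref{injection_lemma_jump}, splitting the path at a distinguished jump and recursing on the suffix --- but the intermediate claim that a path from $((\sigma(B),P),[\oc_t])$ contains at most $d-1$ jumps over boxes $B'$ with $B\dcSim B'$ is not justified by your induction, and need not hold. Your inductive step passes to the suffix issued from $((\sigma(D),R),[\oc_v])$ and invokes the hypothesis for $D$; that hypothesis bounds jumps that are free \emph{relative to $D$}, not relative to $B$. A jump over $B'$ with $B\dcSim B'$ and $D\not\dcSim B'$ (exactly the case you mention in your obstacle, but with the roles reversed) is free for $B$ yet forced for $D$, so the induction hypothesis does not count it. Concretely, if $B\dcSim D_1$ and $B\dcSim D_2$ with $\stratu{\dcSim}{D_1}=\stratu{\dcSim}{D_2}=1$ (hence $d=2$) and some path jumps over $D_1$ and then over $D_2$, both jumps are free relative to $B$, exceeding $d-1=1$; your encoding, which records one canonical edge per free-for-$B$ jump, would then need three elements where you promised two.

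The paper sidesteps this by not counting free steps at all: it bounds the number of edge sequences directly by induction, splitting at the first context $((\sigma(C),Q),[\oc_{\_}])$ with $\stratu{\dcSim}{C}<d$ --- regardless of whether $B\dcSim C$. Before that point every $\onlyJump$ is over a box of $\dcSim$-stratum at least $d$, which forces $B\not\dcSim$ it, so the prefix is determined by the single canonical edge $(e_k,P_k)^{s-1}$ (this is where Theorem~\ref{injection_lemma_simple} and Lemma~\ref{injection_lemma_jump} are invoked, via a contradiction argument); the suffix starts from a box of strictly smaller $\dcSim$-stratum and the induction hypothesis contributes the factor $\left|\canRel{s-1}{\dirEdges{G}}\right|^{d-1}$. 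Your argument is easily repaired the same way: record one canonical edge at the split and recurse, letting the reference box for ``free'' change at each level --- precisely the interplay you already identify as the main obstacle, but which must be built into the induction statement itself rather than treated as a side condition.
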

\begin{proof}
  We prove it by induction on $d$. We suppose that there exists a path of the shape $((\sigma(B),P),[\oc_t]) \hspace{-0.15em}\csRel_s\hspace{-0.15em} ((e_1,P_1),T_1) \hspace{-0.15em}\csRel_s\hspace{-0.15em} \cdots \hspace{-0.15em}\csRel_s\hspace{-0.15em} ((e_{l-1},P_{l-1}),T_{l-1}) \hspace{-0.15em}\csRel_s\hspace{-0.15em} ((e_l,P_l),[\oc_{\sige}])$. If there is a context in the path of the shape $((\sigma(C),Q),[\oc_{\_}])$ with $\stratu{\dcSim}{C} < \stratu{\dcSim}{B}$, we set $k$ as the smallest index such that $((e_{k+1},P_{k+1}),T_{k+1})$ is such a context. Otherwise, we set $k=l$.
  
  First, let us notice that by induction hypothesis, there are at most $|\canRel{s-1}{\dirEdges{G}}|^{d-1}$ possibilities for $e_{k+1},\cdots,e_l$. Then, let us determine the number of possibilities for $e_1,\cdots,e_k$. There are at most $|\canRel{s-1}{\dirEdges{G}}|$ choices for $(e_k,P_k)^{s-1}$. Once $(e_k,P_k)^{s-1}$ is determined, we will prove by contradiction that it determines $e_1,\cdots,e_k$. Let us suppose that there exists two possible sequences: $e_1,\cdots,e_k$ and $e'_1,\cdots,e'_{k'}$. Then we consider the lowest $j$ such that $\restrCont{s-1}{((e_{k-j},P_{k-j}),T_{k-j})} \neq \restrCont{s-1}{((e'_{k'-j},P'_{k'-j}),T'_{k'-j})}$. By assumption we have $k>0$ and, by Theorem~\ref{injection_lemma_simple}, the ``$k-j$ and $k'-j$ steps'' must be $\onlyJump$ steps:
  \begin{align*}
    C_{k-j}=((\overline{\sigma_{i_1}(D)},P_{k-j}),[\oc_{v}])   &\onlyJump ((\sigma(D),P_{k-j}),[\oc_v])=C_{k+1-j}\\
    C'_{k'-j}=((\overline{\sigma_{i_2}(D)},P'_{k'-j}),[\oc_{v'}]) &\onlyJump ((\sigma(D),P'_{k'-j}),[\oc_{v'}])=C'_{k'+1-j}
  \end{align*}
  with $\restrPot{s-1}{C_{k+1-j}}=\restrPot{s-1}{C'_{k'+1-j}}$ and $\restrPot{s-1}{C_{k-j}} \neq \restrPot{s-1}{C'_{k'-j}}$. By Lemma~\ref{lemma_restrpot_eq}, the difference is not on the potential and by Lemma~\ref{lemma_restr_eq} the difference is not on the trace, so the difference is on the edge: $i_1 \neq i_2$. By definition of $\dcSim$, it means that $B \dcSim D$ and $\stratu{\dcSim}{D}<\stratu{\dcSim}{B}$. This contradicts the definition of $k$. So our hypothesis is false: if we fix $\restrPot{s-1}{e_k,P_k}=\restrPot{s-1}{e_{k'},P_{k'}}$, then $[e_1;\cdots;e_k]=[e'_1;\cdots;e'_{k'}]$.
 
  Thus, we proved that there are at most $|\canRel{s-1}{\dirEdges{G}}|$ possibilities for $e_1,\cdots,e_k$ and at most $|\canRel{s-1}{\dirEdges{G}}|^{d-1}$ possibilities for $e_{k+1},\cdots,e_l$. In total, there are at most $|\canRel{s-1}{\dirEdges{G}}|^d$ possibilities for $e_1,\cdots,e_l$.
\end{proof}

\subsection{Nesting}\label{subsection_depcontrol_digging}
Lemma~\ref{lemma_sequences_edges_simple} bounds the number of paths corresponding to copies, provided that $\stratSNLL$ and $\dcSim$ are acyclic. In the absence of $\digLab$ nodes, a copy $t$ of $(B,P)$ only contains $\sigl(\_)$, $\sigr(\_)$ and $\sige$ constructions. One can reconstruct $t$ by observing the list of contexts in the path, of the shape $((e_i,P_i),[\oc_{t_i}])$ with $\overline{e_i}$ being a premise of a contraction node. This is entirely determined by the sequence $e_1,\cdots,e_l$ of edges of the path $((\sigma(B),P),[\oc_{t}]) \csRel ((e_1,\_),\_) \csRel \cdots ((e_l,\_),[\oc_{\sige}])$. Thus, if there is no $\digLab$ node, Lemma~\ref{lemma_sequences_edges_simple} bounds the number of copies of $(B,P)$.

To understand why the $\digLab$ nodes break this property, we can consider an example in Figure~\ref{exp2_dig}. We can notice that $\stratSNLL$  and $\dcSim$ are both the empty relation so $\stratu{\stratSNLL}{B_2}=\stratu{\stratSNLL}{B_1}=\stratu{\stratSNLL}{B_0}=1$ and $\stratu{\dcSim}{B_2}=\stratu{\dcSim}{B_1}=\stratu{\dcSim}{B_0}=1$. However, if extended to $n$ boxes, $|\cop{B_n,[]}|\geq 3^n$ and the number of $\cutRel$ steps is not polynomial in $n$.

To guide intuition, we can study a similar situation in $\lambda$-calculus with pairs. The $\lambda$-term $(\lambda x.(\lambda y.\la y,y\ra)x)\cdots (\lambda x.(\lambda y.\la y,y\ra)x)z$ reduces to a $\lambda$-term of size $O(2^n)$ (with $n$ the number of successive applications of $\lambda x.(\lambda y.\la y,y\ra)x$). In this case $x$ has only one free occurrence in $\lambda x.(\lambda y.\la y,y\ra)x$ (it corresponds to the fact that there is only one auxiliary door in the boxes of Figure~\ref{exp2_dig}) however $x$ is duplicated inside $\lambda x.(\lambda y.\la y,y\ra)x$ (this term reduces to $\lambda x.\la x,x\ra$). This corresponds to the $\wn C$ node inside the boxes $B_i$ of Figure~\ref{exp2_dig}, which duplicates the box $B_{i-1}$. This is possible because the box $B_{i-1}$ gets inside the box $B_i$ because of the $\wn N$ node.

We call {\em nesting} any restriction on linear logic which aims to tackle this kind of chains. The nesting in $LLL$~\cite{girard1995light}, $SLLL$~\cite{lafont2004soft}, $mL^4$~\cite{baillot2010linear} and $MS$~\cite{roversi2009some} is the absence of $\digLab$ node. Lemma~\ref{lemma_sequences_edges_simple} states that there are at most $|\dirEdges{G}|$ sequences of edges corresponding to copies of $(B_2,[])$, the sequence being entirely determined by the last edge\footnote{Indeed $S_0=\varnothing$ and, for every potential edge $\restrPot{\csRel_{\varnothing}}{e_l,P_l}=(e_l,[\sige;\cdots;\sige])$. So knowing $\restrPot{0}{e_l,P_l}$ is equivalent to knowing the last edge of the path.}. For instance, knowing that $((\sigma(B_2),[]),[\oc_t]) \csRel_1^* ((\overline{l},[p]),[\oc_{\sige}])$ is enough to deduce that: 
\begin{equation*}
  ((\sigma(B_2),[]),[\oc_t]) \csRel_1 ((\overline{f},[]),[\oc_t]) \csRel_1 ((\overline{\sigma_1(B_1)},[]),[\oc_{\sigl(\sige)};\oc_{p}]) \csRel_1^2 ((\overline{l},[p]),[\oc_{\sige}])
\end{equation*}

Thus, we can deduce that $t$ is of the shape $\sign(\sigl(\sige),p)$, but we do not know $t$ entirely because $p$ can be any element of $\cop{B_1,[]}=\{\sige,\sign(\sige,\sige),\sign(\sigl(\sige),\sige),\sign(\sigr(\sige),\sige)\}$.  
 
Following the paths backwards we can observe that the most important step is $((\overline{f},[]),[\oc_{\sign(\sige, \sign(p,\sige))}]) \csRel_1 ((\overline{\sigma_1(B)},[],[\oc_{\sige};\oc_{\sign(p,\sige)}])$ where a difference on the second trace element (which comes from $B_1$ with $\stratu{\stratSNLL}{B_1}=1$) becomes a difference on the first trace element, which corresponds to $t$. The paths corresponding to  $\sign(\sige, \sign(\sigl(\sige),\sige))$ and $\sign(\sige, \sign(\sigr(\sige),\sige))$ are the same, but the paths corresponding to their simplifications are different.

The dependence of $|\cop{B_2,[]}|$ on $|\cop{B_1,[]}|$ in Figure~\ref{exp2_dig} is similar to the dependence in Figure~\ref{exp}. We define a relation $\nestSim$ on  boxes capturing this dependence. Intuitively $B \nestSim C$ means that $B$ is cut with a $\digLab$ node along reduction and the outer residue $B_e$ of $B$ is cut with an auxiliary door of $C$. The acyclicity of $\nestSim$ is a nesting condition.

\begin{figure}
  \centering    \tikzsetnextfilename{exp_proofnetdig2} 
  \begin{tikzpicture}
    \tikzstyle{door}=[draw, circle, inner sep=0.02cm]
    \node [princdoor] (bang1) at (0,0) {};
    \node (name1) at ($(bang1)+(0.4,0.3)$) {$\mathbf{B_2}$}; 
    \node [auxdoor]   (aux1)  at ($(bang1)+(-1,  0)$) {};
    \node [cont]      (cont1) at ($(aux1) +(   0,0.7)$) {};
    \node [tensor]    (tens1) at ($(bang1)+(   0,0.7)$) {};
    \node [ax]        (ax1b)  at ($(tens1)!0.45!(cont1)+(0,0.4)$) {};
    \node [ax]        (ax1h)  at ($(tens1)!0.45!(cont1)+(0,0.7)$) {};
    \node [dig]       (dig1)  at ($(aux1) +(0,-0.7)$){};
    \draw (bang1) -| ++(0.65,1.6) -| ($(aux1)+(-0.4,0)$) -- (aux1) -- (bang1);
    \draw [ar] (tens1)--(bang1);
    \draw [ar] (ax1b) to [out=  0,in=120] (tens1);
    \draw [ar] (ax1h) to [out=-10,in=70]  (tens1);
    \draw [ar] (ax1b) to [out=-170,in=60] (cont1);
    \draw [ar] (ax1h) to [out=-170,in=120](cont1);
    \draw [ar] (cont1) -- (aux1);        
    \draw [ar] (aux1)  -- (dig1);
    \draw [ar,dashed] (dig1) to [out=-90,in=0] ($(dig1)+(-0.4,-0.4)$);
    
    \node [princdoor] (bang2) at ($(bang1)+(2.6,0)$) {};
    \node (name2) at ($(bang2)+(0.4,0.3)$) {$\mathbf{B_1}$}; 
    \node [auxdoor]   (aux2)  at ($(bang2)+(-1.2,  0)$) {};
    \node [cont]      (cont2) at ($(aux2) +(   0,0.7)$) {};
    \node [tensor]    (tens2) at ($(bang2)+(   0,0.7)$) {};
    \node [ax]        (ax2b)  at ($(tens2)!0.5!(cont2)+(0,0.4)$) {};
    \node [ax]        (ax2h)  at ($(tens2)!0.5!(cont2)+(0,0.7)$) {};
    \node [dig]       (dig2)  at ($(aux2) +(   0,-0.7)$){};
    \draw (bang2) -| ++(0.65,1.6) -| ($(aux2)+(-0.4,0)$) -- (aux2) -- (bang2);
    \draw [ar] (tens2)--(bang2);
    \draw [ar] (ax2b) to [out=  0,in=120] (tens2);
    \draw [ar] (ax2h) to [out=-10,in=70]  (tens2);
    \draw [ar] (ax2b) to [out=-170,in=60] (cont2);
    \draw [ar] (ax2h) to [out=-170,in=120] node [edgename,above left=-0.05cm] {$l$} (cont2);
    \draw [ar] (cont2) -- (aux2) node [edgename,right] {$e$};        
    \draw [ar] (aux2)  -- (dig2);
    \node [cut](cut12) at ($(bang1)!0.5!(aux2)+(0,-1)$) {};
    \draw [ar] (bang1) to [out=-90,in=180] (cut12);
    \draw [ar] (dig2)  to [out=-100,in=  0] node [edgename,pos=0.4,right] {$f$} (cut12);
    
    \node [princdoor] (bang3) at ($(bang2)+(2.6,0)$) {};
    \draw [ar] (bang3) --++ (0,-1);
    \node (name3) at ($(bang3)+(0.4,0.3)$) {$\mathbf{B_0}$}; 
    \node [auxdoor]   (aux3)  at ($(bang3)+(-1.2,  0)$) {};
    \node [cont]      (cont3) at ($(aux3) +(   0,0.7)$) {};
    \node [tensor]    (tens3) at ($(bang3)+(   0,0.7)$) {};
    \node [ax]        (ax3b)  at ($(tens3)!0.5!(cont3)+(0,0.4)$) {};
    \node [ax]        (ax3h)  at ($(tens3)!0.5!(cont3)+(0,0.7)$) {};
    \node [dig]       (dig3)  at ($(aux3) +(   0,-0.7)$){};
    \draw (bang3) -| ++(0.65,1.6) -| ($(aux3)+(-0.4,0)$) -- (aux3) -- (bang3);
    \draw [ar] (tens3)--(bang3);
    \draw [ar] (ax3b) to [out=  0,in=120] (tens3);
    \draw [ar] (ax3h) to [out=-10,in=70]  (tens3);
    \draw [ar] (ax3b) to [out=-170,in=60] (cont3);
    \draw [ar] (ax3h) to [out=-170,in=120](cont3);
    \draw [ar] (cont3) -- (aux3) node [edgename] {};
    \draw [ar] (aux3)  to (dig3);
    \node [cut](cut23) at ($(bang2)!0.5!(aux3)+(0,-1)$) {};
    \draw [ar] (bang2) to [out=-90,in=180] (cut23);
    \draw [ar] (dig3)  to [out=-100,in=  0] node [edgename,pos=0.4,right] {$g$} (cut23);
  \end{tikzpicture}
  \caption{This proof-net (if extended to $n$ boxes) reduces in ${\cal O}(2^n)$ reduction steps.} \label{exp2_dig}
\end{figure}
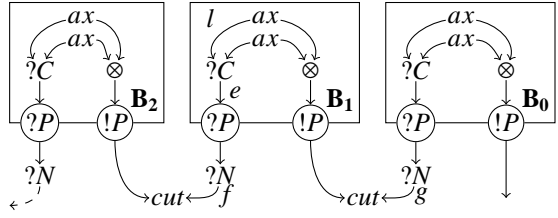                                      

\begin{definition}\label{def_nestsim}
  We set $B \nestSim C$ if there exists a non-standard signature $t$ and a path of the shape:
  \begin{equation*}
    ((\sigma(B),P),[\oc_t]) \csRel^+ ((\sigma(C),Q),[\oc_{\sige}])
  \end{equation*}
\end{definition}

For example, in Figure~\ref{exp2_dig}, we have $B_2 \nestSim B_1$ because $\sigp(\sige)$ is non-standard and $((\sigma(B_2),[]),[\oc_{\sigp(\sige)}]) \csRel^3 ((\sigma(B_1),[]),[\oc_{\sige}])$. To prove that $\nestSim$-stratification (together with $\stratSNLL$-stratification and $\dcSim$-stratification) implies polynomial time, we will need some technical lemmas to handle simplifications of copies.

In the following, we consider a $\stratSNLL$-stratified, $\nestSim$-stratified, $\dcSim$-stratified proof-net $G$. Let $s,n \in \mathbb{N}$, we set\label{def_Tsn} $T_{s,n}= \Set*{ B \in \boxset{G} }{ (\stratu{\stratSNLL}{B},\stratu{\nestSim}{B}) \leqlex (s,n)  }$ with $\leqlex$ the usual lexicographic order: \label{def_leqlex}$(a,b) \leqlex (a',b')$ iff $a < a'$ or ($a \leq a'$ and $b \leq b'$). To simplify notations, we write $\csRel_{s,n}$ for $\csRel_{T_{s,n}}$, $\copRel{s,n}{C}$ for $\copRel{\csRel_{T_{s,n}}}{C}$ and so on.

\begin{lemma}\label{lemma_bound_poly_toutca_simple}
  For $s,n \in \mathbb{N}-\{0\}$ and $(B,P) \in \can{\boxset{G}}$,
  \begin{equation*}
    \left | \copRel{s,n}{B,P} \right | \leq  \left | \canRel{s-1}{\dirEdges{G}} \right |^{\left|\dcSim\right|} \cdot  \left |\dirEdges{G}\right | \cdot  \left ( \max_{(C,Q) \in \pot{\boxset{G}}}   \left| \copRel{s,n-1}{C,Q} \right| \right )^{\depthG{G}}
  \end{equation*}
\end{lemma}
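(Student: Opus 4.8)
\textbf{Proof plan for Lemma~\ref{lemma_bound_poly_toutca_simple}.}
The plan is to bound $\left|\copRel{s,n}{B,P}\right|$ by exhibiting an injection from the set of copies into a product of three sets whose cardinalities are the three factors on the right-hand side. Fix a potential box $(B,P) \in \can{\boxset{G}}$. Any $\csRel_{s,n}$-copy $t$ of $(B,P)$ comes with a path $((\sigma(B),P),[\oc_t]) \csRel_{s,n} ((e_1,P_1),T_1) \csRel_{s,n} \cdots \csRel_{s,n} ((e_l,P_l),[\oc_{\sige}])$. By Lemma~\ref{lemma_last_jump} we may, without loss of generality, focus on the portion of the path starting from the last context of the shape $((\sigma(B'),P'),[\oc_{t'}])$, but the key is that a copy is standard, so its ``outermost'' structure ($\sigl/\sigr/\sige$ choices on the leftmost branches) is recoverable from the edge sequence exactly as in the digging-free case, while the $\sign(\_,\_)$ subtrees created by $\digLab$ nodes come from boxes of strictly smaller $\nestSim$-stratum. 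I would make $t$ correspond to the triple consisting of: (i) the sequence of edges $(e_i)_{1 \le i \le l}$ of the path (more precisely, enough of it to recover the $\sigl/\sigr$ structure), (ii) the $\csRel_{s-1}$-canonical potential $\restrPot{s-1}{e_l,P_l}$ of the final edge, and (iii), for each context of the path lying on an auxiliary edge of a box $C$ with $B \nestSim C$, an element of $\copRel{s,n-1}{C,Q}$ describing the ``nested'' signature injected at that step.

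First I would bound the number of edge sequences. Since $G$ is $\stratSNLL$-stratified, Lemma~\ref{lemma_sequences_edges_simple} (applied with the restriction level $s$) tells us there are at most $\left|\canRel{s-1}{\dirEdges{G}}\right|^{\stratu{\dcSim}{B}} \le \left|\canRel{s-1}{\dirEdges{G}}\right|^{\left|\dcSim\right|}$ sequences $(e_i)_i$ of directed edges admitting such a path — this accounts for the first factor. Next, for a fixed edge sequence, the final context $((e_l,P_l),[\oc_{\sige}])$ is pinned down up to $\restrPot{s-1}{e_l,P_l}$, which ranges over $\canRel{s-1}{e_l} \subseteq \canRel{s-1}{\dirEdges{G}}$; but in fact the last edge $e_l$ must be the principal door of $B$ or a (reverse) premise of a $\wn C$/$\wn N$ node (as noted in the footnote near Theorem~\ref{injection_lemma_simple}), and in any case there is a single choice of $e_l$ per edge sequence, so this contributes at most $\left|\dirEdges{G}\right|$ — the second factor. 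Finally, for the digging steps: whenever the path crosses a $\digLab$ node it enters a box $C$ by one of its auxiliary doors and one has $B \nestSim C$ by Definition~\ref{def_nestsim} (a non-standard signature traveling out of $\sigma(B)$ reaches $\sigma(C)$), so $\stratu{\nestSim}{C} < \stratu{\nestSim}{B} \le n$, hence $\stratu{\nestSim}{C} \le n-1$; also $\stratu{\stratSNLL}{C} \le s$, so $C \in T_{s,n-1}$. The signature injected at such a step is (a simplification of) a $\csRel_{s,n-1}$-copy of a potential box $(C,Q)$, and the number of $\digLab$ crossings along the path is at most $\depthG{G}$ (each increments the depth of the current edge inside $B$'s residue, or more carefully: the leftmost-branch argument of Lemma~\ref{lemma_bound_cop_acyc} combined with the fact that only $\depthG{G}$ levels of nesting can occur). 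This yields the third factor $\left(\max_{(C,Q)}\left|\copRel{s,n-1}{C,Q}\right|\right)^{\depthG{G}}$.

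The heart of the argument is verifying that the map $t \mapsto$ (edge sequence, $\restrPot{s-1}{e_l,P_l}$, tuple of nested copies) is genuinely injective, i.e. that $t$ can be reconstructed from this data. For this I would trace the path backwards: by Theorem~\ref{injection_lemma_simple} and Lemma~\ref{injection_lemma_jump} the $\noJump_{s-1}$ steps and the $\onlyJump$ steps are traceable from the restricted-context information, and by Lemma~\ref{lemma_restrCont_left} / Lemma~\ref{lemma_oclefttrace_remonter} the leftmost trace element $\oc_{(\cdot)}$ is rebuilt step by step: crossing a $\wn C$ node upward prepends a $\sigl$ or $\sigr$ (determined by the edge taken), crossing a $\digLab$ node prepends a $\sign(u,\cdot)$ or $\sigp(\cdot)$ where $u$ is exactly the copy of the nested box $(C,Q)$ recorded in the tuple, and all other steps leave it essentially unchanged. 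Since $t$ is standard its simplifications are a total order (Lemma~\ref{lemma_total_order}), and the nested copies are themselves standard, so there is no ambiguity. The main obstacle I anticipate is the bookkeeping of exactly which subtrees of $t$ come from which nested box and the proof that at most $\depthG{G}$ such nested contributions can occur along a single copy-path (the naive bound could be larger if a box is cut with many $\digLab$ nodes in succession); here I would lean on the observation that a $\csRel_{s,n}$-path corresponding to a standard copy of $(B,P)$ visits the principal door of $B$-residues at nesting depth at most $\depthG{G}$, mirroring the round-by-round/leftmost-branch control used in Lemma~\ref{lemma_bound_cop_acyc}, and that the data recorded for distinct digging steps lands in a product of at most $\depthG{G}$ copies of $\max_{(C,Q)}\left|\copRel{s,n-1}{C,Q}\right|$. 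Combining the three bounds gives the stated inequality.
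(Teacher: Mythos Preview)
Your overall shape is right (an injection whose image is bounded by the three factors), and the first factor via Lemma~\ref{lemma_sequences_edges_simple} is exactly what the paper does. But you have misidentified what the second and third factors are, and this is a genuine gap, not just a presentational difference.

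In the paper the injection is simply
\[
  t \;\longmapsto\; \bigl(\,[e_1;\ldots;e_l]\,,\ \restrPot{s,n-1}{e_l,P_l}\,\bigr),
\]
i.e.\ the edge sequence together with the $\csRel_{s,n-1}$-restriction (not the $\csRel_{s-1}$-restriction) of the \emph{final} potential edge. The last two factors of the statement are then nothing but the trivial bound
\[
  \bigl|\canRel{s,n-1}{\dirEdges{G}}\bigr|\ \le\ |\dirEdges{G}|\cdot\Bigl(\max_{(C,Q)}|\copRel{s,n-1}{C,Q}|\Bigr)^{\depthG{G}},
\]
where the exponent $\depthG{G}$ comes from the fact that a canonical potential is a list of at most $\depthG{G}$ copies, one per enclosing box of $e_l$. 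It has nothing to do with counting $\digLab$-crossings along the path, and your worry that ``at most $\depthG{G}$ such nested contributions can occur along a single copy-path'' is well-founded: that claim is not proved, not needed, and probably false in general.

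The substantive work in the paper is the injectivity of the map above, and here the argument is different from yours. One assumes $t\neq t'$ share the same edge sequence and the same $\restrPot{s,n-1}{e_l,P_l}$, takes the strict simplifications $u\complStrict t$, $u'\complStrict t'$ whose leftmost branches are the first branches on which $t$ and $t'$ differ, and looks at the first index $i$ where the $u$-path diverges from the $t$-path. That divergence is necessarily a $\onlyJump$ versus $\noJump$ step at an auxiliary door of some box $C$; since $u$ is non-standard, Definition~\ref{def_nestsim} gives $B\nestSim C$, hence $C\in T_{s,n-1}$. Now one traces back the $t$- and $t'$-paths using Theorem~\ref{injection_lemma_simple} and Lemma~\ref{injection_lemma_jump} from the common $\restrCont{s,n-1}$ at the end, obtaining that the signatures $v,v'$ entering $C$ at step $i$ satisfy $\restrSig{s,n-1}{((\sigma(C),\cdot),[\oc_v])}=\restrSig{s,n-1}{((\sigma(C),\cdot),[\oc_{v'}])}$; because every box reached from $(\sigma(C),\cdot)$ along these copy-paths also has $\nestSim$-stratum $\le n-1$, this restricted equality upgrades to $v=v'$, contradicting the choice of $u,u'$.

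So the fix is: drop the third component of your triple entirely, replace your $\restrPot{s-1}{e_l,P_l}$ by $\restrPot{s,n-1}{e_l,P_l}$, and prove injectivity by the contradiction argument above rather than by trying to explicitly record a nested copy at each digging step.
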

\begin{proof}
  If $\stratu{\nestSim}{B}>n$, then $((\sigma(B),P),[\oc_{\_}]) \not \csRel_{s,n}$ and $\copRel{s,n}{B,P}=\{\sige\}$ so the lemma stands. Otherwise (if $\stratu{\nestSim}{B}\leq n$), let us consider $t,t' \in \copRel{s,n}{B,P}$. By definition, there exists paths of the shape:
  \begin{align*}
    ((\sigma(B),P),[\oc_{t}]) &\csRel_{s,n} \hspace{-0.15em}((e_1,P_1),T_1) \hspace{-0.7em}&&\csRel_{s,n} \cdots \hspace{-0.7em}&&\csRel_{s,n} ((e_k,P_k),T_k) \hspace{-0.2em}&&\csRel_{s,n} ((e,Q),[\oc_{\sige}])\\
    ((\sigma(B),P),[\oc_{t'}]) &\csRel_{s,n} \hspace{-0.15em}((e'_{1},P'_{1}),T'_{1})\hspace{-1.5em} &&\csRel_{s,n} \cdots \hspace{-1.7em}&& \csRel_{s,n} ((e'_{k'},P'_{k'}),T'_{k'}) \hspace{-1.5em}&&\csRel_{s,n} ((e',Q'),[\oc_{\sige}])
 \end{align*}
 By Lemma~\ref{lemma_sequences_edges_simple}, there are at most $\left| \canRel{s-1}{\dirEdges{G}} \right |^{\left|\dcSim\right|}$ possible choices for $[e_1;\cdots;e_k]$. Let us suppose that $[e_1;\cdots;e_k]=[e'_{1};\cdots;e'_{k'}]$ and $\restrPot{s,n-1}{e,Q}=\restrPot{s,n-1}{e',Q'}$. We will prove by contradiction that $t=t'$.
 
Let us suppose that $t \neq t'$ and consider them as trees. Because $[e_1;\cdots;e_k]=[e'_{1};\cdots;e'_{k'}]$, their leftmost branches are the same. We consider the leftmost branches, $b$ and $b'$, which are different in $t$ and $t'$. Let us consider the simplifications $u$ and $u'$ of $t$ and $t'$ whose leftmost branches are $b$ and $b'$. Thus $u \complStrict t$, $u' \complStrict t'$, and the leftmost branches of $u$ and $u'$ are different.

By definition of copies $((\sigma(B),P),[\oc_u]) \csRel^* ((\_,\_),[\oc_{\sige}])$ and $((\sigma(B),P),[\oc_{u'}]) \csRel^* ((\_,\_),[\oc_{\sige}])$. We consider the first step of those paths which differs from the paths corresponding to $t$ and $t'$. Formally, we consider the lowest $i \in \mathbb{N}$ such that $((\sigma(B),P),[\oc_u]) \csRel^i ((f_i,\_),\_)$ with $f_i \neq e_i$. We are in the following case (with $v \simpl w$ and $v' \simpl w'$):
\begin{align*}
  ((\sigma(B),P),[\oc_t]) \hspace{0.3em} \csRel^{i-1} \hspace{0.2em}((\overline{\sigma_a(C)},P_i),V.\oc_{v}) \hspace{0.6em}&\noJump ((e_i,P_i.v),V) \\
  ((\sigma(B),P),[\oc_{t'}]) \csRel^{i-1} ((\overline{\sigma_a(C)},P'_i),V'.\oc_{v'}) &\noJump ((e_i,P'_i.v'),V')\\
  ((\sigma(B),P),[\oc_u]) \csRel^{i-1} \hspace{0.2em}((\overline{\sigma_a(C)},P_i),[\oc_{w}]) \hspace{0.4em}&\onlyJump ((\sigma(C),P_i),[\oc_{w}]) \\
  ((\sigma(B),P),[\oc_{u'}]) \csRel^{i-1} ((\overline{\sigma_a(C)},P'_i),[\oc_{w'}]) \hspace{0.2em}&\onlyJump ((\sigma(C),P'_i),[\oc_{w'}])
\end{align*}

We supposed $\restrPot{s,n-1}{e,Q}=\restrPot{s,n-1}{e',Q'}$. By induction on $k-j$, for $1 \leq j \leq k$, we have $\restrCont{s,n-1}{((e_j,P_j),T_j)}=\restrCont{s,n-1}{((e_j,P'_j),T'_j)}$: we use Theorem~\ref{injection_lemma_simple} for $\noJump$ steps and Lemma~\ref{injection_lemma_jump} for $\onlyJump$ steps (because $[e_1;\cdots;e_k]=[e'_1;\cdots;e'_k]$). In particular, $\restrCont{s,n-1}{((e_i,P_i.v),V)}=\restrCont{s,n-1}{((e_i,P'_i.v'),V')}$ so the signatures $\restrSig{s,n-1}{(\restrPot{s,n-1}{\sigma(C),P_i},[\oc_v])}$ and $\restrSig{s,n-1}{(\restrPot{s,n-1}{\sigma(C),P'_i},[\oc_{v'}])}$ are equal.

$u$ is a strict simplification of $t$ so $u$ is not standard. By definition of $\nestSim$, we have $B \nestSim C$ so $\stratu{\nestSim}{C}<\stratu{\nestSim}{B} \leq n$. One can verify that, for every box $D$ such that $((\sigma(C),Q_i),[\oc_w]) \csRel^* ((\sigma(D),\_),[\oc_{\_}])$ or $((\sigma(C),Q'_i),[\oc_{w'}]) \csRel^* ((\sigma(D),\_),[\oc_{\_}])$ we also have $B \nestSim D$ (so $\stratu{\nestSim}{D} \leq n-1$). Thus 

\begin{align*}
  v &= \restrSig{s,n}{((\sigma(C),P_i),[\oc_v])} &&\text{Because }t \in \copRel{s,n}{B,P}\\
    &= \restrSig{s,n-1}{((\sigma(C),P_i),[\oc_v])} &&\text{Because ``$\stratu{\nestSim}{D} \leq n-1$''}\\
    &= \restrSig{s,n-1}{((\sigma(C),P'_i),[\oc_{v'}])} &&\text{Proved in the previous paragraph.} \\
    &= \restrSig{s,n}{((\sigma(C),P'_i),[\oc_{v'}])} && \text{Because ``$\stratu{\nestSim}{D} \leq n-1$''} \\
  v &=v' &&\text{Because }t' \in \copRel{s,n}{B,P}
\end{align*}
Because $u=u'$ and the $i-1$ first steps from $((\sigma(B),P),[\oc_u])$ and $((\sigma(B),P'),[\oc_{u'}])$ take the same edges ($e_1,\cdots,e_{i-1}$) the leftmost branches of $u$ and $u'$ are equal. This is a contradiction. Our supposition was false, under our assumptions $t$ and $t'$ are equal. 

So we proved that, if we choose $[e_1;\cdots;e_k]$ and $\restrPot{s,n-1}{e,Q}$ then $t$ is uniquely determined. Thus,
\begin{align*}
    \left | \copRel{s,n}{B,P} \right | &\leq  \left | \canRel{s-1}{\dirEdges{G}} \right |^{\left|\dcSim\right|} \cdot  \left | \canRel{s,n-1}{\dirEdges{G}} \right|\\
    \left | \copRel{s,n}{B,P} \right | &\leq  \left | \canRel{s-1}{\dirEdges{G}} \right |^{\left|\dcSim\right|} \cdot  \left |\dirEdges{G}\right | \cdot  \left ( \max_{(C,Q) \in \pot{\boxset{G}}}   \left| \copRel{s,n-1}{C,Q} \right| \right )^{\depthG{G}}
\end{align*}
\end{proof}

\begin{theorem}\label{theo_bound_poly_nest} Let $x=\left| \dirEdges{G} \right|$, $S=\left|\stratSNLL\right|$, $D=\left|\dcSim\right|$, $N=\left|\nestSim \right|$, and $\partial=1+\depthG{G}$, then:
  \begin{equation*}
    \max_{(B,P) \in \pot{\boxset{G}}} \left| \cop{B,P}   \right| \leq x^{D^S \cdot \partial^{N \cdot S+N+S}-1}
  \end{equation*}
\end{theorem}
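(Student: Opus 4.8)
The plan is to bound $\max_{(B,P)}\left|\cop{B,P}\right|$ by iterating Lemma~\ref{lemma_bound_poly_toutca_simple} along the two strata orderings: an outer induction on the $\stratSNLL$-stratum and, inside each level, an induction on the $\nestSim$-stratum. First I would record the combinatorics of the sets $T_{s,n}$. Since every stratum is $\ge 1$, we have $T_{0,n}=\varnothing$ for all $n$, hence $\copRel{0,n}{B,P}=\{\sige\}$ for every potential box; and $T_{s,0}=\{B\mid\stratu{\stratSNLL}{B}\le s-1\}=T_{s-1,N}=S_{s-1}$ (using $\stratu{\nestSim}{C}\le N$ for all $C$). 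At the top, $T_{S,N}=\boxset{G}$ because $\stratu{\stratSNLL}{B}\le S$ and $\stratu{\nestSim}{B}\le N$ for every box, so $\csRel_{S,N}=\csRel$ and $\copRel{S,N}{B,P}=\cop{B,P}$. Hence, writing $c_{s,n}:=\max_{(B,P)}\left|\copRel{s,n}{B,P}\right|$ over canonical potential boxes (this suffices for the statement) and $b_s:=c_{s,N}$, the target reduces to a bound on $b_S$, with the boundary data $b_0=1$ and $c_{s,0}=b_{s-1}$.

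\textbf{Inner induction.} For fixed $s\ge 1$ and $1\le n\le N$, Lemma~\ref{lemma_bound_poly_toutca_simple} gives
\begin{equation*}
  c_{s,n}\ \le\ \left|\canRel{s-1}{\dirEdges{G}}\right|^{D}\cdot x\cdot c_{s,n-1}^{\,\depthG{G}},\qquad x:=\left|\dirEdges{G}\right|,
\end{equation*}
and I would control the first factor through the \emph{same} quantity $b_{s-1}$: since $\canRel{s-1}{}=\canRel{\csRel_{S_{s-1}}}{}$ and $S_{s-1}=T_{s-1,N}$, Definition~\ref{def_canonicalrel} yields $\left|\canRel{s-1}{e}\right|\le b_{s-1}^{\depth{e}}\le b_{s-1}^{\depthG{G}}$, hence $\left|\canRel{s-1}{\dirEdges{G}}\right|\le x\, b_{s-1}^{\depthG{G}}$. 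Substituting, $c_{s,n}\le x^{D+1}\,b_{s-1}^{D\,\depthG{G}}\,c_{s,n-1}^{\,\depthG{G}}$; unrolling this from $c_{s,0}=b_{s-1}$ over the $N$ values of $n$ turns the geometric progression with ratio $\depthG{G}$ into a bound $b_s\le x^{(D+1)\Sigma}\,b_{s-1}^{\,D\,\depthG{G}\,\Sigma+\depthG{G}^{N}}$, where $\Sigma=\sum_{i=0}^{N-1}\depthG{G}^{\,i}$. Writing $b_s\le x^{E_s}$ (legitimate since $x\ge 2$ and $b_0=x^{0}$), this is a linear recurrence $E_s\le\beta E_{s-1}+\gamma$ with $\beta=D\,\depthG{G}\,\Sigma+\depthG{G}^{N}$, $\gamma=(D+1)\Sigma$, and $E_0=0$.

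\textbf{Outer induction and accounting.} It remains to solve this recurrence and show $E_S\le D^{S}\partial^{NS+N+S}-1$. Here $\partial=1+\depthG{G}$, so $\depthG{G}=\partial-1$; the elementary estimates are $\Sigma=\sum_{i<N}(\partial-1)^{i}<\partial^{N}$, $(\partial-1)^{N}<\partial^{N}$, $\depthG{G}\,\Sigma<\partial^{N+1}$, $D+1\le D\partial$, together with $\partial\ge 2$, $D,N,S\ge 1$, $x\ge 4$ (for a net with at least two boxes; otherwise the statement is immediate). These collapse $\gamma$ into a bound in $D\partial^{N}$ and $\beta$ into a bound in $D\partial^{N+1}$; then the solution $E_s\le\gamma(\beta^{s}-1)/(\beta-1)$ of the recurrence, evaluated at $s=S$ with the boundary value $E_0=0$, gives — after absorbing the leftover constant factors and the ``$+1$''s into the exponent via $k+1\le\partial k$ for $k\ge 1$ — exactly $E_S\le D^{S}\partial^{NS+N+S}-1$. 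Since $b_S=\max_{(B,P)}\left|\cop{B,P}\right|$, this is the claim.

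\textbf{Main obstacle.} The delicate point is precisely this last step: one must choose the intermediate inequalities so that the accumulated constants match the stated exponent $D^{S}\partial^{NS+N+S}-1$ rather than something coarser (for instance a spurious factor $2^{S}$ from bounding a geometric sum by twice its largest term, or an extra $D$ from $D+1\le D\partial$), and one has to treat $\depthG{G}=1$ (where $\Sigma=N$) and $\depthG{G}\ge 2$ uniformly. Everything preceding it — the $T_{s,n}$ bookkeeping, the identification $\cop{B,P}=\copRel{S,N}{B,P}$, and the two nested applications of Lemma~\ref{lemma_bound_poly_toutca_simple} — is routine.
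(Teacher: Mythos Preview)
Your approach is essentially the paper's: both set up a doubly-indexed sequence bounding $\max_{(B,P)}|\copRel{s,n}{B,P}|$, invoke Lemma~\ref{lemma_bound_poly_toutca_simple} for the inner recursion on $n$, solve it as a geometric series, and then iterate on $s$. The paper works directly with a sequence $u_{s,n}$ satisfying $u_{s,n}=u_{s-1,N}^{\depthG{G}\cdot D}\cdot x\cdot u_{s,n-1}^{\depthG{G}}$, derives the closed form $u_{s,n}=(u_{s-1,N}^{\depthG{G}D}\cdot x)^{\sum_{i<n}\depthG{G}^{i}}$, and bounds $u_{s,N}$ by $x^{\partial^N}\cdot u_{s-1,N}^{D\partial^N(\partial-1)}$; the outer induction then gives $u_{S,N}\le x^{\partial^N\cdot(D\partial^N(\partial-1))^S}\le x^{D^S\partial^{NS+N+S}-1}$ using $(\partial-1)^S\le\partial^S-1$. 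Your reformulation via $b_s=x^{E_s}$ and a linear recurrence $E_s\le\beta E_{s-1}+\gamma$ is an equivalent packaging of the same computation.

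Two minor remarks. First, you are actually more careful than the paper on the boundary data: you correctly note $T_{s,0}=S_{s-1}$ so that $c_{s,0}=b_{s-1}$ (the paper sets $u_{s,0}=1$), and you keep the factor $x^D$ coming from $|\canRel{s-1}{\dirEdges{G}}|\le x\,b_{s-1}^{\depthG{G}}$; these extra contributions are harmless and get absorbed in the final inequality. Second, your ``main obstacle'' is real but resolvable: for $\depthG{G}\ge2$ one checks $\gamma\le\beta-1$ and $\beta\le D\partial^{N+1}$ (using $\sum_{i\le N}(\partial-1)^i\le\partial^N$), which immediately yields $E_s\le D^s\partial^{(N+1)s+N}-1$ by induction; for $\depthG{G}=1$ one has $\beta=DN+1$, $\gamma=(D+1)N$, and a direct check shows the same inductive inequality holds since $D\cdot2^{N+1}-DN-1\ge1$ and $2^N>N$. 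So the accounting does close exactly at the stated exponent.
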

\begin{proof}
For $s,n \in \mathbb{N}$, we set $u_{0,n}=u_{s,0}=1$ and $u_{s,n}= u_{s-1,N}^{\depthG{G} \cdot D} \cdot x \cdot u_{s,n-1}^{\depthG{G}}$. Then, thanks to Lemma~\ref{lemma_bound_poly_toutca_simple}, we can prove by induction on $(s,n)$ that $\max_{(B,P) \in \pot{\boxset{G}} }\left|\copRel{s,n}{B,P} \right| \leq u_{s,n}$. One can verify by induction on $n$ that:
\begin{equation*}
  u_{s,n} = \left(u^{\depthG{G} \cdot D}_{s-1,N} \cdot x \right)^{\sum_{i=0}^{n-1} \depthG{G}^i}
\end{equation*}
Thus, for every $n \in \mathbb{N}$,
\begin{equation*}
  u_{s,n} \leq \left(u^{(\partial-1) \cdot D}_{s-1,N} \cdot x\right)^{\partial^n} = x^{\partial^n} \cdot u_{s-1,N}^{D\cdot \partial^{n} \cdot (\partial-1)}
\end{equation*} 
Thus, we prove by induction on $s$, that:
\begin{align*}
  u_{s,N} &\leq \left(x^{\partial^N} \right)^{\sum_{j=0}^{s-1} (D \cdot \partial^{N} \cdot (\partial - 1))^j} \leq \left(x^{\partial^N}\right)^{(D \cdot \partial^{N} (\partial-1))^s} \leq x^{D^s \cdot \partial^{s\cdot N + s + N}- 1}
\end{align*}
Finally, let us notice that $\csRel_{S,N}=\csRel$, so $\copRel{S,N}{B,P}=\cop{B,P}$. Thus,
\begin{align*}
  \max_{(B,P) \in \pot{\boxset{G}}} \left| \cop{B,P}   \right| &= \max_{(B,P) \in \pot{\boxset{G}} }\left|\copRel{S,N}{B,P} \right| \\
  & \leq u_{S,N} \\
\max_{(B,P) \in \pot{\boxset{G}}} \left| \cop{B,P}   \right| &\leq x^{D^S \cdot \partial^{S \cdot N + S + N}-1}
\end{align*}
\end{proof}

\begin{coro}\label{coro_bound_poly_nest}Let us consider a $\stratSNLL$-stratified, $\nestSim$-stratified, $\dcSim$-stratified proof-net $G$. Let $x=\left| \dirEdges{G} \right|$, $S=\stratG{G}$, $D=\left| \dcSim \right|$, $N=\left| \nestSim \right|$, and $\partial=1+\depthG{G}$, then:
  \begin{equation*}
    W_G \leq x^{D^{S} \cdot \partial^{(N+1) \cdot (S+1)}}
  \end{equation*}
\end{coro}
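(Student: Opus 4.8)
The plan is to obtain Corollary~\ref{coro_bound_poly_nest} as an almost immediate consequence of Theorem~\ref{theo_bound_poly_nest}: once $\max_{(B,P)\in\pot{\boxset{G}}}|\cop{B,P}|$ has been bounded, the bound on $W_G$ follows from the elementary estimate relating $W_G$ to the number of edges of $G$, its depth, and its number of copies, plus a routine simplification of exponents. There is no real conceptual obstacle --- the only thing needing care is the bookkeeping of small additive and multiplicative constants, which is why I would dispatch the degenerate boxless case at the outset.

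First I would fix the notation of the statement and record a few elementary facts. Write $M=\max_{(B,P)\in\pot{\boxset{G}}}|\cop{B,P}|$. We may assume $\boxset{G}\neq\varnothing$ (if $G$ has no box then $W_G=|\edges{G}|$ and the situation is degenerate; as in the proof of Theorem~\ref{theoStratElementaryBound}, the argument tacitly needs at least one box). Then $\boxset{G}\neq\varnothing$ forces $\depthG{G}\geq 1$ (the premise of a principal door belongs to its box), i.e.\ $\partial\geq 2$; every $\stratSNLL$-, $\dcSim$- or $\nestSim$-stratum of a box is $\geq 1$, so $S,D,N\geq 1$; and $\cop{B,P}\ni\sige$ gives $M\geq 1$, while $x=|\dirEdges{G}|\geq 2$.

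Next I would use the estimate --- immediate from Definitions~\ref{def_canonical} and~\ref{def_wg} --- that $W_G=|\can{\edges{G}}|=\sum_{e\in\edges{G}}|\can{e}|$ with $|\can{e}|\leq M^{\depth{e}}\leq M^{\depthG{G}}$, hence $W_G\leq|\edges{G}|\cdot M^{\depthG{G}}=2x\cdot M^{\partial-1}$. Since $G$ is $\stratSNLL$-, $\nestSim$- and $\dcSim$-stratified, Theorem~\ref{theo_bound_poly_nest} gives $M\leq x^{E-1}$ with $E:=D^{S}\partial^{SN+S+N}$. Combining these and using $2\leq x$,
\begin{equation*}
  W_G\;\leq\;2x\cdot\bigl(x^{E-1}\bigr)^{\partial-1}\;=\;2\,x^{1+(E-1)(\partial-1)}\;\leq\;x^{2+(E-1)(\partial-1)}.
\end{equation*}

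Finally I would verify the exponent inequality $2+(E-1)(\partial-1)\leq E\partial$: expanding gives $2+(E-1)(\partial-1)=E\partial-(E+\partial)+3$, so the inequality is equivalent to $E+\partial\geq 3$, which holds because $E\geq 1$ and $\partial\geq 2$. Since $E\partial=D^{S}\partial^{SN+S+N+1}=D^{S}\partial^{(N+1)(S+1)}$, we obtain $W_G\leq x^{D^{S}\partial^{(N+1)(S+1)}}$, the claimed bound. The step I expect to be the only delicate point is precisely this constant juggling --- one must peel off the $\boxset{G}=\varnothing$ (equivalently $\depthG{G}=0$) case first, since there the naked inequality $2x\leq x^{D^{S}\partial^{(N+1)(S+1)}}$ can fail.
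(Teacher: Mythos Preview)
Your argument is correct and follows the same route as the paper: invoke Theorem~\ref{theo_bound_poly_nest} to bound $\max_{(B,P)}|\cop{B,P}|$, plug this into the elementary estimate $W_G\leq(\text{edges})\cdot M^{\text{depth}}$, and simplify the exponent to $D^{S}\partial^{(N+1)(S+1)}$.

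The only difference worth noting is in the intermediate estimate. You use $W_G\leq 2x\cdot M^{\partial-1}$ (counting $\edges{G}$ and using the sharp exponent $\depthG{G}$), which forces you to absorb an extra factor of $2$ and to check $E+\partial\geq 3$, hence the separate treatment of the boxless case. The paper instead writes $W_G\leq x\cdot M^{\partial}$ (counting $\dirEdges{G}$ and overshooting the exponent by one), so the arithmetic collapses to $x\cdot(x^{E-1})^{\partial}=x^{1+(E-1)\partial}\leq x^{E\partial}$, which only needs $\partial\geq 1$ and avoids any case split. Both are fine; the paper's version just trades a slightly looser intermediate bound for cleaner bookkeeping.
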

\begin{proof}
  By Theorem~\ref{theo_bound_poly_nest}, we have
  \begin{align*}
    W_G &= \left | \can{\dirEdges{G}} \right | \leq \left | \dirEdges{G} \right| \cdot \max_{(B,P) \in \pot{\boxset{G}}} \left | \cop{B,P} \right |^{\partial} \leq x \cdot \left(x^{D^S \cdot \partial^{N \cdot S+N+S}-1}\right)^{\partial}\\
    W_G &\leq x^{D^{S} \cdot \partial^{N \cdot S + N + S +1}} = x^{D^{S} \cdot \partial^{(N+1) \cdot (S+1)}}
  \end{align*}
\end{proof}

The polynomial in the bound only depends on $\stratG{G}$, $\left| \dcSim \right|$, $\left| \nestSim \right|$, and $\depthG{G}$. Those four parameters are bounded by the number of boxes. So a stratified nested proof-net controlling dependence normalizes in a time bounded by a polynomial on the size of the proof-net, the polynomial depending only on the number of boxes of the proof-net. 

In the usual encoding of binary words (or other inductive types) in linear logic, the number of boxes is independent of a term. Let us suppose that for every binary word $w$, the proof-net $(G)\underline{w}$ (representing the application of $G$ to the encoding of $w$) is $\stratSNLL$-stratified, $\dcSim$-stratified and $\nestSim$-stratified. Then $W_{(G)\underline{w}}$ is bounded by $P\left(|w|\right)$ with $P$ a polynomial which does not depend on $w$. This is the definition of polynomial soundness. However, those semantic criteria are not useful per se: 
\begin{itemize}
\item The only method we know to check the acyclicity of those relations on a proof-net $H$ is to normalize $H$ to compute the $\csRel$-paths. Normalizing a proof-net to obtain a bound on the length of its normalization has no practical use.
\item Given a proof-net $G$, we have no method to check if there exists a binary word such that one of those relation is cyclic on $(G)\underline{w}$.
\end{itemize}

In the next section we will define a decidable subsystem of linear logic (named $SDNLL$) such that $\stratSNLL$, $\dcSim$ and $\nestSim$ are acyclic on every proof-net of $SDNLL$. For $s,d,n\in \mathbb{N}$, we define a formula $\BSDN{s}{d}{n}$ such that every binary word can be encoded by a proof-net typed by $\BSDN{s}{d}{n}$. Determining if a given proof-net $G$ can be typed by a formula of the shape $\BSDN{s}{d}{n} \multimap A$ is decidable. And, if this is the case $(G)\underline{w}$ is $\stratSNLL$-stratified, $\dcSim$-stratified and $\nestSim$-stratified for every binary word $w$.

\section{Linear logic subsystems and $\lambda$-calculus type-systems}\label{chapter_type_systems}  
\label{section_def_sdnll}
\subsection{Definition of $SDNLL$}\label{section_def_sdnll}
We define a linear logic subsystem, called $SDNLL$ (for {\em S}tratification {\em D}ependence control {\em N}esting Linear Logic) characterizing polynomial time. In $SDNLL$, to enforce $\stratSNLL$-stratification, $\dcSim$-stratification and $\nestSim$-stratification, we label the $\oc$ and $\wn$ modalities with integers $s$, $d$ and $n$. Let us consider a $SDNLL$ proof-net $G$ and boxes $B$ and $B'$ with $\beta(\sigma(B))=\oc_{s,n,d}A$ and $\beta(\sigma_G(B'))=\oc_{s',d',n'}A'$. Then we will have the following implications: ($B \stratSNLL B'$ implies $s > s'$), ($B \dcSim B'$ implies $d>d'$) and ($B \nestSim B'$ implies $n > n'$). This implies that $G$ is $\stratSNLL$-stratified, $\dcSim$-stratified and $\nestSim$-stratified.

\begin{definition}\label{def_sdnll} For $s \in \mathbb{N}$, we define $\form{s}$ by the following grammar (with $t,d,n \in \mathbb{N}$, $t \geq s$ and $X$ ranges over a countable set of variables). Notice that $\form{0} \supseteq \form{1} \supseteq \cdots$
  \begin{equation*}
    \form{s} := X_{t} \mid X^\perp_t \mid \form{s} \otimes \form{s} \mid \form{s} \parr \form{s} \mid \forall X_{t}. \form{s} \mid \exists X_{t}.\form{s} \mid \oc_{t,d,n} \form{t+1} \mid \wn_{t,d,n} \form{t+1}
  \end{equation*}
\end{definition}

In this section, a {\em formula context} is a formula where a subterm has been replaced by $\circ$ (e.g. $\oc_{2,1,3}X \parr \circ$). If $h$ is a formula context and $A$ is a formula, then $h[A]$ refers to the formula obtained by replacing $\circ$ by $A$ in $h$. We gave another definition for ``context'' in Section~\ref{section_def_cont_sem}, and we will define yet another in Section~\ref{section_sdnll_lambda}. Because those terms are well-established terms, we chose not not create new terms. Because these definitions are very different, there should be little confusion.

For any formula of the shape $A=\oc_{s',d',n'}A'$, we write $\straF{A}$ for $s'$, $\dcF{A}$ for $d'$ and $\nesF{A}$ for $n'$. For $A \in \form{0}$, $s_A^{min}$ refers to the minimum $s \in \mathbb{N}$ such that $A=h[\oc_{s,\_,\_}\_]$ with $h$ a formula context. To gain expressivity, we define a subtyping relation $\leq$ on $\form{0}$. The relation $\leq$, defined as the transitive closure of the following $\leq^1$ relation, follows the intuition that a connective $\oc_{s,d,n}$ in a formula means that this connective ``comes'' from a box $B$ with $\stratu{\stratSNLL}{B} \geq s$, $\stratu{\dcSim}{B} \geq d$ and $\stratu{\nestSim}{B} \geq n$.

\begin{equation*}
  A \leq^1 B \Leftrightarrow \left \{ 
  \begin{array}{ll}
    \text{Either }& A = g[\oc_{s,d,n}D],  B = g[\oc_{s',d',n'}D],  s \geq s', d \geq d' \text{ and } n \geq n'\\
    \text{Or }    & A = g[\wn_{s,d,n}D],  B = g[\wn_{s',d',n'}D],  s \leq s', d \leq d' \text{ and } n \leq n'
  \end{array}
  \right .
\end{equation*}    

\begin{lemma}\label{subtyping_perp}
  If $A \leq B$ then $A^\perp \geq B^\perp$
\end{lemma}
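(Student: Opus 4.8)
The statement is Lemma~\ref{subtyping_perp}: if $A \leq B$ then $A^\perp \geq B^\perp$. Since $\leq$ is the transitive closure of $\leq^1$, the natural first move is to reduce to a single step: it suffices to prove that $A \leq^1 B$ implies $B^\perp \leq^1 A^\perp$ (equivalently $A^\perp \geq^1 B^\perp$), and then conclude by transitivity — indeed, if $A = A_0 \leq^1 A_1 \leq^1 \cdots \leq^1 A_k = B$, applying the one-step claim to each link gives $B^\perp = A_k^\perp \leq^1 A_{k-1}^\perp \leq^1 \cdots \leq^1 A_0^\perp = A^\perp$, whence $A^\perp \geq B^\perp$.

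\textbf{Key steps.} First I would set up the one-step reduction described above. Then I would analyze the two cases in the definition of $\leq^1$. In the first case, $A = g[\oc_{s,d,n}D]$ and $B = g[\oc_{s',d',n'}D]$ with $s \geq s'$, $d \geq d'$, $n \geq n'$. Here the main subtlety is how $(\_)^\perp$ interacts with a formula context: I need an auxiliary observation that for any formula context $g$ there is a ``dual context'' $g^\perp$ (obtained by pushing the involution $(\_)^\perp$ through $g$ down to the hole) such that $g[C]^\perp = g^\perp[C^\perp]$ for every formula $C$. This is a routine induction on the structure of $g$ using the clauses defining $(\_)^\perp$ on $\formLL$: at a $\otimes$, $\parr$, $\forall$, $\exists$ the context simply transforms into its De Morgan dual with the hole propagated; at the hole itself, $g^\perp = \circ$. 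Crucially, $(\oc_{s,d,n}D)^\perp = \wn_{s,d,n}(D^\perp)$ (the modality labels are preserved by negation, only $\oc$/$\wn$ swap). Applying this, $A^\perp = g^\perp[\wn_{s,d,n}D^\perp]$ and $B^\perp = g^\perp[\wn_{s',d',n'}D^\perp]$, and since $s \geq s'$, $d \geq d'$, $n \geq n'$ we are exactly in the ``$\wn$'' clause of $\leq^1$ read from $B^\perp$ to $A^\perp$, giving $B^\perp \leq^1 A^\perp$. The second case ($\wn$ in $g$) is symmetric: $A^\perp = g^\perp[\oc_{s,d,n}D^\perp]$, $B^\perp = g^\perp[\oc_{s',d',n'}D^\perp]$ with $s \leq s'$, $d \leq d'$, $n \leq n'$, which is the ``$\oc$'' clause of $\leq^1$ from $B^\perp$ to $A^\perp$. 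Either way $A^\perp \geq^1 B^\perp$, and transitivity finishes the lemma.

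\textbf{Main obstacle.} There is no deep difficulty here; the only thing that needs care is the bookkeeping around formula contexts and negation — namely isolating and proving the $g[C]^\perp = g^\perp[C^\perp]$ lemma so that the case analysis goes through cleanly. One should also double-check that $\leq^1$ (and hence $\leq$) genuinely only relates formulas differing in a single modality's labels at one position, so that the single hole $\circ$ captures exactly the locus of the inequality; with that in hand the argument is a short structural induction plus the transitivity wrap-up. I would present the context-negation fact as a one-line remark (induction on $g$) rather than spelling out every clause.
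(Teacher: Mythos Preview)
Your proposal is correct and is essentially the approach the paper takes: the paper's proof is just ``Immediate from the definition of $\leq$,'' and what you have written is exactly the unpacking of that immediacy --- reduce to $\leq^1$ by transitivity, observe that $(\_)^\perp$ swaps $\oc_{s,d,n}$ and $\wn_{s,d,n}$ while preserving the indices, and note that the two clauses of $\leq^1$ are thereby exchanged.
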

\begin{proof}
  Immediate from the definition of $\leq$.
\end{proof}

\begin{definition}\label{def_sdnll_proofnet}  A {\em $SDNLL$ proof-net} is a proof-net whose edges are labelled by a $\form{0}$ formula, the labels respecting the constraints of Figure~\ref{rules_labelling_snll} modulo subtyping.

More precisely, the labelling of a proof-net $G$ is correct if for every node/box whose premises are labelled by $A_1,\cdots,A_k$ and whose conclusions are labelled by $C_1,\cdots,C_l$, there exists an instance of the constraint of Figure~\ref{rules_labelling_snll} whose premises are labelled by $A_1,\cdots,A_k$ and conclusions are labelled by $B_1,\cdots,B_l$ with $B_1 \leq C_1$,...$B_l \leq C_l$.
\end{definition}

For instance, let us suppose that $d_1 \geq d$ and $d_2,\cdots,d_k \geq d+1$ then $\wn_{s_1,d,n}A_1 \leq \wn_{s_1,d_1,n}A_1$, $\wn_{s_2,d+1,n}A_2 \leq \wn_{s_2,d_2,n}A_2$, ..., $\wn_{s_k,d+1,n}A_k \leq \wn_{s_k,d_k,n}A_k$. So, a box with premises $A_1,\cdots,A_k,C$ and conclusions $\wn_{s_1,d_1,n}A_1$,...$\wn_{s_k,d_k,n}A_k,\oc_{s,d,n}C$ satisfies the conditions of $SDNLL$ proof-nets.

The $SDNLL$ labels are compatible with cut-elimination as shown by the rules of Figure~\ref{cut_sdnll}. For most rule, the only difficulty is to handle subtyping. We explain it for the $\otimes / \parr$ case (the $ax$, $\forall / \exists$, $\oc P / \wn D$, $\oc P / \wn W$ and $\oc P / \wn C$ rules are adapted in the same way): by definition of $SDNLL$ proof-nets, $C \geq A \parr B$ so $C$ is of the shape $A_1 \parr B_1$ with $A \leq A_1$ and $B \leq B_1$. So $C^\perp = A_1^\perp \otimes B_1^\perp$ and, by definition of $SDNLL$ proof-nets, $A_2^\perp \leq A_1^\perp$ and $B_2^\perp \leq B_1^\perp$. By Lemma~\ref{subtyping_perp}, $A_1^\perp \leq A^\perp$ and $B_1^\perp \leq B^\perp$. So $A_2^\perp \leq A^\perp$ and $B_2^\perp \leq B^\perp$. So, whatever are the nodes with conclusions $A_2^\perp$ and $B_2^\perp$, we can replace their conclusions with $A^\perp$ and $B^\perp$ without breaking $SDNLL$ constraints. One can observe that we could have labelled the edges with $A_1,B_1,A^\perp_1,B_1^\perp$, or with $A_2,B_2,A^\perp_2,B_2^\perp$ or other formulae between $A$ and $A_2$ and between $B$ and $B_2$. We decide not to choose a canonical reduction: this only influences the indices on exponential connectives, and the conclusions of the proof-net are not concerned.

For the sake of readability, in the reductions of Figure~\ref{cut_sdnll} we suppose that subtyping only modifies the outermost exponential connectives (modification of labels on inner connectives are dealt as in the $\otimes / \parr$ case). Every letter in the indices represents a positive number, writing $d+d_1$ as the second index of an edge allows to represent any number greater than (or equal to) $d$.
\begin{itemize}
\item For the $\oc P / \wn P$ rule, we can notice that $d=d'+d'_1$ so $d \geq d'$, and $n=n'+n'_1$ so $n \geq n'$. Thus we have $d+d_1 \geq d'$, $d+d_k+1 > d'$, $n+n_1 \geq n'$ and $n+n_k \geq n'$. The box in the reduct satisfies the constraints of $SDNLL$.
\item For the $\oc P / \wn N$ rule, according to the definition of $SDNLL$ proof-nets, $d \geq d'$, and $n \geq n'$. So $n+n_1 \geq n'$, $n+n_k \geq n'$ and the outermost box of the reduct satisfies the constraints of $SDNLL$. We also have $n+n_1+1 \geq n'+1$ and $n+n_k+1 \geq n'+1$, so the innermost box of the reduct satisfies the constraints of $SDNLL$.
\end{itemize}

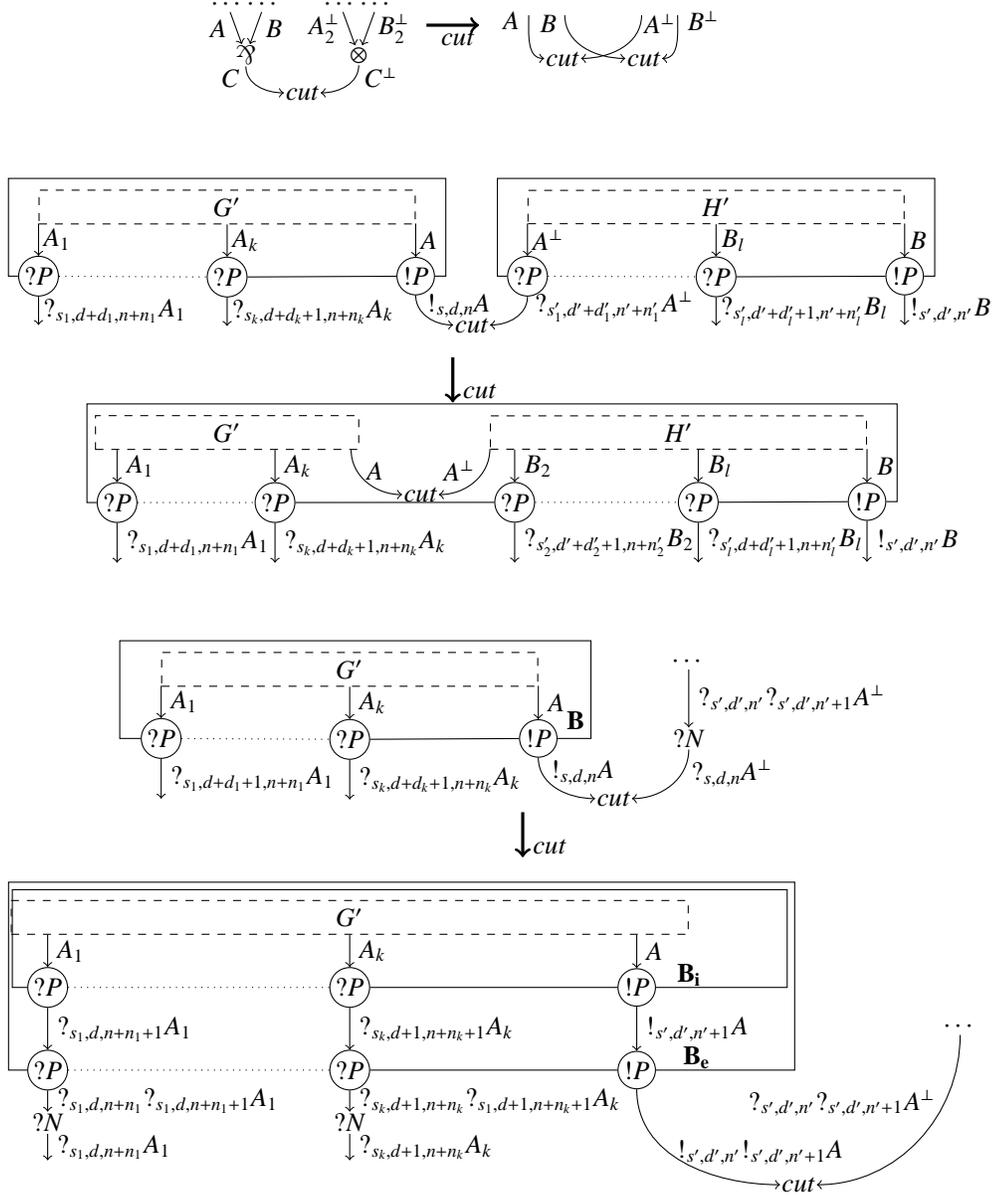
\begin{figure}\centering
  \tikzstyle{edgename}=[phantom]
  \begin{subfigure}{\textwidth}
    \centering
    \begin{tikzpicture}[baseline=0cm]
  \tikzstyle{edgename}=[opacity=0]
  \begin{scope}[shift={(3.2,-1.6)}]
    \node [par]  (par)  at (0,0) {};
    \node [tensor] (tens) at ($(par)+(1.5,0)$) {};
    \node [cut]  (cut)  at ($(par)!0.5!(tens)+(0,-0.5)$) {};
    \node [etc] (pl) at ($(par) +(110:0.7)$) {};
    \node [etc] (pr) at ($(par) +( 70:0.7)$) {};
    \node [etc] (tl) at ($(tens)+(110:0.7)$) {};
    \node [etc] (tr) at ($(tens)+( 70:0.7)$) {};
    \draw [ar] (pl) -- (par) node [type,left] {$A$} node [edgename] {$a$};
    \draw [ar] (pr) -- (par) node [type]      {$B$} node [edgename,right] {$b$};
    \draw [ar] (tl) -- (tens)node [type,left] {$A_2^\perp$} node [edgename] {$e$};
    \draw [ar] (tr) -- (tens)node [type]      {$B_2^\perp$} node [edgename,right] {$f$};
    \draw [ar,out=-90,in=180] (par)  to node [edgename,below left] {$c$} node [type,left,pos=0.25] {$C$} (cut);
    \draw [ar,out=-90,in=  0] (tens) to node [edgename,below right]{$d$} node [type,pos=0.25] {$C^\perp$}(cut);
  
    \draw [reduc] (2.4,0.4) --++(0.7,0) node [below left=-0.1cm] {$cut$};
    
    \node (pl2) at ($(pl)+(4, 0)$) {}; 
    \node (pr2) at ($(pr)+(4,0)$) {};
    \node (tl2) at ($(tl)+(4,0)$) {};
    \node (tr2) at ($(tr)+(4,0)$) {};
    \node [cut] (cutl) at ($(pl2)!0.5!(tl2)+(-0.3,-0.7)$) {};
    \node [cut] (cutr) at ($(pr2)!0.5!(tr2)+( 0.3,-0.7)$) {};
    \draw [ar,out=-90,in=180] (pl2) to node [type,pos=0.1,left] {$A$}      node [edgename,pos=0.15] {$a$} (cutl);
    \draw [ar,out=-90,in=  0] (tl2) to node [type,right=-0.08cm, pos=0.1]      {$A^\perp$} node [edgename,pos=0.1] {$e$} (cutl);
    \draw [ar,out=-90,in=180] (pr2) to node [type,pos=0.1,left] {$B$}      node [edgename,pos=0.1] {$b$} (cutr);
    \draw [ar,out=-90,in=  0] (tr2) to node [type,pos=0.1]      {$B^\perp$} node [edgename,pos=0.15] {$f$} (cutr);
  \end{scope}
\end{tikzpicture}
  \end{subfigure}
  
  \vspace{2em}

  \begin{subfigure}{\textwidth}
    \centering
     \begin{tikzpicture}[baseline=0.2cm]
  \tikzstyle{edgename}=[opacity=0]
  \begin{scope}[scale=1.0]
  \node [proofnet,minimum width=5cm] (G) at (0,0) {$G'$};
  \node [princdoor] (prg) at ($(G.-5)+(0,-0.7)$) {};
  \node [auxdoor]   (a2) at ($(G.-90)+(0,-0.7)$) {};
  \node [auxdoor]   (a1) at ($(G.-175)+(0,-0.7)$) {};
  \draw (a2)--(prg) -| ++(0.4,1.3) -| ($(a1)+(-0.4,0)$) -- (a1);
  \draw [dotted] (a1)--(a2);
  \draw [ar] (G.-5)--(prg)   node [type,right=-0.06cm] {$A$}       node [edgename] {$a$};     
  \draw [ar] (G.-90)--(a2)    node [type,right=-0.06cm] {$A_k$}     node [edgename] {$a_k$};
  \draw [ar] (a2)--++(0,-0.65) node [type,right=-0.06cm] {$\wn_{s_k,d+d_k+1,n+n_k} A_k$} node [edgename] {$c_k$} ;
  \draw [ar] (G.-175)--(a1)   node [type,right=-0.06cm] {$A_1$}     node [edgename] {$a_1$};    
  \draw [ar] (a1)--++(0,-0.65) node [type,right=-0.06cm] {$\wn_{s_1,d+d_1,n+n_1} A_1$} node [edgename] {$c_1$};

  \node [proofnet,minimum width=5cm] (H) at ($(G)+(6.5,0)$) {$H'$};
  \node [princdoor] (prh) at ($(H.-5)+(0,-0.7)$) {};
  \node [auxdoor]   (b2) at ($(H.-90)+(0,-0.7)$) {};
  \node [auxdoor]   (b1) at ($(H.-175)+(0,-0.7)$) {};
  \draw (b2)--(prh) -| ++(0.4,1.3) -| ($(b1)+(-0.4,0)$) -- (b1);
  \draw [dotted] (b1)--(b2);
  \draw [ar] (H.-5)--(prh)   node [type,right=-0.06cm] {$B$}       node [edgename] {$b$};     
  \draw [ar] (H.-90) --(b2)   node [type,right=-0.06cm] {$B_l$}     node [edgename] {$b_l$};    
  \draw [ar] (b2)--++(0,-0.65) node [type,right=-0.06cm] {$\wn_{s'_l,d'+d'_l+1,n'+n'_l} B_l$} node [edgename] {$d_l$};
  \draw [ar] (H.-175)--(b1)   node [type,right=-0.06cm] {$A^\perp$} node [edgename] {$b_1$};  
  \draw [ar] (prh)--++(0,-0.65) node [type,right=-0.06cm] {$\oc_{s',d',n'} B$}  node [edgename] {$d$};

  \node [cut] (cut) at ($(prg)!0.5!(b1)+(0,-0.65)$) {};
  \draw [ar,out=-90,in=180] (prg) to node [edgename, below left] {$c$}    node [type,pos=0.2] {$\oc_{s,d,n} A$}      (cut);
  \draw [ar,out=-90,in=  0] (b1)  to node [edgename, below right=-0.2cm] {$d_1$} node [type,pos=0.2] {$\wn_{s'_1,d'+d'_1,n'+n'_1} A^\perp$} (cut);

  \draw [reduc] (3,-2)--++(0,-0.6) node [near end, right] {$cut$};

  \node [proofnet,minimum width=3.5cm] (G) at ($(G)+(0,-3)$) {$G'$};
  \node [auxdoor]   (a2) at ($(G.-20)+(0,-0.7)$) {};
  \node [auxdoor]   (a1) at ($(G.-171)+(0,-0.7)$) {};
  \node [proofnet,minimum width=5cm] (H) at ($(G)+(6,0)$) {$H'$};
  \node [princdoor] (prh) at ($(H.-5)+(0,-0.7)$) {};
  \node [auxdoor]   (b2) at ($(H.-41)+(0,-0.7)$) {};
  \node [auxdoor]   (b1) at ($(H.-174)+(0,-0.7)$) {};
  \node [cut]       (cut)at ($(G.-8)!0.5!(H.-175)+(0,-0.6)$) {};
  \draw [ar,out=-90,in=180] (G.-8)  to node [edgename, below left =-0.1]  {$a$}   node [type,pos=0.3] {$A$}     (cut);
  \draw [ar,out=-90,in=  0] (H.-175) to node [edgename, below right=-0.15] {$b_1$} node [type,left, pos=0.3] {$A^\perp$}(cut);
  \draw (b2)--(prh) -| ++(0.4,1.3) -| ($(a1)+(-0.4,0)$) -- (a1);
  \draw (a2)--(b1);
  \draw [dotted] (a1)--(a2);
  \draw [dotted] (b1)--(b2);
  \draw [ar] (G.-20)--(a2)     node [type] {$A_k$}     node [edgename] {$a_k$};
  \draw [ar] (a2)--++(0,-0.8)  node [type] {$\wn_{s_k,d+d_k+1,n+n_k} A_k$} node [edgename] {$c_k$};
  \draw [ar] (G.-171)--(a1)    node [type] {$A_1$}     node [edgename] {$a_1$};
  \draw [ar] (a1)--++(0,-0.8)  node [type] {$\wn_{s_1,d+d_1,n+n_1} A_1$} node [edgename] {$c_1$};
  \draw [ar] (H.-41)--(b2)     node [type] {$B_l$}     node [edgename] {$b_l$};
  \draw [ar] (b2)--++(0,-0.8)  node [type] {$\wn_{s'_l,d+d'_l+1,n+n'_l} B_l$} node [edgename] {$d_l$};
  \draw [ar] (H.-174)--(b1)    node [type] {$B_2$}     node [edgename,left=-0.1cm] {$b_2$};
  \draw [ar] (b1)--++(0,-0.8)  node [type] {$\wn_{s'_2,d'+d'_2+1,n+n'_2} B_2$} node [edgename] {$d_2$};
  \draw [ar] (H.-5)--(prh)    node [type] {$B$}       node [edgename] {$b$};     
  \draw [ar] (prh)--++(0,-0.8) node [type] {$\oc_{s',d',n'} B$}   node [edgename] {$d$}; 
  \end{scope}
\end{tikzpicture}
  \end{subfigure}
  \vspace{2em}
  
  \begin{subfigure}{\textwidth}
    \centering
     \begin{tikzpicture}[baseline=0.2cm]
  \tikzstyle{edgename}=[opacity=0]
  \node [proofnet,minimum width=5cm] (G) at (0,0) {$G'$};
  \node [princdoor] (pr) at ($(G.-5)+(0,-0.7)$) {};
  \node [above] at ($(pr)+(0.5,0)$) {$\mathbf{B}$};
  \node [auxdoor]   (a2) at ($(G.-90)+(0,-0.7)$) {};
  \node [auxdoor]   (a1) at ($(G.-175)+(0,-0.7)$) {};
  \draw (a2)--(pr) -| ++(0.7,1.3) -| ($(a1)+(-0.55,0)$) -- (a1);
  \draw [dotted] (a1)--(a2);
  \draw [ar] (G.-5)--(pr)    node [edgename] {$a$}   node [type] {$A$};     
  \draw [ar] (G.-90)--(a2)    node [edgename] {$a_k$} node [type] {$A_k$};    
  \draw [ar] (G.-175)--(a1)   node [edgename] {$a_1$} node [type] {$A_1$};    
  \draw [ar] (a2)--++(0,-0.8) node [edgename] {$c_k$} node [type] {$\wn_{s_k,d+d_k+1,n+n_k} A_k$};
  \draw [ar] (a1)--++(0,-0.8) node [edgename] {$c_1$} node [type] {$\wn_{s_1,d+d_1+1,n+n_1} A_1$};
  \node [dig]  (dig) at ($(pr)+(2,0)$) {};
  \node [cut]   (cut)at ($(pr)!0.5!(dig)+(0,-0.8)$) {};
  \draw [ar,out=-90,in=180] (pr) to node [edgename,below left]  {$c$} node [type,pos=0.2] {$\oc_{s,d,n} A$} (cut);
  \draw [ar,out=-90,in=  0] (dig) to node [edgename,right] {$d$} node [type,pos=0.3] {$\wn_{s,d,n} A^\perp$} (cut);
  \node [etc] (etc) at ($(dig)+(0,1)$) {};
  \draw [ar] (etc) -- (dig) node [edgename,right] {$f$} node [type] {$\wn_{s',d',n'} \wn_{s',d',n'+1} A^\perp$};

  \draw [->,very thick] (2.3,-1.9) --++(0,-0.6) node [near end, right] {$cut$};

  \node [proofnet,minimum width=9cm] (G1) at ($(G)+(0,-3.3)$) {$G'$};
  \node [princdoor] (pri) at ($(G1.-4)+(0.5,-0.7)$) {};
  \node [above=-0.1] at ($(pri)+(0.7,0)$) {$\mathbf{B_i}$};
  \node [auxdoor]   (a2i) at ($(G1.-90)+(0,-0.7)$) {};
  \node [auxdoor]   (a1i) at ($(G1.-176)+(-0.7,-0.7)$) {};
  \draw (a2i)--(pri) -| ++(2,1.3) -| ($(a1i)+(-0.47,0)$) -- (a1i);
  \draw [dotted] (a1i)--(a2i);
  \draw [ar] ($(G1.-4)+(0.5,0)$)--(pri) node [edgename] {$a$}   node [type] {$A$};     
  \draw [ar] (G1.-90)--(a2i) node [edgename] {$a_k$} node [type] {$A_k$};
  \draw [ar] ($(G1.-176)+(-0.7,0)$)--(a1i)node [edgename] {$a_1$} node [type] {$A_1$};
  \node [princdoor] (pre) at ($(pri)+(0,-1.1)$) {};
  \node [above=-0.1] at ($(pre)+(0.8,0)$) {$\mathbf{B_e}$};
  \node [auxdoor]   (a2e) at ($(a2i)+(0,-1.1)$) {};
  \node [auxdoor]   (a1e) at ($(a1i)+(0,-1.1)$) {};
  \draw (a2e)--(pre) -| ++(2.1,2.5) -| ($(a1e)+(-0.52,0)$) -- (a1e);
  \draw [dotted] (a1e)--(a2e);
  \draw [ar] (pri)--(pre) node [edgename] {$b$}   node [type] {$\oc_{s',d',n'+1} A$};
  \draw [ar] (a2i)--(a2e) node [edgename] {$b_k$} node [type] {$\wn_{s_k,d+1,n+n_k+1} A_k$};
  \draw [ar] (a1i)--(a1e) node [edgename] {$b_1$} node [type] {$\wn_{s_1,d,n+n_1+1} A_1$};
  \node [dig] (dig1) at ($(a1e)+(0,-0.7)$) {};
  \node [dig] (dig2) at ($(a2e)+(0,-0.7)$) {};
  \draw [ar] (a1e)--(dig1) node [edgename] {$e_1$} node [type] {$\wn_{s_1,d,n+n_1} \wn_{s_1,d,n+n_1+1}A_1$};
  \draw [ar] (a2e)--(dig2) node [edgename] {$e_k$} node [type] {$\wn_{s_k,d+1,n+n_k} \wn_{s_1,d+1,n+n_k+1} A_k$};
  \draw [ar] (dig1)--++(0,-0.5) node [edgename] {$c_1$} node [type] {$\wn_{s_1,d,n+n_1} A_1$};
  \draw [ar] (dig2)--++(0,-0.5) node [edgename] {$c_k$} node [type] {$\wn_{s_k,d+1,n+n_k} A_k$};
  \node [etc]  (etc) at ($(pri)!0.5!(pre)+(4.3,0)$) {};
  \node [cut]  (cut) at ($(pre)!0.5!(etc)+(0,-1.8)$) {};
  \draw [ar,out=-90,in=180] (pre) to node [edgename,below left]  {$c$} node [type,pos=0.4] {$\oc_{s',d',n'} \oc_{s',d',n'+1} A$} (cut);
  \draw [ar,out=-90,in=  0] (etc) to node [edgename,right] {$f$} node [type,left, pos=0.3] {$\wn_{s',d',n'} \wn_{s',d',n'+1} A^\perp$} (cut) ;
\end{tikzpicture}
  \end{subfigure}
    
  \caption{\label{cut_sdnll}$SDNLL$ constraints are compatible with cut-elimination.}
\end{figure}
\tikzstyle{edgename}=[opacity=1, midway, left, black]

In order to prove the soundness of $SDNLL$ for polynomial time, we first have to prove a technichal lemma. Whenever $((e,P),[\oc_t]@T) \noJump^* ((f,Q),[\oc_u]@U)$, the formulae of $e$ and $f$ are related. To be more precise, if $G$ does not contain any $\exLab$ or $\faLab$ node, $\beta(e)_{|T} \leq \beta(f)_{|U}$ with $A_{|T}$ defined as follows:
\begin{definition}{\label{def_restr_trace}}
  Let $A$ be a formula and $T$ a trace, we define $A_{|T}$ by induction on $A$ as follows: $A_{|[]}=A$, $(A \otimes B)_{|T.\otimes_l}=(A \parr B)_{|T.\parr_l}=A_{|T}$, $(A \otimes B)_{|T.\otimes_r}=(A \parr B)_{|T.\parr_r}=B_{|T}$, $(\forall X.A)_{|T.\forall}=(\exists X.A)_{|T.\exists}=(\oc_{s,d,n}A)_{|T.\oc_t}=(\wn_{s,d,n}A)_{|T.\wn_t}=A_{|T}$.
\end{definition}

For instance, if $((e,P),T) \noJump ((f,P),T.\parr_r)$ then $\beta(e)$ and $\beta(f)$ are of the shape $B$ and $A' \parr B'$ with $B \leq B'$. Let us notice that $\beta(f)_{|T.\parr_r}=(B')_{|T} \geq (B)_{|T} = \beta(e)_{|T}$.

However, there is a problem with this definition when we cross a $\exists$ link downwards. For example, let us suppose that $((c,[~]),[\parr_r;\oc_{\sige};\otimes_r]) \mapsto ((d,[~]),[\parr_r;\oc_{\sige};\otimes_r;\exists])$ with $\beta(c)= \wn(X \otimes X^\perp) \otimes \oc(X^\perp \parr X)$ and $\beta(d)=\exists Y.Y \otimes Y^\perp$. Then $\beta(c)_{|[\parr_r;\oc_{\sige};\otimes_r]}=X$, but $\beta(d)_{|[\parr_r;\oc_{\sige};\otimes_r;\exists]}$ is undefined: the trace is not compatible with the syntactic tree of $\beta(d)$. In~\cite{perrinelMegathese}, we define a mapping $\realbeta{\_}$ from contexts to formulae (paying special attention to the substitutions caused by the $\forall/\exists$ cut-elimination) satisfying Lemmas~\ref{lemma_realbeta_simple} to~\ref{lemma_realbeta_csrel}:
\begin{itemize}
\item The first idea is to substitute, for some of the $\exists X.\_$ in $\beta(e)$, the occurrences of $X$ by its formula $B$: if $\beta(e)_{|T}=\exists X.A$ and $((\concl{l},\_),[\exists]) \noJump^* ((e,P),T.\exists @U)$ with $l$ a $\exists$ node whose associated formula is $B$, we replace $\exists X.A$ by $A[B/X]$. 
\item Moreover, if $\beta(e)$ contains a free occurrence of a variable $X$ associated with the $\forall$ node $m$, and $((\concl{m},P),[\forall]) \noJump^* ((\overline{\concl{l}},\_),[\forall])$ with $l$ a $\exists$ node whose associated formula is $B$, we replace $X$ by $B$.
\end{itemize}
Those two operations can be recursive: the formula $B$ can contain itself free occurrences of variables associated with $\forall$ nodes, or $\exists Y.C$ subformulas. The formal definition can be found in section 5.1.2 of~\cite{perrinelMegathese}.

\begin{lemma}{\label{lemma_realbeta_simple}}
  If $\beta(e)$ is of the shape $\oc_{s,d,n}A$, then there exists a substitution $\theta$ such that $\realbeta{(e,P),[\oc_t]@T}=\left(\oc_{s,d,n}A[\theta]\right)_{|T}$.
\end{lemma}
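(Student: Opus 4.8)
The plan is to argue directly from the definition of $\realbeta{\_}$ given in Section~5.1.2 of~\cite{perrinelMegathese}, by following the outermost $\oc_{s,d,n}$ connective of $\beta(e)$ through the substitution procedure. Recall that $\realbeta{\_}$ is built from $\beta(e)$ by iterating two kinds of type substitution — replacing an $\exists$-bound variable by its witness formula when the trace exhibits that witness (via a path back to the corresponding $\exists$ node), and propagating a witness along a free variable associated with a $\forall$ node — and then restricting by the trace. The first step is to observe that neither substitution can touch the head connective of $\beta(e) = \oc_{s,d,n}A$: the $\forall$-propagation replaces only free variable occurrences and $\oc_{s,d,n}A$ is not a variable, while the $\exists$-witness substitution acts on $\exists$-headed subformulas pointed at by proper parts of the trace, all of which lie strictly below the outer $\oc$. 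Consequently every formula produced along the way is of the form $\oc_{s,d,n}B$ with $B$ obtained from $A$ by the same substitutions.

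Next I would formalize this by induction on the number of substitution steps (equivalently on the recursion depth of the definition of $\realbeta{\_}$), showing that running the procedure ``inside $A$'' yields a formula $A'$ and that $A'$ is exactly $A[\theta]$, where $\theta$ is the simultaneous substitution collecting all the atomic variable-to-formula replacements performed; that $\theta$ is well defined (no variable receives two distinct formulas) is inherited from the well-definedness of $\realbeta{\_}$ proved in~\cite{perrinelMegathese}. Finally I would handle the trace: the leading $\oc_t$ in the trace $[\oc_t]@T$ is precisely the entry recording the outer $\oc_{s,d,n}$, so restricting by $[\oc_t]@T$ leaves the outer $\oc_{s,d,n}$ in place, substitutes inside by $\theta$, and then restricts the body by $T$; unwinding the definition of $\realbeta{\_}$ this gives $\realbeta{(e,P),[\oc_t]@T} = \left( \oc_{s,d,n} A[\theta] \right)_{|T}$, which is the claim.

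The main obstacle I anticipate is controlling the recursive character of the two substitutions: a witness formula substituted for one variable may itself contain further free $\forall$-variables or $\exists$-subformulas that trigger additional substitutions, so one must check that the iteration terminates, that the order of atomic substitutions is immaterial, and — most delicate — that the combined effect on $A$ really collapses to a single simultaneous substitution $A[\theta]$ rather than a more complex rewriting that fails to commute with the restriction $(\_)_{|T}$. This confluence/normal-form content is exactly what the careful definition of $\realbeta{\_}$ in~\cite{perrinelMegathese} is designed to provide, so in the write-up I would quote the relevant basic properties of $\realbeta{\_}$ from there and keep the present argument to the structural tracking of the outer $\oc$ connective together with the elementary trace bookkeeping.
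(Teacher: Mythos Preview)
The paper does not actually prove this lemma; it only states it and defers to Section~5.1.2 of~\cite{perrinelMegathese} for both the formal definition of $\realbeta{\_}$ and the proofs of Lemmas~\ref{lemma_realbeta_simple}--\ref{lemma_realbeta_csrel}. So there is no in-paper argument to compare against; the question is whether your sketch stands on its own.

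The substitution half of your argument is fine: neither the $\exists$-witness substitution nor the $\forall$-propagation can rewrite the head connective of $\oc_{s,d,n}A$, so whatever formula the procedure produces from $\beta(e)$ is of the shape $\oc_{s,d,n}A'$, and collecting the atomic replacements into a single substitution $\theta$ with $A'=A[\theta]$ is exactly the kind of confluence statement one extracts from the thesis.

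The gap is in your ``elementary trace bookkeeping''. You write that ``the leading $\oc_t$ in the trace $[\oc_t]@T$ is precisely the entry recording the outer $\oc_{s,d,n}$'', and that restricting by $[\oc_t]@T$ therefore leaves the outer $\oc_{s,d,n}$ in place and restricts the body by $T$. This misreads Definition~\ref{def_restr_trace}: the restriction $(\_)_{|U}$ peels the trace from the \emph{right}, so it is the \emph{rightmost} element of the trace that matches the outermost connective of the formula (e.g.\ $(\oc_{s,d,n}A)_{|T.\oc_u}=A_{|T}$). The leftmost $\oc_t$ is the element carried over from the origin of a copy path $((\sigma(B),P),[\oc_t])$ and bears no a priori relation to the head connective at the current edge $e$; in general the box $B$ from which $\oc_t$ originates and the box whose principal edge is $e$ are distinct. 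Consequently your computation ``restrict by $[\oc_t]@T$, keep $\oc_{s,d,n}$, restrict inside by $T$'' does not follow from the definition of $(\_)_{|U}$ at all: $(\oc_{s,d,n}A[\theta])_{|[\oc_t]@T}$ and $(\oc_{s,d,n}A[\theta])_{|T}$ are different objects.

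What actually needs to be shown is why, in the recursive definition of $\realbeta{\_}$, the leftmost trace element is \emph{not} consumed by the final restriction step---equivalently, why $\realbeta{(e,P),[\oc_t]@T}$ sits one modality higher than the naive $\beta(e)_{|[\oc_t]@T}$. That is not bookkeeping; it is a structural property of the definition in~\cite{perrinelMegathese}, and a correct proof has to unfold that definition rather than assume the desired behaviour of the trace.
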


\begin{lemma}{\label{lemma_realbeta_nojump}}
  If $C \noJump^* C'$, we have $\realbeta{C} \leq \realbeta{C'}$.
\end{lemma}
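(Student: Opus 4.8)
The plan is to prove Lemma~\ref{lemma_realbeta_nojump} by induction on the length of the $\noJump$-path, so it suffices to treat a single step $C \noJump C'$ and show $\realbeta{C} \leq \realbeta{C'}$; transitivity of $\leq$ then gives the general case. For each rule of the context semantics (Figure~\ref{exponential_context_semantic}) I would compare $\realbeta{C}$ and $\realbeta{C'}$ directly. The cases split naturally into the logical/structural steps that only reshape the trace (crossing $\parr$, $\otimes$, $\oc$, $\wn$, the doors of a box, the $\digLab$ node) and the steps involving quantifiers (crossing $\forall$ or $\exists$).

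For the non-quantifier steps, the key observation is the one already flagged in the excerpt: if $C=((e,P),T)\noJump((f,P),T.x)=C'$ with $x$ a trace element corresponding to crossing a node, then $\beta(e)$ and (the relevant immediate subformula of) $\beta(f)$ are related by the $SDNLL$ subtyping constraint of Figure~\ref{rules_labelling_snll}, i.e. $\beta(e)\le$ that subformula, hence $\beta(e)_{|T}\le \beta(f)_{|T.x}$ by Definition~\ref{def_restr_trace}. One must check that applying the same substitution $\theta$ used to build $\realbeta{C}$ is compatible with the substitution used for $\realbeta{C'}$ — here the point is that crossing these nodes neither creates nor destroys $\exists X.\_$ binders that are "activated" by the construction of $\realbeta{\_}$, so the two substitutions agree on the common part of the formula. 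For the door steps one uses Lemma~\ref{subtyping_perp} to handle the flip between $\oc$ and $\wn$ on the reversed edge (as in the $\otimes/\parr$ discussion of $SDNLL$ cut-elimination), and one checks that $\beta(\overline{e})=\beta(e)^\perp$ together with the direction of the inequality matches the $\leq$ vs $\geq$ on the two sides of the box. The $\digLab$ node requires comparing $\wn_{s,d,n}\wn_{s,d,n+1}A$ with $\wn_{s,d,n}A$-style formulae, using the indices as laid out in the $\oc P/\wn N$ rule of Figure~\ref{cut_sdnll}; the $\leq^1$ relation only touches the exponential indices, so this is routine.

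The quantifier steps are where the definition of $\realbeta{\_}$ does its real work, and I expect the $\exists$/$\forall$ cases to be the main obstacle. When we cross an $\exists$ node downward, $\beta$ goes from $A[B/X]$ on the premise to $\exists X.A$ on the conclusion, so the naive restriction $\beta(d)_{|T.\exists}$ would be undefined or too weak (the example $\beta(c)=\wn(X\otimes X^\perp)\otimes\oc(X^\perp\parr X)$, $\beta(d)=\exists Y.Y\otimes Y^\perp$ in the excerpt is exactly this). The point of $\realbeta{\_}$ is that it has already performed the substitution $[B/X]$ at the relevant occurrence, so that $\realbeta{C}$ and $\realbeta{C'}$ see the \emph{instantiated} formula on both sides and the comparison becomes an equality (or at worst the same $\leq$ inherited from a deeper subformula). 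Symmetrically, crossing a $\forall$ node upward along the path is what triggers the second clause of the $\realbeta{\_}$ definition (replacing a $\forall$-bound variable by the matching $\exists$-witness $B$). So for these cases I would argue: the extra substitutions prescribed by $\realbeta{\_}$ on the two endpoints of the step are consistent (they come from the same pairs of $\forall$/$\exists$ nodes connected by sub-paths of the whole $\noJump$-path), and once both formulae are fully instantiated, the step reduces to the non-quantifier analysis applied to the instantiated formulae, giving $\realbeta{C}\le\realbeta{C'}$. Making "consistent" precise — tracking how the auxiliary $\forall$/$\exists$-matching paths used in the definition of $\realbeta{C}$ extend or restrict to those used for $\realbeta{C'}$ — is the delicate bookkeeping; I would lean on the formal definition in section~5.1.2 of~\cite{perrinelMegathese} and on Lemma~\ref{lemma_realbeta_simple} to control the shape of $\realbeta{\_}$ on contexts whose leftmost trace element is $\oc_t$. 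Since the companion lemmas (Lemmas~\ref{lemma_realbeta_simple} and~\ref{lemma_realbeta_csrel}) are cited from the thesis, it is consistent with the style of this section to give the argument at the level of these case analyses and defer the heaviest substitution-tracking details to~\cite{perrinelMegathese}.
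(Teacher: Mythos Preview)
The paper does not actually prove Lemma~\ref{lemma_realbeta_nojump}: it is one of three properties (Lemmas~\ref{lemma_realbeta_simple}--\ref{lemma_realbeta_csrel}) that are merely \emph{stated} as being satisfied by the mapping $\realbeta{\_}$, whose formal definition and the verification of these lemmas are explicitly deferred to section~5.1.2 of~\cite{perrinelMegathese}. So there is no in-paper proof to compare your proposal against.

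Your sketch is a sensible outline of how such a proof would go (induction on path length, case split on the $\noJump$ rule, with the quantifier crossings being the only genuinely delicate cases because of the recursive substitutions in the definition of $\realbeta{\_}$), and you correctly identify that the real content lies in showing that the substitution data attached to $C$ and to $C'$ by the $\realbeta{\_}$ construction are coherent across a single step. You also correctly anticipate, in your final paragraph, that the paper's own treatment is to cite the thesis rather than carry out this bookkeeping. One small terminological slip: there are no ``$\oc$'' or ``$\wn$'' nodes to cross; the exponential cases are the $\derLab$, $\contLab$, $\digLab$ nodes and the box doors. Also, the $ax$ and $cut$ crossings deserve their own mention, since subtyping at a cut (labels $A'\le A$ and $B'\le A^\perp$ on the two premises) is where Lemma~\ref{subtyping_perp} is really needed, not at the doors. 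But none of this changes the overall shape of the argument, which matches what one would expect the thesis proof to do.
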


The $\onlyJump$ step breaks this property: if $\beta(\sigma_i(B))=\wn_{s,d,n}A$ and $\beta(\sigma(B))=\oc_{s',d',n'}A'$ then $A$ and $A'$ are a priori unrelated. The only relation required on those formulae is that $d\geq d'$ and $n \geq n'$. Thus, whenever $((\sigma(B),P),[\oc_t]) \csRel^* ((\sigma(B'),P'),[\oc_{t'}])$ with $\beta(\sigma(B))=\oc_{s,d,n}A$ and $\beta(\sigma(B'))=\oc_{s',d',n'}A'$, we have $d \geq d'$ and $n \geq n'$:

\begin{lemma}{\label{lemma_realbeta_csrel}}
  If $C \csRel^* C'$ and $\realbeta{C}=\oc_{\_,d,n}A$ then $\realbeta{C'}=\oc_{\_,d',n'}B$ with $d \geq d'$ and $n \geq n'$.
\end{lemma}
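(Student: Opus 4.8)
The plan is to prove the statement by induction on the length of the $\csRel$-path $C \csRel^* C'$, which reduces everything to the single-step case. The base case $C = C'$ is trivial. For the inductive step I would write $C \csRel C_1 \csRel^* C'$ and first establish the single-step claim: \emph{if $C \csRel C_1$ and $\realbeta{C} = \oc_{\_,d,n}A$, then $\realbeta{C_1}$ is again of the shape $\oc_{\_,d_1,n_1}A_1$ with $d \geq d_1$ and $n \geq n_1$}. Granting this, $\realbeta{C_1}$ is $\oc$-shaped, so the induction hypothesis applies to $C_1 \csRel^* C'$, yielding $d_1 \geq d'$ and $n_1 \geq n'$; composing the two pairs of inequalities gives the result.

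For the single-step claim I would split on the kind of step. If $C \csRel C_1$ is a $\noJump$ step, Lemma~\ref{lemma_realbeta_nojump} gives $\realbeta{C} \leq \realbeta{C_1}$. Here I would use the elementary observation that $\leq$ never alters the connective skeleton of a formula: each generating step $\leq^1$ merely decreases the indices of one $\oc$ occurrence or increases the indices of one $\wn$ occurrence. Consequently, since $\realbeta{C}$ has outermost connective $\oc_{\_,d,n}$, the larger formula $\realbeta{C_1}$ has outermost connective $\oc_{\_,d_1,n_1}$ with $d_1 \leq d$ and $n_1 \leq n$, as desired.

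The remaining case is the $\onlyJump$ step, which is necessarily of the form $C = ((\overline{\sigma_i(B)},P),[\oc_t]) \onlyJump ((\sigma(B),P),[\oc_t]) = C_1$ for some box $B$ and door index $i$. Writing $\beta(\sigma_i(B)) = \wn_{s_i,d_i,n_i}A_i$ (so $\beta(\overline{\sigma_i(B)}) = \oc_{s_i,d_i,n_i}A_i^\perp$) and $\beta(\sigma(B)) = \oc_{s_0,d_0,n_0}A_0$, I would apply Lemma~\ref{lemma_realbeta_simple} with empty trace to both edges, obtaining $\realbeta{C} = \oc_{s_i,d_i,n_i}(A_i^\perp[\theta])$ and $\realbeta{C_1} = \oc_{s_0,d_0,n_0}(A_0[\theta'])$; in particular $d = d_i$, $n = n_i$, and $\realbeta{C_1}$ carries outermost indices $(d_0,n_0)$. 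Then I would invoke the $SDNLL$ labelling constraint for a $\fpriLab$-box (Figure~\ref{rules_labelling_snll}), which forces the $d$- and $n$-indices of each auxiliary door to dominate those of the principal door. One has to read this through the ``modulo subtyping'' clause of Definition~\ref{def_sdnll_proofnet}: the actual auxiliary label has $d$- and $n$-indices at least the canonical ones, the canonical auxiliary indices dominate the canonical principal indices, and the actual principal label has $d$- and $n$-indices at most the canonical ones (because $\oc$-subtyping decreases indices). Chaining these, $d_i \geq d_0$ and $n_i \geq n_0$, i.e. $d \geq d_1$ and $n \geq n_1$.

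I expect the $\onlyJump$ case to be the only real obstacle: one must simultaneously check that the substitution carried by $\realbeta{\_}$ cannot disturb the exponential indices --- which is precisely what Lemma~\ref{lemma_realbeta_simple} guarantees --- and that the index inequality between auxiliary and principal doors of a box survives the extra freedom granted by subtyping in the definition of $SDNLL$ proof-nets. The $\noJump$ case is essentially immediate once one notes that $\leq$ is index-only and hence skeleton-preserving.
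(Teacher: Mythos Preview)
Your proposal is correct and follows exactly the approach the paper sketches. The paper does not include a full proof of this lemma (it is deferred to~\cite{perrinelMegathese}), but the paragraph immediately preceding the statement gives the key idea: the $\noJump$ steps are handled by Lemma~\ref{lemma_realbeta_nojump} together with the skeleton-preserving nature of $\leq$, while the $\onlyJump$ step uses precisely the box constraint that the $d$- and $n$-indices of an auxiliary door dominate those of the principal door. Your induction on path length, your treatment of the $\noJump$ case via the observation that $\leq^1$ only shifts exponential indices, and your careful reading of the subtyping clause in Definition~\ref{def_sdnll_proofnet} for the $\onlyJump$ case all match this outline.
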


\begin{figure}
  \centering
  \begin{tikzpicture}
    \tikzstyle{level}=[opacity = 0]
    %\tikzstyle{type}= [midway, right, opacity = 0]
    \begin{scope}[scale = 0.85]
      % subtyping
      \begin{scope}[shift={(0,1)}]
        \node [ax] (ax) at (0,0) {};
        \draw [ar,out= -10,in=90] (ax.0)   to node [type]       {$A^\perp$} ( 0.5,-0.8);
        \draw [ar,out=-170,in=90] (ax.180) to node [type, left] {$A$}      (-0.5,-0.8);
      \end{scope}

      %cut
      \begin{scope}[shift={(2,0)}]
        \coordinate (G)   at (0,1);
        \coordinate (H)   at ($(G)+(1,0)$);
        \node [cut]      (cut) at ($(G)!0.5!(H)+(0,-0.8)$) {};
        \draw [ar,out=-90,in=180] (G. -40) to node [type, left] {$A$} (cut.180);
        \draw [ar,out=-90,in=  0]  (H.-140) to node [type] {$A^\perp$} (cut.0);
      \end{scope}

      % exists
      \begin{scope}[shift={(4.4,0.6)}]
        \node [exists]   (ex) at (0,0) {};
        \draw [ar] ($(ex)+(0,0.6)$) to  node [type] {$A[B/X_s]$} (ex);
        \draw [ar] (ex) --++ (0,-0.6) node [type] {$\exists X_s. A$};
      \end{scope}
      
      % forall
      \begin{scope}[shift={(6.8,0.6)}]
        \node [forall]   (fa) at (0,0) {};
        \draw [ar] ($(fa)+(0,0.6)$) to  node [type, opacity=1] {$A$} (fa);
        \draw [ar] (fa) --++ (0,-0.6) node [type, opacity=1] {$\forall X_s. A$};
      \end{scope}
    
        % tensor
        \begin{scope}[shift={(9.45,0.7)}]
          \node [tensor] (tens) at (0,0) {};
          \draw [ar,bend right] (-0.5,0.4) to node [type,left] {$A$} (tens);
          \draw [ar,bend left] (0.5,0.4) to node [type] {$B$} (tens);
          \draw [ar] (tens) --++ (0,-0.5) node [type] {$A \otimes B$};
        \end{scope}

        % par
        \begin{scope}[shift={(11.85,0.7)}]
          \node [par]  (par) at (0,0)  {};
          \draw [ar,bend right] (-0.5,0.4) to node [type,left] {$A$} (par);
          \draw [ar,bend left]  ( 0.5,0.4)  to node [type] {$B$} (par);
          \draw [ar] (par) --++ (0,-0.5) node [type] {$A \parr B$};
        \end{scope}

      \begin{scope}[shift={(-4,-2.5)}]
        
        %weakening
        \begin{scope}[shift={(4,1)}]
          \node[weak](weak) at (0,0) {};
          \draw[ar]  (weak) --++(0,-0.6)  node [type] {$?_{s,d,n}A$} node [level, midway, left] {$i$};
        \end{scope}
        
        % dereliction
        \begin{scope}[shift={(7.3,1)}]
          \node[der] (der) at (0,0) {};
          \draw[ar] ($(der)+(0,0.6)$) -- (der) node [type] {$A$} node [level, left] {$i$};
          \draw[ar] (der) --++ (0,-0.6) node [type] {$?_{s,d,n}A$} node [level, left] {$i$-1};
        \end{scope}

      % digging
        \begin{scope}[shift={(14,1)}]
          \node [dig] (dig) at (0,0)   {};
          \draw[ar] ($(dig)+(0,0.6)$) -- (dig) node [type] {$\wn_{s,d,n}\wn_{s,d,n+1}A$} node [level, left] {$i$};
          \draw[ar] (dig) --++ (0,-0.6) node [type] {$?_{s,d,n+1}A$} node [level, left] {$i$-1};
        \end{scope}

      % contraction
      \begin{scope}[shift={(11,1)}]  
        \node[cont] (cont) at (0,0) {};
        \draw[ar] ($(cont)+(120:0.6)$) -- (cont) node [type,left] {$\wn_{s,d,n}A$};
        \draw[ar] ($(cont)+( 60:0.6)$) -- (cont) node [type] {$\wn_{s,d,n}A$};
        \draw[ar] (cont) --++ (0,-0.6) node [type] {$\wn_{s,d,n}A$};
      \end{scope}

      \end{scope}

      %promotion
      %\begin{scope}[shift={(6,-8.2)}]
      %  \draw (0,1) node [pn,inner xsep=3.1cm] (G) {$G$};
      %  \draw (G)++(3.5,-1.2) node [princdoor] (bang)   {};
      %  \draw (G)++(1.2,-1.2) node [auxdoor] (whyn) {};
      %  \draw (G)++(-1.5,-1.2) node [auxdoor] (whyn2) {};
      %  \draw (G)++(-3.5,-1.2) node [auxdoor] (whyn3) {};
      %  \draw[ar] (whyn3 |- G.south) -- (whyn3) node [type] {$A_1$} node [level] {$i$};
      %  \draw[ar] (whyn2 |- G.south) -- (whyn2) node [type] {$A_2$} node [level] {$i$};
      %  \draw[ar] (whyn |- G.south) -- (whyn) node [type] {$A_k$} node [level] {$i$};
      %  \draw[ar] (bang |- G.south) -- (bang) node [type] {$C$} node [level] {$i$};
      %  \draw[ar] (whyn3) --++(0,-0.8) node [type,right=-0.05cm] {$\wn_{s_1,d,n} A_1$} node [level] {$i$-1};
      %  \draw[ar] (whyn2) --++(0,-0.8) node [type,right=-0.05cm] {$\wn_{s_2,d+1,n} A_2$} node [level] {$i$-1};
      %  \draw[ar] (whyn) --++(0,-0.8) node [type,right=-0.05cm]  {$\wn_{s_k,d+1,n} A_k$} node [level] {$i$-1};
      %  \draw[ar] (bang) --++(0,-0.8) node [type,right=-0.05cm] {$\oc_{s,n} C$} node [level] {$i$-1};
      %  \draw (whyn)--(bang) -| ++(0.7,1.6) -| ($(whyn3)+(-0.4,0)$) -- (whyn3) -- (whyn2);
      %  \draw [dotted] (whyn2) -- (whyn);
      %\end{scope}
   
      %promotion
      \begin{scope}[shift={(6,-4.2)}]
        \draw[opacity=0] (0,1) node [pn,inner xsep=3.1cm] (G) {$G$};
        \draw (G)++(3.5,-1) node [princdoor] (bang)   {};
        \draw (G)++(1.2,-1) node [auxdoor] (whyn) {};
        \draw (G)++(-1.5,-1) node [auxdoor] (whyn2) {};
        \draw (G)++(-3.5,-1) node [auxdoor] (whyn3) {};
        \draw[ar] (whyn3 |- G.south) -- (whyn3) node [type] {$A_1$} node [level] {$i$};
        \draw[ar] (whyn2 |- G.south) -- (whyn2) node [type] {$A_2$} node [level] {$i$};
        \draw[ar] (whyn |- G.south) -- (whyn) node [type] {$A_k$} node [level] {$i$};
        \draw[ar] (bang |- G.south) -- (bang) node [type] {$C$} node [level] {$i$};
        \draw[ar] (whyn3) --++(0,-0.8) node [type,right=-0.05cm] {$\wn_{s_1,d,n} A_1$} node [level] {$i$-1};
        \draw[ar] (whyn2) --++(0,-0.8) node [type,right=-0.05cm] {$\wn_{s_2,d+1,n} A_2$} node [level] {$i$-1};
        \draw[ar] (whyn) --++(0,-0.8) node [type,right=-0.05cm]  {$\wn_{s_k,d+1,n} A_k$} node [level] {$i$-1};
        \draw[ar] (bang) --++(0,-0.8) node [type,right=-0.05cm] {$\oc_{s,d,n} C$} node [level] {$i$-1};
        \draw (whyn)--(bang) -| ++(0.7,1.1) -| ($(whyn3)+(-0.4,0)$) -- (whyn3) -- (whyn2);
        \draw [dotted] (whyn2) -- (whyn);
      \end{scope}
    \end{scope}
  \end{tikzpicture}

  \caption{ \label{rules_labelling_snll}Constraints for $SDNLL$ proof-nets. For the $\exists$ rule, we require $s_B^{min} \geq s$. For the promotion rule, one of the auxiliary door has the same second index as the principal door (in the figure we set arbitrarily this door to be the first one).}
\end{figure}
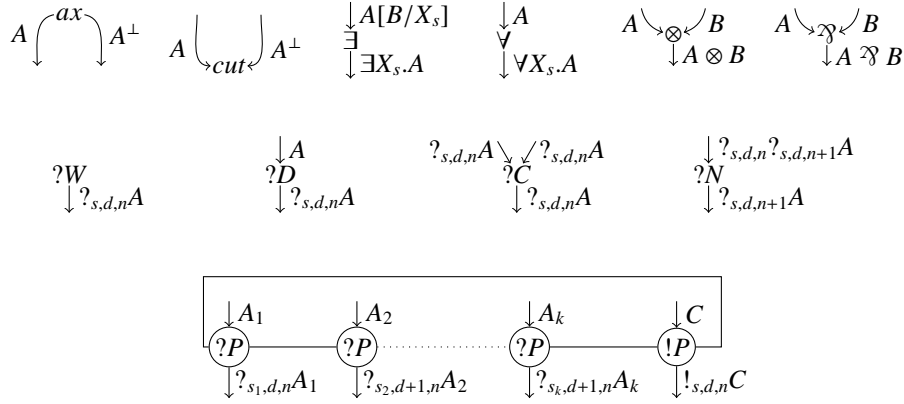

\subsection{$SDNLL$ is sound for $Poly$}
Thanks to $\realbeta{\_}$, we can prove the implications stated in the beginning of Section~\ref{section_def_sdnll}.

\begin{lemma}\label{lemma_snll_strat}
  If $B \stratSNLL B'$, $\beta(\sigma(B))= \oc_{s,d,n}A$ and $\beta(\sigma(B'))= \oc_{s',d',n'}A'$ then $s > s'$.
\end{lemma}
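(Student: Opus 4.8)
The plan is to analyze a witnessing path for $B \stratSNLL B'$ and track how the exponential indices evolve along it. By Definition~\ref{def_stratPot}, $B \stratSNLL B'$ means there is a path
\[
  ((\sigma(B),P),[\oc_t]) \noJump^* ((\overline{\sigma(B')},Q),U).
\]
First I would observe that the left-most trace element of the context at the end of such a path must still be of the shape $\oc_{\_}$ (or, dually, we should think of reading $\overline{\sigma(B')}$ as entering $B'$ through its principal door). Indeed, crossing nodes with $\noJump$ only modifies the left-most trace element in controlled ways, and the steps that would destroy an $\oc_{\_}$ head element are exactly the ones crossing a principal door upward or a $\contLab$/$\digLab$ node; I would check that along a $\noJump$-path from $((\sigma(B),P),[\oc_t])$ the head stays an $\oc_{\_}$ element until we reach a reversed principal edge, so that we may write the final context as $((\overline{\sigma(B')},Q),[\oc_u]@U')$ for some $u$ and $U'$. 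Then Lemma~\ref{lemma_realbeta_nojump} gives $\realbeta{((\sigma(B),P),[\oc_t])} \leq \realbeta{((\overline{\sigma(B')},Q),[\oc_u]@U')}$.

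Next I would compute both sides using Lemma~\ref{lemma_realbeta_simple}. Since $\beta(\sigma(B)) = \oc_{s,d,n}A$, that lemma gives $\realbeta{((\sigma(B),P),[\oc_t])} = \oc_{s,d,n}A[\theta]$ for some substitution $\theta$ (here the trailing trace is empty). For the other endpoint, note that $\overline{\sigma(B')}$ carries the formula $\beta(\sigma(B'))^\perp = \wn_{s',d',n'}(A'^\perp)$; so $\realbeta{((\overline{\sigma(B')},Q),[\oc_u]@U')}$ is of the shape $\wn_{s',d',n'}(\cdots)_{|U'}$ — but wait, the head trace element is $\oc_u$, which is incompatible with a $\wn$-formula. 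This is the point where I need to be careful: a $\noJump$-path arriving at a reversed principal edge arrives with a $\oc_{\_}$ head (because crossing the principal door downward from inside $B'$, read in reverse, pushes a $\oc_{\_}$), and the relevant formula to compare is $\beta(\sigma(B')) = \oc_{s',d',n'}A'$ read along the \emph{reverse} of the path. Concretely I would instead apply Lemma~\ref{lemma_realbeta_nojump} to the reverse path $((\sigma(B'),Q),[\oc_u]@U'^\perp) \noJump^* ((\overline{\sigma(B)},P),[\oc_t])^\perp$ — using the duality of $\noJump$-rules stated just after Definition of $\csRel$ — obtaining $\realbeta{((\sigma(B'),Q),\ldots)} \leq \realbeta{((\overline{\sigma(B)},P),\ldots)}$, then compute both ends with Lemma~\ref{lemma_realbeta_simple}: the left side is $\oc_{s',d',n'}A'[\theta']$ and the right side is $(\oc_{s,d,n}A[\theta''])^\perp_{|\cdots} = \wn_{s,d,n}(\cdots)$, which is again type-incompatible. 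So the honest route is: along the forward $\noJump$-path the head element stays $\oc_{\_}$, hence the formula read off at the end via $\realbeta{\_}$ is a $\oc_{\_,\_,\_}$-formula whose outermost connective matches $\beta(\sigma(B'))$ only if we identify $\overline{\sigma(B')}$ correctly — i.e. the path actually ends on $\sigma(B')$ traversed in the direction from inside to outside, whose $\realbeta{\_}$ is a $\oc_{s',d',n'}$-formula.

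Granting that identification, I would finish as follows. We have a $\oc_{s,d,n}$-formula $\leq$ a $\oc_{s',d',n'}$-formula (by Lemma~\ref{lemma_realbeta_nojump}, reading the monotonicity in the correct direction). By the definition of $\leq$ (transitive closure of $\leq^1$), comparing two formulas with the same outermost $\oc$ connective forces, at that position, $s \geq s'$, $d \geq d'$, $n \geq n'$; since the inner formula of a $\oc$ in $\form{s}$ lies in $\form{s+1}$ and in $\form{s'+1}$ respectively, and these are strict (the remark after Definition~\ref{def_sdnll} states $\form{0} \supsetneq \form{1} \supsetneq \cdots$), the constraint from Definition~\ref{def_sdnll} that the body of $\oc_{t,d,n}$ lies in $\form{t+1}$ gives $A[\theta] \in \form{s+1}$ while the body of the larger formula lies in $\form{s'+1}$; since $A[\theta] = A'[\theta']$ as the inner formulas of two comparable $\oc$'s (subtyping $\leq^1$ does not change the inner formula, only the indices), we get $A[\theta] \in \form{s+1} \cap \form{s'+1}$, which is consistent with both. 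To get the \emph{strict} inequality $s > s'$ I expect to need more than $\leq$: the missing ingredient is that $B \stratSNLL B'$ involves entering $B'$ through its \emph{principal} door from outside, which in $SDNLL$ forces the $\oc_u$ being consumed to come from strictly inside $B$'s stratum — concretely, crossing $\sigma(B')$ upward (the reversed principal edge) consumes a $\oc_{\_}$ trace element, and by the labelling rules of Figure~\ref{rules_labelling_snll} the formula on $\sigma(B')$ has stratum index $s'$ with body in $\form{s'+1}$, while the original trace element $\oc_t$ carried formula-body in $\form{s+1}$; the $SDNLL$ promotion/auxiliary-door constraints then propagate $s+1 \leq$ (stratum of everything inside $B$) $\leq s'$, i.e. $s < s'$ — \textbf{wait}, that direction is backwards.

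I will therefore organize the real argument around this precise claim, which I expect to be \textbf{the main obstacle}: along a $\noJump$-path $((\sigma(B),P),[\oc_t]) \noJump^* ((\overline{\sigma(B')},Q),T)$, the head trace-element stays $\oc_{\_}$ throughout, so by Lemmas~\ref{lemma_realbeta_simple} and~\ref{lemma_realbeta_nojump} the formula $\realbeta{}$ at the final context is $\leq$-above $\oc_{s,d,n}A[\theta]$ and, crossing the reversed principal edge of $B'$ in the forward direction (which is an \emph{upward} crossing of $B'$'s principal door, i.e. the rule $((\overline{f},P),[\oc_t]@\cdots) \hookrightarrow \ldots$ dualized, producing a $\oc$-head context whose formula is $\beta(\sigma(B'))^\perp = \wn_{s',\ldots}$), we read off that the body sits in $\form{s'+1}$; but a $\noJump$-path keeping an $\oc$-head never changes the stratum-level of the carried formula \emph{downward}, only its subtyping image, and each auxiliary-/principal-door crossing can only \emph{increase} the required minimal stratum of the enclosing box. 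Chaining these inequalities, and using that entering $B'$ by its principal door (not an auxiliary one) is what Definition~\ref{def_stratPot} demands, yields $s \geq s'+1$, i.e. $s > s'$. I would present this as a short induction on the length of the $\noJump$-path, maintaining the invariant ``the head is $\oc_{v}$ with $v$ a signature and $\realbeta{}$ of the current context is a $\oc_{s'',\_,\_}$-formula with $s'' \geq s$'', and then conclude at the last step where we land on $\sigma(B')$ whose formula forces $s'' = s'$, whence $s \leq s'$ — and the strictness comes from the fact that crossing into $B'$'s body strictly raises the stratum bound by the $\form{t} \supsetneq \form{t+1}$ hierarchy. The bookkeeping of which door is crossed and in which direction, and keeping the $\realbeta{\_}$ computations honest across $\forall/\exists$ substitutions (Lemma~\ref{lemma_realbeta_simple} with a nontrivial $\theta$), is where the real work lies.
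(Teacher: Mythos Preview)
Your proposal identifies the right tools (Lemmas~\ref{lemma_realbeta_simple} and~\ref{lemma_realbeta_nojump}, and the grammar $\form{t}\supsetneq\form{t+1}$ as the source of strictness) but never assembles them into a working argument. You go back and forth on the direction of the inequality, and the final ``invariant'' you propose is actually backwards: you end with ``$s'' = s'$, whence $s \leq s'$'', which is the opposite of what you need. There is no need for an induction on path length or for reversing the path; Lemma~\ref{lemma_realbeta_nojump} applies directly to the forward $\noJump$-path and gives $\realbeta{(\sigma(B),P),[\oc_t]} \leq \realbeta{(\overline{\sigma(B')},P'),[\oc_u]@T}$. The left side is $\oc_{s,d,n}A[\theta]$. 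The right side is obtained by navigating (via $(\_)_{|T}$) \emph{strictly inside} the formula $\wn_{s',d',n'}(A'^\perp)[\theta']$, hence is a $\oc_{s'',d'',n''}$-formula occurring as a proper subformula of that $\wn_{s',\ldots}$. The relation $\leq$ then forces $s\geq s''$, and the grammar of Definition~\ref{def_sdnll} (the body of $\wn_{s',\ldots}$ lies in $\form{s'+1}$, so every modality inside has first index $>s'$) gives $s''>s'$, whence $s>s'$.

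There is also a genuine missing case you only gesture at: the $\oc_{s'',\ldots}$ subformula may have been introduced by the substitution $\theta'$ rather than being present in $A'^\perp$ itself. The paper handles this explicitly: one follows the chain of $\exists$-nodes responsible for the substitution, using the side condition $s_B^{\min}\geq s$ of the $\exists$ rule (Figure~\ref{rules_labelling_snll}) at each step, to obtain $s' < s_0 \leq s_1 \leq \cdots \leq s_{k-1} \leq s''$. Without this case analysis the argument is incomplete, since $A'[\theta']$ can contain modalities with indices unrelated to those of $A'$.
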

\begin{proof}
  Let us suppose that $B \stratSNLL B'$, $\beta(\sigma(B))=\oc_{s,d,n}A$ and $\beta(\sigma(B'))=\oc_{s',d',n'}A'$. By definition of $\stratSNLL$, there exist potentials $P$ and $P'$, and signatures $t$ and $u$ such that $((\sigma(B), P), [\oc_t]) \hspace{-0.1em}\noJump^*\hspace{-0.1em} ((\overline{\sigma(B')},P'),[\oc_u]@T) \hspace{-0.1em}\noJump^*\hspace{-0.1em} ((e,Q),[\oc_{\sige}])$. By Lemma~\ref{lemma_realbeta_simple}, there are substitutions $\theta$ and $\theta'$ such that $\realbeta{(\sigma(B),P),[\oc_t]}\hspace{-0.15em}=\hspace{-0.15em}\oc_{s,d,n}A[\theta]$ and $\realbeta{(\overline{\sigma(B')},P'),[\oc_u]@T}\hspace{-0.15em}=\hspace{-0.15em}(\oc_{s',d',n'}\overline{A'}[\theta'])_{|T}$. So, according to Lemma~\ref{lemma_realbeta_nojump},
  \begin{equation*}
    \oc_{s,d,n}A[\theta] \leq (\oc_{s',d',n'}\overline{A'}[\theta'])_{|T}
  \end{equation*}
  By definition of $\leq$ on formulae, $(\oc_{s',d',n'}\overline{A'}[\theta'])_{|T} = \oc_{s'',d'',n''}A''$ with $s \geq s''$. Thus $A'[\theta]$ is of the shape $H[\wn_{s'',d'',n''} A'']$. Either $A'$ is of the shape $H[\wn_{s'',d'',n''}A'']$ (in this case by definition of $\form{0}$, $s' < s''$ so $s > s'$), or there exist sequences $A'=A_0,A_1\cdots,A_k$ of formulae, $X^0,\cdots,X^k$ of variables and $s_0,\cdots,s_k$ such that for $0 \leq i < k$, $A_i$ is of the shape $H_i[X^i_{s_i}]$ and there exists a $\exLab$ node $n_i$ whose associated variable is $X^i_{s_i}$ and whose associated formula is $A_{i+1}$. And $A_{k}$ is of the shape $H_k[\oc_{s'',d'',n''}A'']$. In this case, $s' < s_0 \leq s_1 \leq \cdots \leq s_{k-1} \leq s''$ so $s' < s'' \leq s$.
\end{proof}

\begin{lemma}\label{lemma_snll_dep}
  If $B \dcSim B'$, $\beta(\sigma(B))= \oc_{s,d,n}A$ and $\beta(\sigma(B'))= \oc_{s',d',n'}A'$ then $d > d'$ 
\end{lemma}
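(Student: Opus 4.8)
The plan is to mirror the proof of Lemma~\ref{lemma_snll_strat}, replacing the reasoning about the first index $s$ by reasoning about the second index $d$, and using the version of the $\realbeta{\_}$ machinery that controls $d$ along arbitrary $\csRel$-paths rather than just $\noJump$-paths. Concretely, suppose $B \dcSim B'$ with $\beta(\sigma(B)) = \oc_{s,d,n}A$ and $\beta(\sigma(B')) = \oc_{s',d',n'}A'$. By Definition~\ref{def_kjoins_simple} there are indices $i \neq j$, potentials $P, P'_1, P'_2$, and signatures $t, u$ with
\begin{align*}
  ((\sigma(B),P),[\oc_t]) &\csRel^+ ((\overline{\sigma_i(B')},P'_1),[\oc_{\sige}])\\
  ((\sigma(B),P),[\oc_u]) &\csRel^+ ((\overline{\sigma_j(B')},P'_2),[\oc_{\sige}]).
\end{align*}
Both paths reach a (reversed) auxiliary edge of $B'$, whose label is of the shape $\wn_{s'_i,d'_i,n'_i}A'_i$, and by the $SDNLL$ promotion constraint (Figure~\ref{rules_labelling_snll}) every auxiliary door other than the distinguished one carries a second index $\geq d'+1$, so at least one of $d'_i, d'_j$ is $\geq d'+1$; say it is the one reached by the first path, i.e. $d'_i \geq d'+1$.

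First I would apply Lemma~\ref{lemma_realbeta_csrel} to the first path: since $\realbeta{((\sigma(B),P),[\oc_t])} = \oc_{\_,d,n}A[\theta]$ for some substitution $\theta$ by Lemma~\ref{lemma_realbeta_simple}, and the path ends at a context whose $\realbeta{\_}$ is of the shape $\oc_{\_,d'',n''}(\cdot)$ — here one must note that $\realbeta{\_}$ of a context sitting on a reversed auxiliary edge $\overline{\sigma_i(B')}$ with empty trace after the $\oc$ is the dual of the auxiliary formula, hence of the shape $\oc_{s'_i,d'_i,n'_i}\overline{A'_i}$ — Lemma~\ref{lemma_realbeta_csrel} gives $d \geq d'_i$. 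Combining with $d'_i \geq d'+1$ yields $d \geq d'+1 > d'$, which is the claim. The symmetric sub-case (where instead $d'_j \geq d'+1$) is handled identically using the second path.

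The main obstacle I anticipate is the bookkeeping needed to be sure that the endpoint context $((\overline{\sigma_i(B')},P'_1),[\oc_{\sige}])$ has $\realbeta{\_}$ equal to (the $\exists$-substituted form of) the \emph{dual} of the auxiliary-door formula with its outermost $\oc_{s'_i,d'_i,n'_i}$ intact, rather than something where the trace $[\oc_{\sige}]$ has already descended past the modality; this requires checking that $\realbeta{\_}$ commutes appropriately with the labelling convention $\beta(\overline{e}) = \beta(e)^\perp$ and with the $\_{|T}$ restriction when $T$ is just a single $\oc$-element, exactly as in the $\oc_{s,d,n}$ vs.\ $\wn_{s,d,n}$ duality used at the end of Lemma~\ref{lemma_snll_strat}. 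A secondary point, also present in Lemma~\ref{lemma_snll_strat}, is the possible presence of $\exLab$/$\faLab$ nodes: if $\overline{A'_i}$ is a variable bound by some chain of $\exists$-substitutions, one argues as there that the second indices along such a chain are non-decreasing (the $SDNLL$ side condition $s_B^{min}\ge s$ for the $\exists$ rule has an analogue, and in any case the second index of a substituted formula is at least that required by the variable occurrence), so the inequality $d \geq d'_i$ survives; since $d$-indices only ever appear under $\oc$/$\wn$, this is in fact slightly easier than the $s$ case because there is no interaction with the variable index $t$ in $X_t$.

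I would conclude by noting, as a remark, that the argument is strictly more uniform than the stratification case: whereas $\stratSNLL$ was defined via $\noJump$-paths through a \emph{principal} door (forcing the use of Lemma~\ref{lemma_realbeta_nojump}, which fails across $\onlyJump$), the relation $\dcSim$ is defined via full $\csRel$-paths ending at an \emph{auxiliary} door, and it is precisely Lemma~\ref{lemma_realbeta_csrel} — whose whole point is that $d$ (and $n$) are monotone even across $\onlyJump$ steps — that makes the proof go through without any case analysis on the shape of the path. Thus the essential content is: the promotion rule forces a strict drop in $d$ at the auxiliary door one lands on, and $\realbeta{\_}$-monotonicity propagates this drop back to the source box $B$.
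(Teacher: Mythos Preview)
Your proposal is correct and follows essentially the same route as the paper: pick the auxiliary door (among the two distinct ones $i\neq j$) whose second index is strictly above $d'$ by the $SDNLL$ promotion constraint, then apply Lemma~\ref{lemma_realbeta_simple} at the source and Lemma~\ref{lemma_realbeta_csrel} along the $\csRel$-path to conclude $d\geq d'_i>d'$. Your worry about an $\exists$-substitution chain is superfluous here: unlike in Lemma~\ref{lemma_snll_strat}, Lemma~\ref{lemma_realbeta_csrel} already delivers the inequality on the $d$-index directly, so no unpacking of the formula structure through variable substitutions is needed.
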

\begin{proof}
  By definition of $\dcSim$, there exists $i \neq j$ and paths of the shape $((\sigma(B),P), [\oc_t]) \csRel^+ ( (\overline{\sigma_i(B')}, P'_1) [\oc_{\sige}])$ and $((\sigma(B),P), [\oc_u]) \csRel^+ ((\overline{\sigma_j(B')}, P'_2), [\oc_{\sige}])$. Either $i \neq 1$ or $j \neq 1$. We suppose without loss of generality that $i \neq 1$. By definition of $SDNLL$, $\beta(\sigma_i(B))=\oc_{\_,d'',\_}\_$ with $d'' > d'$. Then, by Lemma~\ref{lemma_realbeta_csrel}, $d \geq d'' > d'$.
\end{proof}

\begin{lemma}\label{lemma_snll_nesting}
  If $B \nestSim B'$, $\beta(\sigma(B))= \oc_{s,d,n}A$ and $\beta(\sigma(B'))= \oc_{s',d',n'}A'$ then $n > n'$ 
\end{lemma}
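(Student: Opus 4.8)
The plan is to mirror the proofs of Lemma~\ref{lemma_snll_strat} and Lemma~\ref{lemma_snll_dep}, using the semantic machinery built around $\realbeta{\_}$ and, crucially, Lemma~\ref{lemma_realbeta_csrel}, which propagates a lower bound on the $n$-index along arbitrary $\csRel$-paths. First I would unfold the definition of $\nestSim$ (Definition~\ref{def_nestsim}): since $B \nestSim B'$, there is a non-standard signature $t$ and a path $((\sigma(B),P),[\oc_t]) \csRel^+ ((\sigma(B'),Q),[\oc_{\sige}])$. The point of insisting that $t$ is non-standard is that $t$ must contain a $\sigp$ (or $\sign$) constructor, which means the path genuinely crosses a $\digLab$ node; this is what forces a strict increase in the nesting index rather than merely a non-strict one.

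Next I would apply Lemma~\ref{lemma_realbeta_simple} to the first context: since $\beta(\sigma(B))=\oc_{s,d,n}A$, we have $\realbeta{(\sigma(B),P),[\oc_t]} = \oc_{s,d,n}A[\theta]$ for some substitution $\theta$, in particular a formula whose outermost connective carries nesting index exactly $n$. Similarly, since $\beta(\sigma(B'))=\oc_{s',d',n'}A'$, the context $((\sigma(B'),Q),[\oc_{\sige}])$ has $\realbeta{\_}$ of the shape $\oc_{s',d',n'}A'[\theta']$, with nesting index exactly $n'$. Now Lemma~\ref{lemma_realbeta_csrel} applied to $C \csRel^+ C'$ gives $n \geq n'$. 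To upgrade $\geq$ to $>$, I would examine the step of the path at which the $\digLab$ node is crossed: because $t$ is non-standard, at least one such step occurs, and at a $\digLab$ node the $SDNLL$ labelling (Figure~\ref{rules_labelling_snll}, the $\digLab$ rule: premise $\wn_{s,d,n}\wn_{s,d,n+1}A$, conclusion $\wn_{s,d,n+1}A$) together with the compatibility of $\csRel$ with $\realbeta{\_}$ forces the nesting index to strictly decrease at that step — more precisely, one shows that an intermediate context along the path has $\realbeta{\_}$ with nesting index $\geq n'+1$, hence (using Lemma~\ref{lemma_realbeta_csrel} on the initial segment) $n \geq n'+1$, i.e. $n > n'$.

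I expect the main obstacle to be making the ``$\digLab$ forces a strict decrease'' argument precise. The subtlety is that $\realbeta{\_}$ tracks the outermost connective of the context's formula, and one must verify that when the path reaches the $\digLab$ node the relevant trace element is indeed of the form $\oc_t$ with $t$ still carrying a $\sigp/\sign$ that has not yet been consumed, so that the $\digLab$ rule applies at the level of $\realbeta{\_}$ and produces the $n+1$-indexed inner modality. A clean way to handle this is to pick the \emph{first} step in the path $((\sigma(B),P),[\oc_t]) \csRel^+ ((\sigma(B'),Q),[\oc_{\sige}])$ whose leftmost trace element is of the shape $\oc_u$ with $u$ non-quasi-standard-but-about-to-be-simplified — or more simply, invoke the analysis already used in the proof of Lemma~\ref{lemma_bound_poly_toutca_simple}, where the interplay between $\sigp$, $\digLab$ nodes and the $\wn N$ cut-elimination rule is spelled out. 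Alternatively, since the paper notes such technical $\digLab$ computations are carried out in \cite{perrinelMegathese}, the safest route is:

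\begin{proof}
  By definition of $\nestSim$, there exist a non-standard signature $t$, potentials $P,Q$, and a path $((\sigma(B),P),[\oc_t]) \csRel^+ ((\sigma(B'),Q),[\oc_{\sige}])$. Since $t$ is non-standard, it contains an occurrence of the $\sigp$ constructor, hence this path crosses a $\digLab$ node. Consider the portion of the path up to and including the first such step, say $((\sigma(B),P),[\oc_t]) \csRel^* ((g,R),[\oc_w]@U) \csRel ((h,R),[\oc_{w'}]@U)$ where $g$ is the premise and $h$ the conclusion of a $\digLab$ node and $w$ is non-standard with $w = \sign(w_1,w_2)$ or $w = \sigp(w_2)$. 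By Lemma~\ref{lemma_realbeta_simple}, $\realbeta{(\sigma(B),P),[\oc_t]}=\oc_{s,d,n}A[\theta]$ for some substitution $\theta$, so this formula has nesting index $n$. By Lemma~\ref{lemma_realbeta_csrel} applied to $((\sigma(B),P),[\oc_t]) \csRel^* ((h,R),[\oc_{w'}]@U)$, the formula $\realbeta{(h,R),[\oc_{w'}]@U}$ is of the shape $\oc_{\_,\_,n_h}B_h$ with $n \geq n_h$; moreover, by the $\digLab$ constraint of Figure~\ref{rules_labelling_snll} and Lemma~\ref{lemma_realbeta_nojump}, the inner modality produced by the $\digLab$ step forces $n_h \geq n'' + 1$ where $\oc_{\_,\_,n''}\_$ is the shape of $\realbeta{(\sigma(B'),Q),[\oc_{\sige}]}$; by Lemma~\ref{lemma_realbeta_simple} again, $n'' = n'$. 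Combining, $n \geq n_h \geq n' + 1$, hence $n > n'$.
\end{proof}

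If the $\digLab$ step analysis turns out to be more delicate than sketched (which is likely, given the warnings in the text about $\digLab$ being ``quite technical''), the fallback is to cite the corresponding computation from \cite{perrinelMegathese} for that single step, exactly as was done for the $\digLab$ cases in the proof of Theorem~\ref{injection_lemma_simple}.
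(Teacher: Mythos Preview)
Your overall strategy matches the paper's: split the $\csRel$-path at a $\digLab$ crossing, apply Lemma~\ref{lemma_realbeta_csrel} to each half for $n \geq n_e$ and $n_f \geq n'$, and use the $SDNLL$ $\digLab$ constraint for the strict drop $n_e > n_f$ in the middle. The gap is in how you identify the pivotal step.

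In your formal proof you take the \emph{first $\digLab$ crossing} along the path and write the context there as $((g,R),[\oc_w]@U)$ with $w$ non-standard. But a $\digLab$ crossing with trace length $\geq 2$ acts, in the $\oc$-dual form, on the \emph{rightmost} trace element via the $\sign$ rule, leaving the leftmost $\oc_w$ untouched; your case ``$w=\sign(w_1,w_2)$'' for the leftmost element therefore never occurs, and at such a step Lemma~\ref{lemma_realbeta_simple} gives you no grip on the outermost nesting index of $\realbeta{\_}$. You also have the direction of the crossing reversed: with an $\oc$ on the trace the step goes from the reversed conclusion of the $\digLab$ node to its reversed premise, not from premise to conclusion.

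Your preamble in fact gestures at the right fix. The paper's pivot is the \emph{first context whose leftmost trace element is $\oc_u$ with $u$ standard}. This exists because the path starts with leftmost $\oc_t$ ($t$ non-standard) and ends with $\oc_\sige$ (standard), and the standardness of the leftmost signature can only change via the $\sigp$-form of the $\digLab$ rule, which forces trace length $1$. Hence the step into that context is exactly $((e,R),[\oc_{\sigp(u)}]) \noJump ((f,R),[\oc_u])$ with $\overline{e}$ the conclusion of a $\digLab$ node. With trace length $1$ on both sides, Lemma~\ref{lemma_realbeta_simple} identifies $\realbeta{\_}$ with the outermost modality of the edge label, and the $SDNLL$ $\digLab$ constraint yields $n_e > n_f$ directly. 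Lemma~\ref{lemma_realbeta_csrel} on the two segments then gives $n \geq n_e > n_f \geq n'$.
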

\begin{proof}
  By definition of $\nestSim$, there exist $P,Q \in \Pot$, and a non-standard signature $t$, such that $((\sigma(B),P),[\oc_t]) \csRel^* ((\sigma(C),Q),[\oc_{\sige}])$. Let us consider the first context of the path such that the leftmost trace element is $\oc_u$ with $u$ a standard signature. This step must be of the shape $((e,R),[\oc_{\sigp(u)}]) \noJump ((f,R),[\oc_u])$ with $\overline{e}$ the conclusion of a $\digLab$ node. By definition of $SDNLL$, $\beta(e)=\wn_{\_,\_,n_e}\_$ and $\beta(f)=\wn_{\_,\_,n_f}\_$ with $n_e> n_f$. Then, by Lemmas~\ref{lemma_realbeta_simple} and~\ref{lemma_realbeta_csrel}, $n \geq n_e > n_f \geq n'$.
\end{proof}

\begin{coro}\label{coro_sdnll_stratifieds}
  Let $G$ be a $SDNLL$ proof-net, then $G$ is $\stratSNLL$-stratified, $\dcSim$-stratified and $\nestSim$-stratified. Moreover, for every $B \in \boxset{G}$ with $\beta(\sigma(B))=\oc_{s,d,n}A$, $\stratu{\stratSNLL}{B} \leq s$, $\stratu{\dcSim}{B} \leq d$ and $\stratu{\nestSim}{B} \leq n$.
\end{coro}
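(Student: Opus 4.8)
The plan is to read the corollary off directly from the three ``index-decrease'' lemmas just established: Lemma~\ref{lemma_snll_strat} for $\stratSNLL$ and the first index, Lemma~\ref{lemma_snll_dep} for $\dcSim$ and the second index, and Lemma~\ref{lemma_snll_nesting} for $\nestSim$ and the third index. Since the three arguments are literally the same up to the choice of index and lemma, I would carry out the one for $\stratSNLL$ in detail and only indicate the obvious transpositions for the other two.

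First I would prove acyclicity. Assume towards a contradiction that $\stratSNLL$ admits a cycle $B_0 \stratSNLL B_1 \stratSNLL \cdots \stratSNLL B_{k-1} \stratSNLL B_0$, and write $\beta(\sigma(B_i)) = \oc_{s_i,d_i,n_i}A_i$. Applying Lemma~\ref{lemma_snll_strat} to each edge of the cycle yields $s_0 > s_1 > \cdots > s_{k-1} > s_0$, which is impossible; hence $\stratSNLL$ is acyclic, that is, $G$ is $\stratSNLL$-stratified. The same reasoning with Lemma~\ref{lemma_snll_dep} on the second indices (resp. Lemma~\ref{lemma_snll_nesting} on the third indices) gives that $\dcSim$ (resp. $\nestSim$) is acyclic, so $G$ is $\dcSim$-stratified and $\nestSim$-stratified.

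Next I would derive the stratum bounds. Fix $B$ with $\beta(\sigma(B)) = \oc_{s,d,n}A$. Having shown $\stratSNLL$ acyclic on the finite set $\boxset{G}$, $\stratu{\stratSNLL}{B}$ is realised by a maximal chain $B = B_1 \stratSNLL B_2 \stratSNLL \cdots$, and iterating Lemma~\ref{lemma_snll_strat} along it produces a strictly decreasing sequence of natural numbers $s = s_1 > s_2 > \cdots$ whose length is exactly $\stratu{\stratSNLL}{B}$, whence $\stratu{\stratSNLL}{B} \leq s$. Equivalently, one can argue by induction on $\stratu{\stratSNLL}{B}$: for any $B \stratSNLL B'$ with $\beta(\sigma(B')) = \oc_{s',\_,\_}\_$ one has $s' < s$ and, by the induction hypothesis, $\stratu{\stratSNLL}{B'} \leq s'$, so $\stratu{\stratSNLL}{B} = 1 + \max_{B \stratSNLL B'}\stratu{\stratSNLL}{B'} \leq s$. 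The bounds $\stratu{\dcSim}{B} \leq d$ and $\stratu{\nestSim}{B} \leq n$ follow identically from Lemmas~\ref{lemma_snll_dep} and~\ref{lemma_snll_nesting}.

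I do not expect a genuine obstacle in this step: all the substance is already packed into Lemmas~\ref{lemma_snll_strat}, \ref{lemma_snll_dep} and~\ref{lemma_snll_nesting}, which themselves rely on the formula-tracking device $\realbeta{\_}$ and on Lemmas~\ref{lemma_realbeta_simple} to~\ref{lemma_realbeta_csrel} controlling how the exponential indices can only decrease along $\noJump$-paths and across the $\onlyJump$-steps underlying $\dcSim$ and $\nestSim$. The one point needing a moment's care in the corollary is to align the convention that $\stratu{\_}{\_}$ counts the boxes of a chain with the numbering of the indices on the $\oc$ connectives in $\form{0}$, so that a box with no outgoing edge sits at the minimal admissible index and the bounds come out exactly as stated.
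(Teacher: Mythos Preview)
Your proposal is correct and is exactly the argument the paper has in mind: the paper's proof is the single sentence ``Immediate consequence of the three previous lemmas,'' and what you wrote is precisely the natural unpacking of that sentence, first deriving acyclicity from the strict index decrease and then bounding the stratum by the length of a strictly decreasing sequence in $\mathbb{N}$. Your closing remark about aligning the chain-length convention for $\stratu{\_}{\_}$ with the index numbering is well placed; the paper glosses over this, but it is the only point where anything could go wrong.
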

\begin{proof}
  Immediate consequence of the three previous lemmas.
\end{proof}

    \begin{figure} \centering       \tikzsetnextfilename{encoding_sdnll_n3}
      \begin{tikzpicture}
        %\node at (-11,0) {$a$};
        \node [forall] (fa) at (0,0) {};
        \node [par]       (parg)   at ($(fa)+(0,0.8)$) {};
        \draw [ar] (fa) --++ (0,-0.8) node [type,left] {$\NSDN{s}{d}{n}$};
        \draw [ar] (parg) -- (fa) node [type,left] {$\oc_{s,d+1,n}(X_{s+1} \multimap X_{s+1}) \multimap \oc_{s,d,n}(X_{s+1} \multimap X_{s+1})$};
        \node [princdoor]    (princ3g)  at ($(parg)+(2.5, 4.2)$) {};
        \node [auxdoor]      (aux1g)    at ($(princ3g)+(-8,0)$) {};
        \node [auxdoor]      (aux2g)    at ($(aux1g)!0.333!(princ3g)$) {};
        \node [auxdoor]      (aux3g)    at ($(aux1g)!0.666!(princ3g)$) {};
        \node [cont]         (cont1g)   at ($(aux1g)!0.5!(aux2g)+(0,-0.6)$) {};
        \node [cont]         (cont2g)   at ($(cont1g)!0.5!(parg)$) {};
        \nvar{\hautTens}{1.2cm}
        \node [tensor]       (tens1g)   at ($(aux1g)+(0,\hautTens)$) {};
        \node [tensor]       (tens2g)   at ($(aux2g)+(0,\hautTens)$) {};
        \node [tensor]       (tens3g)   at ($(aux3g)+(0,\hautTens)$) {};
        \nvar{\decAx}{0.5cm}
        \node [ax]        (ax1n)     at ($(tens1g)+(-0.5,\decAx)$) {};
        \node [ax]        (ax2n)     at ($(tens1g)!0.5!(tens2g)+(0,\decAx)$) {};
        \node [ax]        (ax3n)     at ($(tens2g)!0.5!(tens3g)+(0,\decAx)$) {};
        \node [ax]        (ax4n)     at ($(princ3g)+(-0.7,\decAx + \hautTens)$) {};
        \node [par]          (parx)     at ($(princ3g)+(0,0.7)$) {};
        \draw [ar] (cont2g) -- (parg)   node [type,below left=-0.1cm] {$\wn_{s,d+1,n}(X_{s+1} \otimes X^\perp_{s+1})$};
        \draw [ar] (cont1g) -- (cont2g) node [type,below left=-0.1cm] {$\wn_{s,d+1,n}(X_{s+1} \otimes X^\perp_{s+1})$};
        \draw [ar] (aux1g) -- (cont1g)  node [type,below left=-0.1cm,pos=0.6] {$\wn_{s,d+1,n}(X_{s+1} \otimes X^\perp_{s+1})$};
        \draw [ar] (aux2g) -- (cont1g)  node [type,below right=-0.05cm] {$\wn_{s,d+1,n}(X_{s+1} \otimes X^\perp_{s+1})$};
        \draw [ar] (aux3g) -- (cont2g)  node [type] {$\wn_{s,d+1,n}(X_{s+1} \otimes X^\perp_{s+1})$};
        \draw [ar] (parx) -- (princ3g);
        \draw [ar] (tens1g) -- (aux1g) node [type,pos=0.7] {$X_{s+1} \otimes X^\perp_{s+1}$};
        \draw [ar] (tens2g) -- (aux2g) node [type,pos=0.7] {$X_{s+1} \otimes X^\perp_{s+1}$};
        \draw [ar] (tens3g) -- (aux3g) node [type,pos=0.7] {$X_{s+1} \otimes X^\perp_{s+1}$};
        \coordinate (int) at ($(ax1n)+(0.5,-0.4)$);
        \draw [out=-180,in= 170] (ax1n.180) to (int);
        \draw [ar,out=-10,in= 170] (int) to (parx);
        \draw [ar,out=-0 ,in= 120] (ax1n.0) to (tens1g);
        \draw [ar,out=-180 ,in=  60] (ax2n.180) to (tens1g);
        \draw [ar,out=-0   ,in=120 ] (ax2n.0) to (tens2g);
        \draw [ar,out=-180 ,in=  60] (ax3n.180) to (tens2g);
        \draw [ar,out=-0   ,in=120 ] (ax3n.0) to (tens3g);
        \draw [ar,out=-180 ,in=  60] (ax4n.180) to (tens3g);
        \draw [ar,out=-0 ,in=  60] (ax4n.0) to (parx);
        \draw (princ3g) -| ++(0.5,\hautTens + \decAx +0.3cm) -| ($(aux1g)+(-1,0)$) -- (aux1g) -- (aux2g) -- (aux3g) -- (princ3g);
        \draw (princ3g) -- (parg) node [type,pos=0.85] {$\oc_{s,d,n}(X_{s+1} \multimap X_{s+1})$};
      \end{tikzpicture}
      \caption{Encoding $\encodeS{s,d,n}{3}$ of $3$}
      \label{fig_encoding_sdnll_n}
    \end{figure}

\begin{theorem}\label{theo_sdnll_pn_bound}
Let $G$ be a $SDNLL$ proof-net, then the maximal reduction length of $G$ (with $x=|\dirEdges{G}|$, $\partial=\depthG{G}$, $S=1+\max_{B \in \boxset{G}}\straF{\sigma(B)}$, $D=\max_{B \in \boxset{G}}\dcF{\sigma(B)}$ and $N=1+\max_{B \in \boxset{G}}\nesF{\sigma(B)}$) is bounded by
  \begin{equation*}
    W_G \leq x^{1+ D^{S} \cdot \partial^{N \cdot S}}
  \end{equation*}
\end{theorem}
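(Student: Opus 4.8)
The plan is to combine the structural results already established: Corollary~\ref{coro_sdnll_stratifieds}, which tells us that any $SDNLL$ proof-net $G$ is $\stratSNLL$-stratified, $\dcSim$-stratified and $\nestSim$-stratified, together with Corollary~\ref{coro_bound_poly_nest}, which gives the bound $W_G \leq x^{D^{S} \cdot \partial^{(N+1)\cdot(S+1)}}$ in terms of the four parameters $\stratG{G}$, $\left|\dcSim\right|$, $\left|\nestSim\right|$ and $\depthG{G}$. The only work left is to translate the parameters appearing in Corollary~\ref{coro_bound_poly_nest} (which are defined semantically, in terms of the relations $\stratSNLL$, $\dcSim$, $\nestSim$) into the syntactic parameters of the present statement (which are read off the indices on the exponential connectives of $G$), and then to simplify the resulting exponent to match the form $x^{1+D^S\cdot\partial^{N\cdot S}}$.

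First I would invoke Corollary~\ref{coro_sdnll_stratifieds} to know that $G$ is stratified for all three relations, so Corollary~\ref{coro_bound_poly_nest} applies. The same corollary gives the pointwise bounds $\stratu{\stratSNLL}{B}\leq\straF{\sigma(B)}$, $\stratu{\dcSim}{B}\leq\dcF{\sigma(B)}$ and $\stratu{\nestSim}{B}\leq\nesF{\sigma(B)}$ for every box $B$. Taking maxima over $\boxset{G}$ yields $\stratG{G}=\left|\stratSNLL\right|\leq 1+\max_B\straF{\sigma(B)}=S$ (here I would be careful about the off-by-one in the definition of $\stratG{G}$ as $\max_B\stratu{\stratSNLL}{B}$ versus the convention that stratum values start at $1$; the definition of $\BSDN{}{}{}$-style formulas and of $\stratu{}{}$ earlier in the paper fixes the precise offset, and the definition in the statement adds $1$ accordingly), and similarly $\left|\dcSim\right|\leq D$ and $\left|\nestSim\right|\leq N$ with the analogous conventions, while $\partial=\depthG{G}$ directly. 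Plugging these into Corollary~\ref{coro_bound_poly_nest} and using that $x^{a}\leq x^{b}$ when $a\leq b$ and $x\geq 1$ gives $W_G\leq x^{D^{S}\cdot\partial^{(N+1)\cdot(S+1)}}$, but one has to be slightly careful: with the present statement's definitions of $S$ and $N$ (each already incremented), the exponents match as $D^S\cdot\partial^{N\cdot S}$ after absorbing the $+1$ terms, and the additive $1$ in the exponent $1+D^S\cdot\partial^{N\cdot S}$ covers the extra factor of $x=\left|\dirEdges{G}\right|$ coming from $W_G\leq\left|\dirEdges{G}\right|\cdot(\max\left|\cop{B,P}\right|)^{\partial}$.

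Concretely, I would run the computation exactly as in the proof of Corollary~\ref{coro_bound_poly_nest}: from Theorem~\ref{theo_bound_poly_nest}, $\max_{(B,P)}\left|\cop{B,P}\right|\leq x^{D^S\cdot\partial^{N\cdot S + N + S}-1}$ (now with $S$, $D$, $N$ the syntactic quantities, using the pointwise domination above and monotonicity of $x\mapsto x^{a}$); then $W_G\leq x\cdot\left(x^{D^S\cdot\partial^{N\cdot S+N+S}-1}\right)^{\partial+1}$ or the appropriate power of $\partial$ matching the paper's definition of $\partial$, and one checks arithmetically that the exponent is at most $1+D^S\cdot\partial^{N\cdot S}$ for the stated meanings of $S,D,N,\partial$. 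The main obstacle, and the only real subtlety, is bookkeeping the three separate ``$+1$'' conventions (depth versus $\partial$, the stratum-starts-at-$1$ offsets in $S$ and $N$, the $-1$ in Theorem~\ref{theo_bound_poly_nest}'s exponent and the extra $\left|\dirEdges{G}\right|$ factor) so that everything collapses to precisely $x^{1+D^S\cdot\partial^{N\cdot S}}$ rather than a slightly larger tower; this is a finite, mechanical check but is where an error would most naturally creep in. No genuinely new idea is needed beyond what Sections~\ref{section_polytime_simple} and~\ref{section_def_sdnll} already provide.
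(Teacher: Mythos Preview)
Your proposal is correct and takes essentially the same approach as the paper, whose entire proof reads ``The bound is an immediate consequence of Corollaries~\ref{coro_sdnll_stratifieds} and~\ref{coro_bound_poly_nest}.'' You have simply been more explicit than the paper about the index bookkeeping needed to reconcile the semantic parameters of Corollary~\ref{coro_bound_poly_nest} with the syntactic parameters $S,D,N,\partial$ defined in the theorem statement, which is indeed where all the (minor) work lies.
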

\begin{proof}
  The bound is an immediate consequence of Corollaries~\ref{coro_sdnll_stratifieds} and~\ref{coro_bound_poly_nest}.
\end{proof}

To formalize the polynomial time soundness of $SDNLL$, we need to define an encoding of binary lists. For any $s,d,n \in \mathbb{N}$, we define the formulae $\NSDN{s}{d}{n}$ and $\BSDN{s}{d}{n}$ by
\begin{align*}
  \NSDN{s}{d}{n}&=\forall X_{s+1}, \oc_{s,d+1,n}(X_{s+1} \multimap X_{s+1}) \multimap \oc_{s,d,n}(X_{s+1} \multimap X_{s+1})\\
  \BSDN{s}{d}{n}&=\forall X_{s+1}, \oc_{s,d+1,n}(X_{s+1} \multimap X_{s+1}) \multimap \oc_{s,d+1,n}(X_{s+1} \multimap X_{s+1}) \multimap \oc_{s,d,n}(X_{s+1} \multimap X_{s+1})
\end{align*}

For any $s,d,n \in \mathbb{N}$, $k \in \mathbb{N}$ and binary list $l$, we can define an encoding $\encodeS{s,d,n}{k}$ of $k$ as in Figure~\ref{fig_encoding_sdnll_n}. The encoding $\encodeS{s,d,n}{l}$ of $l$ can be defined similarly. We can verify that the sizes of $\encodeS{s,d,n}{k}$ and $\encodeS{s,d,n}{l}$ depend linearly on the size of $k$ and $l$. Finally, for every $k$ and $l$ there is exactly one box in $\encodeS{s,d,n}{k}$ and $\encodeS{s,d,n}{l}$.

\begin{theorem}\label{theo_sdnll_polysound}
  For every $SDNLL$ proof-net $G$ whose only conclusion is labelled by $\BSDN{s}{d}{n} \multimap A$, there exists a polynomial $P$ such that for every binary list $l$,
  \begin{equation*}
    W_{(G)\encodeS{s,d,n}{l}} \leq P\left(|l|\right)
  \end{equation*}
\end{theorem}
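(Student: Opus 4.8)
The plan is to apply Theorem~\ref{theo_sdnll_pn_bound} to the proof-net $(G)\encodeS{s,d,n}{l}$ and show that the resulting bound, as a function of $|l|$, is polynomial with exponent independent of $l$. First I would analyze the structure of $(G)\encodeS{s,d,n}{l}$: it consists of the fixed proof-net $G$, the encoding $\encodeS{s,d,n}{l}$, and a single application node (the $\otimes$/cut pattern of Figure~\ref{fig_app_gh}). The key observation is that $(G)\encodeS{s,d,n}{l}$ is a $SDNLL$ proof-net (its labelling respects the $SDNLL$ constraints, since the conclusion $\BSDN{s}{d}{n} \multimap A$ of $G$ matches the type $\BSDN{s}{d}{n}$ of the encoding), so Theorem~\ref{theo_sdnll_pn_bound} applies and gives $W_{(G)\encodeS{s,d,n}{l}} \leq x^{1+D^S \cdot \partial^{N\cdot S}}$ with $x$, $\partial$, $S$, $D$, $N$ computed on $(G)\encodeS{s,d,n}{l}$.

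The central point is that all four parameters $\partial$, $S$, $D$, $N$ in the exponent are \emph{independent of $l$}. For $S$, $D$, $N$: these are determined by the exponential labels $\oc_{s',d',n'}$ appearing on principal edges of boxes; as remarked after Theorem~\ref{theo_sdnll_polysound}, $\encodeS{s,d,n}{l}$ contains exactly one box, whose principal door carries the fixed label $\oc_{s,d,n}(X_{s+1}\multimap X_{s+1})$, and the boxes of $G$ carry labels not depending on $l$. Hence $S = 1+\max_B \straF{\sigma(B)}$, $D=\max_B\dcF{\sigma(B)}$, $N=1+\max_B\nesF{\sigma(B)}$ are bounded by constants $S_0, D_0, N_0$ depending only on $G$ (and on $s,d,n$). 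Similarly $\partial = \depthG{(G)\encodeS{s,d,n}{l}}$: the encoding has depth $1$, $G$ has fixed depth, and the application node adds no depth, so $\partial \leq \partial_0$ for a constant $\partial_0$. The only parameter depending on $l$ is $x = |\dirEdges{(G)\encodeS{s,d,n}{l}}|$, which, because the size of $\encodeS{s,d,n}{l}$ depends linearly on $|l|$ (as stated before Theorem~\ref{theo_sdnll_polysound}), satisfies $x \leq a\cdot|l| + b$ for constants $a,b$ depending only on $G$.

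Putting this together: set $E = 1 + D_0^{S_0}\cdot \partial_0^{N_0\cdot S_0}$, a constant depending only on $G$ and $s,d,n$. Then
\begin{equation*}
  W_{(G)\encodeS{s,d,n}{l}} \leq x^{E} \leq (a\cdot|l|+b)^{E},
\end{equation*}
and $P(y) = (a\cdot y + b)^{E}$ is the desired polynomial, which does not depend on $l$. One should double-check here that increasing $x$ does not decrease the exponent (it does not, since $E$ depends only on $\partial_0, S_0, D_0, N_0$, none of which involves $x$), and that the bound of Theorem~\ref{theo_sdnll_pn_bound} is monotone in $x$ (it is, being $x$ raised to a positive power). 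The main obstacle is the bookkeeping in the second paragraph: one must carefully verify that attaching the encoding to $G$ via the application node genuinely does not introduce new boxes, new depth, or new exponential labels with larger indices than those already present in $G$ and in the one box of $\encodeS{s,d,n}{l}$ --- in particular that the cut between the conclusion of $G$ and the conclusion of the encoding is well-typed in $SDNLL$ and that subtyping does not force label changes affecting $S$, $D$, or $N$. Once that structural check is in place, the polynomial bound follows immediately from Theorem~\ref{theo_sdnll_pn_bound}. $\qed$
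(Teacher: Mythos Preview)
Your proposal is correct and follows essentially the same approach as the paper: show that the exponent in the bound is a constant independent of $l$ while the base $x$ grows linearly in $|l|$. The only minor variation is that you invoke Theorem~\ref{theo_sdnll_pn_bound} directly and bound $S,D,N$ via the fixed exponential labels on the boxes, whereas the paper instead bounds $|\stratSNLL|,|\dcSim|,|\nestSim|$ by the number of boxes $|\boxset{G}|+1$ and then appeals to Corollary~\ref{coro_bound_poly_nest}; both routes yield a constant exponent and are equally valid.
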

\begin{proof}
  By Theorem~\ref{theo_sdnll_pn_bound}, $(G)\encodeS{s,d,n}{l}$ is $\stratSNLL$-stratified, $\dcSim$-stratified and $\nestSim$-stratified. The depths of those relations are bounded by $\left|\boxset{(G)\encodeS{s,d,n}{l}}\right|=\left|\boxset{G}\right|+1$. We can conclude by Corollary~\ref{coro_bound_poly_nest} (and the linearity of $|\edges{\encodeS{s,d,n}{l}}|$ on $|l|$).
\end{proof}

\subsection{Encoding of $mL^4$}
\label{section_encoding}
\label{section_encoding_l4}
There are already many subsystems of $LL$ characterizing polynomial time. We argue that the interest of $SDNLL$ over the previous systems is its intentional expressivity. To support our claim we define an encoding of $mL^4$~\cite{baillot2010linear}. The encoding of a maximal subsystem of $MS$~\cite{roversi2010local} is defined in~\cite{perrinelMegathese}. Baillot and Mazza already proved that $LLL$ can be embedded in $mL^4$, thus $SDNLL$ is at least as expressive as the union of those systems.

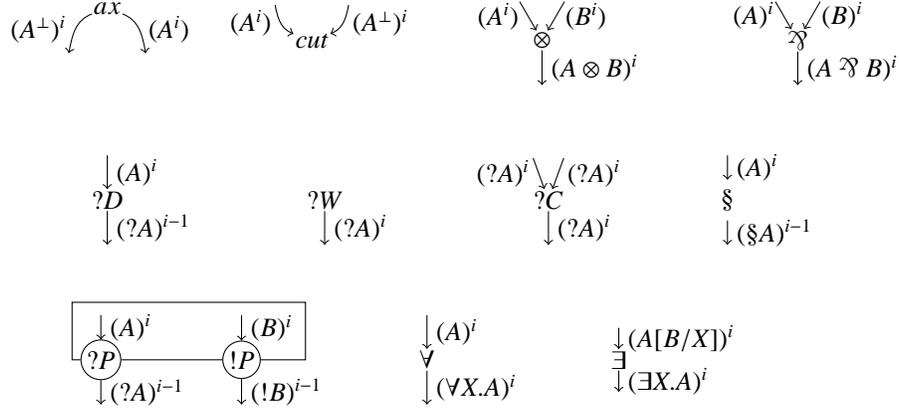
\begin{figure}
  \centering
  \begin{tikzpicture}
    \begin{scope}[scale = 0.85]
      %\tikzstyle{type}=[opacity=0]
      \begin{scope}[shift={(0,0.5)}]
        \draw (0,0) node [ax] (ax) {};
        \draw[ar, out=-20,in=100] (ax) to node [type] {$(A^i)$}      node [level] {} (0.6,-0.7);
        \draw[ar, out=-160,in=80] (ax) to node [type,left] {$(A^\perp)^i$} node [level] {} (-0.6,-0.7);
      \end{scope}
      
      \begin{scope}[shift={(3.2,0)}]
        \draw (0,0) node [cut] (cut) {};
        \draw (cut) ++ (-0.6,0.7) node (G) {};
        \draw (cut) ++ (0.6,0.7) node (H) {};
        \draw[ar,out=-80,in=160] (G) to node [type,left] {$(A^i)$} node [level,pos=0.4] {} (cut);
        \draw[ar,out=-100,in=20] (H) to node [type] {$(A^\perp)^i$} node [level,pos=0.4] {} (cut);
      \end{scope}
      
      \begin{scope}[shift={(6.8,0)}]
        \draw (0,0) node [tensor] (tens)  {};
        \draw [ar] ($(tens)+(120:0.7)$) -- (tens) node [type,left] {$(A^i)$} node [level] {};
        \draw [ar] ($(tens)+( 60:0.7)$) -- (tens) node [type] {$(B^i)$} node [level,right] {};
        \draw [ar] (tens)--++(0,-0.7) node [type] {$(A \otimes B)^i$} node [level] {};
      \end{scope}
      
      \begin{scope}[shift={(10.8,0)}]
        \draw (0,0) node [par] (par)  {};
        \draw [ar] ($(par)+(120:0.7)$) -- (par) node [type,left] {$(A)^i$} node [level] {};
        \draw [ar] ($(par)+( 60:0.7)$) -- (par) node [type] {$(B)^i$} node [level,right] {};
        \draw [ar] (par)--++(0,-0.7) node [type] {$(A \parr B)^i$} node [level] {};
      \end{scope}

      \begin{scope}[shift={(5,-5)}]
        \draw (0,0) node [forall] (forall)  {};
        \draw [ar] ($(forall)+(0,0.7)$) -- (forall) node [type] {$(A)^i$} node [level] {};
        \draw [ar] (forall)--++(0,-0.7) node [type] {$(\forall X.A)^i$} node [level] {};
      \end{scope}
      
      \begin{scope}[shift={(8,-5)}]
        \draw (0,0) node [exists] (exists)  {};
        \draw [ar] ($(exists)+(0,0.5)$) -- (exists) node [type] {$(A[B/X])^i$} node [level] {};
        \draw [ar] (exists)--++(0,-0.5) node [type] {$(\exists X.A)^i$} node [level] {};
      \end{scope}

      \begin{scope}[shift={(-8.7,-2.5)}]
        \begin{scope}[shift={(8.70,0)}]
          \draw (0,0) node [der] (der)  {};
          \draw [ar] ($(der)+(0,0.7)$) -- (der) node [type] {$(A)^i$} node [level] {};
          \draw [ar] (der)--++(0,-0.7) node [type] {$(?A)^{i-1}$} node [level] {};
        \end{scope}

        \begin{scope}[shift={(12.1,0)}]
          \draw (0,0) node [weak] (weak)  {};
          \draw [ar] (weak)--++(0,-0.7) node [type] {$(?A)^{i}$} node [level] {};
        \end{scope}

        \begin{scope}[shift={(15.6,0)}]
          \draw (0,0) node [cont] (cont)  {};
          \draw [ar] ($(cont)+(110:0.7)$) --(cont) node [type,left] {$(?A)^i$} node [level] {};
          \draw [ar] ($(cont)+(70:0.7)$)  --(cont)  node [type] {$(?A)^i$} node [level,right] {};
          \draw [ar] (cont)--++(0,-0.7) node [type] {$(?A)^i$} node [level] {};
        \end{scope}
        %\begin{scope}[shift={(6,-2.5)}]
        %  \draw (0,0) node [weak] (weak)  {};
        %  \draw [ar] (weak)--++(0,-0.8) node [type] {$?A$} node [level] {$i$};
        %\end{scope}
        \begin{scope}[shift={(18.4,0)}]
          \draw (0,0) node  (neut)  {$\S$};
          \draw [ar] ($(neut)+(0,0.7)$)--(neut) node [type] {$(A)^i$} node [level] {};
          \draw [ar] (neut)--++(0,-0.7) node [type] {$(\S A)^{i-1}$} node [level] {};
        \end{scope}
        
        \begin{scope}[shift={(10,-2.5)}]
          \draw (0.8,0) node [princdoor] (bang) {};
          \draw (-1.4,0) node [auxdoor] (whyn1) {};
          \draw[ar] ($(bang)+(0,0.7)$)--(bang) node [type] {$(B)^i$} node [level] {};
          \draw[ar] (bang) --++ (0,-0.7) node [type] {$(! B)^{i-1}$} node [level] {};
          \draw[ar] ($(whyn1)+(0,0.7)$)--(whyn1) node [type] {$(A)^i$} node  [level] {};
          \draw[ar] (whyn1) --++ (0,-0.7) node [type] {$(?A)^{i-1}$} node  [level] {};
          \draw (whyn1)--(bang) -| ++(1,0.9) -| ($(whyn1)+(-0.45,0)$) -- (whyn1);
        \end{scope}
      \end{scope}
    \end{scope}
  \end{tikzpicture}
  \caption{\label{figureLabelL3}Relations between levels of neighbour edges in $L^4$. We also allow boxes with $0$ auxiliary doors.}
\end{figure}

The formulae of $mL^4$ are defined as the formulae of linear logic with an additional modality $\S$ and an element of $\mathbb{N}$ indexing the formula. More formally, the set $\form{L4}$ of formulae of $mL^4$ is defined by the following grammar.\label{def_formulae_l4}
\begin{align*}
  \form{L4} &= \gorm{L4} \times \mathbb{N}\\
  \gorm{L4} &= X \mid X^\perp \mid \gorm{L4}\otimes\gorm{L4} \mid \gorm{L4}\parr\gorm{L4} \mid \forall X. \gorm{L4} \mid \exists X.\gorm{L4} \mid \oc \gorm{L4} \mid \wn \gorm{L4} \mid \S \gorm{L4}
\end{align*}
The index in $\mathbb{N}$ (called {\em level}) is usually written as an exponent. Intuitively, if the principal edge of $B$ is labelled with $(\oc A)^s$, the label $s$ represents the stratum of $B$ for $\stratSNLL$. More precisely, it corresponds to a formula of the shape $\oc_{s+\depth{B},\_,\_}A$ in $SDNLL$. Let us notice that, to connect two boxes $B$ and $B'$ labelled with $(\oc A)^s$ and $(\oc A')^{s'}$ with $s \neq s'$, we need to use $\S$ nodes.

Let us notice that every box of $mL^4$ proof-nets have only one auxiliary door. Thus $\dcSim=\varnothing$ and, for every box $B$ $\stratu{\dcSim}{B}=1$. We can also notice that there is no $\digLab$ node in $mL^4$ proof-nets, so for every box $B$, we have $\stratu{\nestSim}{B}=1$.

\label{def_expodepth_form}We define a mapping $\|\_\|$ from formulae contexts of $\gorm{L4}$ to $\mathbb{N}$ which will be used to decide the indices of variables and exponential modalities. For every formulae $A$ in $\gorm{L4}$ and formula context $H$, $\|\circ\|=0$, $\| C \otimes H \|=\|H \otimes C\|=\|C \parr H\|=\|H \parr C\|=\|H\|$, $\|\forall X.H\|=\|\exists X.H\|=\|H\|$, and $\|\oc H\|=\|\wn H\|=\|\S H\|=1+\|H\|$.

\begin{figure}
  \centering
  % Axiom
  \AxiomC{$\hspace{1em}$}
  \RightLabel{\hspace{-0.2em}$\axLab$}
  \UnaryInfC{\hspace{-0.1em}$x: A^{\varnothing} \vdash x: A$\hspace{-0.1em}}  
  \DisplayProof \hspace{0.6em}
  \AxiomC{\hspace{-0.1em}$\Gamma \vdash t:A$\hspace{-0.4em}}
  \AxiomC{$X_s$ not free in $\Gamma$\hspace{-0.1em}}
  \RightLabel{\hspace{-0.2em}$\forall_i$}
  \BinaryInfC{$\Gamma \vdash t: \forall X_s.A$}
  \DisplayProof \hspace{0.6em}
  \AxiomC{\hspace{-0.1em}$\Gamma \vdash t:\forall X_s.A$\hspace{-0.4em}}
  \AxiomC{$s_B^{min} \geq s$\hspace{-0.1em}}
  \RightLabel{\hspace{-0.2em}$\forall_e$}
  \BinaryInfC{$\Gamma \vdash t:A[B/X]$}
  \DisplayProof

  \vspace{1em}

  \AxiomC{$\Gamma,x:A^{\varnothing} \vdash t:B$}
  \RightLabel{\hspace{-0.2em}$\derLab$}
  \UnaryInfC{$\Gamma,x:A^{s,d,n} \vdash t:B$}
  \DisplayProof \hspace{0.6em}
  \AxiomC{$\Gamma \vdash t:B$}
  \RightLabel{\hspace{-0.2em}$\weakLab$}
  \UnaryInfC{$\Gamma,x:A^{s,d,n} \vdash t:B$}
  \DisplayProof \hspace{0.6em}
  \AxiomC{$\Gamma,y:A^{s,d,n},z:A^{s,d,n} \vdash t:B$}
  \RightLabel{\hspace{-0.2em}$\contLab$}
  \UnaryInfC{$\Gamma,x:A^{s,d,n} \vdash t[x/y;x/z]:B$}
  \DisplayProof 

  \vspace{1em}

  \AxiomC{$\Gamma,x:A^{\varnothing} \vdash t:B$}
  \RightLabel{\hspace{-0.2em}$\multimap_i$}
  \UnaryInfC{\hspace{-0.15em}$\Gamma \vdash \lambda x.t : A \multimap B$\hspace{-0.15em}}
  \DisplayProof \hspace{0.6em}
  \AxiomC{$\Gamma,x:A^{s,d,n} \vdash t:B$}
  \RightLabel{\hspace{-0.2em}$\Rightarrow_i$}
  \UnaryInfC{\hspace{-0.15em}$\Gamma \vdash \lambda x.t : \oc_{s,d,n}A \multimap B$\hspace{-0.15em}}
  \DisplayProof \hspace{0.6em}
  \AxiomC{\hspace{-0.15em}$\Gamma \vdash t:A \multimap B$\hspace{-0.5em}}
  \AxiomC{$\Delta \vdash u:A$\hspace{-0.15em}}
  \RightLabel{\hspace{-0.2em}$\multimap_e$}
  \BinaryInfC{$\Gamma,\Delta \vdash (t)u : B$}
  \DisplayProof

  \vspace{1em}

  \AxiomC{$\Gamma \vdash t: \oc_{s,d,n}A \multimap B$}
  \AxiomC{$\Delta,\Sigma^{\varnothing} \vdash u:A$}
  \AxiomC{$d(\Delta\cup \Sigma)\geqq d \hspace{0.5em} n(\Sigma)\geq n \hspace{0.5em}n(\Delta)>n$}
  \RightLabel{$\Rightarrow_e$}
  \TrinaryInfC{$\Gamma,\Delta,\Sigma \vdash (t)u:B$}
  \DisplayProof \hspace{1em}
  \caption{\label{snll_lambda_rules}$SDNLL_{\lambda}$ as a $\lambda$-calculus type-system.}
\end{figure}

Any $mL^4$ proof-net $G$ can be transformed into a $SDNLL$ proof-net $G'$ as follows: for every variable $X$ appearing in the proof-net, we define $M_X$ as the maximum of the set 
\begin{equation*}
  \Set*{ s+ \|H\| }{ \beta(e)=(H[X])^s \text{ or }\beta(e)=(H[X^\perp])^s}
  \end{equation*}
  Then, we replace every occurrence of $X$ by $X_{M_X}$. If $\beta(e)=(H[\oc A])^s$ (resp. $(H[\wn A])^s$), we replace the modality by $\oc_{s+\|H\|,1,0}$ (resp. $\wn_{s+\|H\|,1,0}$). One can easily verify that $G$ is a valid $SDNLL$ proof-net. The $\S$ node becomes trivial (it does not change the sequent). 

  The most interesting constraint to check is the constraint on doors. Let us suppose that $e$ is the premise of the $i$-th auxiliary door of a box $B$ and $f$ is its conclusion. If $\beta_G(e)=(H[\oc A])^{s}$ then $\beta_G(f)=\wn (H[\oc A])^{s-1}$. We can notice that $\beta_{G'}(e)=H'[\oc_{s+\|H\|,\hspace{0.1em}1,0}\_]$ and $\beta_{G'}(f)=H'[\oc_{s-1+\|\wn H\|,\hspace{0.1em}1,0}\_]$. Those labels are the same because $s+\|H\|=(s-1)+(1+\|H\|)=(s-1)+\|\wn H\|$.

\subsection{$SDNLL$ as a type-system for $\lambda$-calculus}\label{section_sdnll_lambda}
As noticed by Baillot and Terui~\cite{baillot2004light}, translating naively a subsystem of linear logic into a type-system for $\lambda$-calculus can result in a type-system which enjoys neither subject reduction nor the complexity bound enforced by the linear logic subsystem. The subsystem we define is heavily inspired by $DLAL$. For instance, the proof of subject reduction follows the proof of subject reduction of $DLAL$ presented in~\cite{baillot2004lightlong}.

We restrict the formulae considered by only allowing $\oc$ modalities on the left side of $\multimap$ connectives.

\begin{definition}\label{def_sdnll_lambda} For $s \in \mathbb{N}$, we define $\formla{s}$ by the following grammar (with $t,d,n \in \mathbb{N}$, $t \geq s$ and $X$ ranges over a countable set of variables). Notice that $\formla{0} \supseteq \formla{1} \supseteq \cdots$
  \begin{equation*}
    \formla{s} := X_{t} \mid  \formla{s} \multimap \formla{s} \mid \oc_{t,d,n}\formla{t+1} \multimap \formla{s} \mid \forall X_{t}. \formla{s}
  \end{equation*}
\end{definition}

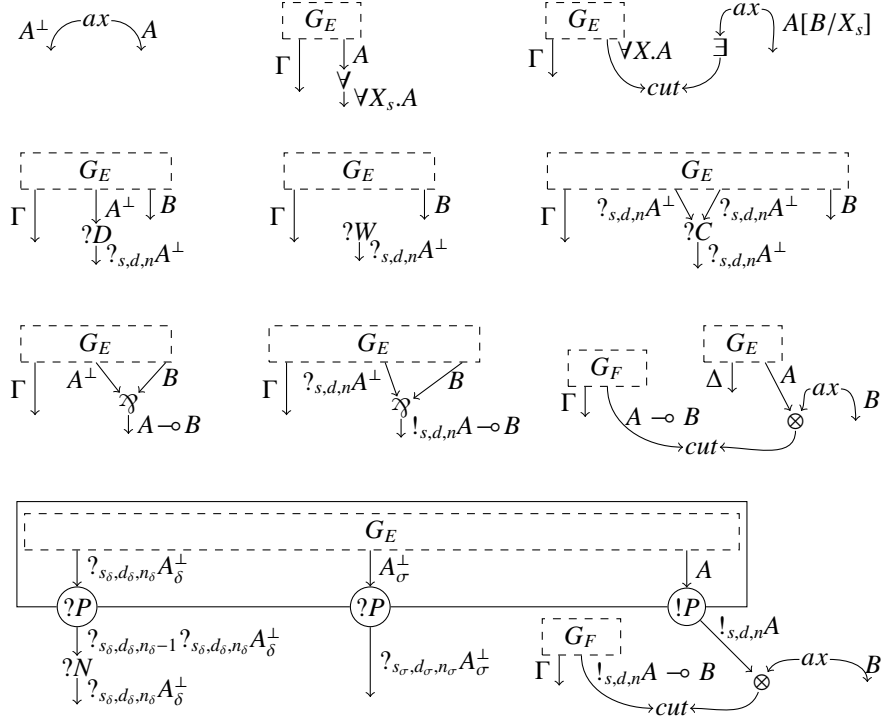
\begin{figure}\centering
  \begin{tikzpicture}
    \begin{scope}
      \node[ax] (ax) at (0,0) {};
      \draw[ar,out=  0,in=90] (ax) to node [type]      {$A$}      ($(ax)+( 0.6,-0.4)$) {};
      \draw[ar,out=180,in=90] (ax) to node [type,left] {$A^\perp$} ($(ax)+(-0.6,-0.4)$) {};
    \end{scope}
    \begin{scope}[shift={(3,0)}]
      \node[proofnet,minimum width=1cm] (G) at (0,0) {$G_E$};
      \node[forall] (fa) at ($(G.-40)+(0,-0.55)$) {};
      \draw[ar] (G.-40) --(fa)       node [type] {$A$};
      \draw[ar] (fa) --++ (0,-0.35)   node [type] {$\forall X_s.A$};
      \draw[ar] (G.-140)--++(0,-0.7) node [edgename] {$\Gamma$};
    \end{scope}
    \begin{scope}[shift={(6.5,0)}]
      \node[proofnet,minimum width=1cm] (G) at (0,0) {$G_E$};
      \node[exists] (ex)  at ($(G.-40)+(1.5,-0.1)$) {};
      \node[cut]    (cut) at ($(G.-40)!0.5!(ex)+(0,-0.6)$) {};
      \node         (end) at ($(ex)+(0.7,-0.2)$)   {};
      \node[ax]     (ax)  at ($(ex)!0.5!(end)+(0,0.6)$) {};
      \draw[ar,out=-90,in=180] (G.-40) to node [type,pos=0.15] {$\forall X.A$} (cut);
      \draw[ar,out=-90,in=  0] (ex)    to (cut);
      \draw[ar,out=180,in= 90] (ax)    to (ex);
      \draw[ar,out=  0,in= 90] (ax)    to node [type] {$A[B/X_s]$} (end);
      \draw[ar] (G.-150) --++(0,-0.5) node [edgename] {$\Gamma$};
    \end{scope}
    \begin{scope}[shift={(0,-2)}]
      \node[proofnet,minimum width=2cm] (G) at (0,0) {$G_E$};
      \node[der] (der) at ($(G.-90)+(0,-0.6)$) {};
      \draw[ar] (G.-90)--  (der)     node [type] {$A^{\perp}$};
      \draw[ar] (der)  --++(0,-0.4)  node [type] {$\wn_{s,d,n}A^{\perp}$};
      \draw[ar] (G.-19) --++(0,-0.4) node [type] {$B$};
      \draw[ar] (G.-163)--++(0,-0.7) node [edgename] {$\Gamma$};
    \end{scope}
    \begin{scope}[shift={(3.5,-2)}]
      \node[proofnet,minimum width=2cm] (G) at (0,0) {$G_E$};
      \node[weak] (weak) at ($(G.-90)+(0,-0.55)$) {};
      \draw[ar] (weak)  --++(0,-0.4)  node [type] {$\wn_{s,d,n}A^{\perp}$};
      \draw[ar] (G.-16) --++(0,-0.4) node [type] {$B$};
      \draw[ar] (G.-164)--++(0,-0.7) node [edgename] {$\Gamma$};
    \end{scope}
    \begin{scope}[shift={(8,-2)}]
      \node[proofnet,minimum width=4cm] (G) at (0,0) {$G_E$};
      \node[cont] (cont) at ($(G.-90)+(0,-0.55)$) {};
      \draw[ar] (G.-140)-- (cont)     node [type,pos=0.65,left] {$\wn_{s,d,n}A^{\perp}$};
      \draw[ar] (G.-40)--  (cont)     node [type,pos=0.65]      {$\wn_{s,d,n}A^{\perp}$};
      \draw[ar] (cont)  --++(0,-0.5)  node [type] {$\wn_{s,d,n}A^{\perp}$};
      \draw[ar] (G.-8) --++(0,-0.4) node [type] {$B$};
      \draw[ar] (G.-172)--++(0,-0.7) node [edgename] {$\Gamma$};
    \end{scope}
    \begin{scope}[shift={(0,-4.3)}]
      \node[proofnet,minimum width=2cm] (G) at (0,0) {$G_E$};
      \node[par] (par) at ($(G.-30)+(0,-0.55)$) {};
      \draw[ar] (G.-90) --(par) node [type,left] {$A^{\perp}$};
      \draw[ar] (G.-15) --(par) node [type]      {$B$};
      \draw[ar] (par)--++(0,-0.4) node [type] {$A\hspace{-0.2em} \multimap \hspace{-0.2em}B$};
      \draw[ar] (G.-163)--++(0,-0.7) node [edgename] {$\Gamma$};
    \end{scope}
    \begin{scope}[shift={(3.7,-4.3)}]
      \node[proofnet,minimum width=2.8cm] (G) at (0,0) {$G_E$};
      \node[par] (par) at ($(G.-35)+(0,-0.6)$) {};
      \draw[ar] (G.-60) --(par) node [type,left,pos=0.6] {$\wn_{s,d,n}A^{\perp}$};
      \draw[ar] (G.-12) --(par) node [type]      {$B$};
      \draw[ar] (par)--++(0,-0.4) node [type] {$\oc_{s,d,n}A \hspace{-0.2em}\multimap \hspace{-0.2em}B$};
      \draw[ar] (G.-168)--++(0,-0.7) node [edgename] {$\Gamma$};
    \end{scope}
    \begin{scope}[shift={(8.6,-4.3)}]
      \node[proofnet,minimum width=0.8cm] (G) at (0,0) {$G_{E}$};
      \node[ax]    (ax)   at ($(G)+(1.1,-0.6)$) {};
      \node[tensor](tens) at ($(G.-40)!0.5!(ax)+(0,-0.6)$) {};
      \coordinate  (end)  at ($(tens)+(0.8,0)$);
      \node[proofnet,minimum width=1cm] (H) at ($(tens)+(-2.5,0.7)$) {$G_F$};
      \node[cut] (cut) at ($(H)!0.5!(tens)+(0,-0.7)$) {};
      \draw[ar,out=-90,in=180] (H.-90) to node [type,pos=0.3] {$A \multimap B$} (cut);
      \draw[ar,out=-90,in=  0] (tens)  to (cut);
      \draw[ar] (G.-40) -- (tens) node [type,pos=0.25] {$A$};
      \draw[ar,out=180,in=60] (ax) to (tens);
      \draw[ar,out=  0,in=90] (ax) to node [type] {$B$} (end);
      \draw[ar] (G.-120) -- ++(0,-0.4) node [type,left] {$\Delta$};
      \draw[ar] (H.-140) -- ++(0,-0.4) node [type,left] {$\Gamma$};
    \end{scope}
    \begin{scope}[shift={(3.8,-6.8)}]
      \node[proofnet,minimum width=9.5cm] (G) at (0,0) {$G_{E}$};
      \node[princdoor] (pg) at ($(G.-4)+(0.5,-0.75)$) {};
      \node[auxdoor] (sigma) at ($(G.-122) +(0,-0.75)$) {};
      \node[auxdoor] (delta) at ($(G.-176)+(-0.5,-0.75)$) {};
      \node[dig] (dig) at ($(delta)+(0,-0.8)$) {};
      \draw [ar] ($(G.-4)+(0.5,0)$)   --(pg)    node [type] {$A$};
      \draw [ar] (G.-122)  --(sigma) node [type] {$A^{\perp}_{\sigma}$};
      \draw [ar] ($(G.-176)+(-0.5,0)$)--(delta) node [type] {$\wn_{s_{\delta},d_{\delta},n_{\delta}}A^{\perp}_{\delta}$};
      \draw [ar] (delta)--(dig) node [type] {$\wn_{s_{\delta},d_{\delta},n_{\delta}-1}\wn_{s_{\delta},d_{\delta},n_{\delta}}A^{\perp}_{\delta}$};
      \draw [ar] (dig)--++(0,-0.5) node [type] {$\wn_{s_{\delta},d_{\delta},n_{\delta}}A^{\perp}_{\delta}$};
      \draw [ar] (sigma)--++(0,-1.2) node [type] {$\wn_{s_{\sigma},d_{\sigma},n_{\sigma}}A^{\perp}_{\sigma}$};
      \draw (pg) -| ++(0.8,1.4) -| ($(delta)+(-0.8,0)$) -- (delta) -- (sigma) -- (pg);
      \node[ax]    (ax)   at ($(pg)+(1.7,-0.7)$) {};
      \node[tensor](tens) at ($(pg)+(1,-1)$) {};
      \coordinate  (end)  at ($(tens)+(1.4,0)$);
      \node[proofnet,minimum width=1cm] (H) at ($(tens)+(-2.4,0.6)$) {$G_F$};
      \node[cut] (cut) at ($(H)!0.5!(tens)+(0,-0.65)$) {};
      \draw[ar,out=-90,in=180] (H.-90) to node [type,pos=0.2] {$\oc_{s,d,n}A \multimap B$} (cut);
      \draw[ar,out=-110,in=  0] (tens)  to (cut);
      \draw[ar] (pg) -- (tens) node [type,pos=0.15] {$\oc_{s,d,n}A$};
      \draw[ar,out=180,in=60] (ax) to (tens);
      \draw[ar,out=  0,in=90] (ax) to node [type] {$B$} (end);
      \draw[ar] (H.-140) -- ++(0,-0.4) node [type,left] {$\Gamma$};
    \end{scope}
  \end{tikzpicture}
  \caption{\label{fig_pn_lambda_sdnll}Derivations of $SDNLL_\lambda$ can be translated into $SDNLL$ proof-nets}
\end{figure}

\label{def_contexts_others}We define contexts\footnote{Because we do not use context semantics in this subsection, there is no ambiguity.} as sets of the shape $\{x_1:A_1^{l_1},\cdots,x_k:A_k^{l_k}\}$ where the $x_i$s are pairwise distinct variables of $\lambda$-calculus, the $A_i$s are formulae of $\formla{0}$ and the $l_i$s are elements of $\{\varnothing\} \cup \mathbb{N}^3$. Intuitively $A^{s,d,n}$ represents $\oc_{s,d,n}A$ while $A^{\varnothing}$ represents $A$. The set of all contexts is written $\conla$, the set of contexts whose labels are all in $\mathbb{N}^3$ is written $\conoc$, the set of contexts whose labels are all equal to $\varnothing$ is written $\conlin$.

\label{def_op_on_contexts}In this paragraph, we consider $\Gamma=\{x_1:A_1^{s_1,d_1,n_1},\cdots,x_k:A_k^{s_k,d_k,n_k}\} \in \conoc$. Then we write $\Gamma^{\varnothing}$ for the context $\{x_1:A_1^{\varnothing},\cdots,x_k:A_k^{\varnothing}\}$. For $s,d,n \in \mathbb{Z}$, we write $\Gamma^{s,d,n}$ for $\{x_1:A_1^{s_1+s,d_1+d,n_1+n},\cdots,x_k:A_k^{s_k+s,d_k+d,n_k+d}\}$. We write $s(\Gamma)$ for the multiset of left indices, more formally $s(\Gamma)=\lmul x \mapsto \left|\Set*{i \in \mathbb{N}}{s_i=x}\right| \rmul$.\label{def_sgamma} We define $d(\Gamma)$ and $n(\Gamma)$ similarly.

\label{def_multiset_orders}If $M$ is a multiset, then we write $M \geq x$ for ``for every $y$ such that $M(y)>0$, we have $y\geq x$''. Similarly, we write $M > y$ for ``for every $y$ such that $M(y)>0$, we have $y> x$''. Finally, we write $M \geqq y$ for ``$M \geq x$ and $M(x) \leq 1$''.

We present the type system $SDNLL_{\lambda}$ in Figure~\ref{snll_lambda_rules}. In the type derivations, judgements are of the shape $\Gamma \vdash t:A$ with $\Gamma$ a context. If $x:B^{\varnothing}$ is in $\Gamma$ then $x$ appears exactly once in $t$.

To prove subject reduction and the polynomial bound we define (in Figure~\ref{fig_pn_lambda_sdnll}) for every type derivation $D$ of $A_1^{\varnothing},\cdots,A_k^{\varnothing},B_1^{s_1,d_1,n_1},\cdots,B_l^{s_l,d_l,n_l} \vdash t:C$, we define a $SDNLL$ proof-net $G_D$ with $k+l+1$ conclusions labelled with $A_1^{\perp}\hspace{-0.05em},\hspace{-0.1em}\cdots\hspace{-0.05em},\hspace{-0.1em}A_k^{\perp}\hspace{-0.05em},\hspace{-0.1em}\wn_{s_1,d_1,n_1}B_1^{\perp}\hspace{-0.05em},\hspace{-0.1em}\cdots\hspace{-0.1em},\hspace{-0.1em}\wn_{s_l,d_l,n_l}B_l^{\perp}$ and $C$. In Figure~\ref{fig_pn_lambda_sdnll}, we suppose that the derivation $D$ is obtained by applying a rule $r$ (the rule used is at the same position in Figure~\ref{snll_lambda_rules}) to the derivation $E$ (if the last rule is binary, the derivation on the left is named $F$).

\begin{lemma}[linear substitution]\label{lemma_linear_substitution}
  Let us consider derivations $D$ and $E$ of respective conclusions $\Delta \vdash u:A$ and $\Gamma,x:A^{\varnothing} \vdash t:B$. Then there exists a derivation $F$ of conclusion $\Gamma,\Delta \vdash t[u/x] :B$ and 

  \begin{equation*}
  \tikzsetnextfilename{snll_linear_substitution}
  \begin{tikzpicture}
    \node [proofnet,minimum width=2cm] (D) at (0,0) {$D$};
    \node [proofnet,minimum width=2.5cm] (E) at ($(D)+(3.5,0)$) {$E$};
    \node [cut]    (cut) at ($(D)!0.5!(E)+(0,-0.5)$) {};
    \draw [ar,out=-90,in=180] (D. -20) to node [type,pos=0.8,below left=-0.1cm] {$u:A$}       (cut);
    \draw [ar,out=-90,in=  0] (E.-160) to node [type,pos=0.8,below right=-0.1cm] {$x:A^{\perp}$} (cut);
    \draw [ar] (D.-160) --++(0,-0.5) node [edgename] {$\Delta$};
    \draw [ar] (E.-35)  --++(0,-0.5) node [edgename] {$\Gamma$};
    \draw [ar] (E.-15)  --++(0,-0.5) node [type] {$t:B$};

    \node [proofnet,minimum width=2cm] (F) at ($(E)+(5,0)$) {$F$};
    \draw [ar] (F.-160)--++(0,-0.5) node [edgename] {$\Delta$};
    \draw [ar] (F.-90) --++(0,-0.5) node [edgename] {$\Gamma$};
    \draw [ar] (F.-20) --++(0,-0.5) node [type]     {$t[u/x]:B$};
    
    \draw [->] ($(E.0)!0.25!(F.180)$) -- ($(E.0)!0.75!(F.180)$) node [below left] {$cut$} node [above] {$*$} ;
  \end{tikzpicture}
  \end{equation*}
\end{lemma}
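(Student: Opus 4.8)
\textbf{Proof plan for Lemma~\ref{lemma_linear_substitution} (linear substitution).}
The plan is to proceed by induction on the derivation $E$ of $\Gamma, x:A^\varnothing \vdash t:B$, tracking how the distinguished linear variable $x$ is used. Since $x:A^\varnothing$ appears in a context slot labelled $\varnothing$, by the invariant of $SDNLL_\lambda$ ($x:B^\varnothing$ in $\Gamma$ forces exactly one occurrence of $x$ in $t$) the variable $x$ occurs exactly once in $t$, so the induction never has to duplicate or discard the derivation $D$. The key case is the axiom $\axLab$: if $E$ is $x:A^\varnothing \vdash x:A$, then $t=x$, $\Gamma=\varnothing$, $t[u/x]=u$, and we simply take $F=D$; on the proof-net side, cutting $G_D$ against the axiom link $G_E$ reduces (one $ax$ step of Figure~\ref{cut_elim_non_exp_rules}) to $G_D$ itself, which establishes the displayed equality for this base case.

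For the inductive cases I would split on the last rule $r$ of $E$. If $r$ does not involve $x$ in its principal position (e.g. $\forall_i$, $\forall_e$, $\weakLab$ on a variable other than $x$, $\derLab$, $\contLab$, $\multimap_i$, $\Rightarrow_i$, $\Rightarrow_e$ where the active hypothesis is not $x$), then $x$ lies in one immediate premise; I apply the induction hypothesis to that premise to get a derivation, then reapply $r$. The corresponding proof-net identity follows because the $cut$ against $G_D$ commutes past the node introduced by $r$: concretely, every non-exponential reduction step and every exponential step of Figures~\ref{cut_elim_non_exp_rules} and~\ref{cut_elim_exp_rules} that does not touch the cut edge carrying $A/A^\perp$ can be performed after pushing the cut inward, and since $x:A^\varnothing$ sits in a linear slot the cut never crosses a box door (a $\varnothing$-labelled hypothesis is never discharged by a promotion, only by $\derLab$, which turns it into a $\wn$ slot only \emph{after} the occurrence is consumed) — so no $\oc P/\wn C$ or $\oc P/\wn N$ duplication of $G_D$ is triggered. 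For the binary rules $\multimap_e$ and $\Rightarrow_e$, the single occurrence of $x$ lies in exactly one of the two premises, so the induction hypothesis applies to that premise and the other premise is untouched; the proof-net picture is just that the cut against $G_D$ slides into one of the two arguments of the application node of Figure~\ref{fig_app_gh} / Figure~\ref{fig_pn_lambda_sdnll}.

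The one case needing a little care is $\multimap_e$ / $\Rightarrow_e$ when $x$ occurs in the function premise versus the argument premise, and symmetrically when $t$ itself is obtained by an application whose argument or function is $x$ — here one checks that substituting $u$ for $x$ and then forming the application yields the same term as forming the application of $t$ and then substituting, which is immediate since substitution is capture-avoiding and $x$ is linear, and on the net side one uses that a cut can be propagated through the $\otimes$/$cut$ pattern of Figure~\ref{fig_pn_lambda_sdnll} without creating new cuts. In every case the proof-net equation $G_F = \big(\text{cut of } G_D \text{ against } G_E\big)^*$ is obtained by combining the reduction produced in the premise (given by the induction hypothesis) with the commutation of the cut past the final node; confluence of cut-elimination (Lemma~\cite{girard1996proof}) guarantees the resulting net is well-defined independently of the order of these steps.

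\textbf{Main obstacle.} The delicate point is not the term-level bookkeeping but verifying the proof-net equality in the cases where $E$ ends with a structural or exponential rule: one must argue that the cut between $G_D$ and the $A^\perp$ conclusion of $G_E$ is genuinely a \emph{linear} cut — i.e. the conclusion $x:A^\perp$ of $G_E$ is not under any box and not a premise of a contraction/weakening/dereliction node whose firing would copy or erase $G_D$ — which is exactly what the $\varnothing$-annotation of $x$ in $SDNLL_\lambda$ is designed to guarantee (compare the handling of linear variables in $DLAL$~\cite{baillot2004lightlong}). Once that observation is in place, the cut propagates to the unique leaf of $G_E$ corresponding to the occurrence of $x$ and is absorbed by the $ax$ step, so $G_F$ is precisely the normal form of the displayed cut, as claimed.
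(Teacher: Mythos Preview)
Your approach is essentially the same as the paper's: induction on $E$, with the crucial observation that because $x$ carries the label $\varnothing$, the edge $A^\perp$ in $G_E$ is never the conclusion of a $\derLab$, $\contLab$, $\weakLab$ or $\fauxLab$ node, so no exponential cut-elimination step (duplication, erasure, box-entering) is ever triggered when the cut is pushed inward. The paper compresses all of this into two sentences; your elaboration is more detailed but matches it.

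Two small corrections worth noting. First, your phrasing ``not a premise of a contraction/weakening/dereliction node'' should be ``not the conclusion of'' such a node: the relevant point is what kind of node produced the $A^\perp$ edge, since that determines which cut-elimination rule applies. Second, your remark that a $\varnothing$-hypothesis ``is never discharged by a promotion, only by $\derLab$'' is slightly off: in the $\Rightarrow_e$ rule, hypotheses in $\Sigma^\varnothing$ of the argument premise \emph{do} pass through auxiliary doors, but precisely because of this they acquire a label $(s,d,n)$ in the conclusion and are therefore no longer $\varnothing$-labelled---so if $x:A^\varnothing$ appears in the conclusion of $E$, it must lie in the $\Gamma$ part (the function side), never inside the box. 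This is exactly why the paper singles out $\fauxLab$ alongside $\derLab$ and $\contLab$. Finally, you do not actually need confluence: the induction hypothesis already hands you a specific reduction sequence, and reapplying the last rule of $E$ on top of it yields the required $G_D\text{-}cut\text{-}G_E \rightarrow_{cut}^* G_F$ directly.
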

\begin{proof}
  Simple induction on $E$. Because the label of $x$ is $\varnothing$, $x$ is not the conclusion of a $\derLab$, $\contLab$ or $\fauxLab$ node.
\end{proof}

\begin{lemma}[exponential substitution]\label{lemma_exponential_substitution}
  Let us consider derivations $D$ and $E$ of respective conclusions $\Delta,\Sigma^{\varnothing} \vdash u:A$ and $\Gamma,x:A^{s,d,n} \vdash t:B$ with $d(\Delta \cup \Sigma) \geqq d$, $n(\Sigma)\geq n$ and $n(\Delta)>n$. Then there exists a derivation $F$ of conclusion $\Gamma,\Delta,\Sigma \vdash t[u/x] :B$ and 

  \begin{equation*}
  \tikzsetnextfilename{snll_exponential_substitution}
  \begin{tikzpicture}
    \node [proofnet,minimum width=2.7cm] (D) at (0,0) {$D$};
    \node [princdoor] (pri) at ($(D.-13)+(0,-0.75)$) {};
    \node [auxdoor]   (sigma) at ($(D.-90)+(0,-0.75)$) {};
    \node [auxdoor]   (delta) at ($(D.-167)+(0,-0.75)$) {};
    \node [dig]       (dig)   at ($(delta)+(0,-0.6)$) {};
    \draw [ar] (D.-13)  -- (pri)   node [type] {$A$};
    \draw [ar] (D.-90)  -- (sigma) node [edgename] {$\Sigma$};
    \draw [ar] (D.-167) -- (delta) node [edgename] {$\Delta$};
    \draw [ar] (delta)--(dig);
    \draw [ar] (dig)  --++ (0,-0.3);
    \draw [ar] (sigma)--++ (0,-0.8);
    \draw (pri) -|++(0.6,1.35)-|($(delta)+(-0.6,0)$)--(delta)--(sigma)--(pri);
    \node [proofnet,minimum width=2.5cm] (E) at ($(pri)+(2.8,0)$) {$E$};
    \node [cut]    (cut) at ($(pri)!0.5!(E)+(0,-0.5)$) {};
    \draw [ar,out=-60,in=180] (pri) to node [type,pos=0.5,below] {$u:\oc_{s,d,n}A$}       (cut);
    \draw [ar,out=-90,in=  0] (E.-160) to node [type,pos=0.5,below] {$x:\wn_{s,d,n}A^{\perp}$} (cut);
    \draw [ar] (E.-35)  --++(0,-0.5) node [edgename] {$\Gamma$};
    \draw [ar] (E.-15)  --++(0,-0.5) node [type] {$t:B$};

    \node [proofnet,minimum width=2cm] (F) at ($(E)+(3.9,0)$) {$F$};
    \draw [ar] (F.-166)--++(0,-0.5) node [edgename] {$\Delta$};
    \draw [ar] (F.-145)--++(0,-0.5) node [edgename] {$\Sigma$};
    \draw [ar] (F.-35) --++(0,-0.5) node [edgename] {$\Gamma$};
    \draw [ar] (F.-14) --++(0,-0.5) node [type]     {$t[u/x]:B$};
    
    \draw [->] ($(E.0)!0.22!(F.180)$) -- ($(E.0)!0.78!(F.180)$) node [below left] {$cut$} node [above] {$*$};
  \end{tikzpicture}
  \end{equation*}
\end{lemma}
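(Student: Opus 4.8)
The plan is to mirror the proof of Lemma~\ref{lemma_linear_substitution} (linear substitution), proceeding by induction on the derivation $E$ of $\Gamma, x:A^{s,d,n} \vdash t:B$, but now keeping careful track of the side-conditions $d(\Delta \cup \Sigma) \geqq d$, $n(\Sigma) \geq n$ and $n(\Delta) > n$ imposed on $D$. The key structural observation is that, unlike the linear case, the variable $x$ has an exponential label $A^{s,d,n}$, so $x$ can now be the principal conclusion of a $\derLab$, $\weakLab$, $\contLab$ or $\fauxLab$ node in $G_E$. Correspondingly, in the $\Rightarrow_e$-cut shown in the statement, the $\oc$-box around $G_D$ (built using the $\digLab$ node for $\Delta$, an auxiliary door for $\Sigma$, and the principal door for $u:A$) will be cut against exactly one of these nodes, and we must check in each case that the reduct is a valid $SDNLL$ proof-net whose conclusion context realises $\Gamma,\Delta,\Sigma \vdash t[u/x]:B$.

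First I would set up the induction: in the base case where the last rule of $E$ is the axiom, $t = x$ and $B = A$; then $t[u/x] = u$ and $F := D$ works after the obvious $\weakLab$/relabelling bookkeeping (but in fact with $\Gamma = \varnothing$ here the conclusion is exactly $\Delta, \Sigma \vdash u:A$, matching $D$). For the inductive step, the routine cases are those where the last rule of $E$ does not act on $x$ (the $\multimap_i$, $\multimap_e$, $\forall_i$, $\forall_e$ rules, and $\Rightarrow_i$, $\Rightarrow_e$ on another variable): there one pushes the substitution inside, applies the induction hypothesis to the relevant premise, and reassembles; on the proof-net side this is just the commutation of the $\oc P/\cdots$ cut past the corresponding $LL$ node, which is one of the cut-elimination steps of Figure~\ref{cut_sdnll}. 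The interesting cases are:
\textbf{$\derLab$ on $x$:} the cut is an $\oc P/\wn D$ step, the box is opened, and the principal door's content $G_D$ is grafted in; the side-condition $n(\Delta) > n$ is what guarantees that the formulae of $\Delta$, after the box is removed, still respect the $SDNLL$ indices required below a dereliction. \textbf{$\weakLab$ on $x$:} the cut is an $\oc P/\wn W$ step; $G_D$ is erased and replaced by weakenings on $\Delta$ and $\Sigma$, which is fine because $d(\Delta \cup \Sigma) \geqq d$ gives $d(\Delta \cup \Sigma) \geq d$ and hence the weakened formulae are typeable at the required levels. \textbf{$\contLab$ on $x$:} the cut is an $\oc P/\wn C$ step; the box is duplicated, so $D$ must be duplicated, and the context splits; here $d(\Delta \cup \Sigma) \geqq d$ — in particular the multiplicity-one part — is exactly what ensures that duplicating the $\Delta$-conclusions (which pass through the $\digLab$ node, raising their $d$-index by $1$ across the copies) keeps them within bounds, while $\Sigma$, whose conclusions go through an ordinary auxiliary door, is handled using $n(\Sigma) \geq n$. \textbf{$\fauxLab$ on $x$ (i.e.\ the last rule of $E$ is $\Rightarrow_i$ building a box with $x$ among its auxiliary doors):} the cut $\oc P / \wn$ at an auxiliary door is reduced, and $D$'s box is pushed inside $E$'s box; one verifies the composed box still satisfies the door constraint of Figure~\ref{rules_labelling_snll}, again using the inequalities on $d(\Delta), d(\Sigma), n(\Delta), n(\Sigma)$ to check the indices of the merged auxiliary doors.

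The main obstacle I expect is the $\contLab$ case combined with possible $\digLab$ interaction: when $x$ is contracted and the box around $D$ is duplicated, the two copies of the $\Delta$-doors acquire distinct $d$-indices (one gets $d+d_i$, another $d+d_i+1$ in the notation of Figure~\ref{cut_sdnll}), and one must be sure that the hypothesis $d(\Delta \cup \Sigma) \geqq d$ — with its ``multiplicity at most one at the value $d$'' clause — is precisely strong enough to absorb this increment while staying consistent with the typing of $F$. This is why the side-conditions of $\Rightarrow_e$ in Figure~\ref{snll_lambda_rules} are stated with $\geqq$ rather than $\geq$, and with a strict inequality $n(\Delta) > n$: the proof is essentially a verification that these are exactly the invariants preserved by the four critical reduction steps. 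The ``$*$'' over the arrow in the statement signals that several cut-elimination steps (one per occurrence of $x$, plus the box-opening/duplication steps) may be needed, but each is a step of Figure~\ref{cut_sdnll}, so $SDNLL$-validity of $G_F$ is immediate once the index bookkeeping is done, and the fact that $G_F$ computes $t[u/x]$ at type $B$ with context $\Gamma,\Delta,\Sigma$ follows from the inductive construction.
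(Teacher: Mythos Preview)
Your high-level plan (induction on $E$) matches the paper's, but the case analysis has a concrete error that misdirects the argument. You identify the box-building rule as $\Rightarrow_i$, but in the translation of Figure~\ref{fig_pn_lambda_sdnll} it is $\Rightarrow_e$ that creates the box (principal door $\oc_{s,d,n}A$, auxiliary doors for $\Sigma$ and $\Delta$, $\digLab$ under $\Delta$); $\Rightarrow_i$ only adds a $\parr$. Consequently the case you label ``$\fauxLab$ on $x$'' should read: the last rule of $E$ is $\Rightarrow_e$ and $x$ lies in the right-premise context $\Delta_E\cup\Sigma_E$ of that rule. This is exactly the case the paper singles out as ``most interesting'', and the crux there is not a proof-net reduction but re-establishing the \emph{typing} side-condition of the reconstructed $\Rightarrow_e$: after replacing $x$ by $\Delta\cup\Sigma$ inside the box's context one must check $d(\cdot)\geqq d'$ where $\oc_{s',d',n'}\_=\sigma(C)$ for the box $C$. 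The paper's argument is the dichotomy $d>d'$ (then $d(\Delta\cup\Sigma)\geq d>d'$, so no new element touches $d'$) versus $d=d'$ (then $x$ was the unique element at $d'$, removing it leaves the rest $>d'$, and $d(\Delta\cup\Sigma)\geqq d'$ contributes at most one new element at $d'$). Your proposal only says ``one verifies the composed box still satisfies the door constraint'', without this specific $\geqq$ bookkeeping, which is the whole content of the case. (Also, the axiom rule cannot be a base case: its context has a single $\varnothing$-labelled variable, so $x:A^{s,d,n}$ cannot occur there.)

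Conversely, your emphasis on $\contLab$ as the ``main obstacle'' is misplaced. The $\contLab$ rule in $SDNLL_\lambda$ carries no index side-condition, so after duplicating $D$, substituting for $y$ and $z$, and contracting the two copies of each variable in $\Delta,\Sigma$, there is no $\geqq$ inequality to re-verify; the $\digLab$ interaction you anticipate does not arise in this case. Likewise $\derLab$ and $\weakLab$ have no index constraints in the type system, so the hypotheses $n(\Delta)>n$ and $n(\Sigma)\geq n$ are not what makes those cases go through; they are used (together with $d(\Delta\cup\Sigma)\geqq d$) only in the $\Rightarrow_e$ case, to discharge the $d(\cdot)\geqq d'$, $n(\Sigma')\geq n'$ and $n(\Delta')>n'$ side-conditions of the rebuilt rule.
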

\begin{proof}
  By induction on $E$. The most interesting step is the $\Rightarrow_e$ step. In this case, let us write $C$ for the box created in this step, and let us set $\oc_{s',d',n'}\_=\sigma(C)$. Either $d>d'$ so $d(\Delta \cup \Sigma) \geq d > d'$. Or $d=d'$, $d(\Delta \cup \Sigma) \geqq d \geq d'$ and $d(\Gamma) > d'$, thus $d(\Delta \cup \Sigma \cup \Gamma) \geqq d'$.
\end{proof}

\begin{lemma}\label{lemma_forall_elim}
  Let us consider a derivation $D$ of conclusion $\Delta \vdash \lambda x.t:A$ then $G_{D} \cutRel^* G_{D'}$ (considering the untyped proof-nets) with $D'$ a derivation of conclusion $\Gamma \vdash \lambda x.t:A$ and the last rule $R$ introduces the top connective of $A$ (if $A=\forall A_1$ then $R=\forall_i$, if $A=\oc_{s,d,n}A_1 \multimap A_2$ then $R$ is $\Rightarrow_i$, if $A=A_1 \multimap A_2$ with $A_1 \in \formla{0}$ then $R=\multimap_i$)
\end{lemma}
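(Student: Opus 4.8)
The plan is to prove Lemma~\ref{lemma_forall_elim} by induction on the derivation $D$ of $\Delta \vdash \lambda x.t:A$. The key observation is that the last rule of $D$ cannot be $\multimap_e$ or $\Rightarrow_e$ because the term $\lambda x.t$ is an abstraction, not an application; it cannot be $\axLab$ because $\lambda x.t$ is not a variable. So the last rule must be one of: $\forall_i$, $\forall_e$, $\multimap_i$, $\Rightarrow_i$, $\derLab$, $\weakLab$, or $\contLab$. For the introduction rules $\multimap_i$, $\Rightarrow_i$ introducing the top connective, we are already done (take $D'=D$, and note that if $A = A_1 \multimap A_2$ then $A_1 \in \formla{0}$ automatically by Definition~\ref{def_sdnll_lambda}, since the $\multimap_i$ rule types $A \multimap B$ with $A$ a linear hypothesis, i.e. $A \in \formla{0}$). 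Similarly if the last rule is $\forall_i$, it introduces the top $\forall$ connective and we are done.

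The substantive cases are $\forall_e$, $\derLab$, $\weakLab$, and $\contLab$. In each of these cases the rule leaves the subject $\lambda x.t$ unchanged (for $\contLab$, up to renaming of free variables, which does not affect that the subject is an abstraction) and the subderivation $E$ still has subject $\lambda x.t$ with a type $A'$ related to $A$. For $\forall_e$: $E$ has conclusion $\Delta \vdash \lambda x.t : \forall X_s.A''$ with $A = A''[B/X]$; by the induction hypothesis $G_E \cutRel^* G_{E'}$ where $E'$ ends with $\forall_i$ introducing the $\forall$. Now I would look at the translation in Figure~\ref{fig_pn_lambda_sdnll}: the $\forall_e$ rule adds an $\exists$ node cut against the $\forall$ node at the conclusion; since $E'$ ends with $\forall_i$, its proof-net $G_{E'}$ has a $\forall$ node as (part of) its conclusion edge, so $G_D \cutRel^* G_{E''}$ where $G_{E''}$ is obtained by reducing the $\forall/\exists$ cut. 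This $\forall/\exists$ reduction performs the substitution $[B/X]$ on the whole net (as in Figure~\ref{cut_elim_non_exp_rules}) and erases the $\forall$ and $\exists$ nodes, leaving a net which is $G_{D'}$ for a derivation $D'$ whose last rule $R$ introduces the top connective of $A = A''[B/X]$ (the last rule of $E'$ with its conclusion type instantiated; one checks that instantiation of the type by a substitution preserves which connective is at the top, since $X_s$ occurrences are in $\formla{s}$ and substituting preserves the grammar shape). For $\derLab$, $\weakLab$, $\contLab$: these rules only modify the context $\Gamma$ (adding/merging $\oc$-hypotheses) and add $\derLab$, $\weakLab$, or $\contLab$ nodes on conclusion edges corresponding to those hypotheses (Figure~\ref{fig_pn_lambda_sdnll}), leaving the principal conclusion edge labelled $C$ (here $A$) untouched; they commute with the reductions, so by the induction hypothesis applied to $E$ we get $G_E \cutRel^* G_{E'}$ with $E'$ introducing the top connective of $A$, and then $G_D \cutRel^* G_{D'}$ where $D'$ is $E'$ followed by the same structural rule.

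The main obstacle I expect is the bookkeeping in the $\forall_e$ case: one must verify precisely that after the $\forall/\exists$ cut-elimination step the resulting untyped proof-net is exactly (up to the choice made in $SDNLL$ reductions about indices, which the paper already flagged as not canonical) the proof-net $G_{D'}$ associated to a genuine $SDNLL_\lambda$ derivation $D'$, and that $D'$'s last rule still introduces the top connective of $A''[B/X]$. This requires checking that the translation Figure~\ref{fig_pn_lambda_sdnll} is compatible with the substitution on formulae induced by the $\forall/\exists$ step, and that the rule classification ($R = \forall_i$ if top is $\forall$, etc.) is stable under type substitution — which follows because substituting a formula $B \in \formla{s}$ for a variable $X_s$ in a formula of $\formla{0}$ either leaves the top connective alone (if the top of $A''$ is $\otimes$-style, $\multimap$, $\forall$, or an exponential) or, if $A'' = X_s$ itself, produces $B$ whose top connective is introduced by the corresponding rule in $E'$ via the $\axLab$/$\forall_e$ structure. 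A minor additional point worth stating carefully is that in the $\multimap_i$ output case, the requirement ``$A_1 \in \formla{0}$'' is automatic from the definition of $SDNLL_\lambda$ typing (the linear hypothesis $x : A^{\varnothing}$ forces $A \in \formla{0}$), so no extra argument is needed there. The rest of the proof is the routine induction sketched above.
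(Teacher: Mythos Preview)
Your approach matches the paper's: induction on $D$, ruling out $\axLab$/$\multimap_e$/$\Rightarrow_e$ by the shape of the term, treating $\forall_i$/$\multimap_i$/$\Rightarrow_i$ as base cases, commuting the structural rules past the introduction rule obtained from the induction hypothesis, and in the $\forall_e$ case using the hypothesis to expose a final $\forall_i$ and then eliminating the resulting $\forall/\exists$ cut (plus the axiom cut coming from the translation of $\forall_e$).

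There is, however, a real gap in your treatment of $\forall_e$. After eliminating the $\forall/\exists$ cut, what remains is the translation of $F'[B/X_s]$, where $F'$ is the subderivation of $E'$ sitting just above its final $\forall_i$. Your parenthetical says the last rule of the resulting $D'$ is ``the last rule of $E'$ with its conclusion type instantiated'', but the last rule of $E'$ is $\forall_i$ and that rule has just been eliminated; what survives is the last rule of $F'$, about which the induction hypothesis says nothing. So $F'[B/X_s]$ need not end in a rule introducing the top connective of $A''[B/X_s]$: its last rule could perfectly well be another $\forall_e$, or $\derLab$, or $\contLab$. The paper is equally terse at this point (it simply sets $D'$ to be $F'$ with $X_s$ replaced by $B$ and stops), so this is not a divergence from the paper's argument; but your attempted justification via ``stability of the rule classification under substitution'' does not establish what is needed. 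The honest fix is to re-apply the whole argument to $F'[B/X_s]$; for this to be a valid induction one needs a measure that strictly drops here --- derivation size works once one carries the invariant $|D'|\le |D|$ through the structural-rule commutations (they preserve size, since the paper's permutation gives $G_{D''}=G_{D'}$ with the same number of rules) and observes that $|F'[B/X_s]|=|E'|-1\le |E|-1<|D|$.
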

\begin{proof}
  We prove it by induction on $D$. The last rule $R$ cannot be in $\{ax,\multimap_e,\Rightarrow_e\}$, because the $\lambda$-term of the conclusion would be of the shape $\lambda x.t$. If $R$ in $\{\forall_i,\multimap_i,\Rightarrow_i\}$, then the lemma is trivial. If $R$ is in $\{ \derLab, \weakLab, \contLab \}$, the derivation is of the shape:
  \begin{equation*}
    \AxiomC{$E$}
    \UnaryInfC{$\Delta \vdash \lambda x.t:A$}
    \RightLabel{$R$}
    \UnaryInfC{$\Gamma \vdash \lambda x.t:A$}
    \DisplayProof
  \end{equation*}
  By the induction hypothesis, $G_E \cutRel^* G_{E'}$ with $E'$ a derivation of conclusion $\Delta \vdash \lambda x.t:A$ and the last rule of $E'$ introduces the top connective of $A$. We will examine the case where this last rule is $\multimap_i$, the two other cases are similar. Let us examine the following derivation $D''$ (let us notice that, because $G_E \cutRel^* G_{E'}$, we have $G_D \cutRel^* G_{D''}$).
  \begin{equation*}
    \AxiomC{$F'$}
    \UnaryInfC{$\Delta,x:A_1 \vdash t: A_2$}
    \RightLabel{$\multimap_i$}
    \UnaryInfC{$\Delta \vdash \lambda x.t:A_1 \multimap A_2$}
    \RightLabel{$R$}
    \UnaryInfC{$\Gamma \vdash \lambda x.t:A_1 \multimap A_2$}
    \DisplayProof
  \end{equation*}
  In every case we can define $D'$ as the following derivation (let us notice that $G_{D''}=G_{D'}$)
  \begin{equation*}
    \AxiomC{$F'$}
    \UnaryInfC{$\Delta,x:A_1 \vdash t: A_2$}
    \RightLabel{$R$}
    \UnaryInfC{$\Gamma,x:A_1 \vdash t: A_2$}
    \RightLabel{$\multimap_i$}
    \UnaryInfC{$\Gamma \vdash \lambda x.t:A_1 \multimap A_2$}
    \DisplayProof
  \end{equation*}

  The last case to examine is $R=\forall_e$. In this case, the derivation is of the shape:
  \begin{equation*}
    \AxiomC{$E$}
    \UnaryInfC{$\Gamma \vdash \lambda x.t:\forall X_s.A_1$}
    \RightLabel{$R$}
    \UnaryInfC{$\Gamma \vdash \lambda x.t:A_1[B/X_s]$}
    \DisplayProof
  \end{equation*}
  By the induction hypothesis, $G_E \cutRel^* G_{E'}$ with $E'$ a derivation of conclusion $\Gamma \vdash \lambda x.t:\forall X_s.A$ and the last rule of $E'$ is $\forall_i$. Let us examine the following derivation $D''$ (let us notice that, because $G_E \cutRel^* G_{E'}$, we have $G_D \cutRel^* G_{D''}$)
  \begin{equation*}
    \AxiomC{$F'$}
    \UnaryInfC{$\Gamma \vdash \lambda x.t: A_1$}
    \RightLabel{$\forall_i$}
    \UnaryInfC{$\Gamma \vdash \lambda x.t: \forall X_s.A_1$}
    \RightLabel{$\forall_e$}
    \UnaryInfC{$\Gamma \vdash \lambda x.t: A_1[B/X]$}
    \DisplayProof
  \end{equation*}
  Then we can set $D'$ as the $SDNLL$ proof-net obtained from $F'$ by replacing $X_s$ by $B$ in the derivation. We can notice that $D'' \cutRel^2 D'$ (a $\forall/\exists$ step and an axiom step).
\end{proof}

\begin{figure}\centering
  \AxiomC{$~$}
  \UnaryInfC{\hspace{-0.4em}$g: (X_s \multimap X_s)^{\varnothing} \vdash g : X_s \multimap X_s$\hspace{-0.4em}}
  \AxiomC{$~$}
  \UnaryInfC{\hspace{-0.4em}$h: (X_s \multimap X_s)^{\varnothing} \vdash h : X_{s} \multimap X_{s}$\hspace{-0.4em}}
  \AxiomC{$~$}
  \UnaryInfC{\hspace{-0.4em}$a: (X_{s})^{\varnothing} \vdash a: X_{s}$\hspace{-0.4em}}
  \BinaryInfC{$h: (X_s \multimap X_s)^{\varnothing}, a: (X_{s})^{\varnothing }\vdash (h) a : X_{s}$}
  %\RightLabel{$\geq$}
  \BinaryInfC{\hspace{-0.4em}$g: (X_{s} \multimap X_{s})^{\varnothing},h: (X_{s} \multimap X_{s})^{\varnothing}, a: (X_{s})^{\varnothing}\vdash (g) (h) a : X_{s}$\hspace{-0.4em}}
  \UnaryInfC{$g: (X_{s} \multimap X_{s})^{s-1,d,n},h: (X_{s} \multimap X_{s})^{\varnothing}, a: (X_{s})^{\varnothing}\vdash (g) (h) a : X_{s}$}
  \UnaryInfC{$g: (X_{s} \multimap X_{s})^{s-1,d,n},h: (X_{s} \multimap X_{s})^{s-1,d,n}, a: (X_{s})^{\varnothing}\vdash (g) (h) a : X_{s}$}
  \UnaryInfC{$f: (X_{s} \multimap X_{s})^{s-1,d,n}, a: (X_{s})^{\varnothing}\vdash (f) (f) a : X_{s}$}
  %\RightLabel{$C$}
  \UnaryInfC{$f: (X_{s} \multimap X_{s})^{s-1,d,n} \vdash \lambda a.(f) (f)a : X_{s} \multimap X_{s}$}
  %\RightLabel{$\multimap_i$}
  \UnaryInfC{$\vdash \lambda f. \lambda a.(f) (f)a : \oc_{s-1,d,n}(X_{s} \multimap X_{s}) \multimap X_{s} \multimap X_{s}$}
  %\RightLabel{$\forall_i$}
  \UnaryInfC{$\vdash \lambda f. \lambda a.(f) (f) a : \forall X_{s}, \oc_{s-1,d,n}(X_{s} \multimap X_{s}) \multimap X_s \multimap X_s$}
  \DisplayProof
  \caption{\label{fig_type_2}Type derivation of $2:\forall X_{s}, \oc_{s-1,d,n}(X_{s} \multimap X_{s}) \multimap X_{s} \multimap X_{s}$.}
\end{figure}

\begin{lemma}[subject reduction]\label{lemma_beta_cutrel_snll}
  If there exists a type derivation $D$ whose conclusion is $\Gamma \vdash t:B$ and $t \betared t'$ then there exists a type derivation $D'$ whose conclusion is $\Gamma \vdash t':B$ and $G_D \cutRel^+ G_{D'}$.
\end{lemma}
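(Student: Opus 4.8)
The plan is to proceed by induction on the derivation $D$, following the structure of the subject reduction proof for $DLAL$ in~\cite{baillot2004lightlong}. The key point is that the translation $D \mapsto G_D$ defined in Figure~\ref{fig_pn_lambda_sdnll} is compositional, so a $\beta$-reduction step $t \betared t'$ occurring at some position in $t$ corresponds to a cut-elimination that can be localized in the corresponding subnet of $G_D$, and the surrounding structure of $G_D$ is preserved. First I would handle the case where $t' $ is obtained by reducing a redex which is not at the root: then the last rule $r$ of $D$ is applied to a subderivation $E$ (or two subderivations $F,E$) in which the redex occurs, the induction hypothesis gives $E'$ with $G_E \cutRel^+ G_{E'}$ and conclusion the same context/type as $E$ but with term $u'$, and one reapplies $r$ to get $D'$; since $G_{(\cdot)}$ is built by plugging $G_E$ into a fixed gadget (a node, a box, or a cut with a fixed net), $G_D \cutRel^+ G_{D'}$ follows immediately. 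The only mild subtlety here is the $\Rightarrow_e$ rule, where $G_E$ sits inside a box: cut-elimination inside a box is still cut-elimination, so this is fine.

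The substantive case is the head redex $t = (\lambda x.s)u \betared s[u/x] = t'$. Here one cannot read off the last rule of $D$ directly, because the subterm $\lambda x.s$ may have been typed using $\derLab$, $\weakLab$, $\contLab$, $\forall_i$, $\forall_e$ rules above. This is exactly what Lemma~\ref{lemma_forall_elim} is for: applied to the subderivation typing $\lambda x.s$, it gives a derivation $D''$ of the same conclusion whose last rule introduces the top connective of the type of $\lambda x.s$, with $G_{(\text{subderivation})} \cutRel^* G_{D''}$. So I would first rewrite $D$ (up to $\cutRel^*$ on the translated net) so that the derivation typing $\lambda x.s$ ends in $\multimap_i$ or $\Rightarrow_i$, matching whether the corresponding $\multimap_e$ or $\Rightarrow_e$ rule was used for the application. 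Then the redex in $G_D$ is, after these preliminary $\cutRel$ steps, exactly a $\parr/\otimes$ cut (for $\multimap$) or a $\parr/\otimes$ cut above a box-door configuration (for $\oc$), plus an axiom cut for the argument; reducing it turns the derivation into a cut of the subderivation for $u$ against the subderivation for $s$ on the variable $x$. Finally I would invoke Lemma~\ref{lemma_linear_substitution} (when $x:A^{\varnothing}$, i.e. the application was $\multimap_e$) or Lemma~\ref{lemma_exponential_substitution} (when $x:A^{s,d,n}$, i.e. the application was $\Rightarrow_e$, whose side conditions $d(\Delta\cup\Sigma)\geqq d$, $n(\Sigma)\geq n$, $n(\Delta)>n$ are precisely the hypotheses of Lemma~\ref{lemma_exponential_substitution}) to obtain a derivation $D'$ of $\Gamma \vdash s[u/x]:B$ together with $G_{(\text{cut net})} \cutRel^* G_{D'}$.

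Assembling the pieces: $G_D \cutRel^* (\text{net with }\lambda x.s\text{ derivation replaced by its }\multimap_i/\Rightarrow_i\text{ form}) \cutRel^+ (\text{the explicit cut of }u\text{ against }s) \cutRel^* G_{D'}$, where the middle $\cutRel^+$ consumes at least the $\multimap_e$ (resp. $\Rightarrow_e$) cut node, so the total reduction has length at least one; hence $G_D \cutRel^+ G_{D'}$, as required. I expect the main obstacle to be the bookkeeping in the $\Rightarrow_e$/exponential case: one must check that after the preliminary reductions coming from Lemma~\ref{lemma_forall_elim} the net really is in the configuration expected by Lemma~\ref{lemma_exponential_substitution} (a $\oc P$ box whose principal door is cut against a $\wn$ tree built from $\derLab$, $\weakLab$, $\contLab$, $\digLab$ nodes collecting the occurrences of $x$), and that the context splitting $\Delta,\Sigma$ recorded in the $\Rightarrow_e$ rule matches the labels on the auxiliary doors of that box in $G_D$. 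All of this is routine given the shape of the translation in Figure~\ref{fig_pn_lambda_sdnll}, but it is where the care is needed; the $\forall_e$ subcase of Lemma~\ref{lemma_forall_elim} additionally contributes the $\forall/\exists$ and axiom $\cutRel$ steps, which is why the statement is phrased on untyped proof-nets.
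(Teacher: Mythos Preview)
Your proposal is correct and follows essentially the same approach as the paper: induction on $D$, recurse through rules that do not create the redex, and for the application rule creating the redex use Lemma~\ref{lemma_forall_elim} to bring the $\multimap_i$/$\Rightarrow_i$ rule to the top of the subderivation typing $\lambda x.s$, then conclude with Lemma~\ref{lemma_linear_substitution} or Lemma~\ref{lemma_exponential_substitution}. The only point where the paper is slightly cleaner is the case split: you organize by \emph{position of the redex} (root versus non-root), whereas the paper organizes by \emph{last rule of $D$}; since the transparent rules $\derLab$, $\weakLab$, $\contLab$, $\forall_i$, $\forall_e$ leave the term unchanged, the last rule can be one of these even when the redex is at the root, so your ``redex not at root'' case as phrased does not literally cover it --- the paper avoids this by treating every unary rule uniformly via the induction hypothesis, which is what you should do as well.
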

\begin{proof}
  We prove the lemma by induction on $D$. Because there is a redex in $t$, $t$ cannot be a variable so the last rule is not an $ax$ rule. Let us suppose that the last rule is a unary rule. Then $D$ is of the shape:
  \begin{equation*}
    \AxiomC{$E$}
    \UnaryInfC{$\Delta \vdash u: A$}
    \RightLabel{$R$}
    \UnaryInfC{$\Gamma \vdash t:B$}
    \DisplayProof
  \end{equation*}
  In every case, $u$ is a subterm of $t$ containing the redex. So, by induction hypothesis, $u$ reduces to a $\lambda$-term $u'$. By the induction hypothesis, there exists a derivation $E'$ of conclusion $\Delta \vdash u':B$ and $G_{E} \cutRel^k G_{E'}$ (with $k \geq 1$). We can verify that in every case we can define $D'$ as the following derivation.
  \begin{equation*}\centering
    \AxiomC{$E'$}
    \UnaryInfC{$\Delta \vdash u': A$}
    \RightLabel{$R$}
    \UnaryInfC{$\Gamma \vdash t':B$}
    \DisplayProof
  \end{equation*}
  If the last rule is a $\multimap_e$ or $\Rightarrow_e$ step which does not correspond to the redex, the lemma is proved similarly.

  If the last rule is a $\multimap_e$ rule corresponding to the redex then, by Lemma~\ref{lemma_forall_elim}, $G_D \csRel^* G_E$ with $E$ a derivation of the following shape:
  \begin{equation*}
    \AxiomC{$E_l$}
    \UnaryInfC{$\Gamma, x:A \vdash v : B$}
    \RightLabel{$\multimap_i$}
    \UnaryInfC{$\Gamma \vdash \lambda x.v : A \multimap B$}
    \AxiomC{$E_r$}
    \UnaryInfC{$\Delta \vdash u:A$}
    \BinaryInfC{$\Gamma,\Delta \vdash (\lambda x.v)u:B$}
    \DisplayProof
  \end{equation*}
  By Lemma~\ref{lemma_linear_substitution}, $G_E$ reduces to a derivation of conclusion $\Gamma,\Delta \vdash v[u/x]:B$. If the last rule is a $\Rightarrow_e$ rule corresponding to the redex then, the result follows similarly by Lemmas~\ref{lemma_forall_elim} and~\ref{lemma_exponential_substitution}.
\end{proof}

\begin{figure}\centering
  \AxiomC{}
  \UnaryInfC{$m:N\hspace{-0.15em}\vdash\hspace{-0.15em} m: N$}
  \UnaryInfC{$m:N\hspace{-0.15em}\vdash\hspace{-0.15em} m: \oc_{s,d,n}F \hspace{-0.2em}\multimap\hspace{-0.2em} F $}
  \AxiomC{}
  \UnaryInfC{$g: F \hspace{-0.15em}\vdash\hspace{-0.15em} g: F$}
  \BinaryInfC{$m:N,g:{F}^{s,d,n} \hspace{-0.15em}\vdash\hspace{-0.15em} (m)g: F$}
  %\AxiomC{}
  %\UnaryInfC{$n:N\hspace{-0.15em}\vdash\hspace{-0.15em} n: N$}
  %\UnaryInfC{$n:N\hspace{-0.15em}\vdash\hspace{-0.15em} n: \oc_{s,d,n}F\hspace{-0.2em}\multimap\hspace{-0.2em} F $}
  %\AxiomC{}
  %\UnaryInfC{$h: F \hspace{-0.15em}\vdash\hspace{-0.15em} h: F$}
  \AxiomC{$\vdots$}
  %\UnaryInfC{$n:N,h:{F}^{s,d,n} \hspace{-0.15em}\vdash\hspace{-0.15em} (n)h: F$}
  \AxiomC{}
  \UnaryInfC{$x: X_s^{\varnothing} \hspace{-0.15em}\vdash\hspace{-0.15em} x:X_s$}
  \BinaryInfC{$n:N,h:{F}^{s,d,n},x:X_s^{\varnothing} \hspace{-0.15em}\vdash\hspace{-0.15em} ((n)h) x:  X_s$}
  \BinaryInfC{$m: N,n:N,g:{F}^{s,d,n},h:{F}^{s,d,n},x:X_s^{\varnothing} \hspace{-0.15em}\vdash\hspace{-0.15em} ((m)g) ((n)h)x: X_s$}
  \UnaryInfC{$m: N,n:N,f:{F}^{s,d,n},x:X_s^{\varnothing} \hspace{-0.15em}\vdash\hspace{-0.15em} ((m)f) ((n)f)x: X_s$}
  \UnaryInfC{$m: N,n:N,f:{F}^{s,d,n} \hspace{-0.15em}\vdash\hspace{-0.15em} \lambda x. ((m)f) ((n)f)x: X_s \hspace{-0.2em}\multimap\hspace{-0.2em} X_s$}
  \UnaryInfC{$m: N,n:N \hspace{-0.15em}\vdash\hspace{-0.15em} \lambda f.\lambda x. ((m)f) ((n)f)x: \NSDN{s}{d}{n}$}
  \UnaryInfC{$m: N \hspace{-0.15em}\vdash\hspace{-0.15em} \lambda n.\lambda f.\lambda x. ((m)f) ((n)f)x: \NSDN{s}{d}{n} \hspace{-0.2em}\multimap\hspace{-0.2em} \NSDN{s}{d}{n}$}
  \UnaryInfC{$\hspace{-0.15em}\vdash\hspace{-0.15em} \lambda m.\lambda n.\lambda f.\lambda x. ((m)f) ((n)f)x: \NSDN{s}{d}{n} \hspace{-0.2em}\multimap\hspace{-0.2em} \NSDN{s}{d}{n} \hspace{-0.2em}\multimap\hspace{-0.2em} \NSDN{s}{d}{n}$}
  \DisplayProof
  \caption{\label{fig_type_add}Type derivation of $add: \NSDN{s}{d}{n} \hspace{-0.2em}\multimap\hspace{-0.2em} \NSDN{s}{d}{n} \hspace{-0.2em}\multimap\hspace{-0.2em} \NSDN{s}{d}{n}$. To simlify the proof derivation, we write $F$ for $X_s \hspace{-0.2em}\multimap\hspace{-0.2em} X_s$ and $N$ for $\NSDN{s}{d}{n}^{\varnothing}$.}
\end{figure}

\begin{theorem}\label{sdnll_lambda_polytime}
  If there exists a type derivation $E$ whose conclusion is $\Gamma \vdash t:B$, $x$ is the size of $E$, $S-1$, $D-1$ and $N-1$ are the maximum indexes in $E$, and $\partial$ is the depth of $E$ (in terms of $\Rightarrow_e$ rules), then:
  \begin{equation*}
    t \betared^k t' \hspace{1.5em}\Rightarrow \hspace{1.5em}k \leq x^{1+ D^{S} \cdot \partial^{1+N \cdot S}}
  \end{equation*}
\end{theorem}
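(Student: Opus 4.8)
The plan is to transport the statement, via the proof-net translation of Figure~\ref{fig_pn_lambda_sdnll}, to the bound already obtained for $SDNLL$ proof-nets. From the derivation $E$ of $\Gamma\vdash t:B$ I form the proof-net $G_E$; the rule-by-rule construction of Section~\ref{section_sdnll_lambda} guarantees that $G_E$ is a genuine $SDNLL$ proof-net (the labelling constraints it satisfies are precisely those verified when the translation is defined). By subject reduction (Lemma~\ref{lemma_beta_cutrel_snll}), every $\beta$-step $u\betared u'$ with current derivation $D_u$ lifts to $G_{D_u}\cutRel^{+}G_{D_{u'}}$ using \emph{at least one} $\cutRel$-step. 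Concatenating along a reduction $t\betared^{k}t'$ therefore produces a $\cutRel$-reduction out of $G_E$ of length at least $k$, so $k$ is at most the maximal $\cutRel$-reduction length starting from $G_E$.

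\textbf{Main step.} I then apply Theorem~\ref{theo_sdnll_pn_bound} to $G_E$: since $G_E$ is an $SDNLL$ proof-net it is $\stratSNLL$-, $\dcSim$- and $\nestSim$-stratified (Corollary~\ref{coro_sdnll_stratifieds}), hence $W_{G_E}$ is finite (Corollary~\ref{coro_bound_poly_nest}), $G_E$ normalizes, and its maximal reduction length is bounded by
\[
  W_{G_E}\ \le\ x_G^{\,1+D_G^{S_G}\cdot\partial_G^{\,N_G\cdot S_G}},
\]
where $x_G=|\dirEdges{G_E}|$, $\partial_G=\depthG{G_E}$, $S_G=1+\max_{B\in\boxset{G_E}}\straF{\sigma(B)}$, $D_G=\max_{B\in\boxset{G_E}}\dcF{\sigma(B)}$ and $N_G=1+\max_{B\in\boxset{G_E}}\nesF{\sigma(B)}$. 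What remains is to bound these four proof-net parameters by the four derivation parameters $x,\partial,S,D,N$.

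\textbf{Comparing the parameters.} This is a routine inspection of the translation. Each rule of $SDNLL_\lambda$ (Figure~\ref{snll_lambda_rules}) contributes only a bounded number of nodes and edges to $G_E$ — the sole box-creating rule is $\Rightarrow_e$, which adds one box together with a $\digLab$ node — so $|\dirEdges{G_E}|$ is bounded by a linear function of the size $x$ of $E$. The indices $s,d,n$ decorating the exponential connectives occurring in $G_E$ are exactly those appearing in the formulae of $E$ (the translation copies them verbatim, and the $\digLab$ node in the $\Rightarrow_e$ translation introduces only the index $n_\delta-1\le n_\delta$, already present); hence $S_G\le S$, $D_G\le D$ and $N_G\le N$. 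Finally, a box of $G_E$ is nested inside another exactly when the $\Rightarrow_e$ rule creating it lies, in $E$, above the $\Rightarrow_e$ rule creating the outer box along the same branch, so $\partial_G$ is at most the $\Rightarrow_e$-depth $\partial$ of $E$. Substituting these inequalities into the displayed bound, and absorbing the constant/linear overhead relating $x_G$ to $x$ into the extra factor of $\partial$ that distinguishes the target exponent $1+D^{S}\partial^{1+N\cdot S}$ from $1+D_G^{S_G}\partial_G^{N_G\cdot S_G}\le 1+D^{S}\partial^{N\cdot S}$, we obtain
\[
  k\ \le\ W_{G_E}\ \le\ x^{\,1+D^{S}\cdot\partial^{\,1+N\cdot S}},
\]
which is the claim.

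\textbf{Expected obstacle.} All the genuine work is already packaged in Lemma~\ref{lemma_beta_cutrel_snll} and Theorem~\ref{theo_sdnll_pn_bound}; the theorem is essentially an assembly of these. The only delicate point is the last bookkeeping step: making the estimate $|\dirEdges{G_E}|=O(x)$ fully precise — in particular controlling the auxiliary doors of the boxes produced by $\Rightarrow_e$ against the size of the contexts in the corresponding sequents of $E$ — and checking that the extra power of $\partial$ left in the target exponent uniformly absorbs that overhead, including the degenerate cases where $\partial$ is small. Everything else follows directly from subject reduction, the weight bound of Theorem~\ref{theo_sdnll_pn_bound}, and the definition of the translation.
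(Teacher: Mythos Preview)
Your proposal is correct and takes exactly the same approach as the paper: the paper's proof is the one-liner ``Immediate from Theorem~\ref{theo_sdnll_pn_bound} and Lemma~\ref{lemma_beta_cutrel_snll}'', so the content is precisely subject reduction followed by the $SDNLL$ weight bound. Your additional bookkeeping (comparing $x_G,\partial_G,S_G,D_G,N_G$ to $x,\partial,S,D,N$ and using the spare factor of $\partial$ in the exponent to absorb the linear size overhead of the translation) is left implicit in the paper but is the obvious way to make the ``immediate'' claim precise.
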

\begin{proof}
  Immediate from Theorem~\ref{theo_sdnll_pn_bound} and Lemma~\ref{lemma_beta_cutrel_snll}.
\end{proof}

We can notice that, contrary to $DLAL$, $SDNLL_{\lambda}$ does not allow weakening on linear variables. Thus one can never derive $\vdash \lambda x.t : A \multimap B$ when $x$ is not a free variable of $t$. We are confident\footnote{Adding generalized weakening (Conclusion of $\wn C$ can be any formula) would not change a single line in the proof of Theorem~\ref{theo_sdnll_polysound}. However, in order to prove Lemma~\ref{lemma_beta_cutrel_snll}, we would need to add cut-elimination rules: $\wn C$ cut with any node $n$, deletes $n$ and creates $\wn C$ nodes cut with every premise of $n$. 
$W_G$ would still be a valid bound on the number of steps during reduction, so Sections~\ref{chapter_3} and~\ref{section_polytime_simple} would be the same.} that adding the following rule to $SDNLL$ does not break Lemma~\ref{sdnll_lambda_polytime}. 

\begin{equation*}
  \AxiomC{$\Gamma \vdash t:B$}
  \UnaryInfC{$\Gamma,x:A^{\varnothing} \vdash t:B$}
  \DisplayProof
\end{equation*}
However, one cannot extend the encoding of Figure~\ref{fig_pn_lambda_sdnll} to this rule because Linear Logic does not allow weakening on a formula $A$ unless $A$ is of the shape $\wn A'$. Thus we would have to prove the bound directly on $\lambda$-calculus (or a similar language as in~\cite{baillot2004light}). Which makes the proof more difficult, because we cannot use the lemmas we proved on context semantics. If we had defined the context semantics and the criteria on a more general framework (for example interaction nets, for which we define a context semantics in~\cite{perrinel2014interactionnets}) we would not have problems to accomodate such a simple modification.

To give an intuition on the system, let us give some examples of proof derivations. For any $s \geq 1$, and $k \in \mathbb{N}$, $\underline{k}$ can be typed with the following type (see Figure~\ref{fig_type_2} for the type derivation of $\underline{2}$):
\begin{equation*}
  \Nll_{s,d,n}=\forall X_s, \oc_{s-1,d,n}(X_s \multimap X_s) \multimap X_s \multimap X_s
\end{equation*}
Addition can be typed as shown in Figure~\ref{fig_type_add}. Finally, although this type system has no built-in mechanism to type tuples, we can encode them by the usual church encoding (Figure~\ref{fig_snll_pairs}). Let us notice that this encoding does not require any additional constraint on the types, contrary to $mL^4$ (where the terms must have the same level).

\begin{figure}\centering 
.\vspace{2em}

\AxiomC{$~$}
\UnaryInfC{$f:(A \multimap B \multimap X)^{\varnothing} \vdash f:  A \multimap B \multimap X$}
\AxiomC{$\Gamma \vdash t: A $}
\BinaryInfC{$\Gamma,f: (A \multimap B \multimap X)^{\varnothing} \vdash (f)t :  B \multimap X $}
\AxiomC{$\Delta \vdash u: B$}
\BinaryInfC{$\Gamma,\Delta,f: (A \multimap B \multimap X)^{\varnothing} \vdash ((f)t)u:  X$}
\UnaryInfC{$\Gamma,\Delta \vdash \lambda f.((f)t)u: (A \multimap B \multimap X) \multimap X$}
\UnaryInfC{$\Gamma,\Delta \vdash \lambda f.((f)t)u:\forall X. (A \multimap B \multimap X) \multimap X$}
\UnaryInfC{$\Gamma,\Delta \vdash  \la t,u \ra : \la A,B \ra$}
\DisplayProof
\caption{\label{fig_snll_pairs}Simple encoding of pairs.}
\end{figure}

\label{sect_expressivity}
We isolate four constraints that previous logics ($LLL$, $SLL$, and $MS$) has and which $SDNLL$ does not have. We illustrate each constraint with an intuitive description and a $\lambda$-term which can be typed in $SDNLL$ but seemingly not in previous logics because of this constraint. We set $S=\lambda m.\lambda f.\lambda x. ((m)f) (f) x$ implementing succesor, and $+$ as the $\lambda$-term $\lambda m.\lambda n.\lambda f.\lambda x.((m) f) ((n)f) x$ implementing addition on Church integers and $x+y+z$ is a notation for $((+)((+)x)y)z$.
\begin{itemize}
\item In previous logics, in $\la t,u \ra$, $t$ and $u$ must have the same stratum indices (depth in $LLL$ and $MS$, level in $mL^4$). The term $(\underline{k})\lambda \la x,y,z \ra.\la x,((x) S) \underline{0}, x+y+z \ra$ is not typable in previous logics: because the function $\lambda \la x,y,z \ra.\la x,((x) S) \underline{0}, x+y+z \ra$ is iterated, we have $s(y)=s \left (((x)S)\underline{0} \right )$ so $s(y) > s(x)$. But, because they are in the same tuple, it must be $s(y)=s(x)$. The $x+y+z$ term ensures that the stratum indices of $y$ and $x$ cannot be modified by $\S$ modalities.
\item There is no $N$ rule in previous logics and in their encodings in $SDNLL$. This seems to prevent the typing of $\underline{k}(\lambda \la v,w,x,y,z\ra \la w,w,x,w+x+y,((x) (+)v) 0 \ra)$.
\item Contrary to $LLL$ and $mL^4$, one can have several variables in the context during a $\Rightarrow_e$ rule. So, $t=\underline{k}(\lambda \la x,y,z \ra . \la x,x+y,y \ra)$ is typable in $SDNLL$ but not in $LLL$ and $mL^4$. Moreover, the maximum nest of terms is not a priori bounded by the type system, so if we set $u=\lambda\la x,y,z \ra.\la z,z,z \ra$, then $(t)(u)(t)\cdots (u)t$ is typable in $SDNLL$ whatever the length of the chain of applications, whereas in $MS$ the maximum length of such a chain is bounded.
\item Previous logics had no subtyping. For example, in $mL^4$, a $A^i$ formula cannot be considered as a $A^{i-1}$ formula. The example in the first item of this list would be typable in $mL^4$ if it was allowed to decrease the level of a formula by mean of a subtyping relation.
\end{itemize}

\section{Conclusion and further work}
In order to address the potential applications given in the introduction (real-time systems, complexity debugging, mathematical proofs) we aim to create a type system for a programming language such that:
\begin{enumerate}
\item Programming in the language is practical. The language offers usual features such as built-in types (integers, boolean,...), control flow operations, recursive definitions, side effects,...\label{goal1}
\item Type inference is decidable in reasonable time.\label{goal2}
\item For most polynomial time program users will write, the type infered entails a polynomial bound.\label{goal3}
\item The bounds infered are often tight (very important for real-time systems, rather important for complexity debugging, unimportant for mathematical proofs).\label{goal4}
\end{enumerate}
We consider that goals~\ref{goal2} and~\ref{goal4} highly depend on the system. We could try to design a faster type inferrence algorithm for $LLL$, or infer tighter bounds for a $LLL$ program. However, because of its lack of expressivity, there is little chance that $LLL$ will be used in practice for the goals we have in mind. A new system must be created, and type inferrence and tight bound inferrence may be totally different in this new system.

We view goals 1 and 3 as mostly orthogonal. It is possible to define an expressive functional core (a linear logic subsystem or $\lambda$-calculus type-system) and add features to it. For example, previous works have added pattern-matching, recursive definitions~\cite{baillot2010polytime}, side-effects and concurrent features~\cite{madet2012polynomial} to $LLL$. However, previous works only extended a specific system ($LLL$ in those cases). We are not aware of any work proving that ``for every subsystem $S$ of linear logic sound for $Ptime$ and verifying some condition $C$, the system obtained by adding feature $F$ to $S$ is sound for $Ptime$''. So, if we added other features to $LLL$ without breaking the polynomial bound, it is unclear whether we would have been able to add those features to other subsystems of linear logic characterizing polynomial time.  In those conditions, it made more sense to first work on the functional core (goal~\ref{goal3}), and in a second step, add features to it (goal~\ref{goal1}).

Because $SDNLL$ is more expressive than previous subsystems of linear logic characterizing polynomial time, this work fits into goal 3. However, in the same way Baillot and Mazza considered that the ``fundamental contribution of'' \cite{baillot2010linear} is not the definition of the systems $mL^3$ and $mL^4$ themselves, but the demonstration that ``in linear-logical characterizations of complexity classes, exponential boxes and stratification levels are two different things'', we consider that the main contribution of our work is the idea to define semantic criteria based on the acyclicity of relation on boxes:
\begin{itemize}
\item Using those criteria, we separated three principles underlying $LLL$ and the works based on it. It sheds a new light on previous works: $mL^4$ relaxes the ``stratification'' criterion of $LLL$, while $MS$ relaxes its ``dependence control'' criterion (we are not aware of previous works relaxing the ``nesting'' condition). Realizing that those principles are mainly orthogonal can help further works on the expressivity of linear logic subsystems characterizing polynomial time: independent improvements on different principles can be combined. For instance, one can easily verify, that one can combine $mL^4$ and a maximal $Ptime$ system of $MS$\footnote{$mL^4$ with its ``at most one auxiliary door by box'' replaced by the indices criteria of the $MS$ system to control dependence}. In fact, $SDNLL$ can be seen as an extension of such a system.
\item Because the lemmas and results of sections~\ref{chapter_2},~\ref{chapter_3} and~\ref{section_polytime_simple} are valid for any untyped proof-net, they can be reused to prove polynomial bounds for other subsystems of linear logic in which $\stratSNLL$, $\dcSim$ or $\nestSim$ are acyclic (such as  $mL^4$, $MS$ and the multiplicative fragment of $LLL$) or to define new criteria: in~\cite{perrinelMegathese}, Perrinel builds upon these technical lemmas to define more expressive criteria entailing a polynomial bound and a criterion entailing a primitive recursive bound.
\item We separated the task of creating an expressive subsystem of linear logic characterizing polynomial time into subtasks: finding loose criteria on semantic entailing polynomial time, and finding syntactic criteria entailing those semantic criteria. One can closely examine the proofs leading to Corollary~\ref{coro_bound_poly_nest}, to find any unnecessary assumption on the proof-net behaviour. While this may be subjective, we found it much easier to reason about complex semantic criteria without having to consider the exact way in which they will be enforced.
\item While the syntax of $SDNLL$ (or any other syntactic subsystem of linear logic) may be difficult to adapt to richer languages where the notion of reduction differs from cut-elimination, those relations on boxes have a meaning going beyond linear logic itself: $B \stratSNLL C$ means that $B$ interacts with an element created by an interaction of $C$ (with nodes created when $C$ is opened/interacts), $B \dcSim C$ and $B \nestSim C$ represent two ways of having several duplicates of $B$ inside $C$. Thus it would be interesting to investigate the application of those principles to other models of computation based on reduction/rewriting.
\end{itemize}

The applications considered in the introduction are used to motivate the direction of our research, to explain why the intensional expressivity of our characterization is an important problem. We are still far from having a system expressive enough to handle them. We explained why we first focused on the expressivity of the functional core (subsystems of linear logic and type systems on plain $\lambda$-calculus), but we consider that the main challenge in the future of Implicit Computational Complexity will be to add features to this functional core (built-in types, pattern-matching, recursive definitions, side-effects,...) to get closer to the programming languages used in practice. Thus, the fact that those criteria might be easier to adapt to a more practical framework is especially important.

In a previous work~\cite{perrinel2014interactionnets}, we defined a context semantics for interaction nets: a well-behaved class of graph rewriting systems~\cite{lafont1989interaction} based on proof-nets. Interaction net is not a singe system, but a set of such systems. Thus, this framework seems particularly adapted to the progressive addition of features. An interesting problem for future work would be to use the context semantics of~\cite{perrinel2014interactionnets} to define relations on interaction nets corresponding to $\stratSNLL$, $\dcSim$ and $\nestSim$.

\section{Bibliography}
\bibliographystyle{plain}
\bibliography{megabib} 
\end{document}